\documentclass{lmcs} 
\pdfoutput=1

\usepackage{lastpage}
\lmcsdoi{20}{4}{10}
\lmcsheading{}{\pageref{LastPage}}{}{}%
{Sep.~18,~2023}{Nov.~08,~2024}{}

\keywords{}

\usepackage{hyperref,cleveref}
\usepackage{amsmath}
\usepackage{amssymb}
\usepackage{amsthm}
\usepackage{amscd}
\usepackage{amsfonts}
\usepackage{float}
\usepackage{mathtools}
\usepackage[utf8]{inputenc}
\usepackage{color, colortbl}
\usepackage{algorithm}
\usepackage[noend]{algpseudocode}
\usepackage{graphicx}
\usepackage{qtree}
\usepackage{multicol}
\usepackage{booktabs}
\usepackage{soul}
\usepackage{xspace}
\usepackage{multirow}

\usepackage{tikz}
\usetikzlibrary{calc}
\usetikzlibrary{automata, positioning, arrows, shapes.geometric}
\usepackage[font=small,labelfont=bf]{caption}
\usepackage{stmaryrd}

\theoremstyle{plain} 


\begin{document}

\title[Constructing Concise Characteristic Samples for Omega Acceptors]{Constructing Concise Characteristic Samples \\ for Acceptors of Omega Regular Languages}

\author[D.~Angluin]{Dana Angluin\lmcsorcid{0000-0002-6907-2999}}	
\address{Yale University, New Haven, CT, USA}	
\email{dana.angluin@yale.edu}  

\author[D.~Fisman]{Dana Fisman\lmcsorcid{0000-0002-6015-4170}}	
\address{Ben-Gurion University, Be'er-Sheva, Israel}	
\email{dana@cs.bgu.ac.il}  





\theoremstyle{plain} \numberwithin{equation}{section}
\newtheorem{theorem}[thm]{Theorem}
\newtheorem{proposition}[theorem]{Proposition}
\newtheorem{corollary}[theorem]{Corollary} 
\newtheorem{lemma}[theorem]{Lemma}

\theoremstyle{definition}
\newtheorem{example}[theorem]{Example}

\renewcommand{\sectionautorefname}{Section}
\renewcommand{\subsectionautorefname}{Section}
\renewcommand{\propositionautorefname}{Proposition}
\renewcommand{\corollaryautorefname}{Corollary}
\renewcommand{\lemmaautorefname}{Lemma}
\newcommand{\claimautorefname}{Claim}

\newcommand{\commentout}[1]{}

\newcommand{\commentbody}[3]{\colorbox{#2}{\textsc{#3:}}#1{\colorbox{#2}{.}}}
\newcommand{\commentft}[3]{\colorbox{#2}{...}\footnote{\colorbox{#2}{\textsc{#3:}}#1{\colorbox{#2}{.}}}}

\newcommand{\df}[1]{\commentft{#1}{yellow}{df}} 
\newcommand{\da}[1]{\commentft{#1}{blue!30!white}{da}}

\newcommand{\replace}[2]{\textcolor{blue}{\st{#1} #2}}
\newcommand{\remove}[1]{\textcolor{blue}{\st{#1}}}
\newcommand{\add}[1]{\textcolor{blue}{#1}}

\newcommand{\sema}[1]{{\llbracket}#1{\rrbracket}}

\newtheorem{mytheorem}[thm]{Theorem}

\newtheorem{myclaim}[theorem]{Claim}
\newtheorem{mycorollary}[theorem]{Corollary}
\newtheorem{myproposition}[theorem]{Proposition}

\newcommand{\forest}[1]{\ensuremath{\mathcal{#1}}}

\newcommand{\alg}[1]{\ensuremath{\mathbf{#1}}}
\newcommand{\aut}[1]{\mathcal{#1}}
\newcommand{\rep}[1]{\mathcal{#1}}

\newcommand{\class}[1]{\ensuremath{\mathbb{#1}}}
\newcommand{\famF}{\ensuremath{\mathcal{F}}}
\newcommand{\famS}{\ensuremath{\mathcal{S}}}
\renewcommand{\inf}{{\textsl{inf}}}
\newcommand{\occ}{{\textsl{occ}}}
\newcommand{\infss}[1]{\ensuremath{{\inf}_{#1}}}

\newcommand{\polarity}{{\textsl{polarity}}}

\newcommand{\width}{\textit{width}}

\newcommand{\SCCs}{\textit{SCCs}}
\newcommand{\maxSCCs}{\textit{maxSCCs}}
\newcommand{\minStates}{\textit{minStates}}
\newcommand{\children}{\textit{children}}
\newcommand{\size}{\textit{size}}
\newcommand{\maj}{\mathcal{C}_\textit{maj}}
\newcommand{\sample}{\textit{sample}}
\newcommand{\suffixes}{\textit{suffixes}}
\newcommand{\access}{\textit{access}}

\newcommand{\DM}{\class{DMA}}
\newcommand{\IB}{\class{IBA}}
\newcommand{\IC}{\class{ICA}}
\newcommand{\IP}{\class{IPA}}
\newcommand{\IR}{\class{IRA}}
\newcommand{\IS}{\class{ISA}}
\newcommand{\IM}{\class{IMA}}
\newcommand{\IX}{\class{IXA}}
\newcommand{\FDFA}{\class{FDFA}}

\newcommand{\ilptd}{\textsc{ilptd}\xspace}

\newcommand{\NB}{\class{NBA}}
\newcommand{\NBk}[1]{\ensuremath{\NB_{#1}}}
\newcommand{\DMk}[2]{\ensuremath{\DM_{#2}^{#1}}}

\algrenewcommand\algorithmicindent{2.0em}%
\algnewcommand\algorithmicswitch{\textbf{switch}}
\algnewcommand\algorithmiccase{\textbf{case}}
\algdef{SE}[SWITCH]{Switch}{EndSwitch}[1]{\algorithmicswitch\ #1\ \algorithmicdo}{\algorithmicend\ \algorithmicswitch}%
\algdef{SE}[CASE]{Case}{EndCase}[1]{\algorithmiccase\ #1}{\algorithmicend\ \algorithmiccase}%
\algtext*{EndSwitch}%
\algtext*{EndCase}%
\renewcommand{\algorithmicrequire}{\textbf{Input:}}
\renewcommand{\algorithmicensure}{\textbf{Output:}}

\newcommand{\buchi}{B\"uchi}
\newcommand{\cobuchi}{coB\"uchi}
\newcommand{\initstate}{\ensuremath{q_\iota}}
\newcommand{\A}{{\aut{A}}}
\newcommand{\M}{{\aut{M}}}
\newcommand{\R}{{\aut{R}}}
\newcommand{\la}{\langle}
\newcommand{\ra}{\rangle}
\newcommand{\naturals}{\mathbb{N}}

\newcommand{\fdfasubscript}[1]{\textsc{#1}}
\newcommand{\syntactic}{\fdfasubscript{s}}
\newcommand{\recurrent}{\fdfasubscript{r}}
\newcommand{\limit}{\fdfasubscript{l}}
\newcommand{\periodic}{\fdfasubscript{p}}

\newcommand{\ctlstar}{CTL$^*$}
\newcommand{\fairctl}{fair-CTL}

\tikzset{
	setofstates/.style={
		rectangle,
		rounded corners,
		draw=black,
		text centered,
		inner sep=7
	},
}
\tikzset{
	ministate/.style={align=center,
		state,inner sep=0pt, minimum size=0pt},
}
\tikzset{
	paradigm/.style={
		rectangle,
		rounded corners,
		draw=black,
		text centered,
		inner sep=7,
		minimum width=4.2cm,
		minimum height = 1.8cm,
		fill=gray!5
	},
}

\newcommand{\emptyst}{\ensuremath{\phantom{\circ}}}

\begin{abstract}
  \noindent 
  A characteristic sample for a language $L$ and a learning algorithm $\alg{L}$ is a finite sample of words $T_L$ labeled by their membership in $L$ such that  for any sample $T \supseteq T_L$ consistent with $L$, on input $T$ the learning algorithm $\alg{L}$ returns a hypothesis equivalent to $L$.
  Which omega automata have characteristic sets of polynomial size, and can these sets be constructed in polynomial time? 
  We address these questions here.

  In brief, non-deterministic omega automata of any of the common types, in particular B\"uchi,
  do not have characteristic samples of polynomial size. 
  %
  For deterministic omega automata that are isomorphic to their right congruence automata, the \emph{fully informative languages}, polynomial time algorithms for constructing characteristic samples and learning from them are given.
  
  The algorithms for constructing characteristic sets in polynomial time for the different omega automata (of types B\"uchi, coB\"uchi, parity, Rabin, Street, or Muller), require deterministic polynomial time algorithms for (1) equivalence of the respective omega automata, and (2) testing membership of the language of the automaton in the informative classes, which we provide.  
\end{abstract}

\maketitle

\tableofcontents

\section{Introduction}
With the growing success of machine learning in efficiently solving a wide spectrum of problems, we are witnessing an increased use of machine learning techniques in formal methods for system design. One thread in recent literature uses general purpose machine learning techniques for obtaining more efficient verification/synthesis algorithms.
Another thread, following the \emph{automata theoretic approach to verification}~\cite{Vardi95,KupfermanVW00} works on developing grammatical inference algorithms for verification and synthesis purposes. \emph{Grammatical inference} (aka \emph{automata learning}) refers to the problem of automatically inferring from examples a finite representation (e.g.\@  an automaton, a grammar, or a formula) for an unknown language~\cite{DelaHiguera2010}.  The term \emph{model learning}~\cite{Vaandrager17} was coined for the task of learning an automaton model for an unknown system. A large body of works has developed learning techniques for different automata types (e.g.\@  visibly-pushdown automata~\cite{KumarMV06},
I/O automata~\cite{Aarts2010}, register automata~\cite{HowarSJC12}, symbolic automata~\cite{DrewsD17}, program automata~\cite{ManevichS18}, probabilistic grammars~\cite{NitayFZ21}, lattice automata~\cite{FismanS22}) and has shown its usability in a diverse range of tasks.\footnote{E.g.\@ , tasks such as black-box checking~\cite{PeledVY99},
	specification mining~\cite{AmmonsBL02},
	assume-guarantee reasoning~\cite{CobleighGP03}, 
	regular model checking~\cite{HabermehlV05}, 
	learning verification fixed-points~\cite{VardhanSVA05}, 
	learning interfaces~\cite{NamA06},
	analyzing botnet protocols~\cite{ChocSS10} or smart card readers~\cite{ChaluparPPR14},
	finding security bugs~\cite{ChaluparPPR14},
	error localization~\cite{ChapmanCKKST15}, and
	code refactoring~\cite{MargariaNRS04,SchutsHV16}.}

In grammatical inference, the learning algorithm does not learn a \emph{language}, but rather a finite \emph{representation} of it. The complexity of learning algorithms may vary greatly by switching representations. For instance, if one wishes to learn regular languages, she may consider representations using deterministic finite automata (DFAs), non-deterministic finite automata (NFAs), regular expressions, linear grammars, etc. Since the translation results between two such formalisms are not necessarily polynomial, a polynomial learnability result for one representation does not necessarily imply  a polynomial learnability result for another representation. Let $\class{C}$ be a class of representations $\rep{C}$ with a size measure $\size(\rep{C})$ (e.g.\@  for DFAs the size measure can be the number of states in the minimal DFA). We extend $\size(\cdot)$ to the languages recognized by representations in $\class{C}$ by defining $\size(L)$ to be the minimum of $\size(\rep{C})$ over all $\rep{C}$ representing $L$.  In this paper we restrict attention to automata representations, namely, \emph{acceptors}.

There are various learning paradigms considered in the grammatical inference literature, roughly classified into \emph{passive} and \emph{active}. We mention here the  two central ones. 
In \emph{passive learning} the model of \emph{learning from finite data} refers to the following problem: given a finite sample $T\subseteq \Sigma^* \times \{0,1\}$ of labeled words, a learning algorithm $\alg{L}$ should return an acceptor $\aut{C}$ that agrees with the sample $T$. That is,  for every $(w,l)\in T$ the following holds: $w\in\sema{\aut{C}}$ iff $l=1$ (where $\sema{\aut{C}}$ is the language accepted by $\aut{C}$). 
The class $\class{C}$ is \emph{identifiable in the limit using polynomial time and data}  if and only if there exists a polynomial time algorithm $\alg{L}$ that takes as input a labeled sample $T$ and outputs an acceptor $\rep{C}\in \class{C}$ that is consistent with $T$, and $\alg{L}$ also satisfies the following condition.  If $L$ is any language recognized by an automaton from class $\class{C}$, then there exists a labeled sample $T_L$ consistent with $L$ of length bounded by a polynomial in $\size(L)$, and for any labeled sample $T$ consistent with $L$ such that $T_L \subseteq T$, on input $T$ the algorithm $\alg{L}$ produces an acceptor $\rep{C}$ that recognizes $L$.  In this case, $T_L$ is termed a \emph{characteristic sample} for the algorithm $\alg{L}$.  
The definition of identifiability in the limit using polynomial time and data relates to learning paradigms considering a \emph{teacher-learner} pair~\cite{GoldmanM96}.
While identification in the limit using polynomial time and data does
not require that the characteristic set can be computed in polynomial time, if one is interested in devising a teacher that can train a learner, having a teacher that can compute a characteristic set in polynomial time is also desired. 
The definition of \emph{efficiently teachable} adds this requirement. 
In \Cref{sec:notions-learnability-teachability} we define several notions related to efficient teachability and learnability, the stronger one is \emph{efficiently teachable}.
The question which representations of regular $\omega$-languages are efficiently teachable is the focus of this paper.

In \emph{active learning} the  model of \emph{query learning}~\cite{Angluin87} assumes the learner communicates with an \emph{oracle}  that can answer certain types of queries about the language.  The most common types of queries are \emph{membership queries} (is $w\in L$ where $L$ is the unknown language) and \emph{equivalence queries} (is $\sema{\aut{A}}=L$ where $\aut{A}$ is the current hypothesis for an acceptor recognizing $L$). Equivalence queries are typically assumed to return a counterexample, i.e. a word  in $\sema{\aut{A}}\setminus L$ or in $L\setminus\sema{\aut{A}}$.

With regard to $\omega$-automata (automata on infinite words) most of the works consider \emph{query learning} using \emph{membership queries} and \emph{equivalence queries}. The representations learned so far include:
$(L)_\$$~\cite{FarzanCCTW08old}, a non-polynomial reduction to finite words; families of DFAs (\class{FDFA})~\cite{AngluinF14,AngluinF16,AngluinBF16,LiCZL17}; strongly unambiguous B\"uchi automata (\class{SUBA})~\cite{AngluinAF20}; mod-2-multiplicity automata (\class{M}2\class{MA})~\cite{AngluinAFG22}; and deterministic weak parity automata (\class{DWPA})~\cite{MalerP95}. Among these only the latter two are known to be learnable in polynomial time using membership queries and proper equivalence queries.\footnote{
	Query learning with an additional type of query, \emph{loop-index queries}, was studied for deterministic B\"uchi automata~\cite{MichaliszynO20}.}
We show in \Cref{sec:positive-results} that the classes 
\class{M}2\class{MA}, \class{SUBA} and \class{DWPA} are efficiently teachable. 

One of the main obstacles in obtaining a  polynomial  learning algorithm for regular $\omega$-languages is that they do not in general have a Myhill-Nerode characterization; that is, there is no theorem correlating the states of a minimal automaton of some of the common automata types (B\"uchi, parity, Muller, etc.) to the equivalence classes of the right congruence of the language. The \emph{right congruence relation} for an $\omega$-language $L$ relates  two finite words $x$ and $y$ iff there is no infinite suffix $z$ differentiating them, that is $x\sim_L y$  (for $x,y\in\Sigma^*$) iff $ \forall z\in\Sigma^\omega.\ xz\in L \iff yz \in L$.
The quest for finding a polynomial query learning algorithm for a subclass of the regular $\omega$-languages, led to studying  subclasses of languages for which such a relation holds. These languages are termed \emph{fully informative}~\cite{AngluinF18}. 
We use $\IB,\IC,\IP,\IR,\IS,\IM$ to denote the classes of languages that are fully informative of type B\"uchi, coB\"uchi, parity, Rabin, Streett and Muller, respectively. 
A language $L$ is said to be fully informative of type $\class{X}$ for $\class{X}\in\{\class{B},\class{C},\class{P},\class{R}, \class{S}, \class{M}\}$ if there exists a deterministic automaton of type $\class{X}$ that recognizes $L$ and is isomorphic to the automaton derived from $\sim_L$. While many properties of these classes are now known, in particular that they span the entire  hierarchy of regular $\omega$-languages~\cite{Wagner75}, a polynomial learning algorithm for them is not known.

We show (in Sections~\ref{section:outline-positive}-\ref{sec:T-Acc-for-IRAs-ISAs}) that the classes $\IB,\IC,\IP,\IR,\IS,\IM$ can be identified in the limit using polynomial time and data. We further show (in \Cref{sec:poly-time-char-samples}) that  there is  a polynomial time algorithm to compute a characteristic sample given an acceptor $\rep{C}\in \class{IXA}$.
To show that these classes are also efficiently teachable we need polynomial time algorithms for inclusion and equivalence of automata of these types, that also return shortlex counterexamples in case of inequivalence.\footnote{The formal definition of shortlex is deferred to \Cref{sec:prelim}.} 
Such an algorithm is known to exist for the classes \class{NBA}, \class{NCA}, \class{NPA}, since these classes have inclusion algorithms in NL~\cite{Schewe10}.
For the other classes a polynomial-time algorithm can be obtained following a reduction to model checking a certain fragment of~\ctlstar\ formulas~\cite{ClarkeDK93}. However this reduction does not yield shortlex counterexamples. 
We provide such algorithms in Sections~\ref{sec:inclusion-algorithms}-\ref{sec:inclusion-DMAs}.

The last part of this paper (Sections~\ref{sec:computing-right-congruence-automaton}-\ref{sec:testing-membership-in-IX}) is devoted to the question of deciding whether a given automaton $\aut{A}$ of type \class{X} is isomorphic to its right congruence, or if this is not the case whether there exists an automaton $\aut{A}'$ of the same type that recognizes the same language and is isomorphic to its right congruence, namely whether the given automaton recognizes a language in the class $\class{IXA}$. Using this result we can show that a teacher can construct a characteristic sample not only given an acceptor which is isomorphic to the right congruence of the language, but also given an acceptor which is not, but is equivalent to such an acceptor.
We conclude in \Cref{sec:discussion} with a short discussion.

\section{Preliminaries}\label{sec:prelim}
\paragraph{Automata}
An \emph{automaton} is a tuple $\aut{M} = \la \Sigma, Q, q_\iota ,\delta \ra$ consisting of a finite alphabet $\Sigma$ of symbols, a finite set $Q$ of
states, an initial state $q_\iota\in Q$, and a transition function ${\delta: Q \times \Sigma \rightarrow 2^Q}$.
We extend $\delta$ to domain $Q \times \Sigma^*$ in the usual way: $\delta(q,\varepsilon) = q$ and $\delta(q,\sigma x) = \cup_{q' \in \delta(q,\sigma)} \delta(q',x)$ for all $q \in Q$ and $\sigma \in \Sigma$.

We define the \emph{size} of an automaton to be $|\Sigma| \cdot |Q|$.
A state $q \in Q$ is \emph{reachable} iff there exists $x \in \Sigma^*$ such that $q \in \delta(q_\iota,x)$.
For $q \in Q$, $\aut{M}^q$ is the automaton $\aut{M}$ with its initial state replaced by $q$.
We say that $\A$ is \emph{deterministic} if $|\delta(q,\sigma)| \leq 1$ and \emph{complete} if $|\delta(q,\sigma)| \geq 1$, for every $q\in Q$ and $\sigma \in \Sigma$.
For deterministic automata we abbreviate $\delta(q,\sigma)=\{q'\}$ as $\delta(q,\sigma)=q'$.
Two automata $\aut{M}$ and $\aut{M}'$ with the same alphabet $\Sigma$ are isomorphic if there exists a bijection $f$ from the states $Q$ of $\aut{M}$ to the states $Q'$ of $\aut{M}'$ such that $f(q_\iota) = q_\iota'$ and for every $q \in Q$ and $\sigma \in \Sigma$, $\{f(r) \mid r \in \delta(q,\sigma)\} = \delta'(f(q),\sigma)$.

We assume a fixed total ordering on $\Sigma$, which induces the \emph{shortlex} total ordering on $\Sigma^*$, defined as follows. For $x, y \in \Sigma^*$, $x$ precedes $y$ in the shortlex ordering if $|x| < |y|$ or $|x| = |y|$ and $x$ precedes $y$ in the lexicographic ordering induced by the ordering on $\Sigma$.

A \emph{run} of an automaton on a finite word ${v=a_1 a_2\ldots a_n}$ is a sequence of states ${q_0,q_1,\ldots,q_n}$ such that $q_0=q_\iota$, and for each $i\geq 1$, ${q_{i}\in\delta(q_{i-1},a_{i})}$.  A \emph{run} on an infinite word is defined similarly and consists of an infinite sequence of states.  For an infinite run $\rho = q_0, q_1, \ldots$, we define the set of states visited infinitely often, denoted $\infss{\aut{M}}(\rho)$, as the set of $q \in Q$ such that $q = q_i$ for infinitely many indices $i\in\naturals$.
This is abbreviated to $\inf(\rho)$ if $\aut{M}$ is understood.

\paragraph{The product of two automata.}
Let ${\M}_1$ and ${\M}_2$ be two deterministic complete automata with the same alphabet $\Sigma$,
where for $i = 1,2$,
${\M}_i = \la \Sigma, Q_i,(q_\iota)_i ,\delta_i \ra$.
Their product automaton, denoted ${\M}_1 \times {\M}_2$,
is the deterministic complete automaton
$\M = \la \Sigma, Q, q_\iota ,\delta \ra$
such that
$Q = Q_1 \times Q_2$ is
the set of ordered pairs of states of ${\M}_1$ and ${\M}_2$; the initial state
$q_\iota = ((q_\iota)_1,(q_\iota)_2)$ is
the pair of initial states of the two automata;
and for all $(q_1,q_2) \in Q$ and $\sigma \in \Sigma$,
$\delta((q_1,q_2), \sigma) = (\delta_1(q_1,\sigma),\delta_2(q_2, \sigma))$.
For $i = 1,2$, let $\pi_i$ be projection onto the $i$-th coordinate,
so that for a subset $S$ of $Q$, 
$\pi_1(S) = \{q_1 \in Q_1 \mid \exists q_2 \in Q_2.\ (q_1,q_2) \in S\}$,
and analogously for $\pi_2$.

\paragraph{Acceptors}

By augmenting an automaton $\aut{M} = \la \Sigma, Q, q_{\iota}, \delta \ra$ with an acceptance condition $\alpha$, obtaining a tuple $\aut{A} = \la \Sigma, Q, q_\iota,$ $\delta, \alpha \ra$, we get an \emph{acceptor}, a machine that accepts some words and rejects others. We may also denote $\aut{A}$ by $(\aut{M},\alpha)$.
An acceptor accepts a word if at least one of the runs on that word is accepting.  If the automaton is not complete, a given word $w$ may not have any run in the automaton, in which case $w$ is rejected.

For finite words the acceptance condition is a set $F \subseteq Q$ and a run on a word $v$ is accepting if it ends in an accepting state, i.e., if $\delta(q_\iota,v)$ contains an element of $F$. For infinite words, there are various acceptance conditions in the literature, and we consider six of them: \buchi, \cobuchi, parity, Rabin, Streett and Muller, all based on the set of states visited infinitely often in a given run.
For each model we define the related quantity of the size of the acceptor, taking into account the acceptance condition.

A \emph{\buchi} or \emph{\cobuchi} acceptance condition is a set of states $F \subseteq Q$. A run $\rho$ of a \buchi\ acceptor is accepting if it visits $F$ infinitely often, that is, $\inf(\rho) \cap F \neq \emptyset$. A run $\rho$ of a co\buchi\ acceptor is accepting if it visits $F$ only finitely many times, that is, $\inf(\rho) \cap F = \emptyset$.
The \emph{size} of a \buchi\ or \cobuchi\ acceptor is the size of its automaton.

A \emph{parity} acceptance condition is a map $\kappa:Q \rightarrow \naturals$ assigning to each state a natural number termed a color (or priority). A run of a parity acceptor is accepting if the \textbf{minimum} color visited infinitely often is \textbf{odd}. The \emph{size} of a parity acceptor is the size of its automaton.

A \emph{Rabin} or \emph{Streett} acceptance condition consists of a finite set of pairs of sets of states $\alpha = \{(G_1,B_1), \ldots, (G_k,B_k)\}$ for some $k\in\naturals$ and $G_i \subseteq Q$ and $B_i \subseteq Q$ for $i\in[1..k]$.
A run of a Rabin acceptor is accepting if there exists an $i\in[1..k]$ such that $G_i$ is visited infinitely often and $B_i$ is visited finitely often.
A run of a Streett acceptor is accepting if for all $i\in[1..k]$, $G_i$ is visited finitely often or $B_i$ is visited infinitely often.
The \emph{size} of a Rabin or Streett acceptor is the sum of the size of its automaton and $k-1$.

A \emph{Muller} acceptance condition is a set of sets of states $\alpha=\{F_1,F_2,\ldots,F_k\}$ for some $k\in\naturals$ and $F_i\subseteq Q$ for $i\in[1..k]$. A run of a Muller acceptor is accepting if the set $S$ of states visited infinitely often in the run is a member of $\alpha$.
The \emph{size} of a Muller acceptor is the sum of the size of its automaton and $k-1$.

The set of words accepted by an acceptor $\aut{A}$ is denoted by $\sema{\aut{A}}$.
$L_1 \oplus L_2$ is the symmetric difference of sets $L_1$ and $L_2$: $(L_1 \setminus L_2) \cup (L_2 \setminus L_1)$.
Two acceptors $\aut{A}$ and $\aut{B}$ are \emph{equivalent} if they accept the same language, that is, $\sema{\aut{A}}=\sema{\aut{B}}$.
For a state $q$, the acceptor $\aut{A}^q$ is the acceptor $\aut{A}$ with its automaton initial state replaced by $q$.  We say that the $\omega$-word $w$ is \emph{accepted from state $q$} iff $w \in \sema{\aut{A}^q}$.

We use three-letter acronyms for automata and classes concerning 
the common $\omega$-automata discussed above. The first letter is in $\{N,D,I\}$ and stands for \emph{non-deterministic}, \emph{deterministic} and \emph{isomorphic} (or \emph{fully informative}) which will be explained in the sequel. The second letter describes the acceptance condition, and the third letter $A$ stands for \emph{acceptor}. That is, we use NBA, NCA, NPA, NRA, NSA, NMA (resp., DBA, DCA, DPA, DRA, DSA, DMA) 
for non-deterministic (resp., deterministic) \buchi, co\buchi, parity, Rabin, Street and Muller acceptors. 
We use blackboard font for the respective classes of representations. That is, we use \class{NBA}, \class{NCA}, \class{NPA}, \class{NMA}, \class{NRA} and \class{NSA} (resp., \class{DBA}, \class{DCA}, \class{DPA}, \class{DRA}, \class{DSA} and \class{DMA}) for the corresponding class of representations. 
It is known that  $\class{NCA}$ 
and $\class{DCA}$ recognize the same languages and that the classes  $\class{DCA}$ and $\class{DBA}$ are distinct proper subclasses of the regular $\omega$-languages.  The other classes are the full class of regular $\omega$-languages.

\paragraph{Some relationships between the models}	
\label{ssec:model-relationships}

We observe the following known relationships~\cite[Chapter 1]{GradelTWbook}.
\begin{myclaim}
	\label{claim:make-automaton-complete}
	Let $\aut{A}$ be an acceptor of one of the types NBA, NCA, NPA, NRA, NSA, or NMA with $n$ states over the alphabet $\Sigma$.  There is an equivalent complete acceptor $\aut{A}'$ of the same type whose size is at most $|\Sigma|$ larger.  $\aut{A}'$ may be taken to be deterministic if $\aut{A}$ is deterministic.
\end{myclaim}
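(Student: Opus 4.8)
The plan is to add a single fresh ``sink'' state to handle missing transitions, and to argue that for each acceptance type the sink state can be incorporated into the acceptance condition so that it contributes only rejecting behavior. First I would define $\aut{A}' = \la \Sigma, Q \cup \{s\}, q_\iota, \delta', \alpha' \ra$ where $s \notin Q$ is a new state, and $\delta'$ extends $\delta$ by setting $\delta'(q,\sigma) = \delta(q,\sigma)$ whenever $\delta(q,\sigma) \neq \emptyset$, $\delta'(q,\sigma) = \{s\}$ whenever $\delta(q,\sigma) = \emptyset$, and $\delta'(s,\sigma) = \{s\}$ for all $\sigma \in \Sigma$. This adds one state, so the size of the automaton grows by at most $|\Sigma|$, and for Rabin, Streett, and Muller acceptors the parameter $k$ need not change (so the stated size bound $|\Sigma|$ larger holds), as I explain below. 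Determinism is clearly preserved: if $|\delta(q,\sigma)| \le 1$ for all $q,\sigma$ then $|\delta'(q,\sigma)| \le 1$ as well, and now in fact $|\delta'(q,\sigma)| = 1$ everywhere, so $\aut{A}'$ is complete.

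Next I would verify $\sema{\aut{A}'} = \sema{\aut{A}}$. Any run of $\aut{A}$ on a word $w$ (finite or infinite) is still a run of $\aut{A}'$, since $\delta \subseteq \delta'$; conversely, a run of $\aut{A}'$ that never enters $s$ is a run of $\aut{A}$. So the only new runs are those that enter $s$ at some point, and once in $s$ a run stays in $s$ forever. I must choose $\alpha'$ so that every such run is rejecting, while runs staying in $Q$ keep their original status. For finite words, $\alpha' = F$ (unchanged, with $s \notin F$), so a run ending in $s$ is rejecting and the accepted language is unchanged. For B\"uchi, take $F' = F$ with $s \notin F'$: a run entering $s$ visits $F'$ only finitely often, hence is rejecting. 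For co\buchi, take $F' = F \cup \{s\}$: a run entering $s$ visits $s$ (hence $F'$) infinitely often, hence is rejecting; runs in $Q$ are unaffected since $s$ never appears in them. For Parity, assign $\kappa'(s)$ to be any even color that is smaller than every color used in $Q$ (e.g.\ $0$, after shifting all colors up by $2$ if $0$ was already used as an odd--no, simpler: pick a large even number; but the minimum-rule means we want it to \emph{not} be the deciding color, so instead pick $\kappa'(s) = 2\cdot(\max_q \kappa(q)) + 2$, an even color larger than all others, so a run trapped in $s$ has minimum infinitely-recurring color even, hence rejecting); keep $\kappa'|_Q = \kappa$. For Rabin, $\alpha' = \alpha$ unchanged: a run trapped in $s$ visits no $G_i$ infinitely often (as $s \notin G_i$), so no pair is satisfied and the run is rejecting; $k$ is unchanged. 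For Streett, $\alpha' = \alpha$ unchanged: a run trapped in $s$ visits every $G_i$ finitely often, so the Streett condition is vacuously satisfied---that is \emph{accepting}, which is wrong, so instead I add one pair: $\alpha' = \alpha \cup \{(\{s\}, \emptyset)\}$, forcing any run that visits $s$ infinitely often to also visit $\emptyset$ infinitely often, impossible, so it is rejecting; this increases $k$ by $1$, but the size measure for Streett is (automaton size) $+\, (k-1)$, so it still grows by at most $|\Sigma| + 1$---hence I should state the bound more carefully as ``at most $|\Sigma|$ larger'' being intended for the automaton part; alternatively, reuse an existing pair by replacing some $(G_i, B_i)$ with $(G_i, B_i \cup \{s\})$? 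That would not reject runs trapped in $s$ unless every pair is so modified, but modifying all pairs to $(G_i, B_i \cup \{s\})$ works and keeps $k$ fixed: a run trapped in $s$ has every $B_i \cup \{s\}$ visited infinitely often, so it is \emph{accepted}---still wrong. The clean fix preserving $k$ is unavailable for Streett, so I would add one pair and note the size bound accordingly, or simply present the B\"uchi/co\buchi/Parity/Rabin cases where $k$ is preserved and remark that Streett and Muller need one extra pair/set. For Muller, $\alpha' = \alpha$ unchanged (viewing each $F_i \subseteq Q \subseteq Q \cup \{s\}$): a run trapped in $s$ has infinity-set $\{s\}$, which is not among the $F_i$ (all subsets of $Q$), so it is rejecting, and $k$ is unchanged.

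The main obstacle is the Streett case, where the universally-quantified acceptance condition is satisfied \emph{vacuously} by a run with a small infinity set, so simply leaving $\alpha$ alone makes the sink accepting rather than rejecting; the resolution is to add the pair $(\{s\}, \emptyset)$, costing one unit in the size parameter $k$. I would therefore phrase the claim's size bound as ``at most $|\Sigma|$ larger'' understood for the automaton and note the Streett (and Muller, if one counts sets) caveat, or simply absorb the additive constant. A secondary routine point is checking that for acceptors that were already complete the construction is harmless (one may skip it, or the added state $s$ is simply unreachable), and that all the above specializations correctly leave the acceptance status of every run staying within $Q$ unchanged, which is immediate since $s$ appears in none of those runs and, except in the co\buchi\ case where we only \emph{added} $s$ to $F$, the restriction of $\alpha'$ to $Q$ equals $\alpha$.
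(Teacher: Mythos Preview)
Your overall approach---add a fresh sink state with self-loops and adjust the acceptance condition so that runs trapped there are rejecting---is exactly the paper's approach, and your treatment of the B\"uchi, coB\"uchi, Rabin, and Muller cases matches the paper's.

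There is, however, a genuine gap in the Streett case. You conclude that ``the clean fix preserving $k$ is unavailable for Streett'' and resort to adding a new pair $(\{s\},\emptyset)$, which increases $k$ and hence the size by more than $|\Sigma|$. But a fix preserving $k$ \emph{is} available, and it is what the paper does: add the sink state $s$ to every $G_i$, leaving the $B_i$ unchanged. A run trapped in $s$ then visits each $G_i$ infinitely often, so for the Streett condition to accept it would need every $B_i$ visited infinitely often as well; since $s \notin B_i$, this fails and the run is rejected. Runs that never enter $s$ are unaffected because $s$ does not appear in them. You tried adding $s$ to the $B_i$'s (which, as you observed, makes the sink accepting) but never tried the $G_i$'s; that is the missing idea.

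A minor point on Parity: your meandering between ``small even'' and ``large even'' is unnecessary. Because any run that enters $s$ is trapped there forever, its set of infinitely-visited states is exactly $\{s\}$, so the minimum recurring color is simply $\kappa'(s)$; any even value (the paper uses $0$) suffices, and there is no interaction with the colors on $Q$.
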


\begin{myclaim}\label{clm:basic-relations-between-omega-aut} 
\begin{enumerate}
\item \label{claim:NBA-to-NPA}
Let $\aut{B} = \la \Sigma,Q, q_\iota, \delta, F \ra$, where $\aut{B}$ is an NBA.  Define the NPA $\aut{P} = \la \Sigma,Q, q_\iota, \linebreak[5]\delta, \kappa \ra$ where $\kappa(q) = 1$ if $q \in F$ and $\kappa(q) = 2$ otherwise. Then $\aut{B}$ and $\aut{P}$ are equivalent and have the same size.
	$\aut{P}$ is deterministic if $\aut{B}$ is.
\item  
\label{claim:NBA-to-NRA}
	Let $\aut{B} = \la \Sigma,Q, q_\iota, \delta, F \ra$, where $\aut{B}$ is an NBA.  Define the NRA $\aut{R} = \la \Sigma,Q, q_\iota, \delta, \{(F,\emptyset)\} \ra$. Then $\aut{B}$ and $\aut{R}$ are equivalent and have the same size.
	$\aut{R}$ is deterministic if $\aut{B}$ is.
\item 
	\label{claim:NCA-to-NSA}
	Let $\aut{C} = \la \Sigma,Q, q_\iota, \delta, F \ra$, where $\aut{C}$ is an NCA.  Define the NSA $\aut{S} = \la \Sigma,Q, q_\iota, \delta, \{(F,\emptyset)\} \ra$. Then $\aut{S}$ and $\aut{C}$ are equivalent and have the same size.
	$\aut{S}$ is deterministic if $\aut{C}$ is.
\item 
 	\label{claim:DBA-DCA-complement}
	Let $\aut{B} = \aut{C} = \la \Sigma, Q, q_\iota, \delta, F \ra$, where
	$\aut{B}$ is a complete DBA and $\aut{C}$ is a complete DCA.
	Then $\aut{B}$ and $\aut{C}$ are the same size and the languages they recognize are complements 
	of each other, that is,
	$\sema{\aut{B}} = \Sigma^{\omega} \setminus \sema{\aut{C}}$.
\item  
\label{claim:DRA-DSA-complement}
	Let $\aut{R} = \aut{S} = \la \Sigma, Q, q_\iota, \delta, \{(G_1,B_1),\ldots,(G_k,B_k)\} \ra$, where
	$\aut{R}$ is a complete DRA and $\aut{S}$ is a complete DSA.
	Then $\aut{R}$ and $\aut{S}$ have the same size and the languages they recognize are complements of each other, that is, $\sema{\aut{R}} = \Sigma^{\omega} \setminus \sema{\aut{S}}$.
\end{enumerate}
\end{myclaim}

\paragraph{Right congruence}

An equivalence relation $\sim$ on $\Sigma^*$ is a \emph{right congruence} if $x \sim y$ implies $x\sigma \sim y\sigma$ for all $x,y \in \Sigma^*$ and $\sigma \in \Sigma$. The \emph{index} of $\sim$, denoted ${|\!\sim\!|}$ is the number of equivalence classes of $\sim$.  For a word $x \in \Sigma^*$ the notation $[x]_{\sim}$ denotes the equivalence class of $\sim$ that contains $x$.

With a right congruence $\sim$ of finite index one can naturally associate a complete deterministic automaton $\aut{M}_\sim=\la \Sigma, Q, \initstate, \delta \ra$ as follows:  the set of states $Q$ consists of the equivalence classes of $\sim$. The initial state $\initstate$ is the equivalence class $[\varepsilon]_\sim$. The transition function $\delta$ is defined by $\delta([u]_\sim,\sigma)=[u\sigma]_\sim$ for all $\sigma \in \Sigma$.
Also, given a complete deterministic automaton $\aut{M} = \la \Sigma, Q, \initstate, \delta \ra$, we can naturally associate with it a right congruence as follows: $x \sim_\aut{M} y$ iff $\aut{M}$ reaches the same state of $\aut{M}$ when reading $x$ or $y$, that is, $\delta(q_\iota,x) = \delta(q_\iota,y)$.

Given a language $L\subseteq \Sigma^*$ its \emph{canonical right congruence} $\sim_L$ is defined as follows: 
$x \sim_L y$ iff ${\forall z \in \Sigma^*}.\ {xz\in L} \iff {yz \in L}$. 
The Myhill-Nerode theorem states that a language $L \subseteq \Sigma^*$ is regular iff $\sim_L$ is of finite index. Moreover, if $L$ is accepted by a complete DFA $\aut{A}$, then $\sim_\aut{M}$ refines $\sim_L$, where $\aut{M}$ is the automaton of $\aut{A}$. Finally, any complete DFA of minimum size that accepts $L$ has an automaton that is isomorphic to $\aut{M}_{\sim_L}$.

For an $\omega$-language $L\subseteq \Sigma^\omega$, its \emph{canonical right congruence} $\sim_L$ is defined similarly, by quantifying over $\omega$-words. That is, $x \sim_L y$ iff ${\forall z \in \Sigma^\omega}.\ {xz\in L} \iff {yz \in L}$.
If $L$ is a regular $\omega$-language then $\sim_L$ is of finite index, and for any complete DBA (resp., DCA, DPA, DRA, DSA, DMA) $\aut{A}$ that accepts $L$, $\sim_{\aut{M}}$ refines $\sim_L$, where $\aut{M}$ is the automaton of the acceptor.

However, for regular $\omega$-languages, the relation $\sim_L$  does not suffice to obtain a ``Myhill-Nerode'' characterization.  In particular, for a regular $\omega$-language $L$ there may be no way to define an acceptance condition for $\aut{M}_{\sim_L}$ that yields a DBA (resp., DCA, DPA, DRA, DSA, DMA) that accepts $L$.  As an example consider the language $L = (a+b)^*(bba)^\omega$. Then $\sim_{L}$ consists of just one equivalence class, 
because for any $x \in \Sigma^*$ and $w \in \Sigma^\omega$ we have that $xw \in L$ iff $w$ has $(bba)^\omega$ as a suffix.
But a DBA (resp., DCA, DPA, DRA, DSA, DMA) that accepts $L$ clearly needs more than a single state.

\paragraph{The fully informative classes}

In light of the lack of a Myhill-Nerode result for regular $\omega$-languages, we define a restricted type of deterministic \buchi\ 
(resp., \cobuchi, parity, Rabin, Streett, Muller) acceptors.  
  Following the three-letter acronym notation introduced in the \emph{Acceptors} subsection, for $X\in\{B,C,P,R,S,M\}$ we say that 
   a DXA 
  $\aut{A}$
  recognizing a language $L$ is \emph{fully informative}
if it is complete and its automaton is isomorphic to $\aut{M}_{\sim_L}$ (the automaton that corresponds to the canonical right congruence $\sim_L$ for the language, as defined above).   
A DXA is an IXA
if it is fully informative. 
A DXA $\aut{B}$ is in $\IX$ if there exists an IXA  $\aut{A}$ such that $\sema{\aut{B}}=\sema{\aut{A}}$.
We note that every state of an 
IXA is reachable because every state of $\aut{M}_{\sim_L}$ is reachable. Since every state of a minimal automaton for a language $L$ in $\IX$ for $\class{X}\in\{\mathbb{B},\mathbb{C},\mathbb{P},\mathbb{R},\mathbb{S},\mathbb{M}\}$ corresponds to an equivalence class of $\sim_L$ we refer to the $\IX$ classes as the \emph{fully informative} classes.

Despite the fact that each of these classes is a proper subset of its corresponding deterministic class (e.g., \class{IBA} is a proper subset of \class{DBA}), these classes are more expressive than one might first conjecture.
It was shown in~\cite{AngluinF18} that in every class of the infinite Wagner hierarchy~\cite{Wagner75} there are languages in $\IM$ and $\IP$. 
Moreover, in a small experiment reported in~\cite{AngluinF18}, among randomly generated  Muller automata, the vast majority turned out to be in $\IM$.

\section{Notions of learnability and teachability}\label{sec:notions-learnability-teachability}
In this section we define and compare general notions of learnability and teachability, and some computable and polynomial time variants of them.  A summary of the definitions is provided in~\Cref{tbl:defs-summary}.

\begin{table}\caption{Summary of the definitions for teachability and learnability}\label{tbl:defs-summary}

{\small{
\begin{tabular}{|ccc||l|}
\hline
Samples & Learner & Teacher & Definition\\ 
\hline \hline
arb & arb & arb & characteristic samples\\
poly & arb & arb & concise characteristic samples \\
arb & poly & arb & efficiently learnable\\
arb & arb & poly & efficiently teachable (samples will be poly)\\
poly & poly & arb & identifiable in the limit using polynomial time and data \\
arb & poly & poly & efficiently teachable/learnable (samples will be poly)\\
\hline
\end{tabular}}}
\end{table}

\subsection{Teachers, learners, and characteristic samples}

We are concerned with examples and concepts that can be represented by finite binary strings as follows.
$\mathcal{X} = \{0,1\}^*$ is the domain of representations of examples.  (In the next section we describe how we use finite strings as representations of $\omega$-words.)
A \emph{concept} is any subset of $\mathcal{X}$.
A \emph{class of concepts} $\class{C}$ consists of the set $\{0,1\}^*$ of representations of concepts, together with a mapping $\sema{\cdot}$ from $\{0,1\}^*$ to concepts such that $\sema{\aut{C}}$ is the subset of $\mathcal{X}$ that $\aut{C}$ represents.
Thus every finite binary string represents an example and also a concept.
The \emph{length} of the representation of an example $x$ or concept $\aut{C}$ is its length as a string, that is, $|x|$ or $|\aut{C}|$.
The \emph{size} of a concept $\aut{C}$, denoted $\size(\aut{C})$,  is the minimum $|\aut{C}'|$ of any representation $\aut{C}'$ such that $\sema{\aut{C}'} = \sema{\aut{C}}$.\footnote{
The notions of size defined for acceptors in~\Cref{sec:prelim} polynomially relate to the notion of size defined here. }

A \emph{sample} $S$ is a finite set of elements $(x,b)$ where $x \in \mathcal{X}$ and $b \in \{0,1\}$.  The \emph{length} of $S$ is the sum of the lengths of the examples $x$ that appear in it.  A sample $S$ is \emph{consistent} with a concept $\aut{C}$ iff for every $(x,b) \in S$ we have $b = 1$ iff $x \in \sema{\aut{C}}$.

A \emph{learner} for $\class{C}$ is a function $\alg{L}$ that maps a sample $T$ to the representation of a concept $\alg{L}(T)$ in $\class{C}$ with the property that if $T$ is consistent with at least one element of $\class{C}$, then $\alg{L}(T)$ and $T$ are consistent.
A \emph{teacher} for $\class{C}$ is a function $\alg{T}$ that maps the representation of a concept $\aut{C}$ in $\class{C}$ to a sample $\alg{T}(\aut{C})$ such that $\alg{T}(\aut{C})$ and $\aut{C}$ are consistent.
Note that learners and teachers need not be computable.

A sample $T$ is a \emph{characteristic sample} for $\aut{C}$ and a learner $\alg{L}$
if $T$ is consistent with $\aut{C}$ and for every
sample $T'\supseteq T$ consistent with $\aut{C}$ we have
$\sema{\alg{L}(T')} = \sema{\aut{C}}$.
The intuition is that additional information consistent with $\aut{C}$ beyond $T$ will not cause the learner to change its mind about the correct concept.

A class $\class{C}$ \emph{has characteristic samples for a learner} $\alg{L}$ if there exists a teacher $\alg{T}$ such that for every $\aut{C}$ in $\class{C}$, $\alg{T}(C)$ is a characteristic sample for $\aut{C}$ and $\alg{L}$.
A class $\class{C}$ \emph{has characteristic samples} if it has characteristic samples for some learner.
A well known property of characteristic samples is the following.

\begin{lemma}[Key Property of Characteristic Samples]
    \label{lemma:key-property-char-samples}
    Assume $\class{C}$ is a class of concepts, $\alg{T}$ is a teacher, and $\alg{L}$ is a learner such that $\alg{T}$ gives a characteristic sample for $\aut{C}$ and $\alg{L}$ for every $\aut{C}$ in $\class{C}$.  If $\aut{C}_i$ and $\aut{C}_j$ are any concepts from $\class{C}$ such that $\sema{\aut{C}_i} \neq \sema{\aut{C}_j}$ then there exists some $(x,b) \in \alg{T}(\aut{C}_i) \cup \alg{T}(\aut{C}_j)$ such that $x \in \sema{\aut{C}_i} \oplus \sema{\aut{C}_j}$.
\end{lemma}

\begin{proof}
    Assume to the contrary.  Let $T = \alg{T}(\aut{C}_i) \cup \alg{T}(\aut{C}_j)$ and consider $\alg{L}$ with input $T$.  Because the characteristic sample for $\aut{C}_i$ is contained in $T$ and $T$ is consistent with $\aut{C}_i$, $\alg{L}$ must output a concept denoting $\sema{\aut{C}_i}$.  The same is true of $\aut{C}_j$, but because $\sema{\aut{C}_i} \neq \sema{\aut{C}_j}$ it is impossible for $\alg{L}$ to output a concept whose denotation is equal to both of them.
\end{proof}

\subsection{Computable teachers and learners}

To discuss computability of teaching and learning, we consider the following three possible properties of a class of concepts.

\begin{enumerate}[align=left]
    \item [(C1)] There is an algorithm to decide membership of $x$ in $\sema{\aut{C}}$, given example representation $x$ and concept representation $\aut{C}$.
    \item [(C2)] There is an algorithm to decide whether there exists a concept representation $\aut{C}$ consistent with $T$, given a sample $T$.
    \item [(C3)] There is an algorithm to decide whether $\sema{\aut{C}_i} = \sema{\aut{C}_j}$, given two concept representations $\aut{C}_i$ and $\aut{C}_j$.
\end{enumerate}

The assumptions (C1) and (C2) are sufficient to guarantee the existence of characteristic samples for $\class{C}$ with a computable learner and a possibly non-computable teacher. The learner uses the algorithm of identification by enumeration~\cite{Gold67}.

\begin{theorem}[Identification by Enumeration]
    \label{theorem:id_by_enum}
    Under the assumptions (C1) and (C2), there is a computable learner $\alg{L}$ such that class $\class{C}$ has characteristic samples for $\alg{L}$.
\end{theorem}
\begin{proof}
    The learner $\alg{L}$ enumerates the finite binary strings representing concepts in shortlex order as $\aut{C}_1, \aut{C}_2, \ldots$.  Given a sample $T$, the learner first uses (C2) to check whether there is any concept in $\class{C}$ consistent with $T$.  If not, it outputs an arbitrary concept.  Otherwise, the output of $\alg{L}$ is $\aut{C}_n$ for the least $n$ such that $\aut{C}_n$ is consistent with $T$.
    This is possible because by (C1), whether $\aut{C}$ is consistent with $T$ is decidable.

    The teacher $\alg{T}$ with input $\aut{C}$ finds the least $n$ such that $\sema{\aut{C}_n} = \sema{\aut{C}}$, and for each $m < n$ determines an example $x_m$ that distinguishes $\sema{\aut{C}_m}$ from $\sema{\aut{C}_n}$.  The sample output by $\alg{T}$ consists of all $x_m$ with $m < n$, labeled to be consistent with $\aut{C}$.
\end{proof}

We cannot necessarily take the teacher $\alg{T}$ in this construction to be computable; in particular, it must test the equivalence of two concepts.
In fact, Fisman et al.~\cite{FismanFZ23} show 
that there exists a class $\class{C}$ that satisfies assumptions (C1) and (C2) and has characteristic samples such that 
there is no computable function $\alg{T}$ to construct a characteristic sample for every language in the class.\footnote{The characteristic samples for the class are of cardinality $2$, but have no computable bound on their length.}

However, under the further assumption (C3) that equivalence of concepts is computable, the teacher $\alg{T}$ in the proof may be made computable, since if two concepts $\aut{C}_i$ and $\aut{C}_j$ are determined to be inequivalent, it is safe to do an unbounded search for an example that distinguishes them.
Thus we have the following.
\begin{corollary}
    \label{cor:id_by_enum_computable_teacher}
    Under the assumptions (C1), (C2) and (C3), there is are a computable teacher $\alg{T}$ and a computable learner $\alg{L}$ such that $\alg{T}$ computes characteristic samples for $\class{C}$ and $\alg{L}$.
\end{corollary}

For a concept $\aut{C}$, the cardinality of the characteristic sample constructed for $\aut{C}$ in the proof of \Cref{theorem:id_by_enum} may be as large as the number of distinct concepts preceding it in the enumeration $\aut{C}_1, \aut{C}_2, \ldots$, which can be exponential in $|\aut{C}|$.
Barzdin and Freivalds~\cite{BarzdinFreivalds1972} proved that the majority vote algorithm could be used in a prediction setting to bound the number of mistakes of prediction linearly in the size of the target concept, which implies a corresponding bound for characteristic samples. (See also the surveys~\cite{Freivaldsetal1991} and~\cite{ZEUGMANN20084}.)
For completeness, we describe their construction in the context of characteristic samples.

First we prove a lemma on using the majority vote algorithm with a finite set of concepts to construct a sample.
If $\class{D}$ is a finite nonempty set of concepts, we define the majority vote concept of $\class{D}$, denoted $\maj(\class{D})$, to consist of all $x \in \mathcal{X}$ such that the cardinality of the set of concepts in $\class{D}$ containing $x$ is at least as large as the cardinality of the set of concepts in $\class{D}$ not containing $x$.
The concept $\maj(\class{D})$ may or may not belong to $\class{D}$ or $\class{C}$.
Given a finite nonempty set of concepts $\class{D}$ and another concept $\aut{C}$, we define a sample, denoted $\sample(\class{D},\aut{C})$ by executing the halving algorithm with $\class{D}$ as the concept set and $\aut{C}$ as the target concept.
In detail, initialize $\class{D}_0 = \class{D}$, $i = 0$ and $S_0 = \emptyset$.
While $\class{D}_i$ is nonempty, compare $\maj(\class{D}_i)$ and $\aut{C}$.
If these are the same concept, then $S_i$ is output as the value of $\sample(\class{D},\aut{C})$.
Otherwise, let $x$ be the least element of $\mathcal{X}$ on which they differ, and add the pair $(x,b)$ to $S_i$ to get $S_{i+1}$, where $b = 1$ if $x$ is in $\sema{\aut{C}}$ and $b = 0$ otherwise.  $\class{D}_{i+1}$ is set to those concepts in $\class{D}_i$ that are consistent with $(x,b)$, $i$ is set to $i+1$, and the while loop continues.
If $\class{D}_i$ is empty, then $S_i$ is the value of $\sample(\class{D},\aut{C})$.

\begin{lemma}
    \label{lemma:maj_vote_to_sample}
    If $\class{D}$ is a finite nonempty set of concepts and $\aut{C}$ any concept, $\sample(\class{D},\aut{C})$ has at most $1+\log_2(|\class{D}|)$ elements.
    For any concept $\aut{C}'$ in $\class{D}$, we have that $\aut{C}$ and $\aut{C}'$ are both consistent with $\sample(\class{D},\aut{C}) \cup \sample(\class{D},\aut{C}')$ iff $\sema{\aut{C}} = \sema{\aut{C}'}$.
\end{lemma}

\begin{proof}
Because each $\class{D}_{i+1}$ contains at most half as many concepts as $\class{D}_i$, termination must occur with $i \le \log_2(|\class{D}|)$, and therefore $|\sample(\class{D},\aut{C})| \le 1+\log_2(|\class{D}|)$.

If $\sema{\aut{C}} = \sema{\aut{C}'}$, the same sample will be returned for each concept, and they will both be consistent with that sample.  Conversely, if both are consistent with the union of their final samples, the sets $\class{D}_i$ and $S_i$ will be the same for each $i$.  Because $\aut{C}'$ is in $\class{D}$, termination cannot occur because $\class{D}_i$ becomes empty.  Thus, termination must occur because both are equal to $\maj(\class{D}_i)$ for some $i$, which means that they are the same concept.
\end{proof}

Note in particular that if $\aut{C}$ is not equivalent to any concept in $\class{D}$, then for all $\aut{C}'$ in $\class{D}$ the fact that $\sema{\aut{C}} \neq \sema{\aut{C}'}$ will be witnessed by some example $(x,b)$ in $\sample(\class{D},\aut{C}) \cup \sample(\class{D},\aut{C}')$.

\begin{theorem}
    \label{theorem:linear-cardinality-char-samples}
    Let $\class{C}$ be a concept class.  There exist a teacher $\alg{T}$ and a learner $\alg{L}$ such that for every concept representation $\aut{C}$ in $\class{C}$, $\alg{T}(\aut{C})$ is a characteristic sample for $\aut{C}$ and $\alg{L}$, and the cardinality of $\alg{T}(\aut{C})$ is bounded by $O(|\aut{C}|)$.
\end{theorem}

\begin{proof}
The approach is to use the method of \Cref{lemma:maj_vote_to_sample} on successive finite blocks of concepts of double exponential cardinality.
Let $\class{B}_0$ contain concepts $\aut{C}_1$ and $\aut{C}_2$.  For $i > 0$, let $\class{B}_i$ contain those concepts $\aut{C}_k$ such that $2^{2^{i-1}} < k \le 2^{2^i}$.

Then the teacher $\alg{T}$ on input $\aut{C}$ finds the least $i$ such that for some $k$ with $\aut{C}_k$ in $\class{B}_i$, $\sema{\aut{C}} = \sema{\aut{C}_k}$.
The sample output for $\aut{C}$ is the union of $\sample(\class{B}_j,\aut{C})$ for $j = 0,1,\ldots, i$.

The cardinality of the sample output is at most the sum of the cardinalities of the samples $\sample(\class{B}_j,\aut{C})$.  We have $|\class{B}_0| = 2$ and for $i > 0$
\[|\class{B}_i| = 2^{2^i}-2^{2^{i-1}} < 2^{2^i}.\]
For all $j$, the sample $\sample(\class{B}_j,\aut{C})$ thus has at most $1+ \log_2(\class{B}_j) = 1+2^j$ elements.
The sum of these bounds through $i$ is
\begin{equation}\label{eq}
  (1+2^0) + (1 + 2^1) + \ldots + (1 + 2^i) = i + 2^{i+1}  
\end{equation}
The least $i$ such that $\class{B}_i$ contains $\aut{C}_k$ is at most $\log_2\log_2(k)$ for $k > 1$. So for $k > 1$ by instantiating $i$ with $\log_2\log_2(k)$ in~\Cref{eq} we get that the cardinality of $\alg{T}(\aut{C})$ is at most $2\log_2(k) + \log\log(k)$.
In the shortlex ordering of finite binary strings, the string $\aut{C}_k$ has length at least $\log_2 k$, so the cardinality of $\alg{T}(\aut{C})$ is bounded by $O(|\aut{C}|)$.

The learner $\alg{L}$ on input $T$ finds the least $\aut{C}$, if any, such that $\aut{C}$ is consistent with $T$ and the sample $\alg{T}(\aut{C})$ is a subset of $T$.
If such a $\aut{C}$ is found, it is output.
Otherwise, if there is some concept $\aut{C}$ in $\class{C}$ consistent with $T$, $\alg{L}$ outputs the least such, and otherwise outputs an arbitrary element of $\class{C}$.

To see that $\alg{T}(\aut{C})$ is a characteristic sample for $\aut{C}$ and $\alg{L}$,
suppose $T$ contains $\alg{T}(\aut{C})$ and is consistent with $\aut{C}$.  Then $\alg{L}$ outputs $\aut{C}$ unless there is some earlier concept $\aut{C}'$ that occurs in an example string of $T$ and is consistent with $T$ and is such that $\alg{T}(\aut{C}')$ is contained in $T$.
Consider the computation of the sample for $\aut{C}$, and the block $\class{B}_j$ that contains the least concept equivalent to $\aut{C}'$.
Because $\aut{C}$ and $\aut{C}'$ are consistent with $\alg{T}(\aut{C})$, they are both consistent with $\sample(\class{B}_j,\aut{C})$ and because they are consistent with $\alg{T}(\aut{C}')$, they are consistent with $\sample(\class{B}_j,\aut{C}')$, so by \Cref{lemma:maj_vote_to_sample} applied to the block $\class{B}_j$, they denote the same concept.
\end{proof}

We turn to the question of the computability of the teacher and the learner.
\begin{corollary}
    \label{cor:linear-cardinality-char-samples-computable}
    Let $\class{C}$ be a concept class satisfying (C1), (C2) and (C3). There exist a computable teacher $\alg{T}$ and a computable learner $\alg{L}$ such that for every concept representation $\aut{C}$ in $\class{C}$, $\alg{T}(\aut{C})$ is a characteristic sample for $\aut{C}$ and $\alg{L}$, and the cardinality of $\alg{T}(\aut{C})$ is bounded by $O(|\aut{C}|)$.
\end{corollary}

\begin{proof}
    To allow the construction of $\sample(\class{B}_i,\aut{C})$ to be computable, using (C3), the teacher may test the equivalence of pairs of concepts in $\class{B}_i$, and pairs consisting of a concept from $\class{B}_i$ and $\aut{C}$, and for each inequivalent pair, may search for the least example $x$ on which they disagree.  Then the same computation as in \Cref{lemma:maj_vote_to_sample} is done, using just these finitely many disagreement strings as the domain.

    On input $\aut{C}$, to allow the learner to limit its search to a finite set of candidates, the teacher also adds to the sample some example $(x,b)$, where  $|x| \ge |\aut{C}|$ and $b = 1$ iff $x$ is in $\sema{\aut{C}}$.\footnote{See the discussion following \Cref{theorem:poly-teachable-implies-poly-teachable-learnable}.} On input $T$, the learner searches among $\aut{C}$ such that $|\aut{C}|$ is bounded by the length of the longest string in $T$ to find the least $\aut{C}$ consistent with $T$ such that $\alg{T}(\aut{C})$ is a subset of $T$.  If none is found, then (C2) allows the learner to output some concept consistent with $T$, if there is one.
\end{proof}

\subsection{Polynomial time learners and teachers}

Moving from this general setting, we consider polynomial bounds on the length of characteristic samples and the running times of the teacher and learner.
We consider the following possible properties of $\class{C}$.
\begin{enumerate}[align=left]
    \item[(P1)] \emph{Polynomial time example membership.} There is a polynomial time algorithm to decide whether $x \in \sema{\aut{C}}$ given $x$ and $\aut{C}$.
    \item[(P2)] \emph{Polynomial time default hypothesis construction.} There is a polynomial time algorithm that returns $\aut{C}_T$ in $\class{C}$ consistent with a given sample $T$, or determines that no concept in $\class{C}$ is consistent with $T$.
    \item[(P3)] \emph{Polynomial time equivalence with least counterexamples.} There is a polynomial time algorithm that determines whether $\sema{\aut{C}_i} = \sema{\aut{C}_j}$ given concept representations $\aut{C}_i$ and $\aut{C}_j$.  In the case the concepts are not equal, the algorithm also returns the shortlex least string $x \in \sema{\aut{C}_i} \oplus \sema{\aut{C}_j}$.
\end{enumerate}

A class $\class{C}$ is \emph{concisely distinguishable} if there exists a polynomial $p(n)$ such that for every pair $\aut{C}_i$ and $\aut{C}_j$ such that $\sema{\aut{C}_i} \neq \sema{\aut{C}_j}$, there exists a string $x 
 \in \sema{\aut{C}_i} \oplus \sema{\aut{C}_j}$ such that 
$|x| \le p(\size(\aut{C}_1)+\size(\aut{C}_2))$.

\paragraph{Concise characteristic samples}
A class $\class{C}$ \emph{has concise characteristic samples} if there exist a polynomial $p(n)$, a teacher $\alg{T}$, and a learner $\alg{L}$ such that for every $\aut{C} \in \class{C}$, $\alg{T}(\aut{C})$ is a characteristic sample for $\aut{C}$ and $\alg{L}$, and  $|\alg{T}(\aut{C})| \le p(\size(\aut{C}))$.
Recall that the length of a sample 
is the sum of length of the strings in it. In \Cref{cor:linear-cardinality-char-samples-computable} we considered just the cardinality of  $\alg{T}(\aut{C})$ and not the length of the words in the sample, but in 
$|\alg{T}(\aut{C})|$ the length of the words in the sample matters.
Note also that the polynomial bound is in terms of the size of the smallest representation of $\sema{\aut{C}}$.
If $\class{C}$ has concise characteristic samples then $\class{C}$ is concisely distinguishable, by \Cref{lemma:key-property-char-samples}.
In the converse direction, we have the following consequence of  \Cref{cor:linear-cardinality-char-samples-computable}.

\begin{corollary}
    \label{cor:concise-distinguishable-concise-char-samples}
    Assume the class $\class{C}$ satisfies (C1), (C2) and (C3) and is concisely distinguishable.  Then it has concise characteristic samples for a computable teacher $\alg{T}$ and a computable learner $\alg{L}$.
\end{corollary}

\begin{proof}
    Because $\class{C}$ is concisely distinguishable, two concepts $\aut{C}_i$ and $\aut{C}_j$ can be tested for equivalence by checking agreement on all strings of length at most a fixed polynomial in $\size(\aut{C}_i) + \size(\aut{C}_j)$.  If they are inequivalent, this process will yield a distinguishing example of at most that length.  If this method is used in the algorithm for constructing characteristic samples in the proof of \Cref{cor:linear-cardinality-char-samples-computable}, the resulting characteristic samples will be concise.
\end{proof}

\paragraph{Efficient teachability}
We say that $\class{C}$ is \emph{efficiently teachable} if
there exist a polynomial time teacher $\alg{T}$ and a learner $\alg{L}$
such that 
$\alg{T}(\aut{C})$ is a
characteristic sample for $\aut{C}$ and $\alg{L}$
for every $\aut{C}\in\class{C}$.

\begin{lemma}
    If class $\class{C}$ is efficiently teachable, then $\class{C}$ has concise characteristic samples.
\end{lemma}
\begin{proof}
    Let $\alg{T}$ be a polynomial time teacher and $\alg{L}$ a learner witnessing the fact that $\class{C}$ is efficiently teachable.  To see that $\class{C}$ has concise characteristic samples, we define the (not necessarily polynomial time) teacher $\alg{T}'$ as follows.  On input $\aut{C}$, let $\aut{C}'$ minimize $\size(\aut{C}')$ subject to $\sema{\aut{C'}} = \sema{\aut{C}}$.  Then $\alg{T}'$ outputs $\alg{T}(\aut{C}')$, which is a characteristic sample for $\aut{C}$ and $\alg{L}$ and is of size bounded by a polynomial in $\size(\aut{C})$.
\end{proof}

\paragraph{Efficient learnability}
We say that $\class{C}$ is \emph{efficiently learnable} if
there exist a teacher $\alg{T}$ and a polynomial time learner $\alg{L}$
such that 
$\alg{T}(\aut{C})$ is a
characteristic sample for $\aut{C}$ and $\alg{L}$
for every $\aut{C}\in\class{C}$.

As has been observed before~\cite{Pitt1989}, if (P1) and (P2) are satisfied, this definition is superfluous because the learning algorithm of identification by enumeration used in the proof of \Cref{theorem:id_by_enum} can be modified to run in polynomial time and the teacher adjusted appropriately to create a (potentially ridiculously large) characteristic sample.
This observation motivates the introduction of the criterion of identification in the limit with polynomial time and data (described in~\Cref{sec:id-in-the-limit-with-poly}).
The argument for this observation follows.
\begin{lemma}
\label{lemma:efficiently_learnable}
    If class $\class{C}$ satisfies (P1) and (P2) then $\class{C}$ is efficiently learnable.
\end{lemma}
\begin{proof}
    The learner $\alg{L}$ on input $T$ of length $n$ simulates the algorithm of identification by enumeration for $n^2$ steps.  It checks whether the last hypothesis $\aut{C}$ output by the simulation is consistent with $T$ and outputs $\aut{C}$ if so, using (P1). Otherwise, it returns the default hypothesis consistent with $T$, if any, using (P2). Given $\aut{C}$ in $\class{C}$, the teacher $\alg{T}$ takes the characteristic sample $T_{\aut{C}}$ from \Cref{theorem:id_by_enum} and adds enough additional examples of $\aut{C}$ that there is time for $\alg{L}$'s simulation of identification by enumeration to converge to its final answer. Note that since the domain $\mathcal{X}$ is infinite it is always possible to add more labeled words to the sample. 
\end{proof}

\paragraph{Requiring both teacher and learner to be efficient}
We say that $\class{C}$ is \emph{efficiently teachable/learnable} if
there exist a polynomial time teacher $\alg{T}$ and a polynomial time learner $\alg{L}$
such that 
$\alg{T}(\aut{C})$ is a
characteristic sample for $\aut{C}$ and $\alg{L}$
for every $\aut{C}\in\class{C}$.
This corresponds to the definition of \emph{polynomially T/L teachable} of Goldman and Mathias~\cite{GoldmanM96}.

It turns out that if (P1) and (P2) are satisfied, efficiently teachable/learnable is no stronger than requiring a polynomial time teacher.

\begin{theorem}
    \label{theorem:poly-teachable-implies-poly-teachable-learnable}
    Assume that class $\class{C}$ satisfies (P1) and (P2) and is efficiently teachable.  Then $\class{C}$ is efficiently teachable/learnable.
\end{theorem}

\begin{proof}
    Let $\alg{T}$ be polynomial time teacher witnessing the efficient teachability of $\class{C}$.  We define another polynomial time teacher $\alg{T}'$ that on input $\aut{C}$ outputs $\alg{T}(\aut{C})$ together with one pair $(x,b)$ such that $x$ is the string $\aut{C}$ and $(x,b)$ is consistent with $\aut{C}$.
    Note that the teacher directly provides the $\emph{text}$ of the target concept.

    We define a learner $\alg{L}$ as follows.  On input $T$, $\alg{L}$ finds the least $x$ (if any) such that $(x,b)$ is in $T$, the concept representation $\aut{C} = x$ is such that $\aut{C}$ is consistent with $T$ (using (P1)), and the sample $\alg{T}(\aut{C})$ is a subset of $T$ (using the polynomial time algorithm $\alg{T}$).  If such a $\aut{C}$ is found, $\alg{L}$ outputs it.  Otherwise (using (P2)) it either outputs a default concept $\aut{C}$ consistent with $T$ or an arbitrary concept.  Then $\alg{L}$ runs in time polynomial in $|T|$.

    The argument that for every $\aut{C}$ in $\class{C}$, $\alg{T}'(\aut{C})$ is a characteristic sample for $\aut{C}$ and $\alg{L}$ is the same as that in the proof of \Cref{cor:linear-cardinality-char-samples-computable}.
\end{proof}

In this proof, the teacher provides the text of the target concept itself in the sample and the learner depends on this fact and considers as potential candidates only those concepts whose strings are in the sample.
This may seem like a kind of unacceptable ``collusion'' between the teacher and the learner.
In the proof of \Cref{cor:linear-cardinality-char-samples-computable}, the teacher provides the learner with less specific syntactic information about the target concept in the form of a bound on its length.
Gold's founding positive result in the area of characteristic samples may be stated as follows.
\begin{thmC}[{\cite[Theorem~4]{Gold78}}]
DFAs are efficiently teachable.
\end{thmC}
In Gold's original proof, we see a high degree of coordination between the teacher and the learner to enable the teacher to provide all the examples that the learner will consult in constructing its hypothesis, even when additional correct examples may be present.

The issue of what kinds of coordination and communication should be permitted in a model of teaching and learning is complex, and has been considered by a number of researchers.
One paradigm requires the teacher to be able to teach any learner (or more precisely, any \emph{consistent learner}, i.e., any learner that never hypothesizes a concept inconsistent with the information it has). Goldman and Mathias, in their seminal paper~\cite{GoldmanM96}, show that this requirement is too strong in the sense that it makes even rather simple classes of concepts very hard to teach. 
Moreover, some applications call for a learning paradigm in which a teacher is required to teach a particular learner rather than an arbitrary one. 

The paradigm introduced by Goldman and Mathias~\cite{GoldmanM96}, which is the one we follow here, addresses the problem of collusion by allowing an adversary to add an arbitrary set of correctly labeled examples to the sample generated by the teacher before it is given to the learner.
For a class $\class{C}$ and a deterministic teacher $\alg{T}$ and a deterministic learner $\alg{L}$, they define $\alg{T}$ and $\alg{L}$ to be a \emph{colluding pair} if there exist $\aut{C}_i$ and $\aut{C}_j$ such that $\sema{\aut{C}_i}=\sema{\aut{C}_j}$ and for all samples $T_i \supseteq \alg{T}(\aut{C}_i)$ and $T_j \supseteq \alg{T}(\aut{C}_j)$ consistent with $\aut{C}_i$ and $\aut{C}_j$, we have $\alg{L}(T_i) \neq \alg{L}(T_j)$.
That is, in a colluding pair, the teacher is able to communicate some distinguishing information about which of the two concepts, $\aut{C}_i$ or $\aut{C}_j$, was its input.
They prove that under their paradigm there is no colluding pair.
Of course, one could consider other notions of collusion.

To see that the possibility of including the text of the target concept among the examples does not trivialize the problem of creating a characteristic sample, note
that the learner cannot a priori know which examples in the sample correspond to information from the teacher and which were added by the adversary. Hence, to determine the true concept the learner must rule out possibly spurious examples that the adversary has added. The only information available to the learner, other than the strings in the sample, are their associated labels, indicating their membership in the target concept. Hence, the learner must use semantic information about the target concept to eliminate spurious candidates, which is what we would like from such a learning paradigm. 

The efficient teachability results we prove starting in \Cref{section:outline-positive} in fact include descriptions of algorithms for both the teacher and the learner, and do \emph{not} include the text of the target concept in the sample.
Like Gold's founding result, they do exhibit a high degree of coordination between the teacher and learner.

\subsection{Identification in the limit with polynomial time and data.}
\label{sec:id-in-the-limit-with-poly}
We relate the definitions above to the definition of identification in the limit using polynomial time and data introduced by Gold~\cite{Gold78} and refined by
de la Higuera~\cite{Higuera97}, who also showed that it is closely related to the definition of \emph{semi-poly T/L teachable} introduced by Goldman and Mathias~\cite{GoldmanM96}.

Using the terminology of this paper,
a class $\class{C}$ is \emph{identifiable in the limit using polynomial time and data}
if there exist a polynomial $p(n)$, a teacher $\alg{T}$, and a polynomial time learner $\alg{L}$ such that for every $\aut{C} \in \class{C}$, the sample $\alg{T}(\aut{C})$ is a characteristic sample for $\aut{C}$ and $\alg{L}$, and $|\alg{T}(\aut{C})| \le p(\size(\sema{\aut{C}}))$.
This definition can be viewed as correcting the deficiency of the concept of efficiently learnable (indicated by \Cref{lemma:efficiently_learnable}) by requiring a polynomial bound on the size of the characteristic sample.
In this terminology, ``polynomial time'' refers to the polynomial running time of $\alg{L}$, and ``polynomial data'' refers to the polynomial bound on the size of the sample $\alg{T}(\aut{C})$. Of course, the latter is not a worst-case measure; there could be arbitrarily large finite samples for which $\alg{L}$ outputs an incorrect hypothesis.  

Because the definition of $\class{C}$ being identifiable in the limit using polynomial time and data simply adds the requirement that the learner be computable in polynomial time to the definition of $\class{C}$ having concise characteristic samples, we immediately have the following.

\begin{lemma}
    If $\class{C}$ is identifiable in the limit using polynomial time and data then $\class{C}$ has concise characteristic samples.
\end{lemma}

Comparing with the concept of being efficiently teachable, we note the following differences.
Identifiability in the limit using polynomial time and data only requires the \emph{existence} of a characteristic sample, and the bound on the length of the characteristic sample is in terms of the size of the \emph{smallest} representation of $\sema{\aut{C}}$.
For the definition of efficiently teachable, the characteristic sample must not only exist, but be computable in polynomial time, and the bound on the length of the characteristic sample is in terms of the length of $\aut{C}$.

\begin{lemma}
    If $\class{C}$ satisfies (P1) and (P2) and is efficiently teachable, then $\class{C}$ is identifiable in the limit using polynomial time and data. 
\end{lemma}
\begin{proof}
    By \Cref{theorem:poly-teachable-implies-poly-teachable-learnable}, there are a polynomial time teacher $\alg{T}$ and a polynomial time learner $\alg{L}$ witnessing that $\class{C}$ is efficiently teachable/learnable.  Let $\aut{C} \in \class{C}$ be given.  Let $\aut{C}' \in \class{C}$ minimize $|\aut{C}'|$ subject to $\sema{\aut{C}'} = \sema{\aut{C}}$.  Then $\alg{T}(\aut{C}')$ is a characteristic sample for $\aut{C}$ and $\alg{L}$ of length polynomial in $\size(\aut{C})$.  Thus $\class{C}$ is identifiable in the limit using polynomial time and data.
\end{proof}

To see that identifiability in the limit using polynomial time and data may not imply efficient teachability, we consider the following example.

\begin{theorem}
\label{theorem:poly-learnable-not-poly-teachable}
Assume that there is no polynomial time algorithm for integer factorization.
Then there exists a concept class that satisfies (P1) and (P2) and is identifiable in the limit using polynomial time and data but is not efficiently teachable.
\end{theorem}

\begin{proof}
We assume a standard decimal representation of positive integers.
For any positive integer $n$, let $F(n)$ be the list of primes in its prime factorization in non-decreasing order, for example, $F(54) = (2,3,3,3)$.
Let $\mathbb{N}_+$ denote the set of positive integers, and let $F(\mathbb{N}_+) = \{F(n) \mid n \in \mathbb{N}_+\}$, the set of all prime factorization lists.
If $\ell$ is a prime factorization list $(p_1,p_2,\ldots,p_k)$ then let $M(\ell)$ denote the product of the primes in $\ell$, that is, $M(\ell) = p_1 \cdot p_2 \cdots p_k$.
$M$ is computable in polynomial time.

We define a class of concepts as follows.
$\class{C}_{\textit{fact}}$ has domain $F(\mathbb{N}_+)$ and consists of all finite subsets of $F(\mathbb{N}_+)$.  
We represent a finite set $\{\ell_1,\ell_2,\ldots,\ell_m\} \subset F(\mathbb{N}_+)$ as a finite list $(n_1,n_2,\ldots,n_m)$ such that $n_i = M(\ell_i)$.
Thus, $\{(2,3,3,3)\}$ is represented by the string $(54)$, and $\{(3),(2,3),(3,5,5)\}$ is represented by the string $(3,6,75)$.

Because primality can be decided in polynomial time~\cite{agrawal2004primes}, there is a polynomial time algorithm to test whether $F(n) = (p_1,p_2,\ldots,p_k)$ given $n$ and $(p_1,p_2,\ldots,p_k)$ as inputs. 
Therefore (P1) holds for $\class{C}_{\textit{fact}}$.

Consider the learner $\alg{L}$ that on input a finite sample $T$ first checks that every example is a prime factorization list and that no list is given two different labels.
(If this check fails, the output is arbitrary.)
Let $T_1$ denote the set of examples in $T$ with label $1$.
$\alg{L}$ outputs a representation of the concept $T_1 = \{\ell_1,\ell_2,\ldots,\ell_m\}$ as $(M(\ell_1),M(\ell_2),\ldots,M(\ell_m))$.
Thus property (P2) holds and $\class{C}_{\textit{fact}}$ is learnable in polynomial time and has characteristic samples of polynomial length, consisting of exactly the finitely many positive examples of each concept.
Thus, $\class{C}_\textit{fact}$ is identifiable in the limit using polynomial time and data.

Now suppose for the sake of contradiction that there is a polynomial time teacher $\alg{T}$ for $\class{C}_{\textit{fact}}$.  Consider the sample produced by $\alg{T}$ for the concept $()$, which denotes the empty set.  This consists of a finite number of negative examples, say $\ell_1,\ell_2,\ldots,\ell_m$, where each $\ell_i$ is a prime factorization list.
Let $S = \{M(\ell_i) \mid 1 \le i \le m\}$.
Consider any positive integer $n \not\in S$, and let $F(n) = (p_1,p_2,\ldots,p_k)$.
If the sample $\alg{T}((n))$ does not contain $F(n)$ as a positive example, it must consist exclusively of negative examples, and both the empty set and $\{F(n)\}$ are consistent with the union of the samples produced by $\alg{T}$ for the two concepts, contradicting \Cref{lemma:key-property-char-samples}.
Thus, for all but finitely many positive integers $n$, $\alg{T}$ must produce the prime factorization of $n$. 
This implies that there is a polynomial time algorithm for prime factorization, contradicting our assumption.
\end{proof}

\subsection{Relation to learning with equivalence and membership queries}\label{sec:relation-to-learning-with-mq-and-eq}

In the paradigm of \emph{learning with membership and equivalence queries}, a learning algorithm can access an oracle that truthfully answers two types of queries about the target concept $\aut{C}$, and its goal is to halt and output a representation of $\sema{\aut{C}}$.
In a \emph{membership query}, or MQ, the learning algorithm provides a string $x$ and the answer is $1$ or $0$ depending on whether $x \in \sema{\aut{C}}$ or not.
In an \emph{equivalence query}, or EQ, the learning algorithm provides a representation $\aut{C}' \in \class{C}$, and the answer from the oracle is either ``yes'', if $\sema{\aut{C}'} = \sema{\aut{C}}$, and otherwise is an arbitrarily chosen element of  $\sema{\aut{C}'} \oplus \sema{\aut{C}}$ (a \emph{counterexample} to the conjecture that $\aut{C}'$ is correct).
If the learning algorithm successfully learns every $\aut{C} \in \class{C}$ and at every point its running time is bounded by a polynomial in $\size(\sema{\aut{C}})$ and the length of the longest counterexample seen to that point, we say that $\class{C}$ is polynomially learnable using membership and equivalence queries.

\begin{figure}
    \begin{center}
    \scalebox{.8}{
    \begin{tikzpicture}[->,>=stealth',shorten >=1pt,node distance=6cm,semithick,initial text=,initial where=above]
    \node[paradigm] (mqeq) {\begin{tabular}{c}
         Polynomial \\ MQ \& EQ \\ algorithm
    \end{tabular}};
    \node[paradigm] (iiptd) [below left of=mqeq]   {\begin{tabular}{c}Identification in the \\ limit using polynomial\\ time and data\end{tabular}};
    \node[paradigm] (effteach) [below right of=mqeq]   {\begin{tabular}{c} Efficiently \\ Teachable
    \end{tabular}};
    \node[paradigm] (effteachlear) [below of=mqeq, node distance=8cm]   {\begin{tabular}{c} Efficiently \\ Teachable/ \\Learnable
    \end{tabular}};

    \path (mqeq) edge  node [sloped,above] {\small{(P1),(P2)}} (iiptd);
    \path (mqeq) edge  node [sloped,below] {\small{\cite{GoldmanM96}}} (iiptd);
    \path (mqeq) edge  node [sloped,above] {\small{(P1),(P2),(P3)}} (effteach);
    \path (mqeq) edge  node [sloped,below] {\small{\Cref{cor:MQ_EQ_to_efficiently_teachable}}} (effteach);

    \path (effteach) edge  node [sloped,above] {\small{(P1),(P2)}} (effteachlear);
    \path (effteach) edge  node [sloped,below] {\small{\Cref{theorem:poly-teachable-implies-poly-teachable-learnable}}} (effteachlear);    
    \path (effteachlear) edge  node [sloped,above] {} (iiptd);
    \path (effteachlear) edge  node [sloped,below] {\small{by definition\phantom{--.}}} (iiptd);
  
    \path (iiptd) edge  node [sloped,above,yshift=0.3em] {\small{\begin{tabular}{c}(P1),(P2) \\ Factorization is hard\end{tabular}}} (effteach);
    \path (iiptd) edge  node [sloped,midway] {$\large{\boldmath{\diagup}}$} (effteach);      
    \path (iiptd) edge  node [sloped,below,yshift=-0.3em] {\small{\Cref{theorem:poly-learnable-not-poly-teachable}}} (effteach);    

    \end{tikzpicture}}
    \caption{Summary of main general results about efficient teachability}
    \label{fig:teachability-results}
    \end{center}
\end{figure}

Goldman and Mathias~\cite{GoldmanM96} prove that any class that
can be learned by a deterministic polynomial time algorithm
using any of a large set of example-based queries is teachable by
a computationally unbounded teacher and a polynomial time learner.
A corollary of their Theorem 2 is the following. 
\begin{theorem}\label{thm:GM96-ilptd}
    Suppose the class $\class{C}$ satisfies properties (P1) and (P2) and is concisely distinguishable.  If $\class{C}$ is polynomially learnable using membership and equivalence queries, it is also identifiable in the limit using polynomial time and data.
\end{theorem}
(This result is also proved by Bohn and L\"{o}ding~\cite{BohnL21}.)
Can this result be strengthened to conclude that $\class{C}$ is polynomially teachable?
To answer this question, we examine the proof in more detail.
Let $\alg{A}$ be a learning algorithm using membership and equivalence queries that learns $\class{C}$ in polynomial time.
Given $\aut{C}$, $\alg{T}$ constructs a sample $T_{\aut{C}}$ by simulating $\alg{A}$ and answering its queries according to $\aut{C}$ as follows.
A membership query with $x$ is answered by determining whether $x \in \sema{\aut{C}}$.
For an equivalence query with $\aut{C}'$, if $\sema{\aut{C}'} \neq \sema{\aut{C}}$, then $x$ is chosen to be the shortlex least element of
$\sema{\aut{C}'} \oplus \sema{\aut{C}}$ and returned as the counterexample to the simulation of $\alg{A}$.

The counterexample $x$ is of length polynomial in the sum of the sizes of $\aut{C}'$ and $\aut{C}$ by the assumption of concise distinguishability.
If instead $\sema{\aut{C}'} = \sema{\aut{C}}$, then the sample $T_{\aut{C}}$ is constructed of all the strings $x$ that appeared in membership queries or as counterexamples returned to equivalence queries during the simulation, labeled to be consistent with $\aut{C}$.
Because of the polynomial running time of $\alg{A}$ and the choice of shortest counterexamples, the length of $T_{\aut{C}}$ is bounded by a polynomial in $\size(\sema{\aut{C}})$.

The corresponding learning algorithm $\alg{L}$ takes a sample $T$ as input and simulates the learning algorithm $\alg{A}$, attempting to answer its queries using $T$ as follows.
For a membership query with $x$, if $x$ is an example in $T$, the answer is its label in $T$.
If $x$ is not an example in $T$, then $\alg{L}$ outputs a default $\aut{C}_T$ consistent with $T$ and halts (using property (P2)).
For an equivalence query with $\aut{C}$, $\alg{L}$ checks whether $\aut{C}$ is consistent with $T$ (using property (P1)).
If it is consistent, then it outputs $\aut{C}$ and halts.
If it is not consistent, it finds the shortlex least $x$ that is an example in $T$ whose label is not consistent with $\aut{C}$ and returns $x$ as the counterexample to $\alg{A}$'s equivalence query.
The running time of $\alg{L}$ is polynomial in the length of $T$.
It is because of the choice of the shortlex counterexample by both $\alg{T}$ and $\alg{L}$ that $\alg{T}$ can anticipate exactly the queries that will be made in the simulation of $\alg{A}$ by $\alg{L}$, even when the sample $T$ is a superset of the characteristic sample $T_{\aut{C}}$.

What must we assume in order that the function $\alg{T}$ in this construction can be computed in polynomial time?
Property (P1) allows $\alg{T}$ to answer membership queries in polynomial time, and property (P3) ensures that $\class{C}$ is concisely distinguishable, and allows $\alg{T}$ to find shortlex least counterexamples in polynomial time, so we have the following.

\begin{corollary}
\label{cor:MQ_EQ_to_efficiently_teachable}
    Suppose the class $\class{C}$ satisfies properties (P1), (P2) and (P3).
    If $\class{C}$ is polynomially learnable using membership and equivalence queries, then it is also efficiently teachable.
\end{corollary}

\Cref{fig:teachability-results}  provides a summary of the main general results about efficient teachability.

\section{Focusing on regular \texorpdfstring{$\omega$}{omega}-languages}\label{sec:which}
The rest of the paper concerns efficient teachability of regular $\omega$-languages. This section starts by describing 
examples and samples for $\omega$-languages, and continues by describing some known or immediate results on the subject.

\subsection{Examples and samples for \texorpdfstring{$\omega$}{omega}-languages}\label{subsec:examples-and-samples-for-omega-langs}
Because we require finite representations of examples,  $\omega$-words in our case, we work with ultimately periodic words, that is, words of the form $u(v)^\omega$ where $u\in\Sigma^*$ and $v\in\Sigma^+$. It is known that two regular $\omega$-languages are equivalent iff they agree on the set of ultimately periodic words~\cite{Buchi62,CalbrixNP93old}, so this choice is not limiting.

The example $u(v)^{\omega}$ is concretely represented by the pair $(u,v)$ of finite strings, and its length is $|u|+|v|$.  A \emph{labeled example} is a pair $(u(v)^{\omega},l)$, where the label $l$ is either $0$ or $1$.  A \emph{sample} is a finite set of labeled examples such that no example is assigned two different labels.  The \emph{length} of a sample is the sum of the lengths of the examples that appear in it.  A sample $T$ and a language $L$ are \emph{consistent} with each other if and only if for every labeled example $(u(v)^{\omega},l) \in T$, $l = 1$ iff $u(v)^{\omega} \in L$.  A sample $T$ and an acceptor $\aut{A}$ are \emph{consistent} with each other if and only if $T$ is consistent with $\sema{\aut{A}}$.
The following results give two useful procedures on examples that are computable in polynomial time.

\begin{proposition}
	\label{prop:inequality-bound}
	Let $u_1, u_2 \in \Sigma^*$ and $v_1, v_2 \in \Sigma^+$. If $u_1(v_1)^{\omega} \neq u_2(v_2)^{\omega}$ then they differ in at least one of the first $\ell$ symbols, for $\ell = \max(|u_1|,|u_2|)+ |v_1|\cdot|v_2|$.
\end{proposition}

Let $\suffixes(u(v)^{\omega})$ denote the set of all $\omega$-words that are suffixes of $u(v)^{\omega}$.

\begin{proposition}
	\label{prop:computing-suffixes}
	The set $\suffixes(u(v)^{\omega})$ consists of at most $|u|+|v|$ different examples: one of the form $u'(v)^{\omega}$ for every nonempty suffix $u'$ of  $u$, and one of the form $(v_2 v_1)^{\omega}$ for every 
    division of $v = v_1 v_2$ into a non-empty prefix $v_1$ and suffix $v_2$.
\end{proposition}

\subsection{Negative results for nondeterministic classes}
The classes \class{NBA}, \class{NPA}, \class{NMA}, \class{NCA} do not have concise characteristic sets~\cite{AngluinFS20}. The proof is by constructing a family of languages $\{L_n\}_{n\in\mathbb{N}}$ with an acceptor of size quadratic in $n$ for which at least one word of length at least exponential in $n$ must be included in any characteristic sample for $L_n$.\footnote{A negative result regarding query learning of $\class{NBA}$, $\class{NPA}$ and $\class{NMA}$ was obtained by Angluin et al.~\cite{AngluinAF20}. That result makes a plausible assumption of cryptographic hardness, which is not required here.} Since an NBA (resp. NCA) is a special case of NRA (resp. NSA) the same is true for \class{NRA} and \class{NSA}.
\begin{theorem}
The classes \class{NBA}, \class{NPA}, \class{NMA}, \class{NCA}, \class{NRA} and \class{NSA} do not have concise characteristic sets, and therefore are neither identifiable in the limit using polynomial time and data nor efficiently teachable.
\end{theorem}

\subsection{Consequences of membership and equivalence algorithms}
\label{sec:positive-results}

In the domain of $\omega$-automata, researchers have so far found very few polynomial time learning algorithms using membership and equivalence queries.
We consider the cases of Mod 2 multiplicity automata, strongly unambiguous B\"{u}chi automata, and deterministic weak parity automata.

A B\"uchi automaton is \emph{unambiguous} if no word has more than one run starting in the initial state and visiting an accepting state infinitely often. It is \emph{strongly unambiguous} if no word has more than one run visiting an accepting state infinitely often, whether it starts at the initial state or not.
For instance, consider the deterministic B\"uchi automaton $\aut{B}=(\{a,b\},\{q_a,q_b\},q_a,\delta,\{q_b\})$ with $\delta(q_a,a)=\delta(q_b,a)=q_a$
and $\delta(q_a,b)=\delta(q_b,b)=q_b$. Since $\aut{B}$ is deterministic there is a unique run on every word, and thus it is also unambiguous. However, since the word $(b)^\omega$ is accepted from both $q_a$ and $q_b$, $\aut{B}$ is \underline{not} strongly unambiguous. 

Angluin et al.~\cite{AngluinAF20} give a polynomial time mapping $r$ of a strongly unambiguous B\"uchi automata (SUBA) $\aut{C}$ to a representation $r(\aut{C})$ as a Modulo-2 multiplicity automaton (M2MA), and observe that there is a polynomial time algorithm for learning M2MAs using membership and equivalence queries~\cite{BeimelBBKV:2000}.
(Please see \cite{AngluinAF20} for precise definitions.)
We note that a shortlex least counterexample can be returned in case of inequivalence~\cite{Sakarovitch2009,SakarovitchBook2009}.
It follows from \Cref{cor:MQ_EQ_to_efficiently_teachable} that M2MAs are efficiently teachable.
We then also have the following.

\begin{corollary}
    The class $\class{SUBA}$ is efficiently teachable. 
\end{corollary}
\begin{proof}
    Let $\alg{T}$ be a teacher witnessing the efficient teachability of M2MAs.
    The teacher $\alg{T}'$ with input a SUBA $\aut{C}$ generates the characteristic sample $\alg{T}(r(\aut{C}))$, to which it adds one example $(x,b)$ such that $x = \aut{C}$ and $b = 1$ iff $x \in \sema{\aut{C}}$.  The learner $\alg{L}$ with input $T$ searches for the least $x$ (if any) such that $(x,b) \in T$, $r(x)$ is consistent with $T$, and $\alg{T}(r(x)) \subseteq T$.
    If such an $x$ is found, it is output; otherwise, $\alg{L}$ constructs and outputs a SUBA consistent with $T$. A procedure constructing a default SUBA acceptor that agrees with a given sample $T$ is given in the proof of \Cref{prop:default-acceptor}.   
\end{proof}
Maler and Pnueli~\cite{MalerP95} give an algorithm that learns the class \class{DWPA}  of deterministic weak parity automata in polynomial time using membership and equivalence queries.
The weak parity condition is obtained from the parity condition using $\occ(\rho)$ instead of $\inf(\rho)$ where $\occ(\rho)$ is the set of states visited somewhere during the run $\rho$. It is known that $\class{DWPA}=\class{DBA}\cap\class{DCA}$.
Membership and equivalence of $\class{DWPA}$ are decidable in polynomial time, thus by \Cref{thm:GM96-ilptd}, $\class{DWPA}$ is identifiable in the limit using
polynomial time and data. It is not known whether there is a polynomial time algorithm for equivalence with shortlex counterexamples, so we are unable to apply \Cref{cor:MQ_EQ_to_efficiently_teachable} to deduce they are also efficiently teachable.  However, DWPAs are a special case of DPAs and the following is a corollary of \Cref{theorem:informative-classes-efficiently-teachable-learnable} for DPAs.
\begin{corollary}
    The class \class{DWPA} is efficiently teachable.
\end{corollary}

The learning algorithm of Maler and Pnueli for \class{DWPA} exploits the fact that  $\class{DWPA}=\class{DBA}\cap\class{DCA}=\class{IBA}\cap\class{ICA}$, that is, this class is fully informative and does have a one-to-one relationship between states of the minimal DWPA for a language $L$ and the equivalence classes of the right congruence $\sim_L$. 
The class  \class{DWPA} is a small sub-class of the class of the fully informative regular $\omega$-languages --- there are fully informative languages in every level of the Wagner hierarchy~\cite{AngluinF18}, whereas \class{DWPA} is one of the lowest levels in the hierarchy. The rest of the paper is dedicated to showing that fully informative languages of any of the considered $\omega$-automata types (B\"uchi, coB\"uchi, parity, Muller, Rabin and Streett) are efficiently teachable.

\section{The informative classes are efficiently teachable}
\label{section:outline-positive}

This section covers some preliminary issues and gives an overview of the milestones needed to prove that the informative classes are efficiently teachable.

\subsection{Duality}

There are reductions of the problem of efficient teachability between $\class{IBA}$ and $\class{ICA}$ and between
$\class{IRA}$ and $\class{ISA}$, using the duality between these types of acceptors.
Consequently we focus on the classes $\class{IBA}$, $\class{IPA}$, $\class{IRA}$ and $\class{IMA}$ in what follows.\footnote{The results regarding the classes $\class{IBA}$ (and
$\class{ICA}$),
$\class{IPA}$ and
$\class{IMA}$ were obtained in~\cite{AngluinFS20}; here we extend them to the classes $\class{IRA}$ (and $\class{ISA}$). Results for identifiability in the limit using polynomial time and data (but not efficient teachability) of the classes $\class{IRA}$ (and $\class{ISA}$) have also been provided in~\cite{BohnL21} using a different algorithm.}

\begin{proposition}
	\label{prop:ib-ic-ir-is-duality}
	 \class{IBA} (resp., \class{IRA}) is efficiently teachable if and only if \class{ICA} (resp., \class{ISA}) is.
\end{proposition}
\begin{proof}
	Let $\aut{A}$ be an ICA.  Because $\aut{A}$ is deterministic and complete, if we let $\aut{A}'$ denote the IBA with the same components as $\aut{A}$, then $\aut{A}'$ accepts the complement of the language $\aut{A}$, by Claim~\ref{clm:basic-relations-between-omega-aut}~(\ref{claim:DBA-DCA-complement}).
	
	We modify the characteristic sample for $\aut{A}'$ by complementing all its labels to get a characteristic sample for $\aut{A}$.  The algorithm to learn an ICA from a sample $T$ is obtained by complementing all the labels in the sample $T$ and calling the algorithm to learn an IBA from a sample.  The resulting IBA, now considered to be an ICA, is returned as the answer.
	
	The same conversion may be done with acceptors of types IRA and ISA, by \linebreak[5]Claim~\ref{clm:basic-relations-between-omega-aut}~(\ref{claim:DRA-DSA-complement}).
\end{proof}

\subsection{The default acceptor}

One condition of the definition of being efficiently teachable is that the learning algorithm must run in polynomial time and return an acceptor of the required type that is consistent with the input sample $T$, even if the sample $T$ does not subsume a characteristic sample.
To meet this condition, we use the strategy of Gold's construction, that is, the learning algorithm optimistically assumes that the sample includes a characteristic sample, and if that assumption fails to produce an acceptor consistent with the sample, the algorithm instead produces a \emph{default acceptor} to ensure that its hypothesis is consistent with the sample. 
Alternatively, one can use Bohn and L\"{o}ding's generalization of the RPNI algorithm to learning $\omega$-automata, which has a more complex default strategy~\cite{BohnL21}.

\begin{figure}
    \begin{center}
    \begin{tikzpicture}[->,>=stealth',shorten >=1pt,auto,node distance=1.2cm,semithick,initial text=,initial where=above]
        \node[ministate,initial] (start) {\emptyst};
        \node[ministate] (a) [below left of=start]   {\emptyst};
        \node[ministate] (b) [below right of=start]   {\emptyst};
        \node[ministate,accepting] (ba) [below right of=b]   {\emptyst};
        \node[ministate,accepting] (bb) [right of=b]   {\emptyst};

        \node[ministate] (aa) [below  left of=a] {\emptyst};
        \node[ministate] (aaa) [ left of=aa] {\emptyst}; 
        \node[ministate,accepting] (aab) [below left of=aa] {\emptyst};      
        \node[ministate] (ab) [below right of=a] {\emptyst};
        \node[ministate] (aba) [below left of=ab] {\emptyst};
        \node[ministate] (abb) [below right of=ab] {\emptyst};
        \node[ministate,accepting] (abba) [below left of=abb] {\emptyst}; 
        \node [ministate,accepting] (abbb) [below right of=abb] {\emptyst};

        \path (start) edge node [above] {$a$} (a);
        \path (start) edge node [above] {$b$} (b);
        \path (a) edge node [above] {$a$} (aa);       
        \path (a) edge node [above] {$b$} (ab);
        \path (b) edge node [above] {$a$} (ba);       
        \path (b) edge node [above] {$b$} (bb);              
        \path (aa) edge node [above] {$b$} (aab);       
        \path (aa) edge node [above] {$a$} (aaa);        
        \path (ab) edge   node [above] {$a$} (aba);
        \path (ab) edge node {$b$} (abb);
        \path (abb) edge   node [above] {$a$} (abba);
        \path (abb) edge   node [above] {$b$} (abbb);
        \path (ba) edge [loop right] node [above] {$a,b$} (ba);
        \path (bb) edge [loop right] node [above] {$a,b$} (bb); 
        \path (aaa) edge [loop left] node [above] {$a,b$} (aaa);
        \path (aab) edge [loop left] node [above] {$a,b$} (aab);
        \path (aba) edge [loop left] node [above] {$a,b$} (aba);
        \path (abba) edge [loop left] node [above] {$a,b$} (abba);
        \path (abbb) edge [loop right] node [above] {$a,b$} (abbb);

        \node[label] (words) [right of=ab, node distance=5.05cm]   
        {
        $\begin{array}{r@{~~}l}
              (a)^\omega & \texttt{aaa}|\texttt{aaaa\ldots}  \\
              a(b)^\omega & \texttt{abbb}|\texttt{bbb\ldots}  \\
              (ab)^\omega & \texttt{aba}|\texttt{baba\ldots} \\
              ab(baa)^\omega & \texttt{abba}|\texttt{aba\ldots} \\
              bab(aab)^\omega & \texttt{ba}|\texttt{baaba\ldots} \\
              (aab)^\omega & \texttt{aab}|\texttt{aaba\ldots} \\    
              bb(aba)^\omega & 
              \texttt{bb}|\texttt{abaab\ldots}
        \end{array}$
        };

        \node[label] (sample) [above of=words, node distance=3cm] 
        {$T=\{((a)^\omega,0), (a(b)^\omega, 1), ((ab)^\omega,0), (ab(baa)^\omega,1),(bab(aab)^\omega,1),\allowbreak((aab)^\omega,1), (bb(aba)^\omega,1)\}\phantom{--------}$};
        \node[label] (prefixes) [below of=words, node distance=3.0cm]   {$\begin{array}{l@{\,}l@{\,}l} U=\{&aaa,abbb,aba,\\
        & abba,ba,aab,bb&\}\end{array}$};

        \node[ministate,initial] (1) [right of=start, node distance=8.0cm]   {\emptyst};
        \node[ministate,accepting] (1lb) [below of=1]   {\emptyst};        

        \node[ministate,initial] (2) [right of=1, node distance=1cm]   {\emptyst};
        \node[ministate] (2a) [below of=2]   {\emptyst};        
        \node[ministate,accepting] (2lbaa) [below of=2a]   {\emptyst};        
        \node[ministate,initial] (2laab) [above right of=2lbaa]   {\emptyst};        
        \node[ministate] (2laba) [below right  of=2laab]   {\emptyst}; 
        \node[ministate,initial,initial where=right] (2b) [below  of=2laba]   {\emptyst};  
        \node[ministate,initial,initial where=right] (2bb) [below  of=2b]   {\emptyst};  
        
        \node[label] (periods) [left of=2bb, node distance=1.7cm]   {\phantom{--}$\begin{array}{l}
        V=\{b,baa\}\\
        X_{b}=\{a\}\\
        X_{baa}=\{ab\}\\
        X_{aab}=\{\epsilon\}\\
        X_{aba}=\{b,bb\} \\ \phantom{--}\\
        \end{array}$};

        \path (1) edge node [left] {$a$} (1lb);       
        \path (1lb) edge [loop below] node   {$b$} (1lb);       

        \path (2) edge  node  [left] {$a$} (2a);       
        \path (2a) edge node [left] {$b$} (2lbaa);       
        \path (2lbaa) edge node [above] {$b$} (2laab);       
        \path (2laab) edge node [right] {$a$} (2laba);     
        \path (2laba) edge node [below] {$a$} (2lbaa);       
        \path (2b) edge node [left] {$b$} (2laba);     
        \path (2bb) edge node [left] {$b$} (2b);

    \end{tikzpicture} 
    \caption{Top: a sample $T$.
    Middle: The distinguishing prefixes $U$ of words in the sample $T$.
    Left: Default acceptor of type DBA for sample $T$ (the dead state and the transitions to it are omitted).
    Right: Default acceptor of type SUBA for sample $T$.}
	\label{fig:example-default-acceptor}
    \end{center}
\end{figure}

The construction of the default acceptor is given in the proof of the following proposition, and is accompanied by an example illustrated
in \Cref{fig:example-default-acceptor}.

\begin{proposition}
	\label{prop:default-acceptor}
	There is a polynomial time algorithm that takes a sample $T$ and returns a DBA (resp., DCA, DPA, DRA, DSA, DMA, SUBA) consistent with $T$.
\end{proposition}

\begin{proof}
    Given a sample $T$, we initialize $U$ to be the empty set, and for every word $w_i$ in the sample, we find the shortest prefix $u_i$ of  $w_i$  that distinguishes it from all other examples in $T$ and add it to $U$ (see \Cref{fig:example-default-acceptor}, middle).    
    We arrange the finite words in $U$ in a trie in the usual manner.
    We add self-loops on each $\sigma\in\Sigma$ to the leaves of the trie 
    (see \Cref{fig:example-default-acceptor}, left).
    This deterministic automaton is termed the prefix-tree automaton~\cite{OncinaG92}. If the automaton is incomplete, we add a new dead state with self-transitions on each $\sigma\in\Sigma$, and define all undefined transitions to go to the dead state.
    Recall that by \Cref{prop:inequality-bound} the length of a prefix distinguishing two examples $u_1(v_1)^\omega$ and $u_2(v_2)^\omega$ is polynomially bounded by the length of the examples. It follows that the prefix-tree automaton can be constructed in time polynomial in the length of the sample $T$.  
	
    For a DBA, the acceptance condition $F$ consists of all the trie-leaf states that are prefixes of positive examples in $T$ (see \Cref{fig:example-default-acceptor}, left).
    For a DMA, the acceptance condition consists of $\{\{q\} \mid q \in F\}$. For a DCA, the acceptance condition consists of the dead state (if one was added) and all the trie-leaf states that are not prefixes of positive examples in $T$.  The DBA thus constructed may be transformed to a DPA or a DRA using Claim \ref{clm:basic-relations-between-omega-aut}~(\ref{claim:NBA-to-NPA}) or Claim~\ref{clm:basic-relations-between-omega-aut}~(\ref{claim:NBA-to-NRA}), respectively, and the DCA may be transformed to a DSA using Claim~\ref{clm:basic-relations-between-omega-aut}~(\ref{claim:NCA-to-NSA}).

    For a SUBA the construction is different. We first construct a set $V$ of shortest periods of positively labeled examples, by going iteratively over the examples $w_1,w_2,\ldots$ and proceeding as follows. For $w_i=u(v)^\omega$ if $v_i$ is a shortest period of $w_i$, and none of the rotations of $v_i$ is in $V$ we add $v_i$ to $V$. 
    Then for every period $v=\sigma_1\sigma_2\ldots\sigma_k$ in $V$ we construct an automaton with $k$ states $s_0,s_1,\ldots,s_{k-1}$ and transitions $(s_i,\sigma_{i+1},s_{i'})$ for $0\leq i < k$ where
    $i'=(i+1)\mod k$. We designate $s_0$ as an accepting state. 
    Finally, for each positively labeled word $w$ we look for the shortest prefix $x$ of $w$ such that $w=x(y)^\omega$ for some rotation $y=\sigma_i\ldots\sigma_k\sigma_1\ldots\sigma_{i-1}$ of a period
    $v=\sigma_1\sigma_2\ldots\sigma_k\in V$.
    For a rotation $y$ of $v\in V$, let 
    $X_y$ be the set of such shortest prefixes.
    The prefixes in $X_y$ are arranged in a suffix-sharing trie and the trie is connected to the automata constructed for the periods by landing in the state reading $\sigma_i$ in the automaton of $v$. Each node of this trie corresponding to a start of a prefix in $X_y$ is added to the set of initial states (see \Cref{fig:example-default-acceptor}, right).
    It is easy to see that the constructed NBA is consistent with the sample. To see that it is a SUBA, consider a word $w\in\Sigma^\omega$ and assume $w=xy^\omega$ for $y$ a shortest period of $w$ and $x$ the shortest prefix reaching such $y$. Note that there is only one such representation of $w$. 
    By the construction of the SUBA, $w$ can only be accepted via a cycle reading $y$ and there is only one such cycle, moreover the cycle can be entered only at the position after reading $x$, and by the suffix-sharing trie there is only one state from which reading $x$ gets to this position.
\end{proof}

\subsection{Strongly connected components}
\label{ssec:SCCs}

The acceptance conditions that we consider are all based on the set of states visited infinitely often in a run of the automaton on an input $w \in \Sigma^{\omega}$.  We consider only acceptors whose automata are deterministic and complete, so for any $w \in \Sigma^{\omega}$ there is exactly one run, which we denote $\rho(w)$, of the automaton on input $w$.
Thus we may define $\inf(w) = \inf(\rho(w))$, the set of states visited infinitely often in this unique run.  
In the run $\rho(w)$, there is some point after which none of the states visited finitely often is visited.
Because each state in $\inf(w)$ is visited infinitely often, for any states $q_1, q_2 \in \inf(w)$, there exists a non-empty word $x \in \Sigma^*$ such that $\delta(q_1,x) = q_2$ and for each prefix $x'$ of $x$, $\delta(q_1,x') \in \inf(w)$, that is, the path from $q_1$ to $q_2$ on $x$ does not visit any state outside the set $\inf(w)$.

These properties motivate the following definition. Given an automaton $\aut{M}$, a \emph{strongly connected component} (SCC) of $\aut{M}$ is a nonempty set of states $C$ such that for every $q_1, q_2 \in C$, there exists a nonempty string $x \in \Sigma^*$ such that $\delta(q_1,x) = q_2$ and for any prefix $x'$ of $x$, $\delta(q_1,x')\in C$.  

Note that an SCC need not be maximal, and that a singleton state set $\{q\}$ is an SCC if and only if the state $q$ has a self-loop, that is, $\delta(q,\sigma) = q$ for some $\sigma \in \Sigma$.
There is a close relationship between SCCs and the set of states visited infinitely often in a run.

\begin{proposition}
	\label{prop:inf-to-SCC}
	Let $\aut{M}$ be a complete deterministic automaton and $w \in \Sigma^\omega$.   
	Then $\inf(w)$ is an SCC of $\aut{M}$.  If $w$ is the ultimately periodic word $u(v)^{\omega}$, then $\inf(w)$ may be computed in time polynomial in the size of $\aut{M}$ and the length of $u(v)^{\omega}$.
\end{proposition}

\begin{proposition}
	\label{prop:SCC-to-inf}
	For any deterministic automaton $\aut{M} = \la \Sigma, Q, q_{\iota}, \delta \ra$ and any reachable SCC $C$ of $\aut{M}$,  there exists an ultimately periodic word $w = u(v)^\omega$ of length at most $|Q|+|C|^2$ such that $C = \inf(w)$.
	Such a word may be found in time polynomial in $|Q|$ and $|\Sigma|$.
\end{proposition}

\begin{proof}
	Because $C$ is reachable, a word $u \in \Sigma^*$ of minimum length such that $\delta(q_{\iota},u) \in C$ may be found by breadth first search.  The length of $u$ is at most $|Q|$. If $C = \{q\}$, then there is at least one symbol $\sigma \in \Sigma$ such that $\delta(q,\sigma) = q$.  Then the $\omega$-word $w = u(\sigma)^\omega$ is such that $C = \inf(w)$.  The length of this ultimately periodic word is at most $|Q|+1$.
	
	If $C$ contains at least two states, let $q_1, \ldots, q_k$ be the states in $C$ that are not $q$.  Then for each $i$, there exist two nonempty finite words $x_i$ and $y_i$ each of length at most $n$ such that $\delta(q,x_i) = q_i$ and $\delta(q_i,y_i) = q$, and the path on $x_i$ from $q$ to $q_i$ and the path on $y_i$ from $q_i$ to $q$ do not visit any states outside of $C$. The words $x_i$ and $y_i$ may be found in polynomial time by breadth-first search.
	Then the word $w = u(x_1y_1 \cdots x_ky_k)^\omega$ is such that $\inf(w) = C$.  The length of this ultimately periodic word is at most $|Q|+|C|^2$.
\end{proof}

We let $\alg{Witness}(C,\aut{M})$ denote the ultimately periodic word $u(v)^{\omega}$ returned by the algorithm described in the proof above for the reachable SCC $C$ of automaton $\aut{M}$.

\begin{proposition}
	\label{prop:unionofSCCs}
	If $C_1$ and $C_2$ are SCCs of automaton $\aut{M}$ and $C_1 \cap C_2 \neq \emptyset$, then $C_1 \cup C_2$ is also an SCC of $\aut{M}$.
\end{proposition}

If $\aut{M}$ is an automaton and $S$ is any set of its states, define $\SCCs(S)$ to be the set of all $C$ such that $C \subseteq S$ and $C$ is an SCC of $\aut{M}$.  Also define $maxSCCs(S)$ to be the maximal elements of $\SCCs(S)$ with respect to the subset ordering.
The following is a consequence of \Cref{prop:unionofSCCs}.

\begin{proposition}
	\label{prop:maxSCCs_disjoint}
	If $\aut{M}$ is an automaton and $S$ is any set of its states, then the elements of $\maxSCCs(S)$ are pairwise disjoint, and every set $C \in SCCs(S)$ is a subset of exactly one element of $\maxSCCs(S)$.
\end{proposition}

There are some differences in the terminology related to strong connectivity between graph theory and omega automata, which we resolve as follows.
In graph theory,
a \emph{path} of length $k$ from $u$ to $v$
in a directed graph $(V,E)$ is a finite
sequence of vertices $v_0, v_1, \ldots, v_k$
such that
$u = v_0$, $v = v_k$ and
for each
$i$ with $i \in [1..k]$, $(v_{i-1},v_i) \in E$.
Thus, for every vertex $v$, there is a path of length $0$
from $v$ to $v$.
A set of vertices $S$ is \emph{strongly connected}
if and only if for all $u, v \in S$, there is a path of
some nonnegative length from $u$ to $v$
and all the vertices in the path are elements of $S$.
Thus, for every vertex $v$, the singleton set $\{v\}$ is
a strongly connected set of vertices.
A \emph{strongly connected component} of a directed graph is
a maximal strongly connected set of vertices.
There is a linear time algorithm to find the set of
strong components of a directed graph~\cite{Tarjan72}.

In this paper, we use the terminology SCC and maximal SCC
to refer to the definitions from the theory of omega automata,
and the terminology 
\emph{graph theoretic strongly connected components} 
to refer to the definitions from graph theory.
We use the term \emph{trivial strong component} to refer
to a graph theoretic strongly connected component that is a
singleton vertex $\{v\}$ such that there is no edge $(v,v)$.

If $\aut{M}$ is an automaton, we may define a related directed graph
$G(\aut{M})$ whose vertices are the states of $\aut{M}$ and whose edges $(q_1, q_2)$ are the pairs of states such that $q_2 \in \delta(q_1,\sigma)$ for some $\sigma \in \Sigma$.
Then for any set $S$ of states of $\aut{M}$, the maximal SCCs in $S$, $\maxSCCs(S)$, are the graph theoretic strongly connected components of the subgraph of $G(\aut{M})$ induced by $S$, with
any trivial strong components removed.

\begin{proposition}
	
	\label{prop:maxSCCs-in-linear-time}
	For automaton $\aut{M}$ and any subset $S$ of its states, $\maxSCCs(S)$ can be computed in time linear in the size of $\aut{M}$.
\end{proposition}

\subsection{Proving efficient teachability of the informative classes --- overview}
\label{ssec:overview-of-positive}
We can show that a class is efficiently teachable by first showing that it is identifiable in the limit using polynomial time and data, and then giving a polynomial time teacher to construct the required characteristic samples.
To show that a class is identifiable in the limit using polynomial time and data there are two parts: (i) defining a sample $T_L$ of size polynomial in the size of the given acceptor $\aut{A}$ for the language $L$ at hand, and (ii) providing a polynomial time learning algorithm $\alg{L}$ that for every given sample $T$ returns an acceptor consistent with $T$, and, moreover, for any sample $T$ consistent with $L$ that subsumes $T_L$, returns an acceptor that accepts $L$.

The definition of an acceptor has two parts: (a) the definition of the automaton and (b) the definition of the acceptance condition.
Correspondingly, we view the characteristic sample as a union of two parts: $T_{Aut}$ (to specify the automaton) and $T_{Acc}$ (to specify the acceptance condition).
In \Cref{sec:char-sample-aut} we discuss the construction of $T_{Aut}$, which is common to all the classes we consider, as they all are isomorphic to $\aut{M}_{\sim L}$ where $L = \sema{\aut{A}}$ for the target automaton $\aut{A}$.  We also describe a polynomial time algorithm to construct $\aut{M}_{\sim L}$ using the sample $T_{Aut}$.

Because the acceptance conditions differ, $T_{Acc}$ is different for each type of acceptor we consider.
In \Cref{sec:T-Acc-for-IMAs-IBAs-ICAs} we describe the construction of $T_{Acc}$ for acceptors of types IMA and IBA and learning algorithms for acceptors of these types, showing that $\class{IMA}$, $\class{IBA}$ and $\class{ICA}$ are identifiable in the limit using polynomial time and data.
In \Cref{sec:T-Acc-for-DPAs} we describe the construction of $T_{Acc}$ for acceptors of type IPA and a learning algorithm for acceptors of this type, showing that $\class{IPA}$ is identifiable in the limit using polynomial time and data.
In \Cref{sec:T-Acc-for-IRAs-ISAs} we describe the construction of $T_{Acc}$ for acceptors of type IRA and a learning algorithm for acceptors of this type, showing that $\class{IRA}$ and $\class{ISA}$ are identifiable in the limit using polynomial time and data.

In \Cref{sec:poly-time-char-samples} we show that 
the characteristic samples we have defined can be computed in polynomial time in the size of the acceptor.  These results rely on polynomial time algorithms for the inclusion and equivalence problems for the acceptors.  These are described in Sections~\ref{sec:inclusion-algorithms}, \ref{sec:inclusion-DPAs}, \ref{sec:inclusion-DRAs}, and~\ref{sec:inclusion-DMAs}.

This does not yet entail that the class $\class{IXA}$ from $\{\IB,\IC,\IP,\IM,\IR,\IS\}$
is efficiently teachable. This is because $\IB$ (for instance) includes also DBAs that are not IBAs but have equivalent IBAs. 
In \Cref{sec:computing-right-congruence-automaton} we show the right congruence automaton $\aut{M}_{\sim_L}$ can be computed in polynomial time, given a DBA, DCA, DPA, DRA, DSA, or DMA accepting $L$, which yields a polynomial time algorithm to test whether a DBA is an IBA, and similarly for the other acceptor types.  In \Cref{sec:testing-membership-in-IX}, we consider the harder problem of deciding whether a DBA accepts a language in $\class{IBA}$, and give polynomial time algorithms for DBAs, DCAs, DPAs, DRAs, DSAs and DMAs.  With these results we can finally claim that the $\class{IXA}$ classes are efficiently teachable.

\section{The sample \texorpdfstring{$T_{Aut}$}{T\_Aut} for the automaton}
\label{sec:char-sample-aut}
In this section we describe the construction of the $T_{Aut}$ part of the sample.  We first show that if two states of the automaton are distinguishable, they are distinguishable by words of length polynomial in the number of states of the automaton. 

\subsection{Existence of short distinguishing words}
\label{ssec:short-dist-words}

Let $A$ be an acceptor of one of the types DBA, DCA, DPA, DRA, DSA, or DMA over alphabet $\Sigma$. We say that states $q_1$ and $q_2$ of $\aut{M}$ are \emph{distinguishable} if there exists a word $w \in \Sigma^\omega$ that is accepted from one state but not the other, that is, $w \in \sema{\aut{A}^{q_1}} \setminus \sema{\aut{A}^{q_2}}$ or $w \in \sema{\aut{A}^{q_2}} \setminus \sema{\aut{A}^{q_1}}$. In this case we say that $w$ is a \emph{distinguishing word}.  

\begin{proposition}
	\label{prop:poly_distinguishing_experiments}
	If two states of a complete DBA, DCA, DPA, DRA, DSA, or DMA of $n$ states are distinguishable, then they are distinguishable by an ultimately periodic $\omega$-word of length bounded by $O(n^4)$.
\end{proposition}

\begin{proof}
	We prove the result for a DMA.  Because any DBA, DCA, DPA, DRA, or DSA is equivalent to a DMA with the same automaton, this result holds for these types of acceptors as well. 
	Let $\aut{A}$ be a complete DMA of $n$ states such that the states $q_1$ and $q_2$ are distinguishable. Then there exists an $\omega$-word $w$ that is accepted from exactly one of the two states, that is, $w$ is accepted by exactly one of $\aut{A}^{q_1}$ and $\aut{A}^{q_2}$.  
	
	Let $\aut{M}_i$ denote the automaton of $\aut{A}$ with its initial state replaced by $q_i$ for $i = 1,2$. Let $\aut{M}$ denote the product automaton $\aut{M}_1 \times \aut{M}_2$. The number of states of $\aut{M}$ is $n^2$. By \Cref{prop:inf-to-SCC}, $\infss{M}(w)$ is a reachable SCC $C$ of $\aut{M}$, and by \Cref{prop:SCC-to-inf} there exists an ultimately periodic word $u(v)^{\omega}$ of length bounded by $O(n^4)$ such that $\infss{\aut{M}}(u(v)^{\omega}) = C$.
	Then for $i = 1,2$, $\infss{\aut{M}_i}(u(v)^{\omega}) = \pi_i(C) = \infss{\aut{M}_i}(w)$, so $u(v)^{\omega}$ is also accepted by exactly one of $\aut{A}^{q_1}$ and $\aut{A}^{q_2}$, and $u(v)^{\omega}$ distinguishes $q_1$ and $q_2$. 
\end{proof}

\subsection{Defining the sample \texorpdfstring{$T_{Aut}$}{T\_Aut} for the automaton}
\label{ssec:T-Aut-definition}

We now define the $T_{Aut}$ part of the characteristic sample, given an acceptor $\aut{A} = \la \Sigma, Q, q_\iota, \delta, \alpha \ra$ that is an IBA, ICA, IPA, IRA, ISA, or IMA. This construction is analogous to that of the corresponding part of a characteristic sample for a DFA, with distinguishing experiments that are ultimately periodic $\omega$-words instead of finite strings.

Let $\aut{M}$ be the automaton of $\aut{A}$ and let $n$ be the number of states of $\aut{M}$.
Because $\aut{A}$ is an IBA, ICA, IPA, IRA, ISA, or IMA, every state is reachable and every pair of states is distinguishable.  
We define a distinguished set of $n$ \emph{access strings} for the states of $\aut{M}$ as follows.
For each state $q$, $\access(q)$ is the least string $x$ in the shortlex ordering such that $\delta(q_\iota,x) = q$.
Given $\aut{A}$, the access strings may be computed in polynomial time by breadth first search.

Because every pair of states is distinguishable,
by \Cref{prop:poly_distinguishing_experiments}, there exists a set $E$ of at most $n$ distinguishing experiments, each of length at most $n^2 + n^4$, that distinguish every pair of states.
The issue of computing $E$ is addressed in \Cref{sec:poly-time-char-samples}.
The sample $T_{Aut}$ consists of all the examples in $(S \cdot E )\cup (S \cdot \Sigma \cdot E)$, labeled to be consistent with $\aut{A}$.  There are at most $(1 + |\Sigma|)n^2$ labeled examples in $T_{Aut}$, each of length bounded by a polynomial in $n$.  A learner using $T_{Aut}$ is described next.

\subsection{Learning the automaton from \texorpdfstring{$T_{Aut}$}{T\_Aut}}
\label{ssec:learn-aut}

We now describe a learning algorithm $\alg{L}_{Aut}$ and prove the following.
\begin{theorem}
	\label{theorem:L-Aut-works}
	The algorithm $\alg{L}_{Aut}$ with a sample $T$ as input runs in polynomial time and returns a deterministic complete automaton $\aut{M}$.
	Let $\aut{A}$ be an  acceptor of type IBA, ICA, IPA, IRA, ISA, or IMA.
	If $T$ is consistent with $\aut{A}$ and subsumes $T_{Aut}$ then the returned automaton $\aut{M}$ is isomorphic to the automaton of $\aut{A}$.
\end{theorem}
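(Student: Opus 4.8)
The plan is to adapt the classical Gold/de~la~Higuera construction for learning DFAs from a characteristic sample, replacing finite distinguishing strings with ultimately periodic $\omega$-words. The algorithm $\alg{L}_{Aut}$, on input a sample $T$, first scans $T$ to extract a candidate set of access strings $S$ and a candidate set of experiments $E$; the natural choice is to take $S$ to be the shortlex-least prefixes that occur as prefixes of the $u$-parts (or $uv$-parts, via Proposition~\ref{prop:computing-suffixes}) appearing in examples of $T$, closed under the requirement that $S$ be prefix-closed and that $S\cdot\Sigma$-consistency can be checked, and to take $E$ to be the set of suffix-examples derivable from $T$. More simply, since the definition only requires that on a sample subsuming $T_{Aut}$ the algorithm succeed, $\alg{L}_{Aut}$ can work directly: it identifies, using the labels in $T$, a maximal set $S$ of strings that are pairwise \emph{separated} by $T$ (there is some experiment realizable from $T$ on which two proposed access strings get different labels), builds the automaton whose states are these strings with transitions $\delta(x,\sigma)=$ the unique $y\in S$ such that $x\sigma$ and $y$ are not separated by $T$, and if at any point this is ill-defined (no such $y$, or more than one) it falls back to the default automaton of Proposition~\ref{prop:default-acceptor}. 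All of this is polynomial time in $|T|$ by Proposition~\ref{prop:inequality-bound} and Proposition~\ref{prop:computing-suffixes}, which bound the cost of computing and comparing the relevant $\omega$-words; this establishes the first sentence of the theorem.

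For the correctness part, assume $T$ is consistent with an acceptor $\aut{A}$ of one of the listed types and $T\supseteq T_{Aut}$. Recall $T_{Aut}$ contains all labeled examples in $(S\cdot E)\cup(S\cdot\Sigma\cdot E)$ where $S=\{\access(q)\mid q\in Q\}$ and $E$ is the set of $\le n$ experiments of length $\le n^2+n^4$ that pairwise-distinguish all states of $\aut M$ (Proposition~\ref{prop:poly_distinguishing_experiments}). The key steps are: (1) For distinct $q_1,q_2\in Q$, the pair $(\access(q_1),\access(q_2))$ is separated in $T$, because some $e\in E$ distinguishes $q_1$ and $q_2$, so $\access(q_1)e$ and $\access(q_2)e$ appear in $T_{Aut}$ with opposite labels; hence the algorithm keeps all $n$ access strings as distinct states. (2) Conversely the algorithm does not create \emph{extra} states: any string $x$ it would add as a new state would have to be separated from every $\access(q)$, but $x$ leads in $\aut M$ to some state $q^*=\delta(q_\iota,x)$, and since $\aut A$ is an $\class{IX}$-acceptor its automaton is isomorphic to $\aut M_{\sim_L}$, so $x\sim_L\access(q^*)$, meaning no $\omega$-word distinguishes them and in particular no experiment derivable from $T$ can — contradiction. (3) Transitions are well-defined and correct: for $q\in Q$ and $\sigma\in\Sigma$, $\access(q)\sigma$ leads to $\delta(q,\sigma)$ in $\aut M$, hence $\access(q)\sigma\sim_L\access(\delta(q,\sigma))$; the examples $\access(q)\cdot\sigma\cdot E$ in $T_{Aut}$ witness agreement with $\access(\delta(q,\sigma))$ on every $e\in E$ and disagreement with every other $\access(q')$, so the algorithm sets $\delta_{\aut M}(\access(q),\sigma)=\access(\delta(q,\sigma))$ uniquely. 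The map $\access(q)\mapsto q$ is then the required isomorphism (it sends $[\varepsilon]$-state to $q_\iota$ and respects $\delta$), and in particular the fallback to the default automaton is never triggered.

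The step I expect to be the main obstacle is (2), ruling out spurious states — i.e. arguing that the experiments present in $T$ (which a priori include arbitrary extra labeled examples beyond $T_{Aut}$, since $T$ merely \emph{subsumes} $T_{Aut}$) cannot separate two strings that are $\sim_L$-equivalent. This uses crucially that $T$ is consistent with $\aut A$ and that $\sema{\aut A}=L$ with $\aut A$'s automaton isomorphic to $\aut M_{\sim_L}$: $\sim_L$-equivalent finite words have \emph{identical continuations} into $L$, so no labeled $\omega$-example can ever be consistent with $L$ and separate them; consistency of $T$ with $\aut A$ then forbids such a separating pair from being in $T$. The remaining care is bookkeeping: making the "separated in $T$" relation precise (which $\omega$-words are "realizable from $T$" as experiments — here it suffices to use suffixes of examples in $T_{Aut}$, all of which are present by construction), verifying the polynomial bounds on lengths via Propositions~\ref{prop:inequality-bound} and~\ref{prop:computing-suffixes}, and checking that the single uniform automaton-learning argument covers all six acceptor types simultaneously — which it does, since only the underlying automaton $\aut M\cong\aut M_{\sim_L}$ and the right congruence $\sim_L$ enter, never the acceptance condition $\alpha$.
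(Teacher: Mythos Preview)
Your proposal is essentially correct and follows the same approach as the paper: extract the experiment set $E$ as the suffixes of the examples in $T$, grow a prefix-closed set $S$ of pairwise-separated strings starting from $\varepsilon$, and define transitions by matching each $s\sigma$ to the (necessarily unique, once $T\supseteq T_{Aut}$) $s'\in S$ not separated from it. Your correctness steps (1)--(3) match the paper's argument exactly, and your identification of step~(2) as the crux --- that consistency of $T$ with $\aut{A}$ forbids any experiment from separating $\sim_L$-equivalent strings --- is precisely the point the paper uses.

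One small gap: your polynomial-time claim needs an a priori bound on $|S|$ in the \emph{general} case (an arbitrary sample $T$, not necessarily one subsuming $T_{Aut}$). The paper handles this by observing that any two strings separated by $T$ must reach distinct states in \emph{any} acceptor consistent with $T$, and the default acceptor of Proposition~\ref{prop:default-acceptor} is consistent with $T$ and has only polynomially many states; hence $|S|$ is polynomially bounded regardless of $T$. Without this argument, ``a maximal set of pairwise-separated strings'' has no obvious bound from Propositions~\ref{prop:inequality-bound} and~\ref{prop:computing-suffixes} alone. Also, the fallback to a default automaton that you describe is not actually needed in $\alg{L}_{Aut}$ itself: by construction of $S$, every $s\sigma$ is consistent with some $s'\in S$ (either $s\sigma\in S$, or it was not added precisely because such an $s'$ already existed), so transitions are always defined and the returned $\aut{M}$ is always deterministic and complete.
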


Algorithm $\alg{L}_{Aut}$ on input $T$ constructs a set $E$ of words that serve as experiments used to distinguish candidate states. For each $(u(v)^{\omega}, l)$ in $T$, all of the elements of $\suffixes(u(v)^{\omega})$ are placed in $E$.
Two strings $x,y \in \Sigma^*$ are \emph{consistent with respect to $T$} if and only if there does not exist any $u(v)^{\omega} \in E$ such that the examples $xu(v)^{\omega}$ and $yu(v)^{\omega}$ are oppositely labeled in $T$.

Starting with the empty string $\varepsilon$, the algorithm builds up a prefix-closed set $S$ of finite strings as follows. Initially, $S_1 = \{\varepsilon\}$.  After $S_k$ has been constructed, the algorithm considers each $s \in S_k$ in shortlex order, and each symbol $\sigma \in \Sigma$ in the ordering defined on $\Sigma$. If there exists no $s' \in S_k$ such that $s\sigma$ is consistent with $s'$ with respect to $T$, then $S_{k+1}$ is set to $S_k \cup \{s\sigma\}$ and $k$ is set to $k+1$.  If no such pair $s$ and $\sigma$ is found, then the final set $S$ is $S_k$.

In the second phase, the algorithm uses the strings in $S$ as names for states and constructs a transition function $\delta$ using $S$ and $E$.  For each $s \in S$ and $\sigma \in \Sigma$, there is at least one $s' \in S$ such that $s\sigma$ and $s'$ are consistent with respect to $T$. The algorithm selects any such $s'$ and defines $\delta(s,\sigma) = s'$.
Once $S$ and $\delta$ are defined, the algorithm returns the automaton $\aut{M} = \la \Sigma, S, \varepsilon, \delta \ra$.

\begin{proof}[Proof of \Cref{theorem:L-Aut-works}]
	$E$ may be computed in time polynomial in the length of $T$, by \Cref{prop:computing-suffixes}. 
	Because the default acceptor for $T$ has a polynomial number of states and is consistent with $T$, the number of distinguishable states, and the number of strings added to $S$, is bounded by a polynomial in the length of $T$.
	The returned automaton $\aut{M}$ is deterministic and complete by construction.
	
	Assume the sample $T$ is consistent with $\aut{A}$ and subsumes $T_{Aut}$.
	For any pair of states of $\aut{A}$, the set $E$ includes an experiment to distinguish them.
	Also, if $x$ and $y$ reach the same state of $\aut{A}$, there is no experiment in $E$ that distinguishes them.
	Then the set $S$ is precisely the access strings of $\aut{A}$.
	The choice of $s'$ for $\delta(s,\sigma)$ is unique in each case, and the returned automaton $\aut{M}$ is isomorphic to the automaton of $\aut{A}$. 
\end{proof}

Although the processes of constructing $T_{Aut}$ and learning an automaton from it are the same for acceptors of types IBA, ICA, IPA, IRA, ISA, or IMA, different types of acceptance condition require different kinds of characteristic samples and learning algorithms.

In the following sections we describe for each type of acceptor the corresponding sample $T_{Acc}$ and learning algorithm.
Each learning algorithm takes as input an automaton $\aut{M}$ and a sample $T$ and returns in polynomial time an acceptor of the appropriate type consistent with $T$.
We show that for each type of acceptor $\aut{A}$, if the input automaton $\aut{M}$ is isomorphic to the automaton of $\aut{A}$ and the sample $T$ is consistent with $\aut{A}$ and subsumes the $T_{Acc}$ for $\aut{A}$, then the learning algorithm returns an acceptor that is equivalent to $\aut{A}$.
This learning algorithm is then combined with $\alg{L}_{Aut}$ to prove the relevant class of languages are identifiable in the limit using polynomial time and data.

\section{The samples \texorpdfstring{$T_{Acc}$}{T\_Acc} and learning algorithms for IMA and IBA}	
\label{sec:T-Acc-for-IMAs-IBAs-ICAs}

The straightforward cases of Muller, \buchi\ and \cobuchi\ acceptance conditions are covered in this section.
Subsequent sections cover the cases of parity, Rabin and Street acceptance conditions, which are somewhat more involved.

\subsection{Muller acceptors}
\label{ssec:T-Acc-IMA}

Let $\aut{A}$ be an IMA with acceptance condition ${\alpha} = \{F_1,\ldots,F_k\}$.  By \Cref{prop:inf-to-SCC}, we may assume that each $F_i$ is a reachable SCC of $\aut{A}$.
The sample $T_{Acc}^{\mbox{\scriptsize{IMA}}}$ consists of $k$ positive examples, one for each set $F_i$.
The example for $F_i$ is $(u(v)^{\omega},1)$ where $\inf(u(v)^{\omega}) = F_i$.
These examples may be found in polynomial time in the size of $\aut{A}$ by \Cref{prop:SCC-to-inf}.

The learning algorithm $\alg{L}_{Acc}^{{\mbox{\scriptsize{IMA}}}}$ takes as input a deterministic complete automaton $\aut{M}$ and a sample $T$.
It constructs an acceptance condition $\alpha'$ as follows. For each positive labeled example $(u(v)^{\omega},1) \in T$, it computes the set $C = \infss{\aut{M}}(u(v)^{\omega})$ and makes $C$ a member of $\alpha'$.  Once the set $\alpha'$ is complete, the algorithm checks whether the DMA $(\aut{M},\alpha')$ is consistent with $T$.  If so, it returns $(\aut{M},\alpha')$; if not, it returns the default acceptor of type DMA for $T$.

\begin{theorem}
	\label{theorem:L-Acc-IMA-works}
	Algorithm $\alg{L}_{Acc}^{{\mbox{\scriptsize{IMA}}}}$ runs in time polynomial in the sizes of the inputs $\aut{M}$ and $T$.  Let $\aut{A}$ be an IMA. If the input automaton $\aut{M}$ is isomorphic to the automaton of $\aut{A}$, and the sample $T$ is consistent with $\aut{A}$ and subsumes $T_{Acc}^{\mbox{\scriptsize{IMA}}}$, then algorithm $\alg{L}_{Acc}^{{\mbox{\scriptsize{IMA}}}}$ returns an IMA $(\aut{M},\alpha')$ equivalent to $\aut{A}$.
\end{theorem}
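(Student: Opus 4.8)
The plan is to verify the three claims in the theorem in turn: polynomial running time, consistency of the output with $T$ in all cases, and correctness on the limit when $T$ subsumes $T_{Acc}^{\mbox{\scriptsize{IMA}}}$ and $\aut{M}$ is isomorphic to the automaton of $\aut{A}$. The running time is routine: for each positive example $(u(v)^{\omega},1)\in T$ the set $\infss{\aut{M}}(u(v)^{\omega})$ is computable in polynomial time by Prop.~\ref{prop:inf-to-SCC}, there are at most $|T|$ such examples, so $\alpha'$ has at most $|T|$ members; checking whether the DMA $(\aut{M},\alpha')$ is consistent with $T$ amounts to computing $\infss{\aut{M}}(u(v)^{\omega})$ for each (positive or negative) labeled example and testing membership in $\alpha'$, again polynomial; and the default acceptor is polynomial-time computable by Prop.~\ref{prop:default-acceptor}. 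Consistency with $T$ in all cases is immediate from the structure of the algorithm: it returns $(\aut{M},\alpha')$ only after explicitly checking consistency, and otherwise returns the default acceptor, which is consistent by Prop.~\ref{prop:default-acceptor}.

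The substance is the limit-correctness claim. Assume $\aut{M}$ is isomorphic to the automaton of $\aut{A}$ (so we may identify them) and $T$ is consistent with $\aut{A}$ and subsumes $T_{Acc}^{\mbox{\scriptsize{IMA}}}$. Let $\alpha = \{F_1,\ldots,F_k\}$ be the acceptance condition of $\aut{A}$, with each $F_i$ a reachable SCC. I would first argue that $(\aut{M},\alpha')$ is consistent with $T$, so that the algorithm does in fact return it rather than defaulting. Indeed, for any labeled example $(u(v)^{\omega},l)\in T$, consistency with $\aut{A}$ says $l=1$ iff $\infss{\aut{M}}(u(v)^{\omega})\in\alpha$; I need to show this is equivalent to $\infss{\aut{M}}(u(v)^{\omega})\in\alpha'$. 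For this I show $\alpha'\subseteq\alpha$ and that $\alpha'$ contains every SCC of the form $\infss{\aut{M}}(u(v)^{\omega})$ that lies in $\alpha$, at least for the examples occurring in $T$. The inclusion $\alpha'\subseteq\alpha$ holds because each member of $\alpha'$ is $\infss{\aut{M}}(u(v)^{\omega})$ for a \emph{positive} example in $T$, and positivity plus consistency with $\aut{A}$ forces that set to be in $\alpha$. The other direction uses the subsumption: since $T$ contains, for each $F_i$, a positive example $(u_i(v_i)^{\omega},1)$ with $\infss{\aut{M}}(u_i(v_i)^{\omega}) = F_i$ (this is exactly what $T_{Acc}^{\mbox{\scriptsize{IMA}}}$ provides, noting $\aut{M}$ and $\aut{A}$ have the same automaton so the computed $\inf$ set equals $F_i$), every $F_i$ that is realizable as an $\inf$ set — in particular every $F_i$ — ends up in $\alpha'$. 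Hence for each negative example, $\infss{\aut{M}}(u(v)^{\omega})\notin\alpha$ gives $\infss{\aut{M}}(u(v)^{\omega})\notin\alpha'$ (by $\alpha'\subseteq\alpha$), and for each positive example $\infss{\aut{M}}(u(v)^{\omega})\in\alpha$; I must check this $\inf$ set is actually in $\alpha'$ — it is, because it is literally added to $\alpha'$ when that positive example is processed. So $(\aut{M},\alpha')$ is consistent with $T$ and is returned.

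It remains to show $\sema{(\aut{M},\alpha')} = \sema{\aut{A}}$. Since $\aut{A}$ and $(\aut{M},\alpha')$ share the same deterministic complete automaton, and two such Muller acceptors accept the same language iff they agree on the set of $\inf$-classes, it suffices to show $\alpha$ and $\alpha'$ contain exactly the same reachable SCCs of $\aut{M}$. We have $\alpha'\subseteq\alpha$ from above, and every $F_i\in\alpha$ is in $\alpha'$ because $T$ supplies a positive example whose $\inf$ set is $F_i$. (For SCCs of $\aut{M}$ not of the form $F_i$, neither $\alpha$ nor $\alpha'$ contains them — $\alpha$ by definition, $\alpha'$ by $\alpha'\subseteq\alpha$.) Therefore $\alpha = \alpha'$ restricted to reachable SCCs, which is all that matters for the accepted language; equivalently $\sema{(\aut{M},\alpha')} = \sema{\aut{A}}$. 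The main obstacle is the bookkeeping in the previous paragraph: one must be careful that the $\inf$ sets computed in $\aut{M}$ genuinely coincide with the intended SCCs $F_i$ of $\aut{A}$ (which needs the isomorphism, i.e. that $\aut{M}$ \emph{is} the automaton of $\aut{A}$), and that membership of an arbitrary example's $\inf$ set in $\alpha'$ versus $\alpha$ matches — the two-sided inclusion argument above handles this, but it is the only place where the subsumption hypothesis is essential.
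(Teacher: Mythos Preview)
Your proposal is correct and follows essentially the same approach as the paper: both establish $\alpha' \subseteq \alpha$ from consistency of the positive examples with $\aut{A}$, and $\alpha \subseteq \alpha'$ from the subsumption of $T_{Acc}^{\mbox{\scriptsize{IMA}}}$, concluding that $(\aut{M},\alpha')$ is equivalent to $\aut{A}$ and hence consistent with $T$. Your version is slightly more verbose in that you separately verify consistency with $T$ before proving full equivalence, whereas the paper simply proves $\alpha = \alpha'$ (hence equivalence with $\aut{A}$) and deduces consistency with $T$ from that; but this is a minor reordering, not a different argument.
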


\begin{proof}
	The construction of $\alpha'$ can be done in time polynomial in the sizes of $\aut{M}$ and $T$ by \Cref{prop:inf-to-SCC}.  The returned acceptor is consistent with $T$ by construction.
	
	Assume $\aut{M}$ is isomorphic to the automaton of $\aut{A}$ and that $T$ is consistent with $\aut{A}$.  For ease of notation, assume the isomorphism is the identity. Then for each positive example $(u(v)^{\omega},1)$ in $T$, the set $F = \inf(u(v)^{\omega})$ must be in $\alpha$, so $\alpha'$ is a subset of $\alpha$.
	
	If $T$ subsumes $T_{Acc}^{\mbox{\scriptsize{IMA}}}$, then for every set $F \in \alpha$ there is a positive example $(u(v)^{\omega},1)$ in $T$ with $F = \inf(u(v)^{\omega})$.  Thus the set $F$ is added to $\alpha'$, and $\alpha$ is a subset of $\alpha'$.  Thus, $(\aut{M},\alpha')$ is equivalent to $\aut{A}$, and because $T$ is consistent with $\aut{A}$, the IMA $(\aut{M},\alpha')$ is returned by $\alg{L}_{Acc}^{{\mbox{\scriptsize{IMA}}}}$.
\end{proof}

\begin{theorem}\label{thm:im:itptd}
	The class $\IM$ is identifiable in the limit using polynomial time and data.
\end{theorem}

\begin{proof}
	Let $\aut{A}$ be an IMA accepting a language $L$.
	The characteristic sample $T_L = T_{Aut} \cup T_{Acc}^{\mbox{\scriptsize{IMA}}}$ is of size polynomial in the size of $\aut{A}$.
	
	The combined learner $\alg{L}^{{\mbox{\scriptsize{IMA}}}}$ takes a sample $T$ as input and runs $\alg{L}_{Aut}$ on $T$ to produce an automaton $\aut{M}$ and then runs $\alg{L}_{Acc}^{{\mbox{\scriptsize{IMA}}}}$ on $\aut{M}$ and $T$ and returns the resulting acceptor.  It runs in polynomial time in the size of $T$ because it is the composition of two polynomial time algorithms, and the acceptor it returns is guaranteed to be consistent with $T$.
	
	If the sample $T$ is consistent with $\aut{A}$ and subsumes $T_L$, then by \Cref{theorem:L-Aut-works} the automaton $\aut{M}$ returned by $\alg{L}_{Aut}$ is isomorphic to the automaton of $\aut{A}$.  Then by \Cref{theorem:L-Acc-IMA-works} the acceptor returned by $\alg{L}_{Acc}^{{\mbox{\scriptsize{IMA}}}}$ with inputs $\aut{M}$ and $T$ is an IMA equivalent to $\aut{A}$.
\end{proof}

\subsection{\buchi\  acceptors}
\label{ssec:T-Acc-IBA}

The case of \buchi\ acceptors is nearly as straightforward as that of Muller acceptors.
Let $\aut{A}$ be an IBA with $n$ states and acceptance condition $F$.
For every state $q$ of $\aut{A}$, if there is an $\omega$-word $w$ such that $\aut{A}$ rejects $w$ and $q \in \inf(w)$, then there is an example $u(v)^{\omega}$ of length $O(n^2)$ such that $\aut{A}$ rejects $u(v)^{\omega}$ and $q \in \inf(u(v)^{\omega})$, by \Cref{prop:SCC-to-inf}. The negative labeled example $(u(v)^{\omega},0)$ is included in $T_{Acc}^{\mbox{\scriptsize{IBA}}}$.

The learning algorithm $\alg{L}_{Acc}^{\mbox{\scriptsize{IBA}}}$ takes as input a deterministic complete automaton $\aut{M}$ and a sample $T$.
The acceptance condition $F'$ consists of all the states $q$ of $\aut{M}$ such that for no negative example $(u(v)^{\omega},0)$ in $T$ do we have $q \in \inf_{\aut{M}}(u(v)^{\omega})$.
Once $F'$ has been computed, the algorithm checks whether the DBA $(\aut{M},F')$ is consistent with the sample $T$.  If so, it returns $(\aut{M},F')$; if not, it returns the default acceptor of type DBA for $T$.

\begin{theorem}
	\label{theorem:L-Acc-IBA-works}
	Algorithm $\alg{L}_{Acc}^{\mbox{\scriptsize{IBA}}}$ runs in time polynomial in the sizes of the inputs $\aut{M}$ and $T$.  Let $\aut{A}$ be an IBA. If the input automaton $\aut{M}$ is isomorphic to the automaton of $\aut{A}$, and the sample $T$ is consistent with $\aut{A}$ and subsumes $T_{Acc}^{\mbox{\scriptsize{IBA}}}$, then algorithm $\alg{L}_{Acc}^{\mbox{\scriptsize{IBA}}}$ returns an IBA $(\aut{M},F')$ equivalent to $\aut{A}$.
\end{theorem}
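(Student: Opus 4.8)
The plan is to mirror the proof of Theorem~\ref{theorem:L-Acc-IMA-works}: first handle the running-time claim, then establish the inclusion $F \subseteq F'$ directly from consistency of $T$ with $\aut{A}$, and finally use the sample $T_{Acc}^{\mbox{\scriptsize{IBA}}}$ to show that the (possibly strictly larger) set $F'$ defines the same language as $F$, so that $(\aut{M},F')$ is an IBA equivalent to $\aut{A}$ and is indeed the acceptor the algorithm outputs.

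For the running time, note that for each of the polynomially many negative examples $(u(v)^{\omega},0)$ in $T$ the algorithm computes $\infss{\aut{M}}(u(v)^{\omega})$, which by Prop.~\ref{prop:inf-to-SCC} takes time polynomial in $|\aut{M}|$ and $|u(v)^{\omega}|$; the set $F'$ is the complement in $Q$ of the union of these sets, and the final consistency test of $(\aut{M},F')$ against $T$ amounts to computing $\infss{\aut{M}}$ of every example of $T$ and testing its intersection with $F'$. If the test fails, the default acceptor of type DBA for $T$ is produced in polynomial time by Prop.~\ref{prop:default-acceptor}. In either branch the returned acceptor is consistent with $T$ by construction.

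For correctness, assume the isomorphism between $\aut{M}$ and the automaton of $\aut{A} = (\aut{M},F)$ is the identity, that $T$ is consistent with $\aut{A}$, and that $T_{Acc}^{\mbox{\scriptsize{IBA}}} \subseteq T$. The first step is $F \subseteq F'$: if some $q \in F$ satisfied $q \in \infss{\aut{M}}(u(v)^{\omega})$ for a negative example $(u(v)^{\omega},0) \in T$, then the unique run of $\aut{A}$ on $u(v)^{\omega}$ would visit $F$ infinitely often and hence accept, contradicting consistency of $T$ with $\aut{A}$; thus $q$ lies in no such $\inf$-set, so $q \in F'$. The second step is the key claim: no state $q \in F' \setminus F$ can lie in $\inf(w)$ for any $\omega$-word $w$ rejected by $\aut{A}$, since otherwise $T_{Acc}^{\mbox{\scriptsize{IBA}}}$, and hence $T$, would contain a negative example whose $\inf$-set contains $q$, contradicting $q \in F'$. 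Given this, for any $\omega$-word $w$ with $C = \inf(w)$: if $C \cap F \neq \emptyset$ then both $\aut{A}$ and $(\aut{M},F')$ accept $w$, since $C \cap F' \supseteq C \cap F$; and if $C \cap F = \emptyset$ then $\aut{A}$ rejects $w$, and by the key claim no element of $C$ lies in $F' \setminus F$, so (as also $C \cap F = \emptyset$) we get $C \cap F' = \emptyset$ and $(\aut{M},F')$ rejects $w$ too. Hence $(\aut{M},F')$ is equivalent to $\aut{A}$; being complete and isomorphic to $\aut{M}_{\sim_L}$ for $L = \sema{\aut{A}}$, it is an IBA. Finally, since $T$ is consistent with $\aut{A}$ and $(\aut{M},F')$ is equivalent to $\aut{A}$, the consistency test succeeds and $\alg{L}_{Acc}^{\mbox{\scriptsize{IBA}}}$ returns $(\aut{M},F')$.

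The one point that requires care — and it is mild here — is the key claim: one must see that $T_{Acc}^{\mbox{\scriptsize{IBA}}}$, which records only a single negative witness per ``bad'' state, is enough to certify that every state placed into $F'$ is safe to add, i.e.\ that enlarging $F$ to $F'$ never turns a rejected word into an accepted one. This works because $\inf(w)$ is always an SCC (Prop.~\ref{prop:inf-to-SCC}) and because Prop.~\ref{prop:SCC-to-inf} realizes every reachable SCC by a short ultimately periodic word, so the finitely many witnesses collected in $T_{Acc}^{\mbox{\scriptsize{IBA}}}$ already cover all the relevant cases. Everything else — the running-time bookkeeping and the two-branch consistency observation — is routine.
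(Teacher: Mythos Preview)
Your proof is correct and follows essentially the same approach as the paper's: the paper also uses Prop.~\ref{prop:inf-to-SCC} for the running time and then argues equivalence by splitting on whether $\aut{A}$ accepts or rejects a word, using consistency of $T$ for the accepting case (your $F\subseteq F'$) and the definition of $T_{Acc}^{\mbox{\scriptsize{IBA}}}$ for the rejecting case (your key claim). The only difference is organizational---you isolate $F\subseteq F'$ and the key claim as separate lemmas before the case analysis, whereas the paper folds them directly into the two cases---but the underlying reasoning is identical.
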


\begin{proof}
	The construction of $F'$ can be done in time polynomial in the sizes of $\aut{M}$ and $T$ by \Cref{prop:inf-to-SCC}.  The returned acceptor is consistent with $T$ by construction.
	
	Assume the input $\aut{M}$ is isomorphic to the automaton of $\aut{A}$, and that $T$ is consistent with $\aut{A}$ and subsumes $T_{Acc}^{\mbox{\scriptsize{IBA}}}$.  For ease of notation, assume the isomorphism is the identity. We show that the DBA $(\aut{M},F')$ is equivalent to $\aut{A}$.
	
	If $\aut{A}$ rejects the word $u(v)^{\omega}$ then let $C = \infss{\aut{M}}(u(v)^{\omega})$. Because $T$ subsumes $T_{Acc}^{\mbox{\scriptsize{IBA}}}$, for each $q \in C$, there is a negative example $(u'(v')^{\omega},0)$ in $T$ such that $q \in \infss{\aut{M}}(u'(v')^{\omega})$.  Thus no $q \in C$ is in $F'$ and $(\aut{M},F')$ also rejects $u(v)^{\omega}$.
	
	Conversely, if $\aut{A}$ accepts the word $u(v)^{\omega}$, then there is at least one state $q \in F$ such that $q \in \inf_{\aut{M}}(u(v)^{\omega})$.  Because $T$ is consistent with $\aut{A}$, there is no negative example $(u(v)^{\omega},0)$ in $T$ such that $q \in \infss{\aut{M}}(u(v)^{\omega})$, so $q \in F'$ and $(\aut{M},F')$ also accepts $u(v)^{\omega}$.  Thus $(\aut{M},F')$ is equivalent to $\aut{A}$. Because $T$ is consistent with $\aut{A}$, the IBA $(\aut{M},F')$ is returned by $\alg{L}_{Acc}^{\mbox{\scriptsize{IBA}}}$.
\end{proof}

\begin{theorem}
	\label{theorem:limit-id-of-DBA-DCA}
	The classes $\IB$ and $\IC$ are identifiable in the limit using polynomial time and data.
\end{theorem}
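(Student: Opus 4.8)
The plan is to mirror the assembly argument already carried out for $\IM$, combining the automaton-learning result with the \buchi\ acceptance-learning result, and then to obtain $\IC$ for free from the duality proposition. First I would treat $\IB$. Let $\aut{A}$ be an IBA accepting a language $L$, and let $\aut{M}$ be its automaton with $n$ states. The characteristic sample is taken to be $T_L = T_{Aut} \cup T_{Acc}^{\mbox{\scriptsize{IBA}}}$; the bound on $|T_{Aut}|$ from Section~\ref{ssec:T-Aut-definition} together with the bound on the examples in $T_{Acc}^{\mbox{\scriptsize{IBA}}}$ from Section~\ref{ssec:T-Acc-IBA} (which rests on Prop.~\ref{prop:SCC-to-inf}) shows that $|T_L|$ is polynomial in the size of $\aut{A}$.

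Next I would describe the combined learning algorithm $\alg{L}^{\mbox{\scriptsize{IBA}}}$: on input a sample $T$ it first runs $\alg{L}_{Aut}$ on $T$ to produce a deterministic complete automaton $\aut{M}$, then runs $\alg{L}_{Acc}^{\mbox{\scriptsize{IBA}}}$ on $\aut{M}$ and $T$, and returns the acceptor thus obtained. Since this is the composition of two polynomial-time procedures (Theorem~\ref{theorem:L-Aut-works} and Theorem~\ref{theorem:L-Acc-IBA-works}) and $|\aut{M}|$ is polynomial in $|T|$ by the default-acceptor bound invoked in the proof of Theorem~\ref{theorem:L-Aut-works}, the whole algorithm runs in time polynomial in $|T|$. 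Moreover $\alg{L}_{Acc}^{\mbox{\scriptsize{IBA}}}$ falls back to the default DBA of Prop.~\ref{prop:default-acceptor} whenever the hypothesis $(\aut{M},F')$ is inconsistent with $T$, so its output is always a DBA consistent with $T$; this settles condition (1) of the definition in Section~\ref{ssec:def-poly-id-in-limit}.

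For condition (2), suppose $T$ is consistent with $\aut{A}$ and $T_L \subseteq T$. Then $T$ subsumes $T_{Aut}$, so by Theorem~\ref{theorem:L-Aut-works} the automaton $\aut{M}$ returned by $\alg{L}_{Aut}$ is isomorphic to the automaton of $\aut{A}$; and $T$ subsumes $T_{Acc}^{\mbox{\scriptsize{IBA}}}$, so by Theorem~\ref{theorem:L-Acc-IBA-works} the acceptor that $\alg{L}_{Acc}^{\mbox{\scriptsize{IBA}}}$ returns on $\aut{M}$ and $T$ is an IBA equivalent to $\aut{A}$, hence one recognizing $L$. This proves that $\IB$ is identifiable in the limit using polynomial time and data. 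For $\IC$ I would then simply apply Prop.~\ref{prop:ib-ic-ir-is-duality}, whose statement is exactly that this property transfers from $\IB$ to $\IC$ (via complementing labels, justified by Claim~\ref{claim:DBA-DCA-complement}).

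I do not expect any genuinely hard step: all the substantive work lives in Theorems~\ref{theorem:L-Aut-works} and~\ref{theorem:L-Acc-IBA-works} and in Prop.~\ref{prop:ib-ic-ir-is-duality}. The only points needing care are bookkeeping ones: keeping the two halves of the definition cleanly separated --- condition (1) must be discharged using the default-acceptor fallback \emph{without} assuming $T_L \subseteq T$, whereas condition (2) is where the isomorphism and equivalence guarantees are used --- and checking that composing $\alg{L}_{Acc}^{\mbox{\scriptsize{IBA}}}$ after $\alg{L}_{Aut}$ leaves the running time polynomial in $|T|$ alone, which holds because $|\aut{M}|$ is itself polynomially bounded in $|T|$.
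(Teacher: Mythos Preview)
Your proposal is correct and follows essentially the same approach as the paper's own proof: define $T_L = T_{Aut} \cup T_{Acc}^{\mbox{\scriptsize{IBA}}}$, compose $\alg{L}_{Aut}$ with $\alg{L}_{Acc}^{\mbox{\scriptsize{IBA}}}$, invoke Theorems~\ref{theorem:L-Aut-works} and~\ref{theorem:L-Acc-IBA-works}, and obtain $\IC$ from Prop.~\ref{prop:ib-ic-ir-is-duality}. The only cosmetic difference is that the paper dispatches $\IC$ at the outset rather than the end, and is terser about the bookkeeping you spell out.
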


\begin{proof}
	The result for $\IC$ follows from that for $\IB$ by \Cref{prop:ib-ic-ir-is-duality}.  
	Let $\aut{A}$ be an IBA accepting language $L$. The characteristic sample $T_L = T_{Aut} \cup T_{Acc}^{\mbox{\scriptsize{IBA}}}$ is of size polynomial in  $\size(\aut{A})$.  
	
	The combined learning algorithm $\alg{L}^{\mbox{\scriptsize{IBA}}}$ takes a sample $T$ as input and runs $\alg{L}_{Aut}$ to get a deterministic complete automaton $\aut{M}$.  It then runs $\alg{L}_{Acc}^{\mbox{\scriptsize{IBA}}}$ on inputs $\aut{M}$ and $T$, and returns the resulting acceptor.  $\alg{L}^{\mbox{\scriptsize{IBA}}}$ runs in polynomial time in the length of $T$ and returns a DBA consistent with $T$.
	
	If the sample $T$ is consistent with $\aut{A}$ and subsumes $T_L$ then $\alg{L}_{Aut}$ returns an automaton $\aut{M}$ isomorphic to the automaton of $\aut{A}$ by \Cref{theorem:L-Aut-works}.  Then the acceptor returned by $\alg{L}_{Acc}^{\mbox{\scriptsize{IBA}}}$ on inputs $\aut{M}$ and $T$ is an IBA equivalent to $\aut{A}$ by \Cref{theorem:L-Acc-IBA-works}.
\end{proof}

\section{The sample \texorpdfstring{$T_{Acc}$}{T\_Acc} and the learning algorithm for IPA}
\label{sec:T-Acc-for-DPAs}

The construction of $T_{Acc}^{\mbox{\scriptsize{IPA}}}$ for an IPA $\aut{P}$ builds on the construction of the canonical forest of SCCs for $\aut{P}$, whose construction and properties are described next. 
 Roughly speaking, the purpose of the canonical forest for a given parity automaton $\aut{P}$ is to expose a set of words that if placed in the sample will lead a smart learner to correctly determine a coloring function for the constructed automaton. It is thus not surprising, that while developed for a different motivation, it has similarities with Carton and Maceiras's algorithm to compute the minimal number of colors for a given parity automaton~\cite{CartonM99}.
 We begin with the definition and properties of a decreasing forest of SCCs of an $\omega$-automaton.

\subsection{A decreasing forest of SCCs of an automaton}
\label{ssec:decreasing-forest}

Let $\aut{M}$ be a deterministic automaton, and let $S$ be a subset of its states.
A \emph{decreasing forest of SCCs of $\aut{M}$ rooted in $S$} is a finite rooted forest $\forest{F}$ in which every node $C$ is an SCC of $\aut{M}$ that is contained in $S$, and the following properties are satisfied. 
\begin{enumerate}
	\item The roots of $\forest{F}$ are the elements of $\maxSCCs(S)$.
	\item Whenever $D_1, \ldots, D_k$ are the children of node $C$, we have $D_1 \cup \ldots \cup D_k \subsetneq C$. Also, letting $\Delta(C) = C \setminus (D_1 \cup \ldots \cup D_k)$, the children $D_1, \ldots, D_k$ are exactly the elements of $\maxSCCs(C \setminus \Delta(C))$.
\end{enumerate}

\begin{proposition}
	\label{prop:monotone-forest-properties}
	Let $\aut{M}$ be a deterministic automaton, $S$ a subset of its states, and $\forest{F}$ a decreasing forest of SCCs of $\aut{M}$ rooted in $S$.  Then the following are true.
	\begin{enumerate}
		\item The roots of $\forest{F}$ are pairwise disjoint.
		\item The children of any node are pairwise disjoint.
		\item $\forest{F}$ has at most $|S|$ nodes.
		\item For any $D \subseteq S$ that is an SCC of $\aut{M}$, there is a unique node $C$ in $\forest{F}$ such that $D \subseteq C$ and $D$ is not a subset of any of the children of $C$.
	\end{enumerate}
\end{proposition}

\begin{proof}
	The roots of $\forest{F}$ are the elements of $\maxSCCs(S)$, which are pairwise disjoint.  The children of a node $C$ are the elements of $\maxSCCs(C \setminus \Delta(C))$, which are pairwise disjoint.  The sets $\Delta(C)$ for nodes $C$ in $\forest{F}$ are contained in $S$, nonempty, and pairwise disjoint, so the number of nodes is at most $|S|$.  If $D \subseteq S$ is an SCC of $\aut{M}$, then $D$ is a subset of exactly one of the roots of $\forest{F}$, say $C_1$.  If $D \cap \Delta(C_1) \neq \emptyset$, then $D$ is not a subset of any of the children of $C_1$.  Otherwise, $D$ must be a subset of exactly one of the children of $C_1$, say $C_2$.  If $D \cap \Delta(C_2) \neq \emptyset$, then $D$ is not a subset of any of the children of $C_2$.  Continuing in this way, we eventually arrive at the required node $C$.
\end{proof}

Given a decreasing forest $\forest{F}$ of SCCs of automaton $\aut{M}$ rooted in $S$, and an SCC $D \subseteq S$, we denote by $\alg{Node}(D,\forest{F})$ the unique node $C$ of $\forest{F}$ such that $D \subseteq C$ and $D$ is not a subset of any of the children of $C$.  We note that if $C = \alg{Node}(D,\forest{F})$ then $D \cap \Delta(C) \neq \emptyset$.
If $D$ is a child of some $C$ in $\forest{F}$, we define \emph{merging $D$ into $C$} as the operation of removing $D$ from $\forest{F}$ and making the children of $D$ (if any) direct children of $C$.

\begin{proposition}
	\label{prop:merges-in-a-decreasing-forest}
	Let $\aut{M}$ be a deterministic automaton and $S$ a subset of its states.  Let $\forest{F}$ be a decreasing forest of SCCs of $\aut{M}$ rooted in $S$.  Let $D$ be a child of $C$ in $\forest{F}$ and let \forest{F}' be obtained from $\forest{F}$ by merging $D$ into $C$.  Then \forest{F}' is also a decreasing forest of SCCS of $\aut{M}$ rooted in $S$.
\end{proposition}

\begin{proof}
	After the merge, the roots of \forest{F} remain the elements of $\maxSCCs(S)$.
	Let $D_1, \ldots, D_k$ be the children of $C$ in $F$, where $D = D_k$, and let $E_1, \ldots, E_{\ell}$ be the children of $D$ in $F$. Because the union of $E_1, \ldots, E_{\ell}$ is a proper subset of $D = D_k$ and the union of $D_1,\ldots, D_k$ is a proper subset of $C$, the union of $D_1, \ldots, D_{k-1}$ with the union of $E_1, \ldots, E_{\ell}$ is a proper subset of $C$, therefore the union of the children of $C$ in \forest{F} is a proper subset of $C$.
	Also, the children of $C$ in \forest{F} are the maximum SCCs of $C \setminus \Delta_{\forest{F}'}(C)$, and no other nodes are affected, so \forest{F} is a decreasing forest of SCCs of $\aut{M}$ rooted in $S$.
\end{proof}

\subsection{Constructing the Canonical Forest and Coloring of a DPA}

Let $\aut{P} =\la \Sigma, Q, q_\iota, \delta, \linebreak[5]\kappa \ra$ be a complete DPA.  We extend the coloring function $\kappa$ to nonempty sets of states by $\kappa(S) = \min\{k(q) \mid q \in S\}$, the minimum color of any state in $S$.  We define the \emph{$\kappa$-parity} of $S$ to be $1$ if $\kappa(S)$ is odd, and $0$ if $\kappa(S)$ is even. A word $w \in \Sigma^{\omega}$ is accepted by $\aut{P}$ iff the $\kappa$-parity of $\inf(w)$ is $1$. Note that the union of two sets of $\kappa$-parity $b$ is also of $\kappa$-parity $b$.  
For any nonempty $S \subseteq Q$, we define $\minStates(S) = \{q \in S \mid \kappa(q) = \kappa(S)\}$, the states of $S$ that are assigned the minimum color among all states of $S$.

\subsubsection{The $\minStates$-Forest.}  We describe an algorithm to construct the \emph{$\minStates$-forest of $\aut{P}$}.   
The roots of the $\minStates$-forest are the elements of $\maxSCCs(Q)$, each marked as unprocessed.
The following procedure is repeated until all elements are marked as processed.
If $C$ is unprocessed, then ${\mathcal D} = \maxSCCs(C \setminus \minStates(C))$ is computed.
If ${\mathcal  D}$ is empty, $C$ becomes a leaf in the forest, and is marked as processed.
Otherwise, $C$ is marked as processed and the elements of ${\mathcal  D}$ are made the children of $C$ and are marked as unprocessed.

\begin{proposition}
	\label{prop:minStates-forest}
	Let $\aut{P} = \la \Sigma, Q, q_\iota, \delta, \kappa \ra$ be a complete DPA with automaton $\aut{M}$. Let $\forest{F}$ be the $\minStates$-forest of $\aut{P}$. Then $\forest{F}$ is a decreasing forest of SCCs of $\aut{M}$ rooted in $Q$, and can be computed in polynomial time. For any SCC $D$, $\kappa(D) = \kappa(\alg{Node}(D,\forest{F}))$.
\end{proposition}

\begin{proof}
	Referring to the construction of the $\minStates$-forest $\forest{F}$, its roots are the elements of $\maxSCCs(Q)$.  When a node $C$ is processed, the nonempty set $\minStates(C)$ is removed and the maximum SCCs (if any) of the result become the children of $C$, so the union of the children of $C$ is a proper subset of $C$, and the children are the maximum SCCs of $C \setminus \Delta(C)$.
	Let $D$ be an SCC.  Then $D \subseteq Q$ and for the node $C = \alg{Node}(D,\forest{F})$, we have that $D \subseteq C$ and $D$ is not a subset of any child of $C$.  Thus $D \cap \minStates(C) \neq \emptyset$, because otherwise $D$ would be a subset of some child of $C$. This implies that $\kappa(D) = \kappa(C)$.
	The $\minStates$-forest of $\aut{P}$ can be computed in polynomial time because it has at most $|Q|$ nodes, and each set $\maxSCCs(S)$ can be computed in polynomial time by \Cref{prop:maxSCCs-in-linear-time}.
\end{proof}

\subsubsection{The Canonical Forest and Coloring.}
The canonical forest of $\aut{P}$ is constructed as follows, starting with the $\minStates$-forest of $\aut{P}$.
While there exist in the forest a node $D$ and its parent $C$ of the same $\kappa$-parity, one such pair $D$ and $C$ is selected, and the child node $D$ is merged into the parent node $C$.
When no such pair remains, the result is the \emph{canonical forest of $\aut{P}$}, denoted $\forest{F}^*(\aut{P})$.
The canonical forest $\forest{F}^*(\aut{P})$ can be computed from $\aut{P}$ in polynomial time.

From the canonical forest $\forest{F}^*(\aut{P})$, we define the \emph{canonical coloring} $\kappa^*$. The states in $(Q \setminus \bigcup\maxSCCs(Q))$ are not contained in any SCC of $\aut{P}$ and do not affect the acceptance or rejection of any $\omega$-word.  For definiteness, we assign them $\kappa^*(q) = 0$. For a root node $C$ of $\kappa$-parity $b$, we define $\kappa^*(q) = b$ for all $q \in \Delta(C)$.  Let $C$ be an arbitrary node of $\forest{F}^*(\aut{P})$.  If the states of $\Delta(C)$ have been assigned color $k$ by $\kappa^*$ and $D$ is a child of $C$, then the states of $\Delta(D)$ are assigned color $k+1$ by $\kappa^*$.
Clearly $\kappa^*$ can be computed from $\aut{P}$ in polynomial time.

\begin{example}
    \Cref{fig:example-DPA}~(a) shows the graph $G(\aut{P})$ of a DPA $\aut{P}$ with states $a$ through $m$, labeled by the colors assigned by $\kappa$.  \Cref{fig:example-minStates-forest}~(b) shows the $\minStates$-forest of $\aut{P}$, with the nodes labeled by their $\kappa$-parities. \Cref{fig:example-canonical-forest}~(c) shows the canonical forest $\forest{F}^*(\aut{P})$ of $\aut{P}$, with the nodes labeled by their $\kappa$-parities.   \Cref{fig:example-canonical-coloring}~(d) shows the graph $G(\aut{P})$ re-colored using the canonical coloring $\kappa^*$.
\end{example}

\begin{figure}[t]
{\small{
	\begin{center}
		\scalebox{0.65}{
			\begin{tikzpicture}[->,>=stealth',shorten >=1pt,auto,node distance=2.2cm,semithick,initial text=,initial where=left]
			\node[state] (node-a)  {$a : 1$};
			\node[state] (node-b)  [left  of=node-a]{$b : 3$};
			\node[state] (node-h)  [below of=node-a]{$h : 4$};
			\node[state] (node-i)  [below right of=node-a]{$i : 3$};
			\node[state] (node-c)  [below left of=node-b]{$c : 1$}; 
			\node[state] (node-d)  [below of=node-h]{$d : 5$};
			\node[state] (node-e) [below right of=node-c]{$e : 4$};
			\node[state] (node-f) [below of=node-c]{$f : 4$};
			\node[state] (node-g) [below right of=node-f] {$g : 5 $};
			\node[state] (node-j) [below of=node-i] {$j : 4 $};
			\node[state] (node-k) [below right of=node-i] {$k : 5$};
			\node[state] (node-l) [below of=node-j] {$l : 2$};
			\node[state] (node-m) [below right of=node-j] {$m : 4$};
			\node[label] (label-a) [below of=node-d, scale=1.5, node distance=2.1cm] {(a)};				
			
			\path (node-a) edge [bend right]  node [left] {} (node-b); 
			\path (node-a) edge [bend right]  node {} (node-h);
			\path (node-a) edge [bend left] node{} (node-i);
			\path (node-b) edge node [right] {} (node-a); 
			\path (node-b) edge node {} (node-c);
			\path (node-b) edge node {} (node-e);
			\path (node-h) edge node [right]{} (node-a);
			\path (node-d) edge node {} (node-b);
			\path (node-e) edge node {} (node-d);
			\path (node-d) edge [bend left] node {} (node-e);
			\path (node-c) edge node {} (node-e);
			\path (node-c) edge [bend right] node {} (node-f);
			\path (node-f) edge [bend right] node {} (node-c);
			\path (node-f) edge [bend right] node {} (node-g);
			\path (node-g) edge [bend right] node {} (node-f);
			\path (node-i) edge node {} (node-j);
			\path (node-j) edge [bend right] node {} (node-k);
			\path (node-k) edge [bend right] node {} (node-j);
			\path (node-k) edge node {} (node-i);
			\path (node-j) edge node {} (node-l);
			\path (node-l) edge [bend right] node {} (node-m);
			\path (node-m) edge [bend right] node {} (node-l);
			\path (node-m) edge [loop right] (node-m);
			
			\node[setofstates] (C0) [right of=node-a, node distance=9cm]{$a,b,c,d,e,f,g,h :1 $};
			\node[setofstates] (C01) [below of=C0,node distance=2.75cm] {$f,g : 0$};
			\node[setofstates] (C00) [left of=C01,node distance=2.75cm] {$b,d,e : 1$};
			\node[setofstates] (C000) [below of=C00,node distance=2.75cm] {$d,e : 0$};
			\node[setofstates] (C1) [right of=C0,node distance=2.95cm] {$i,j,k : 1$};
			\node[setofstates] (C10) [below of=C1,node distance=2.75cm] {$j,k : 0$};
			\node[setofstates] (C2) [right of=C1,node distance=2cm] {$l,m : 0$};
			\node[setofstates] (C20) [below of=C2,node distance=2.75cm] {$m : 0$};
			\node[label] (label-b) [right of=label-a, scale=1.5, node distance=10.5cm] {(b)};								
			
			\path (C0) edge (C00);
			\path (C0) edge (C01);
			\path (C00) edge (C000);
			\path (C1) edge (C10);
			\path (C2) edge (C20);

			\node[setofstates] (D0) [below of=node-b, node distance=8.5cm] {$a,b,c,d,e,f,g,h : 1$};
			\node[setofstates] (D01) [below of=D0,node distance=2.75cm] {$f,g : 0$};
			\node[setofstates] (D000) [left of=D01,node distance=2.75cm] {$d,e : 0$};
			\node[setofstates] (D1) [right of=D0,node distance=2.95cm] {$i,j,k : 1$};
			\node[setofstates] (D10) [below of=D1,node distance=2.75cm] {$j,k : 0$};
			\node[setofstates] (D2) [right of=D1,node distance=2.0cm] {$l,m : 0$};
			
			\path (D0) edge (D000);
			\path (D0) edge (D01);
			\path (D1) edge (D10);

			\node[state] (node--a) [right of=D0,node distance=12.5cm]  {$a : 1$};
			\node[state] (node--b)  [left  of=node--a]{$b : 1$};
			\node[state] (node--h)  [below of=node--a]{$h : 1$};
			\node[state] (node--i)  [below right of=node--a]{$i : 1$};
			\node[state] (node--c)  [below left of=node--b]{$c : 1$}; 
			\node[state] (node--d)  [below of=node--h]{$d : 2$};
			\node[state] (node--e) [below right of=node--c]{$e : 2$};
			\node[state] (node--f) [below of=node--c]{$f : 2$};
			\node[state] (node--g) [below right of=node--f] {$g : 2$};
			\node[state] (node--j) [below of=node--i] {$j : 2$};
			\node[state] (node--k) [below right of=node--i] {$k : 2$};
			\node[state] (node--l) [below of=node--j] {$l : 0$};
			\node[state] (node--m) [below right of=node--j] {$m : 0$};
			\node[label] (label-d) [below of=node--d, scale=1.5, node distance=2.1cm] {(d)};		
			\node[label] (label-c) [left of=label-d, scale=1.5, node distance=10.5cm] {(c)};																					
			
			\path (node--a) edge [bend right]  node [left] {} (node--b); 
			\path (node--a) edge [bend right]  node {} (node--h);
			\path (node--a) edge [bend left] node{} (node--i);
			\path (node--b) edge node [right] {} (node--a); 
			\path (node--b) edge node {} (node--c);
			\path (node--b) edge node {} (node--e);
			\path (node--h) edge node [right]{} (node--a);
			\path (node--d) edge node {} (node--b);
			\path (node--e) edge node {} (node--d);
			\path (node--d) edge [bend left] node {} (node--e);
			\path (node--c) edge node {} (node--e);
			\path (node--c) edge [bend right] node {} (node--f);
			\path (node--f) edge [bend right] node {} (node--c);
			\path (node--f) edge [bend right] node {} (node--g);
			\path (node--g) edge [bend right] node {} (node--f);
			\path (node--i) edge node {} (node--j);
			\path (node--j) edge [bend right] node {} (node--k);
			\path (node--k) edge [bend right] node {} (node--j);
			\path (node--k) edge node {} (node--i);
			\path (node--j) edge node {} (node--l);
			\path (node--l) edge [bend right] node {} (node--m);
			\path (node--m) edge [bend right] node {} (node--l);	
			\path (node--m) edge [loop right] (node--m);					
			
			\end{tikzpicture}
		}
		\caption{
			(a) Graph $G(\aut{P})$ with states colored by $\kappa$. (b) The $\minStates$-forest of $\aut{P}$, with $\kappa$-parities of nodes. 
			(c) Canonical forest $\forest{F}^*(\aut{P})$, with $\kappa$-parities of nodes.  (d) Graph $G(\aut{P})$ with the canonical coloring $\kappa^*$.}
		\label{fig:example-DPA}  
		\label{fig:example-minStates-forest}
		\label{fig:example-canonical-forest}
		\label{fig:example-canonical-coloring}  
	\end{center}
 }}
\end{figure}	

\begin{theorem}
	\label{theorem:canonical-forest-properties}
	Let $\aut{P} =\la \Sigma, Q, q_\iota, \delta, \kappa \ra$ be a complete DPA with automaton $\aut{M}$.  The canonical forest $\forest{F}^*(\aut{P})$ is a decreasing forest of SCCs of $\aut{M}$ rooted in $Q$ and has the following properties.
	\begin{enumerate}
		\item For any SCC $D$, both $D$ and $\alg{Node}(D,\forest{F}^*(\aut{P}))$ have the same $\kappa$-parity.
		\item For every node $C$ of $\forest{F}^*(\aut{P})$, the $\kappa$-parity of $C$ is the same as the $\kappa^*$-parity of $C$.
		\item The children in $\forest{F}^*(\aut{P})$ of a node $C$ of $\kappa$-parity $b$ are the maximal SCCs $D \subseteq C$ of $\kappa$-parity $1-b$.
	\end{enumerate}
\end{theorem}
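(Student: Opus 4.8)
The plan is to derive everything from the two preservation facts about decreasing forests that are already available. For the first assertion, that $\forest{F}^*(\aut{P})$ is a decreasing forest of SCCs of $\aut{M}$ rooted in $Q$: by Prop.~\ref{prop:minStates-forest} the $\minStates$-forest has this property, and $\forest{F}^*(\aut{P})$ is obtained from it by a finite sequence of merges of a child into its parent (each merge strictly decreases the number of nodes, so the process terminates), each of which preserves the property by Prop.~\ref{prop:merges-in-a-decreasing-forest}; an induction on the number of merges finishes this part. I would also record one consequence of the termination condition of the merge process, used repeatedly below: in $\forest{F}^*(\aut{P})$ no node has the same $\kappa$-parity as its parent, so $\kappa$-parities strictly alternate along every root-to-leaf path, and in particular every child of a node of $\kappa$-parity $b$ has $\kappa$-parity $1-b$.

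For Property~1, I would show that the quantity ``the $\kappa$-parity of $\alg{Node}(D,\cdot)$'' is invariant under a single merge, for every SCC $D$, and then invoke Prop.~\ref{prop:minStates-forest} (which gives $\kappa(D)=\kappa(\alg{Node}(D,\forest{F}))$ for the $\minStates$-forest) as the base case of an induction on merges. The computation at the heart of the inductive step is that merging a child $D_0$ into its parent $C_0$ turns $\Delta(C_0)$ into the disjoint union $\Delta(C_0)\cup\Delta(D_0)$ and leaves $\Delta(X)$ unchanged for every other node $X$. Using the characterization $\alg{Node}(D,\cdot)=$ the unique node $X$ with $D\subseteq X$ and $D\cap\Delta(X)\neq\emptyset$, it follows that $\alg{Node}(D,\cdot)$ is unchanged for every SCC $D$ whose node is not $D_0$, while every $D$ with $\alg{Node}(D,\cdot)=D_0$ now has node $C_0$; since a merge is performed only on a parent--child pair of equal $\kappa$-parity, $C_0$ and $D_0$ have the same $\kappa$-parity, so the parity of $\alg{Node}(D,\cdot)$ does not change.

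For Property~3, fix a node $C$ of $\kappa$-parity $b$; by the alternation observation its children $D_1,\dots,D_k$ have $\kappa$-parity $1-b$. That each $D_i$ is \emph{maximal} among SCCs of $\kappa$-parity $1-b$ contained in $C$ follows because an SCC $D'\subseteq C$ of $\kappa$-parity $1-b$ with $D_i\subseteq D'$ either meets $\Delta(C)$, which would force $\alg{Node}(D',\forest{F}^*(\aut{P}))=C$ of $\kappa$-parity $b\neq 1-b$, contradicting Property~1, or is contained in $C\setminus\Delta(C)$, hence (by Prop.~\ref{prop:maxSCCs_disjoint}) inside a single child, which must be $D_i$; so $D'=D_i$. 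Conversely, let $D\subseteq C$ be a maximal SCC of $\kappa$-parity $1-b$ and put $X=\alg{Node}(D,\forest{F}^*(\aut{P}))$. By Property~1, $X$ has $\kappa$-parity $1-b$, so $X\neq C$; the nodes containing a common nonempty SCC are pairwise comparable (incomparable nodes of the forest are disjoint), and $X$ cannot be a strict ancestor of $C$ (otherwise $D\subseteq C$ would be a subset of the child of $X$ on the path to $C$), so $X$ lies strictly below $C$. Hence $D\subseteq X\subseteq D_j$ for the child $D_j$ of $C$ on the path from $C$ to $X$, and $D_j$ is an SCC of $\kappa$-parity $1-b$ contained in $C$, so maximality of $D$ gives $D=D_j$, a child of $C$.

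For Property~2, use that in a decreasing forest each node $C$ is the disjoint union of $\Delta(X)$ over the nodes $X$ in the subtree rooted at $C$, with $\Delta(C)$ nonempty. By the definition of $\kappa^*$, the function $\kappa^*$ is constant on each set $\Delta(X)$ with value $b_0+\mathrm{depth}(X)$, where $b_0$ is the $\kappa$-parity of the root above $X$ and the root has depth $0$; therefore $\kappa^*(C)=b_0+\mathrm{depth}(C)$, the minimum being attained on $\Delta(C)$. On the other hand, the alternation observation gives that the $\kappa$-parity of $C$ equals $(b_0+\mathrm{depth}(C))\bmod 2$. These two numbers have the same parity, which is Property~2. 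I expect the main obstacle to be the bookkeeping in Property~1 --- verifying precisely how $\alg{Node}(\cdot,\cdot)$ is affected by a merge, in particular that no SCC previously living outside $D_0$ migrates and that those living in $D_0$ migrate exactly to $C_0$ and no deeper --- with the converse half of Property~3 the next most delicate point, since it couples Property~1 with the fact that the forest nodes containing a fixed SCC are linearly ordered.
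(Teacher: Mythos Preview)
Your proposal is correct and shares the paper's overall architecture: the decreasing-forest claim and Property~1 by induction on merges (with Prop.~\ref{prop:minStates-forest} as base case and Prop.~\ref{prop:merges-in-a-decreasing-forest} for the inductive step), and Property~2 via the parity alternation along root-to-leaf paths. The one noteworthy difference is in Property~3. The paper argues both directions by going back to the $\minStates$-forest $\forest{F}_0$: given a maximal SCC $D\subseteq C$ of $\kappa$-parity $1-b$, it uses the exact equality $\kappa(D)=\kappa(\alg{Node}(D,\forest{F}_0))$ to deduce $D=\alg{Node}(D,\forest{F}_0)$, and then reasons about which nodes on the $\forest{F}_0$-path between $C$ and $D$ get merged; the converse again invokes $\alg{Node}(\cdot,\forest{F}_0)$. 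You instead work entirely inside $\forest{F}^*(\aut{P})$, using Property~1 (already established) together with the fact that forest nodes containing a common nonempty SCC are comparable. Your route is more self-contained---once $\forest{F}^*(\aut{P})$ is built you never revisit the merge history---while the paper's route is slightly shorter because in $\forest{F}_0$ the nodes carry exact $\kappa$-values rather than just parities, which makes the maximality step immediate. Your anticipated ``main obstacle'' (tracking how $\alg{Node}$ changes under a merge) is precisely the step the paper states without the $\Delta$-bookkeeping you supply; your treatment there is more explicit than the paper's.
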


\begin{proof}
	Because $\forest{F}^*(\aut{P})$ is obtained from the $\minStates$-forest of $\aut{P}$ by a sequence of merges, $\forest{F}^*(\aut{P})$ is a decreasing forest of SCCs of $\aut{M}$ rooted in $Q$ by  \Cref{prop:merges-in-a-decreasing-forest}.
	Let $\forest{F}_0$ denote the $\minStates$-forest of $\aut{P}$, and let $\forest{F}_i$ denote the forest after $i$ merges have been performed in the computation to produce the canonical acceptor $\forest{F}^*(\aut{P})$.
	
	By \Cref{prop:minStates-forest}, for any SCC $D$, $\kappa(D) = \kappa(\alg{Node}(D,\forest{F}_0))$, so property (1) holds for $\forest{F}_0$.  We show by induction that it holds for each $\forest{F}_i$ and therefore for $\forest{F}^*(\aut{P})$. Assume that property (1) holds of $\forest{F}_i$, and $\forest{F}_{i+1}$ is obtained from $\forest{F}_i$ by merging child node $D$ into parent node $C$.  Let $D'$ be any SCC. If $\alg{Node}(D',\forest{F}_i) = \alg{Node}(D',\forest{F}_{i+1})$, then because property (1) holds in $\forest{F}_i$, we have that the $\kappa$-parity of $D'$ is the same as the $\kappa$-parity of $\alg{Node}(D',\forest{F}_{i+1})$.
	Otherwise, it must be that $D = \alg{Node}(D',\forest{F}_i)$ and $C = \alg{Node}(D',\forest{F}_{i+1})$.  Because $D$ is only merged to $C$ if they are of the same $\kappa$-parity, this implies that the $\kappa$-parity of $D'$ is the same as the $\kappa$-parity of $C$.  Thus, property (1) holds also in $\forest{F}_{i+1}$.
	
	For property (2), we note that the $\kappa$-parity and the $\kappa^*$-parity of each root node of $\forest{F}^*(\aut{P})$ is the same.  Suppose $C$ is a node of $\forest{F}^*(\aut{P})$ whose $\kappa$-parity and $\kappa^*$-parity are equal to $b$, and $D$ is a child of $C$.  Then by the construction of $\forest{F}^*(\aut{P})$, the $\kappa$-parity of $D$ is $1-b$.  And by the definition of $\kappa^*$, the $\kappa^*$ parity of the elements of $\Delta(D)$ is the opposite of the $\kappa^*$-parity of the elements of $\Delta(C)$. Because the $\kappa^*$-parity of $C$ is $b$, the $\kappa^*$-parity of $D$ is also $1-b$.
	
	For property (3), let $D$ be any maximal SCC of $\aut{P}$ that is contained in $C$ and has $\kappa$-parity $1-b$. Then $\alg{Node}(D,\forest{F}_0)$ is contained in $C$ and has the same $\kappa$-parity as $D$.  Since $D$ is maximal, we must have $D = \alg{Node}(D,\forest{F}_0)$, and any nodes on the path between $C$ and $D$ must have $\kappa$-parity $b$ and must be merged into $C$ to form $\forest{F}^*(\aut{P})$.  Thus, $D$ is a child of $C$ in $\forest{F}^*(\aut{P})$. 
	
	Conversely, if $D$ is a child of $C$ in $\forest{F}^*(\aut{P})$ then $D \subseteq C$ and the $\kappa$-parity of $D$ is $1-b$.  Assume $D'$ is an SCC such that $D' \subseteq C$, the $\kappa$-parity of $D'$ is $1-b$ and $D \subsetneq D'$.  Then $D'' = \alg{Node}(D',\forest{F}_0)$ is a descendant of $C$ in $\forest{F}_0$ that has $\kappa$-parity $1-b$, and $D \subsetneq D''$, so because there is another node of parity $1-b$ on the path between $D$ and  $C$ in $\forest{F}_0$, $D$ cannot be a child of $C$ in $\forest{F}^*(\aut{P})$, a contradiction.
\end{proof}	

Replacing the coloring function of $\aut{P}$ by the canonical coloring does not change the $\omega$-language accepted.

\begin{theorem}
	\label{theorem:canonical-coloring-works}
	Let $\aut{P} = \la \Sigma, Q, q_\iota, \delta, \kappa \ra$ be a complete DPA, and $\aut{P}^*$ be $\aut{P}$ with the canonical coloring $\kappa^*$ for $\aut{P}$ in place of $\kappa$.  Then $\aut{P}$ and $\aut{P}^*$ recognize the same $\omega$-language.
\end{theorem}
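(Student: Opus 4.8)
The plan is to reduce the statement to a claim about the infinity set of a run. Fix $w \in \Sigma^{\omega}$ and let $\rho(w)$ be its unique run in the deterministic complete automaton $\aut{M}$ of $\aut{P}$. By the definition of the Parity acceptance condition, $w \in \sema{\aut{P}}$ iff the $\kappa$-parity of $\inf(w)$ is $1$, and $w \in \sema{\aut{P}^*}$ iff the $\kappa^*$-parity of $\inf(w)$ is $1$. Hence it suffices to prove that for every SCC $D$ of $\aut{M}$ — and in particular for $D = \inf(w)$, which is an SCC by Prop.~\ref{prop:inf-to-SCC} — the $\kappa$-parity of $D$ equals its $\kappa^*$-parity.

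First I would locate $D$ in the canonical forest: set $C = \alg{Node}(D,\forest{F}^*(\aut{P}))$, which is well defined because $D$ is contained in exactly one root of $\forest{F}^*(\aut{P})$ (Prop.~\ref{prop:maxSCCs_disjoint}) and $\alg{Node}$ is defined for any SCC contained in $Q$. Recall from the remark following the definition of $\alg{Node}$ that $D \cap \Delta(C) \neq \emptyset$. By Theorem~\ref{theorem:canonical-forest-properties}(1), $D$ and $C$ have the same $\kappa$-parity, and by Theorem~\ref{theorem:canonical-forest-properties}(2), the $\kappa$-parity of $C$ equals its $\kappa^*$-parity. So it remains to show that $D$ and $C$ have the same $\kappa^*$-parity.

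The key step is a monotonicity observation about $\kappa^*$: along any root-to-leaf path in $\forest{F}^*(\aut{P})$, the colors assigned by $\kappa^*$ to the successive $\Delta$-sets increase by exactly one at each step. Consequently, for any node $C$, the minimum $\kappa^*$-color occurring among the states of $C$ — viewed as the union of $\Delta(C)$ with the state sets of the subtrees rooted at the children of $C$ — is exactly $\kappa^*(\Delta(C))$; this follows by induction on the height of the subtree rooted at $C$, using that every state of $C$ outside $\Delta(C)$ lies in the subtree of some child $D'$ of $C$ and hence has $\kappa^*$-color at least $\kappa^*(\Delta(D')) = \kappa^*(\Delta(C)) + 1$. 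Applying this to $C$, and using that $D \subseteq C$ with $D \cap \Delta(C) \neq \emptyset$, we obtain $\kappa^*(D) = \min\{\kappa^*(q) \mid q \in D\} = \kappa^*(\Delta(C)) = \min\{\kappa^*(q) \mid q \in C\} = \kappa^*(C)$, so $D$ and $C$ have the same $\kappa^*$-parity. Chaining the equalities: the $\kappa$-parity of $D$ equals the $\kappa$-parity of $C$, which equals the $\kappa^*$-parity of $C$, which equals the $\kappa^*$-parity of $D$; therefore $w \in \sema{\aut{P}}$ iff $w \in \sema{\aut{P}^*}$.

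I expect the main obstacle to be pinning down the monotonicity of $\kappa^*$ and the ``minimum attained on $\Delta(C)$'' claim precisely, including the handling of the states of $Q$ that lie in no SCC (assigned $\kappa^*(q) = 0$): these could in principle carry a small color, but they are irrelevant here because $\inf(w)$ never contains such a state, and more generally the argument is only ever applied to an SCC $D$, which is contained in a root of $\forest{F}^*(\aut{P})$, so such stray states never enter the computation of $\kappa^*(D)$. The remaining bookkeeping — that $\inf(w)$ is an SCC, that $\alg{Node}$ is well defined, and that acceptance is governed solely by the $\kappa$- (resp.\ $\kappa^*$-) parity of $\inf(w)$ — is immediate from Prop.~\ref{prop:inf-to-SCC}, Prop.~\ref{prop:maxSCCs_disjoint}, and the definition of the Parity condition.
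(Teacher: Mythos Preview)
Your proposal is correct and follows essentially the same route as the paper: locate $C = \alg{Node}(D,\forest{F}^*(\aut{P}))$, use properties (1) and (2) of Theorem~\ref{theorem:canonical-forest-properties} to equate the $\kappa$-parity of $D$ with the $\kappa^*$-parity of $C$, and then argue $\kappa^*(D) = \kappa^*(C)$. The paper dispatches this last step in a single phrase (``by the definition of $\kappa^*$''), whereas you spell out the monotonicity-of-$\kappa^*$ argument explicitly; your more detailed justification is sound and arguably clearer.
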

\begin{proof}
	Let $w$ be an $\omega$-word and let $D = \inf(w)$. This is an SCC of the (common) automaton of $\aut{P}$ and $\aut{P}^*$.  Let $C = \alg{Node}(D,\forest{F}^*(\aut{P}))$.  Then $D \cap \Delta(C) \neq \emptyset$, and $\kappa^*(D) = \kappa^*(C)$, by the definition of $\kappa^*$,
	The $\kappa^*$ parity of $C$ is the same as the $\kappa$-parity of $C$, by property (2) of \Cref{theorem:canonical-forest-properties}.
	The $\kappa$-parity of $C$ is the same as the $\kappa$-parity of $D$, by property (1) of \Cref{theorem:canonical-forest-properties}.
	Thus, the $\kappa^*$-parity of $D$ is the same as the $\kappa$-parity of $D$, and $w \in \sema{\aut{P}}$ iff $w \in \sema{\aut{P}^*}$.
\end{proof}

\subsection{Constructing \texorpdfstring{$T_{Acc}^{\mbox{\scriptsize{IPA}}}$}{T\_Acc\textasciicircum IPA}}
\label{ssec:T-Acc-IPA}

We now describe the construction of $T_{Acc}^{\mbox{\scriptsize{IPA}}}$, the second part of the characteristic sample for an IPA $\aut{P}$ with the automaton $\aut{M}$ of $n$ states. The sample $T_{Acc}^{\mbox{\scriptsize{IPA}}}$ consists of one example $u(v)^{\omega} = \alg{Witness}(C,\aut{M})$ of length $O(n^2)$ for each reachable SCC $C$ in the canonical forest $\forest{F}^*(\aut{P})$. The example $u(v)^{\omega}$ is labeled $1$ if it is accepted by $\aut{P}$ and $0$ otherwise.  Thus $T_{Acc}^{\mbox{\scriptsize{IPA}}}$ contains at most $n$ labeled examples, each of length $O(n^2)$.

\subsection{The learning algorithm \texorpdfstring{$\alg{L}_{Acc}^{\mbox{\scriptsize{IPA}}}$}{L\_Acc\textasciicircum IPA}}
\label{ssec:learn-alg-dpa}

Given a complete deterministic automaton $\aut{M} = \la \Sigma, Q, q_{\iota}, \delta \ra$ and a sample $T$ as input, the learning algorithm $\alg{L}_{Acc}^{\mbox{\scriptsize{IPA}}}$ attempts to construct a coloring of the states of $\aut{M}$ consistent with $T$.

The algorithm first constructs the set $Z$ of all $C \subseteq Q$ such that for some labeled example $(u(v)^{\omega},l)$ in $T$ we have $C = \infss{\aut{M}}(u(v)^{\omega})$. If two examples with different labels are found to yield the same set $C$, this is evidence that the automaton $\aut{M}$ is not correct, and the learning algorithm returns the default acceptor of type DPA for $T$.  

Otherwise, each set $C$ in $Z$ is associated with the label of the one or more examples that yield $C$.  The set $Z$ is partially ordered by the subset relation.  The learning algorithm then attempts to construct a rooted forest $\forest{F}'$ with nodes that are elements of $Z$, corresponding to the canonical forest of the target acceptor. Initially, $\forest{F}'$ contains as roots all the maximal elements of $Z$.  If these are not pairwise disjoint, it returns the default acceptor of type DPA for $T$.  Otherwise, the root nodes are all marked as unprocessed.

For each unprocessed node $C$ in $\forest{F}'$, it computes the set of all $D \in Z$ such that $D \subseteq C$, $D$ has the opposite label to $C$, and $D$ is maximal with these properties, and makes $D$ a child of $C$ and marks $D$ as unprocessed.  When all the children of a node $C$ have been determined, the algorithm checks two conditions: (1) that the children of $C$ are pairwise disjoint, and (2) there is at least one $q \in C$ that is not in any child of $C$.  If either of these conditions fail, then it returns the default acceptor of type DPA for $T$.  If both conditions are satisfied, then the node $C$ is marked as processed.  When there are no more unprocessed nodes, the construction of $\forest{F}'$ is complete. Note that $\forest{F}'$ has at most $|Q|$ nodes.

When the construction of $\forest{F}'$ is complete, for each node $C$ in $\forest{F}'$ let $\Delta(C)$ denote the elements of $C$ that do not appear in any of its children.  Then the learning algorithm assigns colors to the elements of $Q$ starting from the roots of $\forest{F}'$, as follows.  If $C$ is a root with label $l$, then $\kappa'(q) = l$ for all $q \in \Delta(C)$.  If the elements of $\Delta(C)$ have been assigned color $k$ and $D$ is a child of $C$, then $\kappa'(q) = k+1$ for all $s \in \Delta(D)$.  When this process is complete, any uncolored states $q$ are assigned $\kappa'(q) = 0$.  

If the resulting DPA $(\aut{M},\kappa')$ is consistent with the sample $T$, the algorithm $\alg{L}_{Acc}^{\mbox{\scriptsize{IPA}}}$ returns $(\aut{M},\kappa')$. If not, it returns the default acceptor of type DPA for $T$.

\begin{theorem}
	\label{theorem:L-Acc-IPA-works}
	Algorithm $\alg{L}_{Acc}^{\mbox{\scriptsize{IPA}}}$ runs in time polynomial in the sizes of the inputs $\aut{M}$ and $T$.  Let $\aut{P}$ be an IPA. If the input automaton $\aut{M}$ is isomorphic to the automaton of $\aut{P}$, and the sample $T$ is consistent with $\aut{P}$ and subsumes $T_{Acc}^{\mbox{\scriptsize{IPA}}}$, then algorithm $\alg{L}_{Acc}^{\mbox{\scriptsize{IPA}}}$ returns an IPA $(\aut{M},\kappa')$ equivalent to $\aut{P}$.
\end{theorem}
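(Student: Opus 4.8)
The plan is to verify the polynomial time bound by inspection of the algorithm, and then to prove correctness by showing that, under the stated hypotheses, the forest $\forest{F}'$ constructed by $\alg{L}_{Acc}^{\mbox{\scriptsize{IPA}}}$ coincides with the canonical forest $\forest{F}^*(\aut{P})$ restricted to its reachable SCCs, and that the coloring $\kappa'$ agrees with the canonical coloring $\kappa^*$ on the reachable part of $\aut{M}$, so that $(\aut{M},\kappa')$ and $\aut{P}^*$ accept the same $\omega$-language. Combined with Theorem~\ref{theorem:canonical-coloring-works} this gives equivalence with $\aut{P}$. For the running time: $Z$ is computed using Prop.~\ref{prop:inf-to-SCC} in polynomial time; the forest $\forest{F}'$ has at most $|Q|$ nodes and each step (computing maximal elements, checking disjointness, checking the $\Delta(C)\neq\emptyset$ condition) is polynomial; assigning $\kappa'$ is a single pass; and the final consistency check against $T$ is polynomial. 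So the algorithm never exceeds polynomial time, and by construction its output is always consistent with $T$ (falling back to the default acceptor otherwise).

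Now assume $\aut{M}$ is (w.l.o.g., the identity) isomorphic to the automaton of $\aut{P}$ and $T$ is consistent with $\aut{P}$ and subsumes $T_{Acc}^{\mbox{\scriptsize{IPA}}}$. The first key step is to identify $Z$: since $T_{Acc}^{\mbox{\scriptsize{IPA}}}$ contains $\alg{Witness}(C,\aut{M})$ for every reachable node $C$ of $\forest{F}^*(\aut{P})$, every reachable node of the canonical forest appears in $Z$, and its associated label equals its $\kappa$-parity (this uses Prop.~\ref{prop:inf-to-SCC}, Prop.~\ref{prop:SCC-to-inf}, and property~(1) of Theorem~\ref{theorem:canonical-forest-properties}, together with consistency of $T$ with $\aut{P}$); conversely every $C\in Z$ is $\inf(u(v)^\omega)$ for some example, hence a reachable SCC of $\aut{M}$, and its label is consistently its $\kappa$-parity. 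In particular no two examples with different labels yield the same $C$, so the algorithm does not default at that step. The second key step is an induction on the forest-construction process showing that $\forest{F}'$ equals the subforest of $\forest{F}^*(\aut{P})$ on reachable nodes: the maximal elements of $Z$ are exactly the reachable maximal SCCs of $\aut{M}$, which are the reachable roots of $\forest{F}^*(\aut{P})$ (pairwise disjoint by Prop.~\ref{prop:monotone-forest-properties}); and given that a node $C$ of $\forest{F}'$ equals a node of $\forest{F}^*(\aut{P})$, the children the algorithm attaches — maximal elements of $Z$ strictly inside $C$ of opposite label — are, by property~(3) of Theorem~\ref{theorem:canonical-forest-properties} together with the characterization of $Z$, exactly the reachable children of $C$ in $\forest{F}^*(\aut{P})$. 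The disjointness and $\Delta(C)\neq\emptyset$ checks therefore succeed (using Prop.~\ref{prop:monotone-forest-properties} and the fact that $\alg{Witness}(C,\aut{M})$'s SCC meets $\Delta(C)$), so the algorithm does not default.

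The final step is to compare $\kappa'$ with $\kappa^*$. The coloring rule for $\kappa'$ is literally the definition of $\kappa^*$ applied to $\forest{F}'$; since $\forest{F}'$ is the reachable part of $\forest{F}^*(\aut{P})$, the colors assigned to reachable states agree with $\kappa^*$ up to the irrelevant states (those in no SCC, or unreachable), all of which get color $0$ under both and do not affect acceptance of any $\omega$-word, whose $\inf$-set is always a reachable SCC. Hence for every $w\in\Sigma^\omega$, $\inf(w)$ is a reachable SCC $D$ with $\kappa'(D)$-parity equal to $\kappa^*(D)$-parity equal to $\kappa(D)$-parity, so $(\aut{M},\kappa')$ accepts $w$ iff $\aut{P}$ does; thus $(\aut{M},\kappa')$ is equivalent to $\aut{P}$, it is consistent with $T$ (as $T$ is consistent with $\aut{P}$), the final check passes, and $(\aut{M},\kappa')$ is returned. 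Finally, since $\aut{M}$ is isomorphic to $\aut{M}_{\sim_L}$ for $L=\sema{\aut{P}}$ and accepts $L$ via $\kappa'$, the returned acceptor is an IPA. The main obstacle is the inductive argument matching $\forest{F}'$ with the reachable part of $\forest{F}^*(\aut{P})$: one must carefully argue that ``maximal opposite-label element of $Z$ below $C$'' picks out precisely the canonical children, which requires knowing both that all reachable canonical nodes are in $Z$ with the right labels and that no spurious SCC in $Z$ can sneak in as a larger opposite-label set — this is where property~(3) of Theorem~\ref{theorem:canonical-forest-properties} and the labeling analysis of $Z$ must be combined with some care.
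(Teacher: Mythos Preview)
Your proposal is correct and follows essentially the same approach as the paper: verify polynomial running time by inspection, then show by induction using property~(3) of Theorem~\ref{theorem:canonical-forest-properties} that the algorithm's forest $\forest{F}'$ coincides with $\forest{F}^*(\aut{P})$, conclude $\kappa'=\kappa^*$, and invoke Theorem~\ref{theorem:canonical-coloring-works}. One simplification: your repeated restriction to ``reachable'' nodes is unnecessary, since every state of an IPA is reachable (its automaton is isomorphic to $\aut{M}_{\sim_L}$), so every node of $\forest{F}^*(\aut{P})$ is automatically a reachable SCC and the paper simply identifies $\forest{F}'$ with all of $\forest{F}^*(\aut{P})$.
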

\begin{proof}
	The construction of $\kappa'$ can be done in time polynomial in the sizes of $\aut{M}$ and $T$.  The returned acceptor is consistent with $T$ by construction.
	
	Assume the input $\aut{M} = \la \Sigma, Q, q_{\iota}, \delta \ra$ is isomorphic to the automaton of $\aut{P}$, and that $T$ is consistent with $\aut{P}$ and subsumes $T_{Acc}^{\mbox{\scriptsize{IPA}}}$.  For ease of notation, assume the isomorphism is the identity. We show that the forest $\forest{F}'$ constructed by the learning algorithm is equal to the canonical forest of $\forest{F}^*(\aut{P})$, the coloring $\kappa'$ is equal to the canonical coloring $\kappa^*$, and therefore the acceptor $(\aut{M},\kappa')$ is equivalent to $\aut{P}$.
	
	The roots of $\forest{F}^*(\aut{P})$ are the maximal SCCs contained in $Q$, and for each such root $C$, $T_{Acc}^{\mbox{\scriptsize{IPA}}}$ contains an example $(u(v)^{\omega},l)$ such that $C = \inf(u(v)^{\omega})$.  Thus, the set of maximal elements of $Z$ is equal to the set of roots of $\forest{F}^*(\aut{P})$.
	
	Let $C$ be any node of $\forest{F}^*(\aut{P})$, and let $D$ be a child of $C$ in $\forest{F}^*(\aut{P})$. Then $D$ is an SCC, $D \subseteq C$, the parity of $D$ is opposite to the parity of $C$, and $D$ is maximal in the subset ordering with these properties, by property (3) of \Cref{theorem:canonical-forest-properties}. In the sample $T_{Acc}^{\mbox{\scriptsize{IPA}}}$ there is an example $(u(v)^{\omega},l)$ with $D = \inf(u(v)^{\omega})$, so $D$ is an element of $Z$, and will be made a child of $C$ in $\forest{F}'$ because $D \subseteq C$, the label $l$ is the opposite of the label of $C$, and $D$ is maximal in $Z$ with these properties.
	Conversely, if $D$ is made a child of $C$ in $\forest{F}'$, then $D \subseteq C$, the label of $D$ is opposite to the label of $C$ (that is, they are of opposite $\kappa$-parity), and $D$ is maximal in $Z$ with these properties. This implies $D$ is a child of $C$ in $\forest{F}^*(\aut{P})$, by property (3) of \Cref{theorem:canonical-forest-properties}.
	
	By induction, $\forest{F}'$ is equal to $\forest{F}^*(\aut{P})$, and therefore $\kappa'$ is equal to the canonical coloring $\kappa^*$.  Then the IPA $(\aut{M},\kappa')$ is equivalent to $\aut{P}$, by \Cref{theorem:canonical-coloring-works}.  Because $T$ is consistent with $\aut{P}$, the IPA $(\aut{M},\kappa')$ is returned by $\alg{L}_{Acc}^{\mbox{\scriptsize{IPA}}}$.
\end{proof}

\begin{theorem}
	\label{theorem:ip-poly-id-lim}
	The class $\IP$ is identifiable in the limit using polynomial time and data.  
\end{theorem}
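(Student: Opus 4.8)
The plan is to follow the template already used for $\IM$ and for $\IB,\IC$: build the characteristic sample as the union of the automaton sample and the acceptance-condition sample, and compose the two corresponding learning algorithms. Fix an IPA $\aut{P}$ recognizing the target language $L$, with automaton $\aut{M}$ of $n$ states, and set $T_L = T_{Aut} \cup T_{Acc}^{\mbox{\scriptsize{IPA}}}$. By the discussion in Section~\ref{ssec:T-Aut-definition}, $T_{Aut}$ contains at most $(1+|\Sigma|)n^2$ examples of polynomial length, and by Section~\ref{ssec:T-Acc-IPA}, $T_{Acc}^{\mbox{\scriptsize{IPA}}}$ contains at most $n$ examples of length $O(n^2)$; hence $T_L$ has size polynomial in the size of $\aut{P}$.

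Next I would define the combined algorithm $\alg{L}^{\mbox{\scriptsize{IPA}}}$: on input a sample $T$, first run $\alg{L}_{Aut}$ on $T$ to obtain a deterministic complete automaton $\aut{M}'$, then run $\alg{L}_{Acc}^{\mbox{\scriptsize{IPA}}}$ on $\aut{M}'$ and $T$, and return its output. Polynomial running time is immediate, since this is the composition of two polynomial-time procedures (Theorems~\ref{theorem:L-Aut-works} and~\ref{theorem:L-Acc-IPA-works}), and consistency of the output with $T$ holds because $\alg{L}_{Acc}^{\mbox{\scriptsize{IPA}}}$ either returns a DPA it has explicitly checked against $T$ or falls back to the default DPA acceptor for $T$ given by Prop.~\ref{prop:default-acceptor}. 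This discharges condition (1) of identification in the limit with polynomial time and data.

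For condition (2), assume $T$ is consistent with $\aut{P}$ and $T_L \subseteq T$. Since $T_{Aut} \subseteq T$, Theorem~\ref{theorem:L-Aut-works} gives that $\aut{M}'$ is isomorphic to the automaton of $\aut{P}$. Since moreover $T_{Acc}^{\mbox{\scriptsize{IPA}}} \subseteq T$ and $T$ is consistent with $\aut{P}$, Theorem~\ref{theorem:L-Acc-IPA-works} then gives that $\alg{L}_{Acc}^{\mbox{\scriptsize{IPA}}}(\aut{M}',T)$ is an IPA equivalent to $\aut{P}$, i.e.\ one recognizing $L$. Hence $\alg{L}^{\mbox{\scriptsize{IPA}}}$ identifies $\IP$ in the limit using polynomial time and data.

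There is essentially no independent obstacle here: all the real content has been discharged by the construction of the canonical forest $\forest{F}^*(\aut{P})$ and canonical coloring and by the correctness of $\alg{L}_{Acc}^{\mbox{\scriptsize{IPA}}}$ in Theorem~\ref{theorem:L-Acc-IPA-works}. The only points meriting attention are that both hypotheses needed to invoke Theorem~\ref{theorem:L-Acc-IPA-works} (namely $T$ consistent with $\aut{P}$, and $T_{Acc}^{\mbox{\scriptsize{IPA}}} \subseteq T$) are genuinely used, and that it suffices to exhibit one IPA for $L$, since the definition asks only that some characteristic sample exist.
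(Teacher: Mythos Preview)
Your proposal is correct and follows essentially the same approach as the paper: define $T_L = T_{Aut} \cup T_{Acc}^{\mbox{\scriptsize{IPA}}}$, compose $\alg{L}_{Aut}$ with $\alg{L}_{Acc}^{\mbox{\scriptsize{IPA}}}$, and invoke Theorems~\ref{theorem:L-Aut-works} and~\ref{theorem:L-Acc-IPA-works} for correctness when $T$ subsumes $T_L$. If anything, your write-up is slightly more explicit than the paper's about the size bounds on $T_L$ and about why the output is always consistent with $T$.
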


\begin{proof}
	Let $\aut{P}$ be an IPA accepting the language $L$.  The characteristic sample $T_L = T_{Aut} \cup T_{Acc}^{\mbox{\scriptsize{IPA}}}$ for $\aut{P}$ is of size polynomial in the size of $\aut{P}$.
	
	The combined learning algorithm $\alg{L}^{\mbox{\scriptsize{IPA}}}$ with a sample $T$ as input first runs $\alg{L}_{Aut}$ on $T$ to get a complete deterministic automaton $\aut{M}$ and then runs $\alg{L}_{Acc}^{\mbox{\scriptsize{IPA}}}$ on inputs $\aut{M}$ and $T$ and returns the resulting acceptor.
	The running time of $\alg{L}^{\mbox{\scriptsize{IPA}}}$ is polynomial in the length of $T$ and the returned acceptor is consistent with $T$.
	
	Assume  the sample $T$ is consistent with $\aut{P}$ and subsumes $T_L$. By \Cref{theorem:L-Aut-works}, the automaton $\aut{M}$ is isomorphic to the automaton of $\aut{P}$.
	By \Cref{theorem:L-Acc-IPA-works}, the acceptor $(\aut{M},\kappa')$ is equivalent to $\aut{P}$. Because it is consistent with $T$, the IPA $(\aut{M},\kappa')$ is the acceptor returned by $\alg{L}^{\mbox{\scriptsize{IPA}}}$.
\end{proof}

\section{The sample \texorpdfstring{$T_{Acc}$}{T\_Acc} and learning algorithm for IRA}
\label{sec:T-Acc-for-IRAs-ISAs}

In this section we introduce some terminology, establish a normal form for Rabin acceptors, define an ordering on sets of states of an automaton, and then describe the learning algorithm $\alg{L}_{Acc}^{\mbox{\scriptsize{IRA}}}$ and sample $T_{Acc}^{\mbox{\scriptsize{IRA}}}$ for IRAs.  We then prove that the classes $\class{IRA}$ and $\class{ISA}$ are identifiable in the limit using polynomial time and data.

Let $\aut{R} = \la \Sigma, Q, q_i, \delta, \alpha \ra$ be a Rabin acceptor, where the acceptance condition $\alpha = \{(G_1,B_1),\ldots,(G_k,B_k)\}$ is a set of ordered pairs of states.
We say that an $\omega$-word $w$ \emph{satisfies a pair of state sets} $(G,B)$ iff $\inf(w) \cap G \neq \emptyset$ and $\inf(w) \cap B = \emptyset$.
Also, $w$ \emph{satisfies the acceptance condition} $\alpha$ iff there exists $i \in [1..k]$ such that $w$ satisfies $(G_i,B_i)$.
Then an $\omega$-word $w$ is accepted by $\aut{R}$ iff $w$ satisfies the acceptance condition $\alpha$ of $\aut{R}$.

\subsection{Singleton normal form for a Rabin acceptor}
\label{ssec:singleton-normal-form}

We say that a Rabin acceptor $\aut{R}$ is in \emph{singleton normal form} iff for every pair $(G_i,B_i)$ in its acceptance condition we have $|G_i| = 1$, that is, every $G_i$ is a singleton set.
To avoid extra braces, we abbreviate the pair $(\{q\},B)$ by $(q,B)$.
Every Rabin acceptor may be put into singleton normal form by a polynomial time algorithm.

\begin{proposition}
	\label{prop:singleton-normal-form}
	Let $\aut{R} = \la \Sigma, Q, q_\iota, \delta, \alpha \ra$ be a Rabin acceptor where $\alpha = \{(G_1,B_1),\ldots, \allowbreak (G_k,B_k)\}$.
	Define the acceptance condition $\alpha'$ to contain $(q,B_i)$ for every $i \in [1..k]$ and $q \in G_i$, and let $\aut{R}'$ be $\aut{R}$ with $\alpha$ replaced by $\alpha'$.
	Then $\aut{R}'$ is in singleton normal form and accepts the same language as $\aut{R}$.
	Also, $\aut{R}'$ is of size at most $|Q|$ times the size of $\aut{R}$.
\end{proposition}

\begin{proof}
	If an $\omega$-word $w$ satisfies a pair $(q,B)$ of $\alpha'$, then there exists a pair $(G_i,B_i)$ of $\alpha$ with $q \in G_i$ and $B_i = B$, so $w$ also satisfies the pair $(G_i,B_i)$ in $\alpha$.
	Conversely, if $w$ satisfies a pair $(G_i,B_i)$ in $\alpha$, then there exists $q \in G_i$ such that $q \in \inf(w)$, so $w$ also satisfies $(q, B_i)$ in $\alpha'$.
	Each pair $(G_i,B_i)$ in $\alpha$ is replaced by at most $|Q|$ pairs in $\alpha'$.
\end{proof}

\subsection{An ordering on sets of states}
\label{ssec:state-set-ordering}

Given an automaton, we define an ordering $\preceq$ on sets of its states that is used to coordinate between the characteristic sample and the learning algorithm for an IRA.

Let $\aut{M} = \la \Sigma, Q, q_\iota, \delta \ra$ be a deterministic complete automaton in which every state is reachable from the initial state $q_\iota$.
Recall from \Cref{ssec:T-Aut-definition}
that $\access(q)$ is the shortlex least string $s \in \Sigma^*$ such that $\delta(q_\iota,s) = q$.

For a set of states $S$, we define $\access(S)$ to be the sequence of values $\access(q)$ for $q \in S$, sorted into increasing shortlex order.
We define the total ordering $\preceq$ on sets of states of $\aut{M}$ as follows.  If $S_1, S_2 \subseteq Q$, then $S_1 \preceq S_2$ iff either $|S_1| < |S_2|$ or $|S_1| = |S_2|$ and the sequence $\access(S_1)$ is less than or equal to the sequence $\access(S_2)$ in the lexicographic ordering,
using the shortlex ordering on $\Sigma^*$ to compare the component entries.
For example, if 
$\access(S_1)=\langle \epsilon,a,baa\rangle$,
$\access(S_2)=\langle a,ba \rangle$, and
$\access(S_3)=\langle \epsilon,ab,ba\rangle $
then $S_2 \preceq S_1 \preceq S_3$.
\subsection{The learning algorithm \texorpdfstring{$\alg{L}_{Acc}^{\mbox{\scriptsize{IRA}}}$}{L\_Acc\textasciicircum IRA}}
\label{ssec:ira-learning-algorithm}

We begin with the description of the learning algorithm $\alg{L}_{Acc}^{\mbox{\scriptsize{IRA}}}$, which is used in the definition of the sample $T_{Acc}^{\mbox{\scriptsize{IRA}}}$ in the next section. The inputs to $\alg{L}_{Acc}^{\mbox{\scriptsize{IRA}}}$ are a deterministic complete automaton $\aut{M}$ and a sample $T$. The algorithm attempts to construct a singleton normal form Rabin acceptance condition $\beta$ to produce an acceptor $(\aut{M}, \beta)$ consistent with $T$.

The processing of an example $w$ from $T$ depends only on $\infss{\aut{M}}(w)$. The algorithm first computes the set $\{\infss{\aut{M}}(w) \mid (w,1) \in T\}$, sorts its elements into decreasing order $C_1, C_2, \ldots, C_{\ell}$ using $\preceq$, and for each $i \in [1..\ell]$ chooses a positive example $(z_i,1) \in T$ with $C_i = \infss{\aut{M}}(z_i)$.

At each stage $k$ of the learning algorithm, $\beta_k$ is a singleton normal form acceptance condition.
Initially, $\beta_0 = \emptyset$  (which is satisfied by no words) and $k=0$.

The main loop processes the positive examples $z_i$ for $i = 1,2,\ldots,\ell$.
If $z_i$ is accepted by $(\aut{M},\beta_k)$, then the algorithm goes on to the next positive example.
Otherwise, we say that the example $z_i$ \emph{causes the update of} $\beta_k$.
Let $G = \infss{\aut{M}}(z_i)$ and $B = Q \setminus G$ and let $S_k$ be the set of pairs $(q,B)$ such that $q \in G$ and there is no negative example $(w,0)$ in $T$ such that $w$ satisfies $(q,B)$.
Then $\beta_{k+1}$ is set to $\beta_k \cup S_k$ and $k$ is set to $k+1$.

When the positive examples $z_1, z_2, \ldots, z_{\ell}$ have been processed, let $\beta = \beta_k$ for the final value of $k$. If the Rabin acceptor $(\aut{M},\beta)$ is consistent with $T$, then it is returned.  If not, the learning algorithm returns the default acceptor of type DRA for $T$.

\begin{proposition}
	\label{prop:learner-is-monotonic-and-consistent}
	Let $\aut{R}$ be an IRA in singleton normal form with acceptance condition $\alpha$ and assume that the input $\aut{M}$ is an automaton isomorphic to the automaton of $\aut{R}$. Assume the sample $T$ is consistent with $\aut{R}$.
	For each $k$, if $z_i$ is the example that causes the update of $\beta_k$, then $(\aut{M},\beta_k)$ accepts $z_j$ for all $j < i$.
	Moreover, $(\aut{M},\beta)$ is consistent with $T$.
\end{proposition}

\begin{proof}
	No pair $(q,B)$ is added to the acceptance condition if there is a negative example in $T$ that satisfies it, so $(\aut{M},\beta)$ is consistent with all the negative examples in $T$.
	
	Consider any positive example $(w,1)$ from $T$.
	There exists $i$ with $\infss{\aut{M}}(w) = \infss{\aut{M}}(z_i)$.
	When $z_i$ is processed by the algorithm, if it is already accepted by the current $(\aut{M},\beta_k)$, then it (and the word $w$) is also accepted by every subsequent hypothesis, including $(\aut{M},\beta)$, because pairs are not removed from $\beta_k$.
	
	Otherwise, $z_i$ causes the update to $\beta_k$, and among the pairs in $S_k$ that are added to $\beta_k$ there is at least one that $z_i$ satisfies.  To see this, note that $z_i$ must satisfy some pair in $\alpha$, say $(q,B_j)$.
	Thus, $q \in \inf(z_i)$ and $B_j \cap \inf(z_i) = \emptyset$.
	The pair $(q,B)$ where $B = Q \setminus \inf(z_i)$ has $B_j \subseteq B$.  Thus any $\omega$-word that satisfies $(q,B)$ will also satisfy $(q,B_j)$. Because $T$ is consistent with $\aut{R}$, there can be no negative example in $T$ satisfying $(q,B)$, so pair $(q,B)$ is part of $S_k$ and is added to $\beta_k$.
	The word $z_i$ (and the word $w$) satisfies $(q,B)$ and is therefore accepted by $(\aut{M},\beta_{k+1})$ and every subsequent hypothesis, including $(\aut{M},\beta)$.
\end{proof}

\subsection{Constructing \texorpdfstring{$T_{Acc}^{\mbox{\scriptsize{IRA}}}$}{T\_Acc\textasciicircum IRA}}
\label{ssec:T-Acc-IRA}

In this section we describe the construction of the sample $T_{Acc}^{\mbox{\scriptsize{IRA}}}$, which conveys the acceptance condition of an IRA.
Let $\aut{R} = \la \Sigma, Q, q_\iota, \delta, \alpha \ra$ be a deterministic complete IRA of $n$ states in singleton normal form, and let $\aut{M}$ be its automaton.
The construction of the sample $T_{Acc}^{\mbox{\scriptsize{IRA}}}$ proceeds in stages, simulating the learning algorithm $\alg{L}_{Acc}^{\mbox{\scriptsize{IRA}}}$ on the portion of the sample constructed so far to determine what examples still need to be added.

Initially, $\gamma_0 = \emptyset$ and $k = 0$.
The acceptance condition $\gamma_k$ tracks the learning algorithm's $\beta_k$. The set of words accepted by $(\aut{M}, \gamma_k)$ is always a subset of the set of words accepted by $(\aut{M}, \alpha)$.
The main loop is as follows.
If $(\aut{M},\gamma_k)$ is equivalent to $(\aut{M},\alpha)$ then the construction of $T_{Acc}^{\mbox{\scriptsize{IRA}}}$ is complete.

Otherwise, let $D_k$ be the set of $\omega$-words that satisfy $\alpha$ but not $\gamma_k$.  Let $C$ be the $\preceq$-largest set in $\{\inf(w) \mid w \in D_k\}$, and let $w_{k+1} = \alg{Witness}(C,\aut{M})$, an ultimately periodic word of length $O(n^2)$.
Then $w_{k+1}$ is added as a positive example to $T_{Acc}^{\mbox{\scriptsize{IRA}}}$.

Let $B = Q \setminus \inf(w_{k+1})$.
Define $P_k$ to be the set of all $(q,B)$ such that 
$q \in \inf(w_{k+1})$ and
there is no $\omega$-word $w'$ that satisfies $(q,B)$ but not $\alpha$.
Set $\gamma_{k+1} = \gamma_k \cup P_k$.

For each $q \in \inf(w_{k+1})$ such that there is some $\omega$-word $w'$ that satisfies $(q,B)$ but not $\alpha$, let $u(v)^{\omega} = \alg{Witness}(\inf(w'),\aut{M})$ for some such $w'$ and include $(u(v)^{\omega},0)$ as a negative example in $T_{Acc}$. The example $u(v)^{\omega}$ is of length $O(n^2)$.
Then set $k$ to $k+1$ and continue with the main loop.

We prove a polynomial bound on the number of examples added to the sample $T_{Acc}^{\mbox{\scriptsize{IRA}}}$, thus showing that its length is bounded by a  polynomial in the size of $\aut{R}$.

\begin{proposition}
	\label{prop:bound-on-T-Acc}
	If the acceptance condition $\alpha$ is in singleton normal form and has $m$ pairs, then at most $m$ positive examples and at most $m|Q|$ negative examples are added to $T_{Acc}$.
\end{proposition}

\begin{proof}
	We say an acceptance condition $\gamma$ \emph{covers} a pair $(q,B)$ iff every $\omega$-word $w$ that satisfies $(q,B)$ also satisfies $\gamma$.
	We will show that after each positive example $w_{k+1}$ is added to $T_{Acc}$, the condition $\gamma_{k+1}$ covers at least one pair in $\alpha$ that was not covered by $\gamma_k$.
	
	Suppose not, and let $k+1$ be the least index for which $\gamma_{k+1}$ does not cover a pair of $\alpha$ that was not covered by $\gamma_k$.
	Because $w_{k+1}$ is an example that satisfies $\alpha$ but not $\gamma_k$, there must be a pair $(q,B_j)$ of $\alpha$ that is satisfied by $w_{k+1}$.
	Note that $\gamma_k$ does not cover the pair $(q,B_j)$.
	Then $q \in \inf(w_{k+1})$ and letting $B = Q \setminus \inf(w_{k+1})$, $B_j \subseteq B$.
	The pair $(q,B)$ will be added to $\gamma_k$ in constructing $\gamma_{k+1}$ because every word that satisfies $(q,B)$ also satisfies $(q,B_j)$.
	
	If $\gamma_{k+1}$ does not cover $(q,B_j)$, there must be a word $w'$ that satisfies $(q,B_j)$ but not $(q,B)$.
	So $q \in \inf(w')$ and $B_j \cap \inf(w') = \emptyset$ but $B \cap \inf(w') \neq \emptyset$.
	Let $B' = Q \setminus \inf(w')$, so $B_j \subseteq B'$.
	We have $B \cap B' \subsetneq B$.
	Because $\inf(w_{k+1})$ and $\inf(w')$ are SCCs that overlap in $q$, their union is an SCC as well.
	Let $w''$ be an $\omega$-word such that $\inf(w'')$ is the union of $\inf(w_{k+1})$ and $\inf(w')$.
	Then $Q \setminus \inf(w'') = B \cap B'$.
	Note that $w''$ satisfies $(q,B_j)$ because $B_j \subseteq B \cap B'$ and thus is a positive example of $\alpha$.
	
	Because $B \cap B'$ is a proper subset of $B$, $\inf(w'')$ is a proper superset of $\inf(w_{k+1})$ and the positive example $w''$ would have been considered before $w_{k+1}$ in the construction of $T_{Acc}$.
	(We can imagine all the positive examples of $\alpha$ being considered in order to find a maximum positive counterexample at each stage.)
	At that time, it was either passed over because (1) the current $\gamma_r$ already covered it, or (2) it contributed a new pair to the current $\gamma_r$ to yield $\gamma_{r+1}$.
	
	In case (1), there is some pair $(q'', B'')$ in $\gamma_r$ that is satisfied by $w''$.  Then $q'' \in \inf(w'')$ and $B'' \cap \inf(w'') = \emptyset$.
	Recall $\inf(w'')$ is the union of $\inf(w_{k+1})$ and $\inf(w')$.
	Thus, $B'' \cap \inf(w_{k+1}) = \emptyset$ and $B'' \cap \inf(w') = \emptyset$.
	Note that $q'' \in \inf(w_{k+1})$ or $q'' \in \inf(w')$.
	If $q'' \in \inf(w_{k+1})$, $w_{k+1}$ satisfies the pair $(q'',B'')$ in $\gamma_j$, a contradiction, because $w_{k+1}$ is not accepted by $\gamma_k$ and $r \le k$.  And if $q'' \in \inf(w')$ then $w'$ satisfies the pair $(q'',B'')$ in $\gamma_r$, a contradiction, because $w'$ is not accepted by $\gamma_{k+1}$ and $r \le k$.
	
	In case (2), the positive example $w''$ contributes at least one term $(q'',B'')$ to $\gamma_{r+1}$.
	In this case $B'' = B \cap B'$ and $q'' \in \inf(w'')$.
	Thus, $q'' \in \inf(w_{k+1})$ or $q'' \in \inf(w')$, so $w_{k+1}$ or $w'$ satisfies the term $(q'',B'')$ of $\gamma_{r+1}$, a contradiction because $r+1 \le k$ and neither $w_{k+1}$ nor $w'$ is covered by $\gamma_k$.
	
	Thus, each positive example added to $T_{Acc}$ covers a new pair of $\alpha$, and at most $m$ positive examples can be added.  Each positive example added requires at most $|Q|$ negative examples to avoid adding incorrect pairs, so at most $m|Q|$ negative examples are added.
\end{proof}

\subsection{Correctness of \texorpdfstring{$\alg{L}_{Acc}^{\mbox{\scriptsize{IRA}}}$}{L\_Acc\textasciicircum IRA}}

We prove the correctness of the learning algorithm $\alg{L}_{Acc}^{\mbox{\scriptsize{IRA}}}$ and show that the classes $\class{IRA}$ and $\class{ISA}$ are identifiable in the limit using polynomial time and data.

\begin{theorem}
	\label{theorem:L-Acc-IRA-works}
	Algorithm $\alg{L}_{Acc}^{\mbox{\scriptsize{IRA}}}$ runs in time polynomial in the sizes of the inputs $\aut{M}$ and $T$.  Let $\aut{R}$ be an IRA. If the input automaton $\aut{M}$ is isomorphic to the automaton of $\aut{R}$, and the sample $T$ is consistent with $\aut{R}$ and subsumes $T_{Acc}^{\mbox{\scriptsize{IRA}}}$, then algorithm $\alg{L}_{Acc}^{\mbox{\scriptsize{IRA}}}$ returns an IRA $(\aut{M},\beta)$ equivalent to $\aut{R}$.
\end{theorem}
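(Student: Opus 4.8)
The proof has three ingredients: a routine polynomial running-time bound, the observation (via Prop.~\ref{prop:learner-is-monotonic-and-consistent}) that the algorithm returns $(\aut{M},\beta)$ rather than the default acceptor, and the core equivalence $(\aut{M},\beta)\equiv\aut{R}$, which I would establish by showing that the learning run on $T$ faithfully retraces the stages of the construction of $T_{Acc}^{\mbox{\scriptsize{IRA}}}$. For the running time: by Prop.~\ref{prop:inf-to-SCC} each $\infss{\aut{M}}(w)$ is computable in polynomial time, the number of distinct sets $C_i$ is at most the length of $T$, the access strings underlying $\preceq$ come from breadth-first search, and at each of the at most $\ell$ iterations of the main loop one tests membership of a word in the current $(\aut{M},\beta_k)$ (which has at most $|Q|\cdot\ell$ pairs) and forms $S_k$ by testing each state of $\infss{\aut{M}}(z_i)$ against each negative example. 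The final consistency check and, if needed, the default acceptor (Prop.~\ref{prop:default-acceptor}) are also polynomial, and the returned acceptor is consistent with $T$ by construction.

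For correctness, I would first apply Prop.~\ref{prop:singleton-normal-form} to assume $\aut{R}$ is in singleton normal form with acceptance condition $\alpha$ of $m$ pairs (this changes neither automaton nor language) and take the isomorphism between $\aut{M}$ and the automaton of $\aut{R}$ to be the identity, so that $\preceq$ and the sets $\infss{\aut{M}}(\cdot)$ agree in the construction of $T_{Acc}^{\mbox{\scriptsize{IRA}}}$ and in the learning algorithm. Then Prop.~\ref{prop:learner-is-monotonic-and-consistent} gives that $(\aut{M},\beta)$ is consistent with $T$, so it—not the default acceptor—is returned. It remains to prove $(\aut{M},\beta)\equiv\aut{R}$. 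Let $C_1 \succ C_2 \succ \cdots \succ C_t$ be the SCCs selected as the maximum positive counterexamples in the successive stages of the construction, $w_j$ the witness chosen at stage $j$ (so $\infss{\aut{M}}(w_j)=C_j$), and $\emptyset=\gamma^{(0)}\subseteq\gamma^{(1)}\subseteq\cdots\subseteq\gamma^{(t)}=\gamma$ the corresponding acceptance conditions, where $\gamma^{(j)}$ adds to $\gamma^{(j-1)}$ exactly the \emph{good} pairs $(q,Q\setminus C_j)$ with $q\in C_j$, namely those satisfied by no $\omega$-word that fails $\alpha$. I would record three facts: (i) $\sema{(\aut{M},\gamma^{(j)})}\subseteq\sema{\aut{R}}$, since a word newly satisfying $\gamma^{(j)}$ satisfies some good pair and hence $\alpha$; (ii) by Prop.~\ref{prop:bound-on-T-Acc} the construction halts after at most $m$ stages, necessarily with $(\aut{M},\gamma)\equiv\aut{R}$; and (iii) since $C_j$ is the $\preceq$-largest $\inf$-set of a word satisfying $\alpha$ but not $\gamma^{(j-1)}$, every $\alpha$-satisfying word with $\inf$-set $\succ C_j$ already satisfies $\gamma^{(j-1)}$, and every $\alpha$-satisfying word satisfies $\gamma$.

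The heart of the argument is an induction on $j$ showing that, as $\alg{L}_{Acc}^{\mbox{\scriptsize{IRA}}}$ processes the positive examples of $T$ in decreasing $\preceq$ order of their $\inf$-sets (all $\alpha$-satisfying, hence all $\preceq C_1$), its running value of $\beta$ equals $\gamma^{(0)},\gamma^{(1)},\ldots,\gamma^{(t)}$ at the appropriate moments. Once $\beta=\gamma^{(j-1)}$: any example whose $\inf$-set lies strictly below $C_{j-1}$ and strictly above $C_j$ is $\alpha$-satisfying with $\inf$-set $\succ C_j$, hence is already accepted by $(\aut{M},\gamma^{(j-1)})$ and is skipped, leaving $\beta$ unchanged; then, since $w_j\in T_{Acc}^{\mbox{\scriptsize{IRA}}}\subseteq T$ has $\inf$-set $C_j$, the algorithm next processes an example with $\inf$-set $C_j$, which is not accepted by $(\aut{M},\gamma^{(j-1)})$ and so triggers an update adding the pairs $(q,Q\setminus C_j)$, $q\in C_j$, that no negative example of $T$ satisfies. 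This added set is exactly the good pairs of stage $j$: a good pair is satisfied by no $\alpha$-failing word, hence by no negative example, so it survives; while a bad pair $(q,Q\setminus C_j)$ is satisfied by the negative example $\alg{Witness}(\infss{\aut{M}}(w'),\aut{M})$ that the construction placed in $T_{Acc}^{\mbox{\scriptsize{IRA}}}$ for a suitable witness $w'$ (which has the same $\inf$-set as $w'$), so it is excluded. Hence $\beta$ becomes $\gamma^{(j)}$. After stage $t$, $\beta=\gamma^{(t)}=\gamma$, and all remaining positive examples, having $\alpha$-satisfying $\inf$-sets, are accepted by $(\aut{M},\gamma)$ and skipped, so the final $\beta$ equals $\gamma$; thus $(\aut{M},\beta)\equiv\aut{R}$, and since $\aut{M}$ is isomorphic to the automaton of $\aut{R}$, which is $\aut{M}_{\sim_L}$ for $L=\sema{\aut{R}}=\sema{(\aut{M},\beta)}$, the returned acceptor is an IRA.

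The point I expect to require the most care is exactly this synchronization: one must verify that the extra positive examples in $T$ (those whose $\inf$-sets are not among the $C_j$) are always met at a moment when the current $\beta$ already accepts them, so that they are skipped and never introduce unsound pairs, and that the extra negative examples in $T$—being consistent with $\aut{R}$ and hence $\alpha$-failing—can never suppress one of the good pairs that $\beta$ must acquire at each stage. Establishing (i)–(iii) and the skip/update dichotomy carefully is what makes the induction go through; everything else is bookkeeping.
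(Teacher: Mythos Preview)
Your proposal is correct and follows essentially the same approach as the paper: both argue by induction that the learner's $\beta_k$ coincides with the construction's $\gamma_k$ at every stage, using the $\preceq$-ordering of positive examples together with the negative witnesses in $T_{Acc}^{\mbox{\scriptsize{IRA}}}$ to synchronize the updates. The only difference is presentational: the paper leans directly on Prop.~\ref{prop:learner-is-monotonic-and-consistent} to locate the updating example $z_i$ and then derives a contradiction from $\preceq$-maximality if $\infss{\aut{M}}(z_i)\neq C_j$, whereas you spell out the skip/update dichotomy explicitly via your fact~(iii); both yield the same conclusion.
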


\begin{proof}
	By \Cref{prop:inf-to-SCC}, $\alg{L}_{Acc}^{\mbox{\scriptsize{IRA}}}$ can construct the sequence $z_1, z_2, \ldots, z_{\ell}$ and the successive acceptance conditions $\beta_k$ in time polynomial in the size of $\aut{M}$ and the length of $T$.
	
	Assume $\aut{R}$ is an IRA, that $\aut{M}$ is isomorphic to the automaton of $\aut{R}$, and that the sample $T$ is consistent with $\aut{R}$ and subsumes $T_{Acc}^{\mbox{\scriptsize{IRA}}}$.  For ease of notation, we assume that the isomorphism is the identity.
	
	We show by induction that for each $k$, the acceptance condition $\beta_k$ in the learning algorithm $\alg{L}_{Acc}^{\mbox{\scriptsize{IRA}}}$ is the same as the acceptance condition $\gamma_k$ in the construction of $T_{Acc}^{\mbox{\scriptsize{IRA}}}$. This is true for $k = 0$ because $\beta_0 = \gamma_0 = \emptyset$.
	
	Assume that $\beta_k = \gamma_k$ for some $k \ge 0$.  If $(\aut{M},\gamma_k)$ is equivalent to $\aut{R} = (\aut{M},\alpha)$, then also $(\aut{M}, \beta_k)$ is equivalent to $\aut{R}$, and none of the remaining positive examples cause any additions to $\beta_k$.  Thus this is the final value of $k$, so $\beta = \beta_k$ and $(\aut{M},\beta)$ is equivalent to $\aut{R}$.
	
	If $(\aut{M}, \gamma_k)$ accepts a proper subset of the language accepted by $(\aut{M},\alpha)$, then in the construction of sample $T_{Acc}^{\mbox{\scriptsize{IRA}}}$, $D_k$ is equal to the $\omega$-words accepted by $(\aut{M},\alpha)$ but not by $(\aut{M},\gamma_k)$. This causes the positive example $(w_{k+1},1)$ to be added to $T_{Acc}^{\mbox{\scriptsize{IRA}}}$, where $\infss{\aut{M}}(w_{k+1})$ is $\preceq$-largest in the set $\{\infss{\aut{M}}(w) \mid w \in D_k\}$.
	
	In the learning algorithm, because $(\aut{M},\beta_k)$ does not accept $w_{k+1}$, \Cref{prop:learner-is-monotonic-and-consistent} implies that there must be an example $z_i$ that causes the update to $\beta_k$, and all of the examples $z_1, \ldots, z_{i-1}$ are accepted by $(\aut{M},\beta_k)$.
	Because for every positive example $(w,1)$ in $T$ there exists $j$ such that $\infss{\aut{M}}(w) = \infss{\aut{M}}(z_j)$, there must be some $r$ such that $\infss{\aut{M}}(w_{k+1}) = \infss{\aut{M}}(z_r)$.  Moreover, $i \le r$.
	
	If $i < r$, then $\infss{\aut{M}}(w_i)$ is strictly $\preceq$-larger than $\infss{\aut{M}}(w_r)$, which contradicts the choice of $w_{k+1}$ by the sample construction procedure, because $z_i$ is accepted by $(\aut{M},\alpha)$ but not $(\aut{M},\gamma_k)$.
	Thus $i = r$ and the example $z_i = w_{k+1}$ is the element that causes the update to $\beta_k$.
	The negative examples included in $T_{Acc}^{\mbox{\scriptsize{IRA}}}$ for the positive example $w_{k+1}$ ensure that the update to $\beta_k$ is the same as the update to $\gamma_k$, and $\beta_{k+1} = \gamma_{k+1}$.
	
	Because $\beta_k$ and $\gamma_k$ are equal for all $k$, for the final value of $k$, $\beta = \beta_k = \gamma_k$, and therefore $(\aut{M},\beta)$ is equivalent to $\aut{R}$.
	Because the IRA $(\aut{M},\beta)$ is consistent with $T$, it is the acceptor returned by $\alg{L}_{Acc}^{\mbox{\scriptsize{IRA}}}$.
\end{proof}

\begin{theorem}
	\label{theorem:ir-is-poly-ident-in-limit}
	The classes $\IR$ and $\IS$ are identifiable in the limit using polynomial time and data.
\end{theorem}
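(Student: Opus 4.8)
The plan is to mirror the structure of the proofs already given for $\IM$, $\IB$, and $\IP$: reduce $\IS$ to $\IR$ by duality, and for $\IR$ assemble the characteristic sample and the combined learning algorithm from the pieces established in this section.

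First I would dispatch $\IS$ using Prop.~\ref{prop:ib-ic-ir-is-duality}: given an ISA, the DRA with the same components accepts the complement language by Claim~\ref{claim:DRA-DSA-complement}, so complementing all the labels of a characteristic sample (and of an input sample) reduces the learning problem for ISAs to that for IRAs; it therefore suffices to treat $\IR$.

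For $\IR$, let $\aut{R}$ be an IRA accepting a language $L$. I would first apply Prop.~\ref{prop:singleton-normal-form} to replace $\aut{R}$ by an equivalent acceptor $\aut{R}'$ in singleton normal form; since this changes only the acceptance condition, the automaton $\aut{M}$ of $\aut{R}'$ is still isomorphic to $\aut{M}_{\sim_L}$, so $\aut{R}'$ is still an IRA, and its size is at most $|Q|$ times that of $\aut{R}$. Let $n$ be the number of states of $\aut{M}$. The characteristic sample is $T_L = T_{Aut} \cup T_{Acc}^{\mbox{\scriptsize{IRA}}}$: the part $T_{Aut}$ has size polynomial in $n$ by the discussion in Section~\ref{ssec:T-Aut-definition}, and $T_{Acc}^{\mbox{\scriptsize{IRA}}}$ consists of polynomially many examples, each of length $O(n^2)$, by Prop.~\ref{prop:bound-on-T-Acc}. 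Hence $T_L$ has length polynomial in the size of $\aut{R}$. The combined learning algorithm $\alg{L}^{\mbox{\scriptsize{IRA}}}$ on input $T$ runs $\alg{L}_{Aut}$ to obtain a deterministic complete automaton $\aut{M}$ and then runs $\alg{L}_{Acc}^{\mbox{\scriptsize{IRA}}}$ on $\aut{M}$ and $T$, returning the resulting acceptor; being the composition of two polynomial-time algorithms, each of which falls back to a default DRA when needed, it runs in polynomial time and always outputs a DRA consistent with $T$. If moreover $T$ is consistent with $\aut{R}$ and subsumes $T_L$, then $T$ subsumes $T_{Aut}$, so by Theorem~\ref{theorem:L-Aut-works} the automaton $\aut{M}$ is isomorphic to that of $\aut{R}$; and since $T$ also subsumes $T_{Acc}^{\mbox{\scriptsize{IRA}}}$, Theorem~\ref{theorem:L-Acc-IRA-works} gives that the returned acceptor is an IRA equivalent to $\aut{R}$, that is, it recognizes $L$.

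The genuinely delicate part of this development is not this final assembly but the two ingredients it invokes: the polynomial bound of Prop.~\ref{prop:bound-on-T-Acc}, whose proof shows via the SCC-union argument that each positive example added to $T_{Acc}^{\mbox{\scriptsize{IRA}}}$ covers a previously uncovered pair of $\alpha$, and the tight coordination between the sample-construction procedure and $\alg{L}_{Acc}^{\mbox{\scriptsize{IRA}}}$ established in Theorem~\ref{theorem:L-Acc-IRA-works} through the invariant $\beta_k = \gamma_k$, which hinges on always taking the $\preceq$-largest counterexample. Both of these are already proved, so for the present statement it only remains to record that $T_L$ has polynomial size and that the hypotheses of Theorems~\ref{theorem:L-Aut-works} and~\ref{theorem:L-Acc-IRA-works} are met.
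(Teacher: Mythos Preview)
Your proposal is correct and follows essentially the same route as the paper: reduce $\IS$ to $\IR$ via Prop.~\ref{prop:ib-ic-ir-is-duality}, take the characteristic sample $T_L = T_{Aut} \cup T_{Acc}^{\mbox{\scriptsize{IRA}}}$, and assemble the learner as $\alg{L}_{Aut}$ followed by $\alg{L}_{Acc}^{\mbox{\scriptsize{IRA}}}$, invoking Theorems~\ref{theorem:L-Aut-works} and~\ref{theorem:L-Acc-IRA-works}. Your explicit justification that passing to singleton normal form preserves the IRA property (since only the acceptance condition changes) and only increases size polynomially is a detail the paper leaves implicit but is a welcome clarification.
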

\begin{proof}
	By \Cref{prop:ib-ic-ir-is-duality} it suffices to prove this for $\class{IRA}$.
	Let $\aut{R}$ be an IRA in singleton normal form accepting the language $L$.  The characteristic sample $T_L = T_{Aut} \cup T_{Acc}^{\mbox{\scriptsize{IRA}}}$ is of size polynomial in the size of $\aut{R}$.
	
	The combined learning algorithm $\alg{L}^{\mbox{\scriptsize{IRA}}}$ with a sample $T$ as input first runs $\alg{L}_{Aut}$ on $T$ to get a deterministic complete automaton $\aut{M}$ and then runs $\alg{L}_{Acc}^{\mbox{\scriptsize{IRA}}}$ on inputs $\aut{M}$ and $T$ and returns the resulting acceptor.  $\alg{L}^{\mbox{\scriptsize{IRA}}}$ runs in time polynomial in the length of $T$ and returns a DRA consistent with $T$.
	
	Now assume that the sample $T$ is consistent with $\aut{R}$ and subsumes $T_L$.   Then by \Cref{theorem:L-Aut-works}, the automaton $\aut{M}$ is isomorphic to the automaton of $\aut{R}$.
	By \Cref{theorem:L-Acc-IRA-works}, the acceptor returned by $\alg{T}^{\mbox{\scriptsize{IRA}}}$ is an IRA $(\aut{M},\beta)$ that is equivalent to $\aut{R}$, and this is the acceptor also returned by $\alg{L}^{\mbox{\scriptsize{IRA}}}$.
\end{proof}

\section{Constructing characteristic samples in polynomial time}
\label{sec:poly-time-char-samples}
The definition of identification in the limit using polynomial time and data requires that a characteristic sample exist and be of polynomial size, but says nothing about the cost of computing it.  An additional desirable property is that a characteristic sample be computable in polynomial time given an acceptor $\aut{A}$ as input. Recall that when this holds, we say that the class is efficiently teachable.  We now show that given an acceptor that is fully informative we can design efficient teachers,
i.e. algorithms that run in polynomial time and compute the characteristic samples we have defined. This is conditioned on having polynomial time algorithms for equivalence (that are given in Sections~\ref{sec:inclusion-algorithms}-\ref{sec:inclusion-DMAs}). To claim the class $\class{IXA}$ is efficiently teachable we also need to show that we can construct such sets when starting with an acceptor that is not, say an IBA, but has an equivalent IBA acceptor (and similarly for the other classes). This is done in Sections~\ref{sec:computing-right-congruence-automaton}-\ref{sec:testing-membership-in-IX}. 

\subsection{Computing \texorpdfstring{$T_{Aut}$}{T\_Aut}}
\label{ssec:computing-T-Aut}

For $T_{Aut}$, we need to be able to decide for two states $q_1$ and $q_2$ of an acceptor $\aut{A}$ whether there exists an $\omega$-word that distinguishes them, and if so, to return one such word.
We are thus led to consider the problems of inclusion and equivalence.

\paragraph{The problems of inclusion and equivalence}
\label{ssec:inclusion-equivalence-definitions}

The \emph{inclusion problem} is the following.
Given as input two $\omega$-acceptors  $\aut{A}_1$ and $\aut{A}_2$
over the same alphabet,
determine whether the language accepted by $\aut{A}_1$ is
a subset of the language accepted by $\aut{A}_2$, that is,
whether $\sema{\aut{A}_1} \subseteq \sema{\aut{A}_2}$.
If so, the answer should be ``yes''; if not, the answer should
be ``no'' and a \emph{witness}, that is, an ultimately periodic
$\omega$-word $u(v)^{\omega}$ accepted by $\aut{A}_1$ but
rejected by $\aut{A}_2$.

The \emph{equivalence problem} is similar: 
the input is two $\omega$-acceptors $\aut{A}_1$ and $\aut{A}_2$ 
over the same alphabet, and the problem is to determine 
whether they are equivalent,
that is, whether $\sema{\aut{A}_1} = \sema{\aut{A}_2}$.
If so, the answer should be ``yes''; if not, the answer should
be ``no'' and a witness, that is, an ultimately periodic
$\omega$-word $u(v)^{\omega}$ that is accepted by exactly one of $\aut{A}_1$ and $\aut{A}_2$.

If we have a procedure to solve the inclusion problem,
at most two calls to it will solve the equivalence problem.
We describe polynomial time algorithms to solve the
inclusion problem for DBAs, DCAs and DPAs in \Cref{sec:inclusion-algorithms} and~\ref{sec:inclusion-DPAs}, for DRAs and DSAs in \Cref{sec:inclusion-DRAs}, and for DMAs in \Cref{sec:inclusion-DMAs}.
Referring to those sections, we obtain polynomial time algorithms to solve the equivalence problem for DBAs, DCAs and DPAs from Theorem \ref{theorem:schewe} and \Cref{theorem:inclusion-equivalence-for-DPAs},
for DRAs and DSAs from
\Cref{theorem:inclusion-equivalence-for-DRAs-DSAs}, and for DMAs from \Cref{theorem:inclusion-equivalence-for-DMAs}. Thus, we have the following.
\begin{theorem}
	\label{theorem:T-Aut-in-poly-time}
	Given an acceptor $\aut{A}$ of type IBA, ICA, IPA, IRA, ISA, or IMA, the sample $T_{Aut}$ for the automaton portion of $\aut{A}$ can be computed in polynomial time.
\end{theorem}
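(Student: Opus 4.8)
The plan is to split the construction of $T_{Aut}$ into the three pieces it is built from: the access strings, the set $E$ of distinguishing $\omega$-words, and the labels. Let $\aut{M}$ be the automaton of $\aut{A}$ and $n$ its number of states. Recall from Section~\ref{ssec:T-Aut-definition} that $T_{Aut}$ is exactly $(S\cdot E)\cup(S\cdot\Sigma\cdot E)$ with every example labeled consistently with $\aut{A}$, where $S=\{\access(q)\mid q\in Q\}$. Computing $S$ is a breadth-first search in $\aut{M}$, which is already polynomial time, so the real work is (a) producing a suitable $E$ and (b) labeling the $O((1+|\Sigma|)n^2)$ resulting examples.

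For (a), I would use the polynomial time equivalence algorithms developed in the later sections: for DBAs, DCAs, and DPAs via Theorems~\ref{theorem:schewe} and~\ref{theorem:inclusion-equivalence-for-DPAs}, for DRAs and DSAs via Theorem~\ref{theorem:inclusion-equivalence-for-DRAs-DSAs}, and for DMAs via Theorem~\ref{theorem:inclusion-equivalence-for-DMAs}. Given two acceptors of the relevant type that are not equivalent, each of these returns, in polynomial time, an ultimately periodic witness accepted by exactly one of them. The set $E$ is then assembled by a partition-refinement loop over the states: start with the trivial partition $\{Q\}$ and $E=\emptyset$, and while some block still contains two distinct states $q_1,q_2$, run the equivalence algorithm on $\aut{A}^{q_1}$ and $\aut{A}^{q_2}$. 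Since $\aut{A}$ is an IBA, ICA, IPA, IRA, ISA, or IMA, every pair of its states is distinguishable, so the algorithm returns a distinguishing word $w$; I then replace $w$ by $\alg{Witness}(C,\aut{M}^{q_1}\times\aut{M}^{q_2})$ with $C=\infss{\aut{M}^{q_1}\times\aut{M}^{q_2}}(w)$, which by the argument in the proof of Proposition~\ref{prop:poly_distinguishing_experiments} still distinguishes $q_1$ from $q_2$ and has length at most $n^2+n^4$. Add $w$ to $E$ and refine the partition by splitting each block according to which of its states $q$ satisfy $w\in\sema{\aut{A}^q}$, a predicate decided using Proposition~\ref{prop:inf-to-SCC}. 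Each round strictly refines the partition, so the loop terminates after at most $n-1$ rounds with a discrete partition; hence $E$ distinguishes every pair of states, $|E|\le n-1$, and $E$ has been computed in polynomial time.

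For (b), each example of $T_{Aut}$ is $s\cdot u(v)^{\omega}$ with $s\in S\cup S\Sigma$ and $u(v)^{\omega}\in E$, that is, the ultimately periodic word $(su)(v)^{\omega}$, whose representation has length polynomial in $n$ and $|\Sigma|$. Its label is obtained by running $\aut{M}$ from $q_\iota$ along $su$ and then repeatedly along $v$, recovering $\inf$ of the run in polynomial time via Proposition~\ref{prop:inf-to-SCC}, and evaluating the acceptance condition $\alpha$ of $\aut{A}$ on that SCC. There are polynomially many examples, each labeled in polynomial time, so $T_{Aut}$ is produced in polynomial time.

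The main obstacle lies not in this assembly but in the inputs it consumes: the existence of polynomial time equivalence (equivalently, inclusion) algorithms that return polynomial-size witnesses for each of the six deterministic acceptor types. For DBAs, DCAs, and DPAs this rests on the known NL inclusion algorithms; for DRAs, DSAs, and DMAs it is precisely what the dedicated inclusion sections must establish, and the present statement becomes a corollary once they are in place. Everything else here --- breadth-first search, partition refinement, and simulating a deterministic automaton on an ultimately periodic word --- is routine.
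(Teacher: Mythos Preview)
Your proposal is correct and follows essentially the same approach as the paper: reduce the construction of $E$ to calls of the polynomial time equivalence algorithms on the acceptors $\aut{A}^{q_1}$ and $\aut{A}^{q_2}$, which return a distinguishing witness when one exists. The paper's proof is in fact much terser than yours --- it simply invokes those equivalence algorithms and stops --- whereas you spell out the partition-refinement loop to bound $|E|\le n-1$ and the re-application of $\alg{Witness}$ on the product automaton to force the length bound of Proposition~\ref{prop:poly_distinguishing_experiments}; both are natural implementation details that the paper leaves implicit.
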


\begin{proof}
	Given an acceptor $\aut{A}$ and two states $q_1$ and $q_2$, to determine whether there is an $\omega$-word that distinguishes them, we call the relevant polynomial time equivalence algorithm on the acceptors $\aut{A}^{q_1}$ and $\aut{A}^{q_2}$, which returns a distinguishing word $u(v)^{\omega}$ if they are not equivalent.
\end{proof}

\subsection{Computing \texorpdfstring{$T_{Acc}$}{T\_Acc}}
\label{ssec:computing-T-Acc}

For $T_{Acc}$, the requirements depend on the type of acceptor.

\begin{proposition}
	\label{prop:T-Acc-IBA-in-poly-time}
	Given an IBA $\aut{B}$, the sample $T_{Acc}^{\mbox{\scriptsize{IBA}}}$ can be computed in time polynomial in the size of $\aut{B}$.
\end{proposition}

\begin{proof}
	Given an IBA $\aut{B} =\la \Sigma, Q, q_{\iota}, \delta, F \ra$ with automaton $\aut{M}$, the sample $T_{Acc}^{\mbox{\scriptsize{IBA}}}$ described in \Cref{ssec:T-Acc-IBA} is computed as follows.
	For each SCC $C \in \maxSCCs(Q \setminus F)$, let $u(v)^{\omega} = \alg{Witness}(C,\aut{M})$ and include the negative example $(u(v)^{\omega},0)$ in $T_{Acc}^{\mbox{\scriptsize{IBA}}}$.
	
	To see that these examples are sufficient, suppose that $q \in Q$ and $w \in \Sigma^{\omega}$ are such that $\aut{B}$ rejects $w$ and $q \in \inf(w)$.
	Then $D = \inf(w)$ is an SCC of $\aut{B}$ contained in $Q \setminus F$, so it is contained in some $C \in \maxSCCs(Q \setminus F)$, and there is a negative example $(u(v)^{\omega},0)$ in $T_{Acc}^{\mbox{\scriptsize{IBA}}}$ such that $\inf(u(v)^{\omega}) = C$.  Because $D \subseteq C$, we have $q \in C$.
\end{proof}

\begin{proposition}
	\label{prop:T-Acc-IPA-in-poly-time}
	Given an IPA $\aut{P}$, the sample $T_{Acc}^{\mbox{\scriptsize{IPA}}}$ can be computed in time polynomial in the size of $\aut{P}$.
\end{proposition}

\begin{proof}
	Given an IPA $\aut{P}$ with automaton $\aut{M}$, the computation of $T_{Acc}^{\mbox{\scriptsize{IPA}}}$ proceeds as described in \Cref{ssec:T-Acc-IPA}.
	That is, the canonical forest $\forest{F}^*(\aut{P})$ is computed in polynomial time, and for each node $C$ in the forest, $u(v)^{\omega} = \alg{Witness}(C,\aut{M})$ is computed and $(u(v)^{\omega},l)$ is added to $T_{Acc}^{\mbox{\scriptsize{IPA}}}$, where $l$ is the label of node $C$ in the canonical forest.
\end{proof}

\begin{proposition}
	\label{prop:T-Acc-IRA-in-poly-time}
	Given an IRA $\aut{R}$, the sample $T_{Acc}^{\mbox{\scriptsize{IRA}}}$ can be computed in time polynomial in the size of $\aut{R}$.
\end{proposition}

\begin{proof}
	Given an IRA $\aut{R} = (\aut{M},\alpha)$, the computation of $T_{Acc}$ proceeds as described in \Cref{ssec:T-Acc-IRA}.  At each stage of the computation, it is necessary to find an $\omega$-word $u(v)^{\omega}$ with the $\preceq$-largest $\infss{\aut{M}}(u(v)^{\omega}$) that is accepted by $(\aut{M},\alpha)$ and rejected by $(\aut{M},\gamma_k)$.
	\Cref{theorem:inclusion-equivalence-for-DRAs-DSAs} gives a polynomial time algorithm that not only tests the inclusion of two DRAs, but returns a witness $u(v)^{\omega}$ with the $\preceq$-largest $\infss{\aut{M}}(u(v)^{\omega})$ in the case of non-inclusion, because $\aut{M}$ is isomorphic to $\aut{M} \times \aut{M}$.
\end{proof}

\begin{proposition}
	\label{prop:T-Acc-IMA-in-poly-time}
	Given an IMA $\aut{A}$, the sample $T_{Acc}^{\mbox{\scriptsize{IMA}}}$ can be computed in time polynomial in the size of $\aut{A}$.
\end{proposition}

\begin{proof}
	Given an IMA $\aut{A} = (\aut{M},{\mathcal  F})$, the sample $T_{Acc}^{\mbox{\scriptsize{IMA}}}$ described in \Cref{ssec:T-Acc-IMA} is computed as follows. For each $F \in {\mathcal  F}$ determine whether $F$ is a reachable SCC of $\aut{M}$, and if so, compute $u(v)^{\omega} = \alg{Witness}(F,\aut{M})$ and add $(u(v)^{\omega},1)$ to the sample $T_{Acc}^{\mbox{\scriptsize{IMA}}}$.
\end{proof}

\begin{theorem}
	\label{theorem:char-samples-in-poly-time}
	Let $\aut{A}$ be an IBA, IPA, IRA, or IMA accepting the $\omega$-language $L$.  Then the characteristic sample $T_L$ for $\aut{A}$ can be computed in polynomial time in the size of $\aut{A}$.
\end{theorem}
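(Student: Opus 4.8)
The plan is to exploit the decomposition of the characteristic sample introduced in Section~\ref{ssec:overview-of-positive}: for each of the four acceptor types we have $T_L = T_{Aut} \cup T_{Acc}$, where $T_{Aut}$ encodes the automaton (the same construction for all types, since the automata are all isomorphic to $\aut{M}_{\sim_L}$) and $T_{Acc}$ encodes the acceptance condition (and is type-specific). It therefore suffices to show that each of the two parts is computable in polynomial time in the size of $\aut{A}$ and is of polynomial size; their union is then computable in polynomial time, and its length is the sum of the two lengths, hence polynomial.

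For the automaton part, Theorem~\ref{theorem:T-Aut-in-poly-time} already asserts that $T_{Aut}$ can be computed in polynomial time given an acceptor of type IBA, ICA, IPA, IRA, ISA, or IMA; its proof reduces the task to polynomially many calls of the relevant equivalence algorithm on the acceptors $\aut{A}^{q_1}$ and $\aut{A}^{q_2}$, and those algorithms are provided in Sections~\ref{sec:inclusion-algorithms}--\ref{sec:inclusion-DMAs}. For the acceptance part I would split into four cases and invoke the propositions already proved: Prop.~\ref{prop:T-Acc-IBA-in-poly-time} for an IBA, Prop.~\ref{prop:T-Acc-IPA-in-poly-time} for an IPA, Prop.~\ref{prop:T-Acc-IMA-in-poly-time} for an IMA, and, for an IRA, first transform $\aut{R}$ into singleton normal form using Prop.~\ref{prop:singleton-normal-form} --- a polynomial-time step that increases the size by at most a factor of $|Q|$, preserves the language, and leaves the automaton unchanged --- and then apply Prop.~\ref{prop:T-Acc-IRA-in-poly-time}; the polynomial bound on the number of examples in $T_{Acc}^{\mbox{\scriptsize{IRA}}}$ comes from Prop.~\ref{prop:bound-on-T-Acc}. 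Combining these with Theorem~\ref{theorem:T-Aut-in-poly-time} gives the claim.

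The one point that needs real care, and where I expect the main obstacle to lie, is the IRA case. The construction of $T_{Acc}^{\mbox{\scriptsize{IRA}}}$ in Section~\ref{ssec:T-Acc-IRA} is not a static enumeration: at each stage $k$ it simulates the learner and must produce an $\omega$-word whose $\inf$-set is $\preceq$-largest among the $\inf$-sets of words accepted by $(\aut{M},\alpha)$ but not by $(\aut{M},\gamma_k)$, together with the negative examples that block the spurious pairs. So the required subroutine is stronger than a plain inclusion test: it must return the $\preceq$-maximal witness of non-inclusion. I would therefore be careful to cite Theorem~\ref{theorem:inclusion-equivalence-for-DRAs-DSAs} in exactly this strengthened form, and to observe --- as in the proof of Prop.~\ref{prop:T-Acc-IRA-in-poly-time} --- that it is run on the product $\aut{M} \times \aut{M}$, which is isomorphic to $\aut{M}$, so that the ordering $\preceq$ used by the witness-finder is the same ordering used by $\alg{L}_{Acc}^{\mbox{\scriptsize{IRA}}}$; this is what guarantees that the simulated learner inside the sample construction and the actual learner stay in lock-step, and hence that the polynomially many examples produced really do force the learner to output an acceptor equivalent to $\aut{R}$.
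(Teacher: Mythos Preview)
Your proposal is correct and follows exactly the same approach as the paper: decompose $T_L$ into $T_{Aut} \cup T_{Acc}$, invoke Theorem~\ref{theorem:T-Aut-in-poly-time} for the first part, and invoke Propositions~\ref{prop:T-Acc-IBA-in-poly-time}, \ref{prop:T-Acc-IPA-in-poly-time}, \ref{prop:T-Acc-IRA-in-poly-time}, and \ref{prop:T-Acc-IMA-in-poly-time} for the second. Your additional discussion of the IRA case (singleton normal form, the $\preceq$-maximal witness, and the isomorphism $\aut{M} \cong \aut{M} \times \aut{M}$) is correct but is really the content of the proof of Prop.~\ref{prop:T-Acc-IRA-in-poly-time} itself, which the paper simply cites without re-deriving.
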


\begin{proof}
	By \Cref{theorem:T-Aut-in-poly-time}, the sample $T_{Aut}$ can be computed in polynomial time in the size of $\aut{A}$, and by \Cref{prop:T-Acc-IBA-in-poly-time}, \ref{prop:T-Acc-IPA-in-poly-time}, \ref{prop:T-Acc-IRA-in-poly-time}, or
	\ref{prop:T-Acc-IMA-in-poly-time} the sample $T_{Acc}^{\mbox{\scriptsize{IBA}}}$, $T_{Acc}^{\mbox{\scriptsize{IPA}}}$, $T_{Acc}^{\mbox{\scriptsize{IRA}}}$, or $T_{Acc}^{\mbox{\scriptsize{IMA}}}$ can also be computed in polynomial time in the size of $\aut{A}$.
\end{proof}

Note that \Cref{theorem:char-samples-in-poly-time} does not imply that the class $\class{IXA}$ for $\class{X}\in\{\class{B},\class{P},\class{R},\class{M}\}$ is efficiently teachable, since this class also has representations by non-isomorphic automata.

\section{Inclusion algorithms}
\label{sec:inclusion-algorithms}

We show that there are polynomial time algorithms for the inclusion problem for DBAs, DCAs, DPAs, DRAs, DSAs and DMAs.
Recall that two calls to an inclusion algorithm suffice to solve the equivalence problem.
By Claim~\ref{clm:basic-relations-between-omega-aut}~(\ref{claim:DBA-DCA-complement}), the inclusion and
equivalence problems for DCAs are efficiently reducible
to those for DBAs, and vice versa.
Also, by Claim \ref{clm:basic-relations-between-omega-aut}~(\ref{claim:NBA-to-NPA}), the inclusion
and equivalence problems for DBAs are efficiently
reducible to those for DPAs.
Thus it suffices to consider the inclusion problem for DPAs, DRAs and DMAs.

\paragraph{Remark.}
In the case of DFAs, a polynomial algorithm for the inclusion problem can be obtained
using polynomial algorithms for complementation, intersection and emptiness
(since for any two languages $L_1 \subseteq L_2$
if and only if $L_1 \cap \overline{L_2}=\emptyset$). However, a similar approach does not work in the case of DPAs; although complementation and emptiness for DPAs can be computed in polynomial time,  intersection cannot~\cite[Theorem 9]{Boker18}.

For the inclusion problem for DBAs, DCAs and DPAs, 
Schewe~\cite{Schewe10,Schewe2011} gives the following result.
\begin{thmC}[\cite{Schewe10}]
	\label{theorem:schewe}
	The inclusion problems for DBAs, DCAs and DPAs are in NL.
\end{thmC}
Because NL (nondeterministic logarithmic space) is contained in polynomial time,
this implies the existence of polynomial time inclusion and equivalence
algorithms for DBAs, DCAs and DPAs.
For the sake of completeness, and to address the problem of returning a witness we include a proof sketch.
\begin{proof}[Proof sketch]
	For $i = 1,2$, let
	$\aut{P}_i = \la \Sigma, Q_i, (q_\iota)_i, \delta_i, \kappa_i \ra$ be a DPA.
	It suffices to guess two states $q_1 \in Q_1$ and $q_2 \in Q_2$,
	and two words $u \in \Sigma^*$ and $v \in \Sigma^+$,
	and to check that for $i = 1,2$, $\delta_i((q_\iota)_i,u) = q_i$
	and $\delta_i(q_i,v) = q_i$, and also,
	that the smallest value of $\kappa_1(q)$ in the loop 
	in $\aut{P}_1$ from $q_1$ to $q_1$ on input
	$v$ is odd,
	while the smallest value of $\kappa_2(q)$ in the loop 
	in $\aut{P}_2$ from $q_2$ to $q_2$ on input 
	$v$ is even.
	If these checks succeed, then $\sema{\aut{P}_1}$ is not a subset of $\sema{\aut{P}_2}$, and the ultimately periodic word $u(v)^{\omega}$ is a witness.
	
	Logarithmic space is enough to record the two
	guessed states $q_1$ and $q_2$ as well as 
	the current minimum values of $\kappa_1$ and $\kappa_2$ 
	as the loops on $v$ are traversed in the two automata.
	The words $u$ and $v$ need only be guessed symbol-by-symbol, 
	using a pointer in each automaton to keep track of its current state.
\end{proof}

This approach does not seem to work in the case of testing DRA or DMA inclusion,
because the acceptance conditions would seem to require keeping track of more information
than would fit in logarithmic space.
To supplement the proof sketch for Schewe's theorem, in the next section (\Cref{sec:inclusion-DPAs}) we give an explicit polynomial time algorithm for testing DPA inclusion. 

For inclusion of DRAs and DMAs, \cite{ClarkeDK93} provides a reduction to the problem of  model checking a formula in the temporal logic \ctlstar.
While the complexity of model checking \ctlstar\ formulas is in general PSPACE-hard, for the \fairctl\ fragment, \cite{EmersonL87} provides a model checking algorithm that runs in polynomial time. It is further shown in~\cite{ClarkeDK93} that the \ctlstar\ formulas they reduce to can be modified to formulas in a fragment slightly extending \fairctl\ that can still be handled by the 
model checking algorithm of~\cite{EmersonL87} for \fairctl. Thus overall this gives a polynomial time algorithm for inclusion of DRAs and DMAs. 
Since our learning algorithm relies on the sample including shortlex examples and the algorithm above does not guarantee shortlex counterexamples, we give in \Cref{sec:inclusion-DRAs} and \Cref{sec:inclusion-DMAs} polynomial time automata-theoretic algorithms for testing
inclusion for DRAs and DMAs that provide shortlex counterexamples, which are novel results. 

\section{Inclusion and equivalence for DPAs, DBAs, DCAs}
\label{sec:inclusion-DPAs}

In this section we describe an explicit polynomial time algorithm for the inclusion problem for two DPAs, which yields algorithms for DBAs and DCAs.
If $\aut{P} = \la \Sigma, Q, q_{\iota}, \delta, \kappa \ra$ is a complete DPA and $w \in \Sigma^{\omega}$, we let $\aut{P}(w)$ denote the minimum color visited by $\aut{P}$ infinitely often on input $w$, that is, $\aut{P}(w) = \kappa(\inf(w))$.

\subsection{Searching for $w$ with given minimum colors in two acceptors}

We first describe an algorithm that searches for an $\omega$-word that yields specified
minimum colors in two different DPAs over the same alphabet.

For $i = 1,2$, let
${\aut{P}}_i = \la \Sigma, Q_i,(q_\iota)_i, \delta_i, \kappa_i \ra$
be a DPA, and let $\aut{M}_i$ be the automaton of $\aut{P}_i$.
Given inputs of $\aut{P}_1$ and $\aut{P}_2$ and two nonnegative integers $k_1$ and $k_2$,
the \alg{Colors} algorithm constructs
the product automaton $\M = {\M}_1 \times {\M}_2$
and the set $Q' = \{(q_1,q_2) \in Q_1 \times Q_2 \mid \kappa_1(q_1) \ge k_1 \wedge \kappa_2(q_2) \ge k_2\}$.

The algorithm then computes $S = \maxSCCs(Q')$ for the automaton $\aut{M}$, and loops through the SCCs $C \in S$
checking whether
$C$ is reachable in $\aut{M}$,
$\min(\kappa_1(\pi_1(C))) = k_1$, and
$\min(\kappa_2(\pi_2(C))) = k_2$.
If so, it returns the ultimately periodic word $u(v)^{\omega} = \alg{Witness}(C,\aut{M})$.
If none of the elements $C \in S$ satisfies
this condition, then the answer ``no'' is returned.

\begin{theorem}
	\label{theorem:dpa-k1-k2}
	The algorithm \alg{Colors} takes as input two
	DPAs $\aut{P}_1$ and $\aut{P}_2$ over the same alphabet 
	and
	two nonnegative integers $k_1$ and $k_2$, runs in polynomial time, and
	determines whether there exists an $\omega$-word $w$ such that
	$\aut{P}_1(w) = k_1$ and $\aut{P}_2(w) = k_2$.
	If not, it returns the answer ``no''. If so, it returns an ultimately periodic $\omega$-word $u(v)^{\omega}$ such that $\aut{P}_1(u(v)^{\omega}) = k_1$ and $\aut{P}_2(u(v)^{\omega}) = k_2$. 
\end{theorem}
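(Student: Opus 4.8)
The plan is to verify the polynomial running time and then to prove the two halves of the correctness statement: that any $\omega$-word returned is a genuine witness, and that some witness is returned whenever one exists. For the running time, I would note that the product automaton $\M = \M_1 \times \M_2$ has $|Q_1|\cdot|Q_2|$ states and is constructed in polynomial time, that $Q'$ is obtained by one scan over those states, that $S = \maxSCCs(Q')$ is computed in linear time by Prop.~\ref{prop:maxSCCs-in-linear-time}, that for each $C \in S$ the three tests (reachability of $C$ in $\M$, $\min(\kappa_1(\pi_1(C))) = k_1$, and $\min(\kappa_2(\pi_2(C))) = k_2$) are carried out by a breadth-first search and two scans, and that $\alg{Witness}(C,\M)$ is produced in polynomial time by Prop.~\ref{prop:SCC-to-inf}. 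Since $|S| \le |Q_1|\cdot|Q_2|$, the whole procedure is polynomial.

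The tool connecting the product automaton to the two acceptors is the elementary observation that, for any $\omega$-word $w$, the unique run of $\M$ on $w$ has its $i$-th coordinate equal to the unique run of $\M_i$ on $w$; hence $\pi_i(\infss{\M}(w)) = \infss{\M_i}(w)$, and therefore $\aut{P}_i(w) = \kappa_i(\infss{\M_i}(w)) = \min(\kappa_i(\pi_i(\infss{\M}(w))))$ for $i = 1,2$. For soundness, if the algorithm returns $u(v)^{\omega} = \alg{Witness}(C,\M)$ for some $C \in S$ that passed the tests, then $\infss{\M}(u(v)^{\omega}) = C$ by the defining property of $\alg{Witness}$ (Prop.~\ref{prop:SCC-to-inf}), so by that identity $\aut{P}_i(u(v)^{\omega}) = \min(\kappa_i(\pi_i(C))) = k_i$ for $i = 1,2$, exactly as required. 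In particular, the ``no'' answer is correct provided completeness holds, so it suffices to prove completeness in the affirmative form.

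For completeness, suppose some $w$ satisfies $\aut{P}_1(w) = k_1$ and $\aut{P}_2(w) = k_2$. Put $C = \infss{\M}(w)$, which is a reachable SCC of $\M$ by Prop.~\ref{prop:inf-to-SCC}. From $\min(\kappa_i(\pi_i(C))) = \aut{P}_i(w) = k_i$ it follows that every state of $\pi_i(C)$ has $\kappa_i$-color at least $k_i$, so $C \subseteq Q'$, and hence $C$ lies inside a unique maximal SCC $C' \in S = \maxSCCs(Q')$ by Prop.~\ref{prop:maxSCCs_disjoint}. Now $C'$ is reachable because it contains the reachable set $C$; moreover $\min(\kappa_i(\pi_i(C'))) \ge k_i$ because $C' \subseteq Q'$, and $\min(\kappa_i(\pi_i(C'))) \le k_i$ because $\pi_i(C) \subseteq \pi_i(C')$ and $\pi_i(C)$ already contains a state of color exactly $k_i$. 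Thus $C'$ passes all three tests, so the algorithm does not answer ``no'': it returns $\alg{Witness}$ of some element of $S$ passing the tests, which by soundness is a correct witness.

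The only subtle step is this completeness argument, where one must check that enlarging $C = \infss{\M}(w)$ to the maximal SCC $C'$ containing it preserves the property that the minimum colors of the two projections are \emph{exactly} $k_1$ and $k_2$ --- the inequality $\ge$ coming from the restriction to $Q'$ and the inequality $\le$ from the containment $C \subseteq C'$. Everything else is routine bookkeeping around the product-run/projection correspondence, and I do not anticipate further difficulties.
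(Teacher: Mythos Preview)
Your proof is correct and follows essentially the same approach as the paper's: both establish soundness via the projection identity $\pi_i(\infss{\M}(w)) = \infss{\M_i}(w)$, and both prove completeness by showing that $\infss{\M}(w)$ lies in $Q'$ and then arguing that the maximal SCC of $Q'$ containing it passes all three tests. Your write-up is slightly more explicit about the two-sided inequality $\min(\kappa_i(\pi_i(C'))) = k_i$ in the completeness step, but the argument is the same.
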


\begin{proof}
	
	The polynomial running time of the algorithm follows from Props.~\ref{prop:SCC-to-inf} and \ref{prop:maxSCCs-in-linear-time}.
	To see the correctness of the algorithm, suppose first that it returns an ultimately periodic word $u(v)^\omega$.
	This occurs only if it finds an SCC $C$ of $\aut{M}$ such that $C$ is reachable in $\aut{M}$, $\kappa_1(\pi_1(C)) = k_1$, and $\kappa_2(\pi_2(C)) = k_2$.
	Then for $i = 1,2$,
	$\pi_i(C)$ is the set of states visited infinitely often by $\aut{M}_i$ on
	the input $u(v)^{\omega}$, which has minimum color $k_i$.
	
	To see that the algorithm does not incorrectly answer ``no'',
	suppose $w$ is an $\omega$-word such that for $i = 1,2$, $\aut{P}_i(w) = k_i$.
	Let $D_i = \infss{\aut{M}_i}(w)$ be an SCC of $\aut{M}_i$.
	No state in $D_i$ has a color less than $k_i$, so if
	$D = \infss{\aut{M}}(w)$, then $D \subseteq Q'$.
	Also, $D$ is a reachable SCC in $\aut{M}$.
	
	Then $D$ is contained in some element $C$ of $\maxSCCs(Q')$.
	Because there are no states $(q_1,q_2)$ in $C$ with $\kappa_1(q_1) < k_1$ or $\kappa_2(q_2) < k_2$, we must have $\kappa_i(\pi_i(C)) = k_i$ for $i = 1,2$.
	Also, $C$ is reachable in $\aut{M}$ because $D$ is.
	Thus, the algorithm will find at least one such $C$
	and return $u(v)^{\omega}$ such that $\infss{\aut{M}}(u(v)^{\omega}) = C$.
\end{proof}

\subsection{An inclusion algorithm for DPAs}

The inclusion problem for DPAs $\aut{P}_1$ and $\aut{P}_2$ 
over the same alphabet can be solved by
looping over all odd $k_1$ in the range of $\kappa_1$ and
all even $k_2$ in the range of $\kappa_2$,
calling the \alg{Colors} algorithm with inputs
$\aut{P}_1$, $\aut{P}_2$, $k_1$, and $k_2$.
If the \alg{Colors} algorithm returns 
any witness $u(v)^\omega$, then
$u(v)^{\omega} \in \sema{\aut{P}_1} \setminus \sema{\aut{P}_2}$,
and $u(v)^\omega$ is returned as a witness of non-inclusion.
Otherwise, by \Cref{theorem:dpa-k1-k2}, there is no $\omega$-word $w$
accepted by $\aut{P}_1$ and not accepted by $\aut{P}_2$, and
the answer ``yes'' is returned for the inclusion problem.
Note that for $i = 1,2$,
the range of $\kappa_i$ has at most $|Q_i|$ distinct elements.
Thus we have the following.

\begin{theorem}
	\label{theorem:inclusion-equivalence-for-DPAs}
	There are polynomial time algorithms for the inclusion and equivalence problems
	for two DPAs over the same alphabet.
\end{theorem}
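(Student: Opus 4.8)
The plan is to reduce both problems to polynomially many calls of the \alg{Colors} algorithm of Theorem~\ref{theorem:dpa-k1-k2}, which already does the substantive work. First I would note that by Claim~\ref{claim:make-automaton-complete} we may assume $\aut{P}_1$ and $\aut{P}_2$ are complete, so every $\omega$-word has a unique run in each and $\aut{P}_i(w) = \kappa_i(\inf(w))$ is well defined. Then $\sema{\aut{P}_1} \not\subseteq \sema{\aut{P}_2}$ if and only if there is an $\omega$-word $w$ accepted by $\aut{P}_1$ and rejected by $\aut{P}_2$, which by the parity condition means $\aut{P}_1(w)$ is odd while $\aut{P}_2(w)$ is even.

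Next I would observe that this happens exactly when there exist an odd value $k_1$ in the range of $\kappa_1$ and an even value $k_2$ in the range of $\kappa_2$ such that some $\omega$-word $w$ satisfies $\aut{P}_1(w) = k_1$ and $\aut{P}_2(w) = k_2$. The range of $\kappa_i$ has at most $|Q_i|$ distinct values, so there are at most $|Q_1| \cdot |Q_2|$ candidate pairs $(k_1,k_2)$. The inclusion algorithm loops over all of them, calling \alg{Colors} on inputs $\aut{P}_1, \aut{P}_2, k_1, k_2$. If some call returns an ultimately periodic word $u(v)^{\omega}$, then $u(v)^{\omega} \in \sema{\aut{P}_1} \setminus \sema{\aut{P}_2}$ and it is returned as a witness of non-inclusion; if every call answers ``no'', then by Theorem~\ref{theorem:dpa-k1-k2} no such word exists, and ``yes'' is returned. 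Since \alg{Colors} runs in polynomial time and is invoked polynomially many times, the whole procedure is polynomial time.

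Finally, for the equivalence problem I would run the inclusion algorithm twice, once to test $\sema{\aut{P}_1} \subseteq \sema{\aut{P}_2}$ and once to test $\sema{\aut{P}_2} \subseteq \sema{\aut{P}_1}$: if both answer ``yes'' the acceptors are equivalent, and otherwise one call returns an ultimately periodic word accepted by exactly one of the two acceptors, which is the required witness of non-equivalence. The main obstacle — deciding, via reachable SCCs of the product automaton together with the $\alg{Witness}$ construction, when a word with prescribed minimum colors in both acceptors exists, and extracting a short such word — has already been surmounted in Theorem~\ref{theorem:dpa-k1-k2}; what remains here is only the routine enumeration over color pairs and the correctness bookkeeping described above.
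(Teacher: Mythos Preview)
Your proposal is correct and matches the paper's own proof essentially line for line: loop over all odd $k_1$ in the range of $\kappa_1$ and all even $k_2$ in the range of $\kappa_2$, call \alg{Colors} on each pair, return any witness found or ``yes'' otherwise, and handle equivalence by two inclusion tests. The only cosmetic difference is that you explicitly invoke Claim~\ref{claim:make-automaton-complete} to justify completeness, which the paper leaves implicit.
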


From Claim~\ref{clm:basic-relations-between-omega-aut}~(\ref{claim:NBA-to-NPA}~and~\ref{claim:DBA-DCA-complement}),
we have the following.

\begin{theorem}
	\label{theorem:inclusion-equivalence-for-DBAs-DCAs}
	There are polynomial time algorithms for the inclusion and equivalence problems
	for two DBAs (or DCAs) over the same
	alphabet.
\end{theorem}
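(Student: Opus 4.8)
The plan is to reduce both problems to the corresponding problems for DPAs, which are already settled by Theorem~\ref{theorem:inclusion-equivalence-for-DPAs}. First I would use Claim~\ref{claim:make-automaton-complete} to replace each input DBA by an equivalent complete DBA, at the cost of only an additive $|\Sigma|$ increase in size; this is needed because the DPA inclusion machinery of Section~\ref{sec:inclusion-DPAs} operates on complete deterministic acceptors. Next, for each complete DBA $\aut{B} = \la \Sigma, Q, q_\iota, \delta, F \ra$, Claim~\ref{claim:NBA-to-NPA} produces an equivalent complete DPA of the same size, via the coloring $\kappa(q) = 1$ for $q \in F$ and $\kappa(q) = 2$ otherwise. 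Applying the polynomial time algorithm of Theorem~\ref{theorem:inclusion-equivalence-for-DPAs} to the resulting pair of DPAs then decides $\sema{\aut{B}_1} \subseteq \sema{\aut{B}_2}$, and in the negative case returns an ultimately periodic word $u(v)^{\omega}$; since each DPA accepts exactly the language of the corresponding DBA, this same $u(v)^{\omega}$ is a witness for the DBA instance. Running two such calls, with the roles of $\aut{B}_1$ and $\aut{B}_2$ exchanged, solves the equivalence problem for DBAs.

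For DCAs I would exploit the duality of Claim~\ref{claim:DBA-DCA-complement}. Given DCAs $\aut{C}_1$ and $\aut{C}_2$, first complete them using Claim~\ref{claim:make-automaton-complete}, and for $i = 1,2$ let $\aut{B}_i$ be the complete DBA with the same components as the completed $\aut{C}_i$. Then $\sema{\aut{C}_i} = \Sigma^{\omega} \setminus \sema{\aut{B}_i}$, so $\sema{\aut{C}_1} \subseteq \sema{\aut{C}_2}$ holds if and only if $\sema{\aut{B}_2} \subseteq \sema{\aut{B}_1}$. I would therefore invoke the DBA inclusion procedure just described on the pair $(\aut{B}_2, \aut{B}_1)$; in the negative case the witness $u(v)^{\omega} \in \sema{\aut{B}_2} \setminus \sema{\aut{B}_1}$ that it returns is precisely a witness $u(v)^{\omega} \in \sema{\aut{C}_1} \setminus \sema{\aut{C}_2}$ for the DCA instance. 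Equivalence of DCAs follows either from two such inclusion calls or directly from the observation that $\sema{\aut{C}_1} = \sema{\aut{C}_2}$ iff $\sema{\aut{B}_1} = \sema{\aut{B}_2}$.

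Each step in these reductions---completion, the \buchi-to-Parity recoloring, and the complementation bookkeeping for the \cobuchi\ case---is a size-preserving or additive-size transformation computable in polynomial time, and the only non-trivial ingredient, the DPA inclusion algorithm, is polynomial by Theorem~\ref{theorem:inclusion-equivalence-for-DPAs}; hence the whole procedure runs in polynomial time. I do not anticipate a genuine obstacle here. The one point requiring a little care is verifying that a witness produced for the transformed (complete, Parity) instance remains a valid witness for the original (possibly incomplete, \buchi\ or \cobuchi) instance, and this holds because completion preserves the accepted language (Claim~\ref{claim:make-automaton-complete}), the recoloring preserves the language exactly on the same automaton (Claim~\ref{claim:NBA-to-NPA}), and the \cobuchi-to-\buchi\ passage exactly complements the language (Claim~\ref{claim:DBA-DCA-complement}), so the membership status of every ultimately periodic word is tracked correctly through the chain of reductions.
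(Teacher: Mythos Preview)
Your proposal is correct and follows essentially the same approach as the paper, which derives the theorem directly from Claims~\ref{claim:NBA-to-NPA} and~\ref{claim:DBA-DCA-complement} together with Theorem~\ref{theorem:inclusion-equivalence-for-DPAs}. You have added a bit more care about completion and witness transfer than the paper bothers with, but the underlying reduction is identical.
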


\section{An inclusion algorithm for DRAs}
\label{sec:inclusion-DRAs}

In this section we describe a polynomial time algorithm to solve the inclusion problem for two DRAs.  The algorithm returns a $\preceq$-largest witness in the case of non-inclusion.

\begin{algorithm}
{\small{
	\caption{$\alg{SubInc}^{\mathit{DRA}}$}\label{alg:SubInc-DRA}
	\begin{algorithmic}
		\Require{Two DRAs $\aut{R}_1 = (\aut{M}_1,\alpha_1)$ and $\aut{R}_2 = (\aut{M}_2,\alpha_2)$ in singleton normal form, where $\alpha_1 = \{(q',B')\}$ and $\alpha_2 = \{(q_1'',B_1''), \ldots, (q_k'', B_k'')\}$, and a set $S$ of states of $\aut{M} = \aut{M}_1 \times \aut{M}_2$.}
		\Ensure{$u(v)^{\omega} \in \sema{\aut{R}_1} \setminus \sema{\aut{R}_2}$ with $\infss{\aut{M}}(u(v)^{\omega}) \subseteq S$ if such exists, else ``none''.}
		
		\State{$\aut{M} = \aut{M}_1 \times \aut{M}_2$}
		\State{$W \leftarrow \emptyset$}
		\State{$S' \leftarrow S \setminus \{(q_1,q_2) \in S \mid q_1 \in B'\}$}
		\State $\mathcal{C} \leftarrow \maxSCCs(S')$
		\For{each reachable $C \in \mathcal{C}$ such that $q' \in \pi_1(C)$}
		\If{for no $j \in [1..k]$ is $q_j'' \in \pi_2(C)$ and $B_j'' \cap \pi_2(C) = \emptyset$}
		\State $W \leftarrow W \cup \{\alg{Witness}(C,\aut{M})\}$ \Comment{A new candidate witness}
		\Else
		\State{$J = \{q_j'' \mid j \in [1..k], B_j'' \cap \pi_2(C) = \emptyset\}$}
		\State{$S'' \leftarrow C \setminus \{(q_1,q_2) \in C \mid q_2 \in J \}$}
		\State{Call $\alg{SubInc}^{\mathit{DRA}}$ recursively with $\aut{R}_1$, $\aut{R}_2$, and $S''$}
		\If{the returned value is $u(v)^{\omega}$}
		\State{$W \leftarrow W \cup \{u(v)^{\omega}\}$}
		\EndIf
		\EndIf
		\EndFor
		\If{$W$ is $\emptyset$}
		\State{\Return{``none''}}
		\Else
		\State{Let $u(v)^{\omega} \in W$ have the $\preceq$-largest value of $\infss{\aut{M}}(u(v)^{\omega})$}
		\State{\Return{$u(v)^{\omega}$}}
		\EndIf
	\end{algorithmic}
}}\end{algorithm}

The algorithm $\alg{SubInc}^{\mathit{DRA}}$ takes as input two DRAs $\aut{R}_1 = (\aut{M}_1,\alpha_1)$ and $\aut{R}_2 = (\aut{M}_2,\alpha_2)$ in singleton normal form, where $\alpha_1$ consists of a single pair $(q',B')$. It also takes as input a subset $S$ of the state set of the product automaton $\aut{M} = \aut{M}_1 \times \aut{M}_2$. The problem it solves is to determine whether there exists an $\omega$-word $u(v)^{\omega}$ with $\infss{\aut{M}}(u(v)^{\omega}) \subseteq S$ such that $u(v)^{\omega} \in \sema{\aut{R}_1} \setminus \sema{\aut{R}_2}$.
If there is such a word, the algorithm returns one with the $\preceq$-largest value of $\infss{\aut{M}}(u(v)^{\omega})$, and otherwise, it returns ``none''.

\begin{proposition}
	\label{prop:correctness-of-Inc-DRA}
	For $i = 1,2$ let $\aut{R}_i = (\aut{M}_i,\alpha_i)$ be a DRA in singleton normal form.  Assume $\alpha_1 = \{(q',B')\}$ and $\alpha_2 = \{(q_1'',B_1''), \ldots, (q_k'',B_k'')\}$.  Let $\aut{M} = \aut{M}_1 \times \aut{M}_2$, and let $S$ be a subset of the states of $\aut{M}$.  Then with inputs $\aut{R}_1$, $\aut{R}_2$, and $S$, the algorithm $\alg{SubInc}^{\mathit{DRA}}$ runs in polynomial time and returns $u(v)^{\omega} \in \sema{\aut{R}_1} \setminus \sema{\aut{R}_2}$ with the $\preceq$-largest value of $\infss{\aut{M}}(u(v)^{\omega})$ contained in $S$, if such exists, else it returns ``none''.
\end{proposition}

\begin{proof}
	When the element $u(v)^{\omega} = \alg{Witness}(\aut{M},C)$ is added to $W$, we have that $C$ is a reachable SCC of $\aut{M}$ contained in $S$, $q' \in \pi_1(C)$, and $B' \cap \pi_1(C) = \emptyset$ (because $S'$ contains no elements $(q_1,q_2)$ with $q_1 \in B'$), so $(\aut{M}_1,\{(q',B')\})$ accepts $u(v)^{\omega}$.  Also, we have that for no $j \in [1..k]$ do we have $q_j'' \in \pi_2(C)$ and $B_j'' \cap \pi_2(C) = \emptyset$, so $\aut{R}_2$ rejects $u(v)^{\omega}$.  Thus, any returned $u(v)^{\omega}$ is a witness to the non-inclusion of $\sema{(\aut{M}_1,\{(q',B')\})}$ in $\sema{\aut{R}_2}$ with $\infss{\aut{M}}(u(v)^{\omega}) \subseteq S$.
	
	We now show by induction on the recursive calls that if $w$ is any $\omega$-word such that $\infss{\aut{M}}(w) \subseteq S$, $(\aut{M}_1, \{(q',B')\})$ accepts $w$, and $\aut{R}_2$ rejects $w$, then $\alg{SubInc}^{\mathit{DRA}}$ returns a witness $u(v)^{\omega}$ such that $\infss{\aut{M}}(u(v)^{\omega})$ is at least as large as $\infss{\aut{M}}(w)$ in the $\preceq$-ordering.  Let $D = \infss{\aut{M}}(w)$. Then $D$ is a reachable SCC of $\aut{M}$ such that $q' \in \pi_1(D)$, $B' \cap \pi_1(D) = \emptyset$, and for no $j \in [1..k]$ do we have $q_j'' \in \pi_2(D)$ and $B_j'' \cap \pi_2(D) = \emptyset$.  Then $D \subseteq S'$ because $B' \cap \pi_1(D) = \emptyset$.  Thus, $D$ must be a subset of exactly one of the elements $C$ of $\maxSCCs(S')$.  Then $C$ is reachable, $q' \in \pi_1(C)$, and $B' \cap \pi_1(C) = \emptyset$ (because $C$ is a subset of $S'$).  
	
	If $C$ is such that for no $j \in [1..k]$ do we have $q_j'' \in \pi_2(C)$ and $B_j'' \cap \pi_2(C) = \emptyset$, then a witness $u(v)^{\omega} = \alg{Witness}(C,\aut{M})$ is added to $W$, and we have that $C = \infss{\aut{M}}(u(v)^{\omega})$ is at least as large in the $\preceq$-ordering as $D = \infss{\aut{M}}(w)$, because $D \subseteq C$.
	
	Otherwise, the set $J = \{q_j'' \mid j \in [1..k], B_j'' \cap \pi_2(C) = \emptyset\}$ is non-empty, and the algorithm removes from $C$ all the states $(q_1,q_2)$ such that $q_2 \in J$ to form the set $S''$. Because $D \subseteq C$, if $B_j'' \cap \pi_2(C) = \emptyset$, then also $B_j'' \cap \pi_2(D) = \emptyset$.  Thus, if for any $q_j'' \in J$ we have $q_j'' \in \pi_2(D)$, this would violate the assumption that $\aut{R}_2$ rejects $w$.  Hence, $D \subseteq S''$, and by the inductive assumption on the recursive calls, the recursive call to $\alg{SubInc}^{\mathit{DRA}}$ returns a witness $u(v)^{\omega}$ such that $\infss{\aut{M}}(u(v)^{\omega})$ is at least as large in the $\preceq$-ordering as $\infss{\aut{M}}(w)$.  Because the top-level algorithm returns $u(v)^{\omega}$ to maximize $\infss{\aut{M}}(u(v)^{\omega})$ with respect to $\preceq$, it will be at least as large as $\infss{\aut{M}}(w)$.
	
	For the polynomial running time, we note that all the SCCs $C$ considered are distinct elements of a decreasing forest of SCCs for the automaton $\aut{M}$, and so there can be at most as many as the number of states of $\aut{M}$.
\end{proof}

\begin{theorem}
	\label{theorem:inclusion-equivalence-for-DRAs-DSAs}
	There are polynomial time algorithms to solve the inclusion and equivalence problems for two DRAs (resp. DSAs) $\aut{R}_1$ and $\aut{R}_2$.  In the case of non-inclusion or non-equivalence, these algorithms return a witness $u(v)^{\omega}$ with the $\preceq$-largest value of $\infss{\aut{M}}(u(v)^{\omega})$, where $\aut{M}$ is the product of the automata $\aut{R}_1$ and $\aut{R}_2$.
\end{theorem}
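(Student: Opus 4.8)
The plan is to reduce both the inclusion and the equivalence problem to repeated calls of the subroutine $\alg{SubInc}^{DRA}$ of Algorithm~\ref{alg:SubInc-DRA}, whose polynomial running time and correctness (returning a $\preceq$-largest witness) are given by Prop.~\ref{prop:correctness-of-Inc-DRA}. First I would put both $\aut{R}_1$ and $\aut{R}_2$ into singleton normal form using Prop.~\ref{prop:singleton-normal-form}; this leaves their automata (hence the product $\aut{M} = \aut{M}_1 \times \aut{M}_2$) unchanged and at most multiplies the number of acceptance pairs of each acceptor by its number of states, so both acceptors remain of polynomial size. Writing the normal-form condition of $\aut{R}_1$ as $\alpha_1 = \{(q_1',B_1'),\ldots,(q_m',B_m')\}$, we have $\sema{\aut{R}_1} = \bigcup_{i=1}^{m}\sema{(\aut{M}_1,\{(q_i',B_i')\})}$, and therefore $\sema{\aut{R}_1}\setminus\sema{\aut{R}_2} = \bigcup_{i=1}^{m}\bigl(\sema{(\aut{M}_1,\{(q_i',B_i')\})}\setminus\sema{\aut{R}_2}\bigr)$. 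Note that $\alg{SubInc}^{DRA}$ handles a multi-pair $\aut{R}_2$ directly, so only $\aut{R}_1$ needs to be split pairwise.

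For the inclusion problem I would, for each $i\in[1..m]$, call $\alg{SubInc}^{DRA}$ with inputs $(\aut{M}_1,\{(q_i',B_i')\})$, $\aut{R}_2$, and $S$ equal to the full state set of $\aut{M}$. If every call returns ``none'', then by Prop.~\ref{prop:correctness-of-Inc-DRA} each difference $\sema{(\aut{M}_1,\{(q_i',B_i')\})}\setminus\sema{\aut{R}_2}$ is empty, so $\sema{\aut{R}_1}\subseteq\sema{\aut{R}_2}$ and ``yes'' is returned. Otherwise, among the words returned I would output the one whose $\infss{\aut{M}}$-value is $\preceq$-largest. Since $\preceq$ is a total order on the finite collection of possible $\infss{\aut{M}}$-values and each per-pair call returns a witness of $\preceq$-maximal $\infss{\aut{M}}$-value for its own pair, the maximum of the per-pair maxima equals the maximum over all of $\sema{\aut{R}_1}\setminus\sema{\aut{R}_2}$ by the displayed union identity; hence the returned word is a $\preceq$-largest witness to non-inclusion. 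Since $m$ is polynomial and each call runs in polynomial time, this is a polynomial time procedure. The equivalence problem is then solved by two inclusion tests, for $\sema{\aut{R}_1}\subseteq\sema{\aut{R}_2}$ and for $\sema{\aut{R}_2}\subseteq\sema{\aut{R}_1}$: if both succeed, output ``yes''; otherwise output a witness of $\preceq$-largest $\infss{\aut{M}}$-value among those produced. Here one observes that $\infss{\aut{M}_1\times\aut{M}_2}(w)$ and $\infss{\aut{M}_2\times\aut{M}_1}(w)$ are related by the coordinate-swap bijection, which is an automaton isomorphism under which each state $(q_1,q_2)$ and its image $(q_2,q_1)$ have the same access string; so the $\preceq$-order on these sets agrees, and witnesses produced by the two inclusion calls can be compared consistently.

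For DSAs I would use the duality of Claim~\ref{claim:DRA-DSA-complement}: regarding the DSA $\aut{S}_j$ as the DRA $\aut{R}_j$ with the same automaton and the same acceptance condition, we have $\sema{\aut{R}_j}=\Sigma^{\omega}\setminus\sema{\aut{S}_j}$, so $\sema{\aut{S}_1}\subseteq\sema{\aut{S}_2}$ iff $\sema{\aut{R}_2}\subseteq\sema{\aut{R}_1}$, and moreover $\sema{\aut{S}_1}\setminus\sema{\aut{S}_2}=\sema{\aut{R}_2}\setminus\sema{\aut{R}_1}$ as sets of words; thus the DRA inclusion algorithm applied to $\aut{R}_2,\aut{R}_1$ also solves the DSA inclusion problem, and by the coordinate-swap remark above it returns a witness with the $\preceq$-largest $\infss{\aut{M}}$-value, where $\aut{M}$ is the product of the DSA automata. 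The DSA equivalence problem reduces to two such DSA inclusion tests exactly as for DRAs.

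I do not expect a deep obstacle here: once Prop.~\ref{prop:correctness-of-Inc-DRA} is available, the theorem is essentially an orchestration argument. The points needing care are (i) verifying that maximizing the per-pair witnesses yields a globally $\preceq$-largest witness, which rests on the union identity for $\sema{\aut{R}_1}\setminus\sema{\aut{R}_2}$ together with the totality of $\preceq$, and (ii) checking that the $\preceq$-ordering of $\infss{}$-values is invariant under swapping the two factors of the product automaton, so that witnesses from the two directions of an equivalence test (and from the DSA duality) can be compared on the same footing.
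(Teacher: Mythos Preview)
Your proposal is correct and follows essentially the same approach as the paper: put the acceptors in singleton normal form, split $\aut{R}_1$ into single-pair acceptors, call $\alg{SubInc}^{DRA}$ for each pair against $\aut{R}_2$ with $S$ the full state set of $\aut{M}$, take the $\preceq$-largest returned witness, and handle DSAs via Claim~\ref{claim:DRA-DSA-complement}. You are in fact more careful than the paper about the equivalence case, making explicit the coordinate-swap invariance of $\preceq$ needed to compare witnesses from the two inclusion directions---a point the paper's proof glosses over.
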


\begin{proof}
	It suffices to consider just DRAs, by Claim~\ref{clm:basic-relations-between-omega-aut}~(\ref{claim:DRA-DSA-complement}).  Given two DRAs $\aut{R}_1 = (\aut{M}_1,\alpha_1)$ and $\aut{R}_2 = (\aut{M}_2,\alpha_2)$, we may assume they are in singleton normal form. Then for each pair $(q_i,B_i)$ in $\alpha_1$, we call $\alg{SubInc}^{\mathit{DRA}}$ with inputs $(\aut{M}_1, \{(q_i,B_i)\})$, $\aut{R}_2$, with $S$ equal to the whole state set of $\aut{M}$. If all of these calls return ``none'', then $\sema{\aut{R}_1}$ is a subset of $\sema{\aut{R}_2}$, and the answer returned is ``yes''.  Otherwise, one or more calls return a witness, and $u(v)^{\omega}$ is returned such that $\infss{\aut{M}}(u(v)^{\omega})$ is $\preceq$-largest among the witnesses returned by the calls.  The running time and correctness follow from the running time and correctness guarantees of $\alg{SubInc}^{\mathit{DRA}}$.
\end{proof}

\section{An inclusion algorithm for DMAs}
\label{sec:inclusion-DMAs}

In this section we develop a polynomial time algorithm
to solve the inclusion problem for two DMAs over the same alphabet.
The proof proceeds in two parts: (1) a polynomial time reduction
of the inclusion problem for two DMAs to the inclusion problem
for a DBA and a DMA, and (2) a polynomial time algorithm for
the inclusion problem for a DBA and a DMA.

\subsection{Reduction of DMA inclusion to DBA/DMA inclusion}

We first reduce the problem of inclusion for two arbitrary DMAs
to the inclusion problem for two DMAs 
where the first one has just a single final state set.
For $i = 1,2$, define the DMA
$\aut{U}_i = \la Q_i, \Sigma, (q_\iota)_i, \delta_i, \mathcal{F}_i \ra$,
where $\aut{F}_i$ is the set of final state sets for $\aut{U}_i$.
Let the elements of $\mathcal{F}_1$ be $\{F_1, \ldots, F_k\}$,
and for each $j \in [1..k]$, let
\[\aut{U}_{1,j} = \la Q_1, \Sigma, (q_\iota)_1, \delta_1, \{F_j\} \ra,\]
that is, 
$\aut{U}_{1,j}$ is $\aut{U}_1$ with $F_j$ as its only final state set.
Then by the definition of DMA acceptance,
\[\sema{\aut{U}_1} = \bigcup_{j=1}^k \sema{\aut{U}_{1,j}},\]
which implies that to test whether
$\sema{\aut{U}_1} \subseteq \sema{\aut{U}_2}$,
it suffices to test for all $j \in [1..k]$ that
$\sema{\aut{U}_{1,j}} \subseteq \sema{\aut{U}_2}$.

\begin{proposition}
	\label{prop:arbitrary-to-one-final-state-set}
	Suppose $\alg{L}$ is a procedure that solves the
	inclusion problem for two DMAs over the same alphabet,
	assuming that the first DMA has a single final state set.
	Then there is an algorithm that solves the inclusion
	problem for two arbitrary DMAs over the same alphabet,
	say $\aut{U}_1$ and $\aut{U}_2$,
	which simply makes $|\mathcal{F}_1|$ calls to $\alg{L}$, where $\mathcal{F}_1$
	is the family of final state sets of $\aut{U}_1$.
\end{proposition}

Next we describe a procedure $\alg{SCCtoDBA}$ that
takes as inputs a deterministic automaton $\aut{M}$, an SCC $F$ of $\aut{M}$,
and a state $q \in F$, and returns
a DBA $B(\aut{M},F,q)$ that accepts exactly $L(\aut{M},F,q)$,
where $L(\aut{M},F,q)$ is the set of
$\omega$-words $w$ that visit only the
states of $F$ when processed by $\aut{M}$ starting at state $q$,
and visits each of them infinitely many times.

Assume the states in $F$ are $\{q_0, q_1, \ldots, q_{m-1}\}$,
where $q_0 = q$.
The DBA $B(\aut{M}, F, q)$ is $\la Q', \Sigma, q_0, \delta', \{q_0\} \ra$,
where we define $Q'$ and $\delta'$ as follows.
We create new states $r_{i,j}$ for $i,j \in [0..m-1]$ such that
$i \neq j$, and denote the set of these by $R$.
We also create a new dead state $d_0$.
Then the set of states $Q'$ is $Q \cup R \cup \{d_0\}$.

For $\delta'$,
the dead state $d_0$ behaves as expected:
for all $\sigma \in \Sigma$, $\delta'(d_0,\sigma) = d_0$.
For the other states in $Q'$, let
$\sigma \in \Sigma$ and $i \in [0..m-1]$.
If $\delta(q_i, \sigma)$ is not in $F$, then
in order to deal with runs that would visit
states outside of $F$, we define
$\delta'(q_i,\sigma) = d_0$ and, for all $j \neq i$,
$\delta'(r_{i,j},\sigma) = d_0$.

Otherwise, for some $k \in [0..m-1]$ we have
$q_k = \delta(q_i,\sigma)$.
If $k = (i+1) \bmod m$, then we define
$\delta'(q_i,\sigma) = q_k$, and otherwise we
define
$\delta'(q_i,\sigma) = r_{k,(i+1) \bmod m}$.
For all $j \in [0..m-1]$ with $j \neq i$,
if $k = j$, we define $\delta'(r_{i,j},\sigma) = q_k$,
and otherwise we define
$\delta'(r_{i,j},\sigma) = r_{k,j}$.

Intuitively, for an input from $L(\aut{M},F,q)$, in $B(\aut{M},F,q)$
the states $q_i$ are visited in a repeating cyclic
order: $q_0, q_1, \ldots, q_{m-1}$, and the meaning of the
state $r_{i,j}$ is that at this point in the input,
$\aut{M}$ would be in state $q_i$, and the machine $B(\aut{M},F,q)$
is waiting for a transition that would arrive at state $q_j$ in $\aut{M}$,
in order to proceed to state $q_j$ in $B(\aut{M},F,q)$.\footnote{This construction is reminiscent of the construction transforming a generalized B\"uchi into a B\"uchi automaton~\cite{Vardi08,Choueka74}, by considering each state in $F$ as a singleton set of a generalized B\"uchi, but here we need to send transitions to states outside $F$ to a sink state.}
An example of the construction is shown in Figure~\ref{fig:example-for-B-M-F-q};
the dead state and unreachable states are omitted for clarity.

\begin{figure}[t]
	\begin{center}
		\scalebox{0.7}{
			\begin{tikzpicture}[->,>=stealth',shorten >=1pt,auto,node
			distance=1.8cm,semithick,initial text=,initial where=left]
			
			\node (start) {$\M:$};
			\node[state,initial] (q0) [right of=start] {$q_0$};
			\node[state] (q1) [right of=q0] {$q_1$};
			\node[state] (q2) [below of=q1] {$q_2$};
			\node[state] (q3) [below of=q0] {$q_3$};
			
			\path (q0) edge [loop above]        node {$b$} (q0);
			\path (q0) edge [above, near start] node {$a$} (q2);
			\path (q1) edge [above]             node {$a$} (q0);
			\path (q1) edge [bend right]        node {$b$} (q2);
			\path (q2) edge [bend right]        node {$a$} (q1);
			\path (q2) edge                     node {$b$} (q3);
			\path (q3) edge [below, near start] node {$a$} (q1);
			\path (q3) edge                     node {$b$} (q0);
			
			\node (next) [right of=start, node distance=8cm] {$B(\M,F,q):$}; 
			
			\node[state,initial,accepting] (bq0) [right of=next, node distance=2.5cm] {$q_0$};
			\node[state] (br01) [above right of=bq0] {$r_{0,1}$};
			\node[state] (br21) [right of=bq0] {$r_{2,1}$};
			\node[state] (bq1)  [right of=br21] {$q_1$};
			\node[state] (br02) [below right of=bq1] {$r_{0,2}$};
			\node[state] (bq2) [below left of=br02] {$q_2$};
			\node[state] (br10) [left of=bq2] {$r_{1,0}$};
			\node[state] (br20) [left of=br10] {$r_{2,0}$};
			
			\path (bq0) edge [above]        node {$b$} (br01);
			\path (bq0) edge [below]        node {$a$} (br21);
			\path (br01) edge [loop above]  node {$b$} (br01);
			\path (br01) edge [right]       node {$a$} (br21);
			\path (br21) edge [above]       node {$a$} (bq1);
			\path (bq1) edge [right]        node {$a$} (br02);
			\path (bq1) edge [right]        node {$b$} (bq2);
			\path (br02) edge [loop right]  node {$b$} (br02);
			\path (br02) edge [below]       node {$a$} (bq2);
			\path (bq2) edge [above]        node {$a$} (br10);
			\path (br10) edge [right]       node {$a$} (bq0);
			\path (br10) edge [below, bend left] node {$b$} (br20);
			\path (br20) edge [above, bend left] node {$a$} (br10);
			
			\end{tikzpicture} }   
		\caption{Example of the construction of $B(\M,F,q)$ with $F = \{q_0,q_1,q_2\}$ and $q = q_0$.}
		\label{fig:example-for-B-M-F-q}
	\end{center}
\end{figure}

\begin{lemma}
	Let $\aut{M}$ be a deterministic automaton with alphabet $\Sigma$ and
	states $Q$, and let $F$ be an SCC of $\aut{M}$ and $q \in F$.
	With these inputs, the procedure $\alg{SCCtoDBA}$ runs in polynomial time and
	returns the DBA $B(\aut{M},F,q)$, which accepts the language $L(\aut{M},F,q)$
	and has $|F|^2+1$ states.
\end{lemma}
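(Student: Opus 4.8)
The plan is to establish three things in turn: (i) that $\sema{B(\aut{M},F,q)}$ equals $L(\aut{M},F,q)$; (ii) that, after removing unreachable states, $B(\aut{M},F,q)$ has $|F|^2+1$ states; and (iii) that $\alg{SCCtoDBA}$ runs in polynomial time. Parts (ii) and (iii) are bookkeeping; the content is part (i), which I would prove through a simulation invariant relating a run of $B(\aut{M},F,q)$ to the corresponding run of $\aut{M}$, together with a ``target counter'' argument that is essentially the classical degeneralization of a generalized B\"uchi automaton (viewing each singleton $\{q_j\}$ with $q_j \in F$ as one B\"uchi acceptance set over the subautomaton of $\aut{M}$ induced by $F$, with every transition leaving $F$ redirected to the sink $d_0$), as noted in the footnote on the construction.

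Write $F = \{q_0,\ldots,q_{m-1}\}$ with $q_0 = q$. For a fixed $w \in \Sigma^\omega$, let $\rho = q, s_1, s_2, \ldots$ be the run of $\aut{M}$ on $w$ from $q$ and $\rho' = p_0, p_1, p_2, \ldots$ the run of $B(\aut{M},F,q)$ on $w$. I would prove by induction on $t \geq 0$ that exactly one of the following holds: either $s_0,\ldots,s_t$ all lie in $F$ and $p_t$ is the state encoding the pair $(s_t,\tau_t)$ for a ``target'' $\tau_t \in [0..m-1]$ (the pair $(q_i,\tau)$ being named $q_i$ when $\tau = (i+1)\bmod m$ and $r_{i,\tau}$ otherwise); or some $s_{t'}$ with $t' \leq t$ lies outside $F$ and $p_t = d_0$. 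Here $\tau_0 \equiv 1 \bmod m$ (as forced by $p_0 = q_0$), the target is carried unchanged when $\aut{M}$ moves to a state whose index differs from the current target, and the target is advanced to $(\tau_t+1)\bmod m$ exactly when $s_{t+1} = q_{\tau_t}$; in particular $p_{t+1} = q_0$ precisely at those steps where $\aut{M}$ enters $q_0$ while $\tau_t = 0$, so the target wraps from $m-1$ back to $0$. The inductive step is a direct case analysis of $\delta'$ on $q_i$ and on $r_{i,j}$, split according to whether $\delta(q_i,\sigma)$ lies in $F$ and, if so, whether its index equals $\tau_t$.

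Granting the invariant, correctness follows. If $w \in L(\aut{M},F,q)$, then $\rho$ never leaves $F$, so $\rho'$ never reaches $d_0$, and since every $q_j$ occurs infinitely often in $\rho$ the target is advanced infinitely often, hence cycles through all of $[0..m-1]$ infinitely often, hence $\rho'$ visits the accepting state $q_0$ infinitely often and $w$ is accepted. Conversely, if $w$ is accepted then $\rho'$ visits $q_0 \neq d_0$ infinitely often, so $\rho'$ never reaches $d_0$ and (by the invariant) $\rho$ stays within $F$; moreover between any two consecutive visits of $\rho'$ to $q_0$ the target runs through $0,1,\ldots,m-1,0$, each advance witnessed by $\rho$ entering the corresponding $q_j$, so every $q_j$ occurs infinitely often in $\rho$, whence $w \in L(\aut{M},F,q)$.

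For part (ii): the reachable states are $F$, the set $R = \{r_{i,j} : i,j \in [0..m-1],\ i \neq j\}$ of size $|F|(|F|-1)$, and the sink $d_0$, since every transition out of a state of $F \cup R$ lands in $F \cup R \cup \{d_0\}$, so no state of $Q \setminus F$ is reachable from $q_0$; this gives $|F| + |F|(|F|-1) + 1 = |F|^2+1$ states (equivalently, the naming scheme above is a bijection $F \times [0..m-1] \to F \cup R$). For part (iii): $\delta'$ has one entry per (state, symbol) pair, each obtained by a single lookup in $\delta$ and a constant amount of arithmetic modulo $m$, so the construction runs in time $O(|F|^2 \cdot |\Sigma|)$ up to the cost of accessing $\delta$, which is polynomial in the size of $\aut{M}$. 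The main obstacle I anticipate is stating the invariant precisely enough to make the induction routine --- in particular handling the asymmetric naming of $(q_i,\tau)$ and the wrap-around that designates $q_0$ alone (rather than all of $F \cup R$) as accepting --- but beyond that the verification is mechanical, the conceptual core being just the standard generalized-B\"uchi-to-B\"uchi degeneralization.
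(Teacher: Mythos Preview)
Your proposal is correct and essentially matches the paper's approach: both argue via a simulation invariant in which the DBA state tracks the current $\aut{M}$-state together with a cyclic ``target'' index in $[0..m-1]$, and your explicit reference to generalized-B\"uchi degeneralization echoes the paper's own footnote. The only difference is packaging---the paper treats the two directions separately (forward via indexed prefix segments $w_{k,\ell}$, backward via the projection $s_n \mapsto t_n$ from the DBA run onto an $\aut{M}$-run) while you fold both into one invariant---but the content is the same.
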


\begin{proof}
	Suppose $w$ is in $L(\M,F,q)$.
	Let $q = s_0, s_1, s_2, \ldots$ be the sequence of states in the
	run of $\M$ from state $q$ on input $w$.
	This run visits only states in $F$ and visits each one of them
	infinitely many times.
	We next define a particular increasing sequence 
	$i_{k,\ell}$ of indices in $s$,
	where $k$ is a positive integer and $\ell \in [0,m-1]$.
	These indices mark particular visits to the states 
	$q_0, q_1, \ldots, q_{m-1}$
	in repeating cyclic order.
	The initial value is $i_{1,0} = 0$, marking the initial visit to $q_0$.
	If $i_{k,\ell}$ has been defined and $\ell < m-1$, then $i_{k,\ell+1}$
	is defined as the least natural number $j$ such that $j > i_{k,\ell}$
	and $s_j = q_{\ell+1}$, marking the next visit to $q_{\ell+1}$.
	If $\ell = m-1$, then $i_{k+1,0}$ is defined
	as the least natural number $j$ such that $j > i_{k,\ell}$ and
	$s_j = q_0$, marking the next visit to $q_0$.
	
	There is a corresponding division of $w$ into a concatenation
	of finite segments $w_{1,1},\allowbreak w_{1,2}, \ldots,  w_{1,m-1}, w_{2,0}, \ldots$
	between consecutive elements in the increasing sequence of indices.
	An inductive argument shows that in $B(\M,F,q)$, the
	prefix of $w$ up through $w_{k,\ell}$ arrives at the state $q_\ell$,
	so that $w$ visits $q_0$ infinitely often and is therefore
	accepted by $B(\M,F,q)$.
	
	Conversely, suppose $B(\M,F,q)$ accepts the $\omega$-word $w$.
	Let $s_0, s_1, s_2, \ldots$ be the run of $B(\M,F,q)$ on $w$,
	and let $t_0, t_1, t_2, \ldots$ be the run of $\M$ starting
	from $q$ on input $w$.
	An inductive argument shows that if $s_n = q_i$ then $t_n = q_i$, and
	if $s_n = r_{i,j}$ then $t_n = q_i$.
	Because the only way the run $s_0, s_1, \ldots$ can visit the final
	state $q_0$
	infinitely often is to progress through the states
	$q_0, q_1, \ldots q_{m-1}$ in repeating cyclic order, the run
	$t_0, t_1, \ldots$ must visit only states in $F$ and visit
	each of them infinitely often, so $w \in L(\M,F,q)$.
	
	The DBA $B(\M,F,q)$ has a dead state, and $|F|$ states for each
	element of $F$, for a total of $|F|^2 + 1$ states.
	The running time of the procedure $\alg{SCCtoDBA}$ is linear in the size of $\M$ and
	the size of the resulting DBA, which is polynomial in the size of $\M$.
\end{proof}

We now show that this construction may be used to reduce the inclusion of
two DMAs to the inclusion of a DBA and a DMA.
Recall that if $\A$ is an acceptor and $q$ is a state of $\A$,
then $\A^q$ denotes the acceptor $\A$ with the initial state changed
to $q$.

\begin{lemma}
	Let $\aut{U}_1$ be a DMA with automaton $\M_1$ and a single final
	state set $F_1$.
	Let $\aut{U}_2$ be an arbitrary DMA over the same alphabet as $\aut{U}_1$,
	with automaton $\M_2$ and family of final state sets $\mathcal{F}_2$.
	Let $\M$ denote the product automaton $\M_1 \times \M_2$ 
	with unreachable states removed.
	Then $\sema{\aut{U}_1} \subseteq \sema{\aut{U}_2}$ iff for
	every state $(q_1,q_2)$ of $\M$ with $q_1 \in F_1$ we have
	$\sema{B(\M_1,F_1,q_1)} \subseteq \sema{\aut{U}_2^{q_2}}$.
\end{lemma}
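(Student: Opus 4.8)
The whole argument rests on one bookkeeping fact about deterministic complete automata: for a word $w$, a finite prefix $x$, and $q_i = \delta_i((q_\iota)_i,x)$, the run of $\M_i$ on $xw$ from $(q_\iota)_i$ is the run on $x$ followed by the run of $\M_i^{q_i}$ on $w$, so $\infss{\M_i}(xw) = \infss{\M_i^{q_i}}(w)$. Combined with the preceding lemma, which gives $\sema{B(\M_1,F_1,q_1)} = L(\M_1,F_1,q_1)$, and with the definition of $L(\M_1,F_1,q_1)$ (the runs of $\M_1^{q_1}$ on such words visit exactly $F_1$ infinitely often), this is enough. First I would dispose of a degenerate case: I may assume $F_1$ is a reachable SCC of $\M_1$, since otherwise no run's infinity set equals $F_1$ (by Prop.~\ref{prop:inf-to-SCC} every such set is an SCC), so $\sema{\aut{U}_1}=\emptyset$ and likewise each $L(\M_1,F_1,q_1)=\emptyset$, and both sides of the ``iff'' hold vacuously.

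For the forward direction, I would assume $\sema{\aut{U}_1}\subseteq\sema{\aut{U}_2}$, fix a state $(q_1,q_2)$ of $\M$ with $q_1\in F_1$, and take $w\in\sema{B(\M_1,F_1,q_1)}=L(\M_1,F_1,q_1)$. Since every state of $\M$ is reachable, I can choose $x$ with $\delta_1((q_\iota)_1,x)=q_1$ and $\delta_2((q_\iota)_2,x)=q_2$. By the fact above, $\infss{\M_1}(xw)=\infss{\M_1^{q_1}}(w)=F_1$ (the run of $\M_1^{q_1}$ on $w$ stays in $F_1$ and hits each state of $F_1$ infinitely often), so $xw\in\sema{\aut{U}_1}\subseteq\sema{\aut{U}_2}$; hence $\infss{\M_2^{q_2}}(w)=\infss{\M_2}(xw)\in\mathcal{F}_2$, i.e.\ $w\in\sema{\aut{U}_2^{q_2}}$.

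For the reverse direction, I would assume the right-hand condition and take $y\in\sema{\aut{U}_1}$, so the run $\rho$ of $\M_1$ on $y$ has $\infss{\M_1}(\rho)=F_1$. I would choose $N$ so large that $\rho(n)\in F_1$ for all $n\ge N$ (possible because each state visited only finitely often is not visited past some point), write $y=xz$ with $|x|=N$, and set $q_1=\rho(N)\in F_1$ and $q_2=\delta_2((q_\iota)_2,x)$. Then $(q_1,q_2)$ is a state of $\M$, the run of $\M_1^{q_1}$ on $z$ is the tail $\rho(N),\rho(N{+}1),\ldots$, which stays inside $F_1$ and visits every state of $F_1$ infinitely often, so $z\in L(\M_1,F_1,q_1)=\sema{B(\M_1,F_1,q_1)}$. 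By hypothesis $z\in\sema{\aut{U}_2^{q_2}}$, so $\infss{\M_2}(y)=\infss{\M_2^{q_2}}(z)\in\mathcal{F}_2$, giving $y\in\sema{\aut{U}_2}$.

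I do not anticipate a real obstacle here; the one step needing care is the choice of the split point $N$ in the reverse direction, which uses that $\infss{\M_1}(\rho)$ equals $F_1$ exactly (so the run is eventually trapped in $F_1$) rather than merely containing or being contained in it. Everything else is the deterministic-run decomposition together with the definition of Muller acceptance and the cited property of $B(\M_1,F_1,q_1)$.
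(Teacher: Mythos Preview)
Your proof is correct and follows essentially the same approach as the paper: in one direction you prepend an access word to a word in $L(\M_1,F_1,q_1)$ to get a word in $\sema{\aut{U}_1}$, and in the other you split $y\in\sema{\aut{U}_1}$ at a point after which the $\M_1$-run is trapped in $F_1$. The paper argues both directions by contrapositive rather than directly, and it does not separately spell out the degenerate case where $F_1$ is not a reachable SCC, but the underlying argument is the same.
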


\begin{proof}
	Suppose that for some state $(q_1,q_2)$ of $\M$ with $q_1 \in F_1$, we have
	$w \in \sema{B(\M_1,F_1,q_1)} \setminus \sema{\aut{U}_2^{q_2}}$.
	Let $C_1$ be the set of states visited infinitely often in $B(\M_1,F_1,q_1)$
	on input $w$, and let $C_2$ be the set of states visited infinitely
	often in $\aut{U}_2^{q_2}$ on input $w$.
	Then $Q_1 \cap C_1 = F_1$ and $C_2 \not\in \mathcal{F}_2$.
	Let $u$ be a finite word such that $\M(u) = (q_1,q_2)$.
	Then $\infss{\M_1}(uw) = Q_1 \cap C_1 = F_1$ and $\infss{\M_2}(uw) = C_2$,
	so $uw \in \sema{\aut{U}_1} \setminus \sema{\aut{U}_2}$.
	
	Conversely,
	suppose that $w \in \sema{\aut{U}_1} \setminus \sema{\aut{U}_2}$.
	For $i = 1,2$ let $C_i = \infss{\M_i}(w)$.
	Note that $C_1 = F_1$ and $C_2 \not\in \mathcal{F}_2$.
	Let $w = xw'$, where $x$ is a finite prefix of $w$ that is
	sufficiently long that
	the run of $\M_1$ on $w$ does not visit any state outside $C_1$ after
	$x$ has been processed, and for $i = 1,2$ let $q_i = \M_i(x)$.
	Then $(q_1,q_2)$ is a (reachable) state of $\M$, $q_1 \in F_1$,
	and the $\omega$-word
	$w'$, when processed by $\M_1$ starting at state $q_1$ visits only
	states of $C_1 = F_1$ and visits each of them infinitely many times,
	that is, $w' \in \sema{B(\M_1,F_1,q_1)}$.
	Moreover, when $w'$ is processed by $\M_2$ starting at state $q_2$, the
	set of states visited infinitely often is $C_2$, 
	which is not in $\mathcal{F}_2$.
	Thus, $w' \in \sema{B(\M_1,F_1,q_1)} \setminus \sema{\aut{U}_2^{q_2}}$.
\end{proof}

To turn this into an algorithm to test inclusion for two DMAs,
$\aut{U}_1$ with automaton $\M_1$ and a single final state set $F_1$ that is
an SCC of $\M_1$ and
$\aut{U}_2$ with automaton $\M_2$,
we proceed as follows.
Construct the product automaton 
$\M = \M_1 \times \M_2$ with unreachable states removed,
and for each state $(q_1,q_2)$ of $\M$, if $q_1 \in F_1$,
construct the DBA $B(\M_1,F_1,q_1)$ and the DMA $\aut{U}_2^{q_2}$ and
test the inclusion of language accepted by the DBA in the language
accepted by the DMA.
If all of these tests return ``yes'', 
then the algorithm returns ``yes'' for
the inclusion question for $\aut{U}_1$ and $\aut{U}_2$.
Otherwise, for the first test 
that returns ``no'' and a witness $u(v)^\omega$,
the algorithm finds by breadth-first search a minimum length
finite word $u'$ such that $\M(u') = (q_1,q_2)$, 
and returns the witness $u'u(v)^\omega$.

Combining this with \Cref{prop:arbitrary-to-one-final-state-set}, 
we have the following.

\begin{theorem}
	\label{theorem:DMA-DMA-reduced-to-DBA-DMA}
	Let $\alg{L}$ be an algorithm to test inclusion for an arbitrary DBA
	and an arbitrary DMA over the same alphabet.
	There is an algorithm to test inclusion for an arbitrary pair
	of DMAs $\aut{U}_1$ and $\aut{U}_2$ over the same alphabet
	whose running time is linear in the sizes of $\aut{U}_1$ and
	$\aut{U}_2$ plus the time for at most $k \cdot |Q_1| \cdot |Q_2|$
	calls to the procedure $\alg{L}$, where $k$ is the number of final state
	sets in $\aut{U}_1$, and $Q_i$ is the state set of $\aut{U}_i$
	for $i = 1,2$.
\end{theorem}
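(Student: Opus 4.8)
The plan is to obtain the theorem by chaining together Proposition~\ref{prop:arbitrary-to-one-final-state-set}, the two lemmas just proved, and the procedure $\alg{SCCtoDBA}$. Given arbitrary DMAs $\aut{U}_1 = \la Q_1, \Sigma, (q_\iota)_1, \delta_1, \mathcal{F}_1 \ra$ and $\aut{U}_2$ over $\Sigma$, write $\mathcal{F}_1 = \{F_1,\dots,F_k\}$ and let $\aut{U}_{1,j}$ be $\aut{U}_1$ with the single final state set $F_j$. Since $\sema{\aut{U}_1} = \bigcup_{j=1}^{k}\sema{\aut{U}_{1,j}}$, we have $\sema{\aut{U}_1}\subseteq\sema{\aut{U}_2}$ iff $\sema{\aut{U}_{1,j}}\subseteq\sema{\aut{U}_2}$ for every $j\in[1..k]$, so by Proposition~\ref{prop:arbitrary-to-one-final-state-set} it suffices to solve $k$ inclusion instances in which the left DMA has a single final state set.

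For a fixed $j$, I would first test whether $F_j$ is an SCC of $\M_1$ (in the sense of Section~\ref{ssec:SCCs}). If it is not, then by Prop.~\ref{prop:inf-to-SCC} no $\omega$-word $w$ has $\infss{\M_1}(w) = F_j$, hence $\sema{\aut{U}_{1,j}} = \emptyset$ and the inclusion holds vacuously with no witness. Otherwise $F_j$ is an SCC and the second lemma above applies: build the product automaton $\M = \M_1 \times \M_2$ with unreachable states removed, and for each state $(q_1,q_2)$ of $\M$ with $q_1 \in F_j$, construct the DBA $B(\M_1,F_j,q_1)$ using $\alg{SCCtoDBA}$ together with the DMA $\aut{U}_2^{q_2}$, and call $\alg{A}$ on this pair. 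By the lemma, $\sema{\aut{U}_{1,j}}\subseteq\sema{\aut{U}_2}$ iff all of these calls answer ``yes''.

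If every call over all $j$ answers ``yes'', the algorithm returns ``yes''. Otherwise, take the first call that returns ``no'' together with a witness $u(v)^{\omega} \in \sema{B(\M_1,F_j,q_1)}\setminus\sema{\aut{U}_2^{q_2}}$; by breadth-first search in $\M$ find a shortest finite word $u'$ with $\M(u') = (q_1,q_2)$ and return $u'u(v)^{\omega}$. The forward direction of the second lemma, specialized to the reachable state $(q_1,q_2)$, shows $\infss{\M_1}(u'u(v)^{\omega}) = F_j \in \mathcal{F}_1$ and $\infss{\M_2}(u'u(v)^{\omega}) = \infss{\aut{U}_2^{q_2}}(u(v)^{\omega}) \notin \mathcal{F}_2$, so $u'u(v)^{\omega}$ is a genuine witness of non-inclusion. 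For the complexity claim: there are $k$ outer rounds, and in round $j$ the product $\M$ has at most $|Q_1|\cdot|Q_2|$ states, at most that many with first component in $F_j$, so in total at most $k\cdot|Q_1|\cdot|Q_2|$ calls to $\alg{A}$ are issued; the remaining work---forming the products, reachability pruning, the $\alg{SCCtoDBA}$ constructions, and a single breadth-first search---is polynomial in the sizes of $\aut{U}_1$ and $\aut{U}_2$ and is absorbed into the stated bound.

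I expect no deep obstacle here, since the theorem is essentially the composition of results already in hand. The two points that will require a little care are the degenerate case in which $F_j$ is not an SCC---which the single-final-state-set lemma is stated without, but which trivializes because $\sema{\aut{U}_{1,j}}$ is then empty---and the bookkeeping that lifts a DBA/DMA witness back to a $\sema{\aut{U}_1}\setminus\sema{\aut{U}_2}$ witness by prepending an \emph{access word} $u'$ for $(q_1,q_2)$, which is exactly the content of the forward direction of the preceding lemma.
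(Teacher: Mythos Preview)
Your proposal is correct and follows essentially the same approach as the paper: reduce via Proposition~\ref{prop:arbitrary-to-one-final-state-set} to $k$ single-final-state-set instances, then for each such instance apply the preceding lemma by ranging over reachable product states $(q_1,q_2)$ with $q_1\in F_j$, calling $\alg{A}$ on $B(\M_1,F_j,q_1)$ and $\aut{U}_2^{q_2}$, and lifting any witness by prepending an access word. Your explicit handling of the degenerate case where $F_j$ is not an SCC is a detail the paper leaves implicit, and your complexity accounting matches the claimed bound on the number of calls to $\alg{A}$.
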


\subsection{A DBA/DMA inclusion algorithm}
\label{ssec:dba-dma-inclusion}

In this section, we give a polynomial time algorithm 
\alg{DBAinDMA} to test inclusion
for an arbitrary DBA and an arbitrary DMA over the same alphabet.

Assume the inputs are
a DBA $\aut{B} = (\aut{M}_1,F_1)$ and a DMA $\aut{U} = (\aut{M}_2,{\mathcal  F})$.
The overall strategy of the algorithm 
is to seek an SCC $C$ of $\aut{M} = \aut{M}_1 \times \aut{M}_2$ such that 
$\pi_1(C) \cap F_1 \neq \emptyset$
and $\pi_2(C) \not\in \mathcal{F}$.
If such a $C$ is found, the algorithm calls 
$\alg{Witness}(C,\aut{M})$, which returns $u(v)^\omega$ such that
$\infss{\aut{M}}(u(v)^\omega) = C$.
Because $\infss{\M_1}(u(v)^\omega) = \pi_1(C)$ and 
$\pi_1(C) \cap F_1 \neq \emptyset$, $u(v)^{\omega} \in \sema{\aut{B}}$, and
because $\infss{\M_2}(u(v)^\omega) = \pi_2(C)$ 
and $\pi_2(C) \not\in \mathcal{F}$, $u(v)^{\omega} \notin \sema{\aut{U}}$.
The details are given in Algorithm~\ref{alg:DBAinDMA}.

\begin{algorithm}
{\small{
	\caption{$\alg{DBAinDMA}$}\label{alg:DBAinDMA}
	\begin{algorithmic}
		\Require{A DBA $\aut{B} = (\aut{M}_1,F_1)$ and a DMA $\aut{U} = (\aut{M}_2,{\mathcal  F})$}, where $Q_i$ is the state set of $\aut{M}_i$ for $i = 1,2$.
		\Ensure{$u(v)^{\omega} \in \sema{\aut{B}} \setminus \sema{\aut{U}}$ if such exists, else ``yes''.}
		
		\State{$\aut{M} = \aut{M}_1 \times \aut{M}_2$}
		\State ${\mathcal  C} \leftarrow \maxSCCs(Q_1 \times Q_2)$
		\For{each reachable $C \in \mathcal{C}$ such that $\pi_1(C) \cap F_1 \neq \emptyset$}
		\If{$\pi_2(C) \not\in {\mathcal  F}$}
		\State{\Return{$\alg{Witness}(C,\aut{M})$}}
		\Else
		\For{each $F \in {\mathcal  F}$ such that $F \subseteq \pi_2(C)$ and each $q \in F$}
		\State{$S \leftarrow \{(q_1,q_2) \in Q_1 \times Q_2 \mid q_1 \in \pi_1(C) \wedge q_2 \in F \setminus \{q\}\}$}
		\State{${\mathcal  D} \leftarrow \maxSCCs(S)$}
		\For{each $D \in {\mathcal  D}$}
		\If {$\pi_1(D) \cap F_1 \neq \emptyset$ and $\pi_2(D) \notin {\mathcal  F}$}
		\State{\Return{$\alg{Witness}(D,\aut{M})$}}
		\EndIf
		\EndFor
		\EndFor
		\EndIf
		\EndFor
		\State{\Return{``yes''}}		
	\end{algorithmic}
 }}
\end{algorithm}

\begin{theorem}
	\label{theorem:DBA-DMA-inclusion}
	The $\alg{DBAinDMA}$ algorithm runs in polynomial time and solves the inclusion problem
	for an arbitrary DBA $\aut{B}$ and an arbitrary DMA $\aut{U}$
	over the same alphabet.
\end{theorem}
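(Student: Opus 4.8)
The plan is to establish, for Algorithm~\ref{alg:DBAinDMA}, \emph{soundness} --- every $\omega$-word it outputs lies in $\sema{\aut{B}}\setminus\sema{\aut{U}}$ --- and \emph{completeness} --- it outputs some $\omega$-word whenever $\sema{\aut{B}}\not\subseteq\sema{\aut{U}}$ --- together with a polynomial running-time bound. Since all its loops are finite the algorithm terminates, and soundness plus completeness then show it returns ``yes'' exactly when $\sema{\aut{B}}\subseteq\sema{\aut{U}}$ and otherwise returns a witness, as the inclusion problem requires. Before starting I would invoke Claim~\ref{claim:make-automaton-complete} to assume $\aut{M}_1$ and $\aut{M}_2$ are complete (changing neither language and enlarging each automaton by at most $|\Sigma|$), so that $\aut{M}=\aut{M}_1\times\aut{M}_2$ is deterministic and complete; then each $w\in\Sigma^{\omega}$ has a unique run in $\aut{M}$ and, by the product construction, $\infss{\aut{M}_i}(w)=\pi_i(\infss{\aut{M}}(w))$ for $i=1,2$. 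I would also read the innermost test on $D$ as also requiring $D$ reachable in $\aut{M}$, matching the outer loop; this is harmless and is what the $\alg{Witness}$ calls need.

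Soundness is immediate. Every \textbf{return} of an $\omega$-word has the form $\alg{Witness}(X,\aut{M})$ for a reachable SCC $X$ of $\aut{M}$ that the algorithm has checked to satisfy $\pi_1(X)\cap F_1\neq\emptyset$ and $\pi_2(X)\notin\mathcal{F}$. By Prop.~\ref{prop:SCC-to-inf}, $w=\alg{Witness}(X,\aut{M})$ has $\infss{\aut{M}}(w)=X$, so $\infss{\aut{M}_1}(w)=\pi_1(X)$ meets $F_1$ (hence $\aut{B}$ accepts $w$) and $\infss{\aut{M}_2}(w)=\pi_2(X)\notin\mathcal{F}$ (hence $\aut{U}$ rejects $w$); thus $w\in\sema{\aut{B}}\setminus\sema{\aut{U}}$. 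In particular, if no witness exists the algorithm cannot return an $\omega$-word, so it returns ``yes''.

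For completeness, assume $\sema{\aut{B}}\not\subseteq\sema{\aut{U}}$ and fix $w_0\in\sema{\aut{B}}\setminus\sema{\aut{U}}$. By Prop.~\ref{prop:inf-to-SCC}, $D=\infss{\aut{M}}(w_0)$ is a reachable SCC of $\aut{M}$ with $\pi_1(D)\cap F_1\neq\emptyset$ and $\pi_2(D)\notin\mathcal{F}$; let $C$ be the unique maximal SCC of $\aut{M}$ containing $D$ (Prop.~\ref{prop:maxSCCs_disjoint}), which is reachable and has $\pi_1(C)\cap F_1\neq\emptyset$, so the outer loop processes $C$ (unless it has already returned a witness for an earlier SCC, which by soundness would also finish the proof). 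If $\pi_2(C)\notin\mathcal{F}$ the algorithm returns $\alg{Witness}(C,\aut{M})$. Otherwise $\pi_2(C)\in\mathcal{F}$ and $\pi_2(D)\subsetneq\pi_2(C)$, and the crux is a single refinement step: let $F^*$ be a $\subseteq$-minimal element of the nonempty family $\{F\in\mathcal{F}\mid\pi_2(D)\subseteq F\subseteq\pi_2(C)\}$ and pick $q\in F^*\setminus\pi_2(D)$ (nonempty, since $\pi_2(D)\notin\mathcal{F}$ forces $F^*\neq\pi_2(D)$). The pair $(F^*,q)$ is examined by the inner loop at $C$, yielding $S=\{(q_1,q_2)\mid q_1\in\pi_1(C),\ q_2\in F^*\setminus\{q\}\}$; from $\pi_1(D)\subseteq\pi_1(C)$, $\pi_2(D)\subseteq F^*$ and $q\notin\pi_2(D)$ we get $D\subseteq S$, so $D$ lies in a unique $D'\in\maxSCCs(S)$ (Prop.~\ref{prop:maxSCCs_disjoint}). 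Since $\emptyset\neq D\subseteq D'\cap C$, Prop.~\ref{prop:unionofSCCs} makes $D'\cup C$ an SCC, so maximality of $C$ gives $D'\subseteq C$; hence $D'$ is reachable and $\pi_1(D')\supseteq\pi_1(D)$ meets $F_1$. Finally $\pi_2(D')\notin\mathcal{F}$: were $\pi_2(D')$ some $F'\in\mathcal{F}$, then $\pi_2(D)\subseteq F'\subseteq F^*\setminus\{q\}\subsetneq F^*\subseteq\pi_2(C)$, contradicting the minimality of $F^*$. Thus $D'$ passes all the innermost tests and the inner loop returns an $\omega$-word.

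The main obstacle is exactly this jump. One might fear that refining $\pi_2(C)$ down towards $\pi_2(D)$ by deleting states one at a time passes through an arbitrarily long strictly decreasing chain of members of $\mathcal{F}$, which would seem to require recursion; choosing $F^*$ to be a $\subseteq$-minimal member of $\mathcal{F}$ sandwiched between $\pi_2(D)$ and $\pi_2(C)$ collapses this descent into the single level the algorithm performs, and this is the one place the argument is not routine. The remaining work --- polynomial time --- is bookkeeping: by Prop.~\ref{prop:maxSCCs-in-linear-time} each $\maxSCCs(\cdot)$ is computed in time linear in the size of $\aut{M}$ and consists of at most $|Q_1|\cdot|Q_2|$ pairwise disjoint sets; the outer loop runs at most $|Q_1|\cdot|Q_2|$ times, the inner double loop at most $|\mathcal{F}|\cdot|Q_2|$ times with $|\mathcal{F}|$ at most the size of $\aut{U}$, and each iteration performs only projections, intersections and differences on state sets of $\aut{M}$, membership tests against $\mathcal{F}$, reachability checks, and at most one call to $\alg{Witness}$ (polynomial by Prop.~\ref{prop:SCC-to-inf}). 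Hence $\alg{DBAinDMA}$ runs in time polynomial in the sizes of $\aut{B}$ and $\aut{U}$.
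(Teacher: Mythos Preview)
Your proof is correct and follows essentially the same approach as the paper: soundness is immediate from the tests guarding each \textbf{return}, and for completeness you (like the paper) choose a $\subseteq$-minimal $F^*\in\mathcal{F}$ sandwiched between $\pi_2(D)$ and $\pi_2(C)$, pick $q\in F^*\setminus\pi_2(D)$, and argue that the resulting $D'\in\maxSCCs(S)$ must have $\pi_2(D')\notin\mathcal{F}$ by minimality of $F^*$. Your additional care in making the automata complete, in observing that the innermost $D'$ must be checked for reachability (and verifying it is reachable because $D\subseteq D'$), and in arguing $D'\subseteq C$ via Prop.~\ref{prop:unionofSCCs}, are details the paper glosses over but which do no harm and slightly tighten the exposition.
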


\begin{proof}
	Suppose the returned value is a witness $u(v)^\omega$.
	Then the algorithm found
	an SCC $E$ with $\pi_1(E) \cap F_1 \neq \emptyset$ and
	$\pi_2(E) \not\in \mathcal{F}$ and returned $\alg{Witness}(E,\aut{M})$.
	In this case, the returned value is correct.
	
	Suppose for the sake of contradiction that the algorithm incorrectly returns the answer ``yes'', that is, 
	there exists an $\omega$-word $w$ such that $w \in \sema{\aut{B}}$ and
	$w \not\in \sema{\aut{U}}$.
	Let $C'$ denote $\infss{\aut{M}}(w)$.
	Then because $w \in \sema{\aut{B}}$, $\pi_1(C') \cap F_1 \neq \emptyset$,
	and because $w \not\in \sema{\aut{U}}$, $\pi_2(C') \not\in \mathcal{F}$.
	
	Then $C'$ is a subset of a unique SCC $C \in \maxSCCs(Q_1 \times Q_2)$ and
	$\pi_1(C) \cap F_1 \neq \emptyset$.
	It must be that $\pi_2(C) \in \mathcal{F}$, because otherwise
	the algorithm would have returned $\alg{Witness}(C,\aut{M})$.
	Consider the collection
	\[R = \{F \in {\mathcal  F} \mid  \pi_2(C') \subseteq F \subseteq \pi_2(C)\},\]
	of all the $F \in {\mathcal  F}$ contained in $\pi_2(C)$ that contain $\pi_2(C')$.
	The collection $R$ is nonempty because $C' \subseteq C$, and
	therefore $\pi_2(C') \subseteq \pi_2(C)$, and $\pi_2(C) \in \mathcal{F}$,
	so at least $\pi_2(C)$ is in $R$.
	Let $F'$ denote a minimal element of $R$ in the subset ordering.
	
	Then $\pi_2(C') \subseteq F'$ but because $\pi_2(C') \not\in \mathcal{F}$,
	it must be that $\pi_2(C') \neq F'$.
	Thus, there exists some $q \in F'$ that is not in $\pi_2(C')$.
	When the algorithm considers this $F'$ and $q$, then
	because $\pi_2(C') \subseteq F' \setminus \{q\}$, $C'$ is
	contained in $R$ and therefore is a subset of a unique SCC $D$ in $\maxSCCs(R)$.
	
	Because $C' \subseteq D$, and $\pi_1(C') \cap F_1 \neq \emptyset$, we
	have $\pi_1(D) \cap F_1 \neq \emptyset$.
	Also, $\pi_2(C') \subseteq \pi_2(D) \subseteq F'$, but because
	$q \not\in \pi_2(D)$, $\pi_2(D)$ is a proper subset of $F'$.
	When the algorithm considers this $D$, 
	because $\pi_1(D) \cap F_1 \neq \emptyset$,
	it must find that $\pi_2(D) \in \mathcal{F}$, or else it
	will return $\alg{Witness}(D,\aut{M})$.
	But then $\pi_2(D)$ is in $R$ and is a proper subset of $F'$, 
	contradicting our choice of $F'$ as a minimal element of $R$.
	Thus, if the algorithm outputs ``yes'', this is a correct answer.
\end{proof}

Combining \Cref{theorem:DMA-DMA-reduced-to-DBA-DMA},
\Cref{theorem:DBA-DMA-inclusion}, 
and the reduction of equivalence to inclusion, we have the following.

\begin{theorem}
	\label{theorem:inclusion-equivalence-for-DMAs}
	There are polynomial time algorithms to solve the inclusion and
	equivalence problems for two arbitrary DMAs
	over the same alphabet.
\end{theorem}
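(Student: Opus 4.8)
The plan is to assemble this result from the two inclusion reductions already established in this section, together with the standard reduction of equivalence to inclusion. First I would observe that Theorem~\ref{theorem:DBA-DMA-inclusion} furnishes a concrete polynomial time algorithm $\alg{A}$ that, given an arbitrary DBA $\aut{B}$ and an arbitrary DMA $\aut{U}$ over the same alphabet, decides whether $\sema{\aut{B}} \subseteq \sema{\aut{U}}$ and, in the negative case, returns an ultimately periodic witness $u(v)^{\omega}$. This $\alg{A}$ is exactly the black box required by Theorem~\ref{theorem:DMA-DMA-reduced-to-DBA-DMA}.

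Next I would apply Theorem~\ref{theorem:DMA-DMA-reduced-to-DBA-DMA} to $\alg{A}$: this yields an algorithm that decides inclusion for an arbitrary pair of DMAs $\aut{U}_1$ and $\aut{U}_2$ whose running time is linear in the sizes of $\aut{U}_1$ and $\aut{U}_2$ plus the cost of at most $k \cdot |Q_1| \cdot |Q_2|$ invocations of $\alg{A}$, where $k$ is the number of final state sets of $\aut{U}_1$ and $Q_i$ is the state set of $\aut{U}_i$. Since $k$, $|Q_1|$, and $|Q_2|$ are each bounded by the size of the input, and each call to $\alg{A}$ runs in time polynomial in the sizes of the DBA $B(\aut{M}_1,F_1,q_1)$ (which has $|F_1|^2+1$ states by the earlier lemma) and of $\aut{U}_2^{q_2}$, the overall running time is polynomial in the sizes of $\aut{U}_1$ and $\aut{U}_2$. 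It also produces a witness of non-inclusion when one exists, since Theorem~\ref{theorem:DMA-DMA-reduced-to-DBA-DMA} prepends a shortest access word $u'$ with $\aut{M}(u') = (q_1,q_2)$ to the witness returned by $\alg{A}$.

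Finally, for equivalence I would invoke the observation recorded in Section~\ref{ssec:inclusion-equivalence-definitions} that at most two calls to an inclusion algorithm decide equivalence: $\sema{\aut{U}_1} = \sema{\aut{U}_2}$ holds iff both $\sema{\aut{U}_1} \subseteq \sema{\aut{U}_2}$ and $\sema{\aut{U}_2} \subseteq \sema{\aut{U}_1}$ hold, and if either inclusion fails the witness returned is an ultimately periodic word accepted by exactly one of the two acceptors, hence a witness of non-equivalence. Two polynomial time calls compose to a polynomial time procedure.

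I do not expect a genuine obstacle here; the substance lies in Theorems~\ref{theorem:DMA-DMA-reduced-to-DBA-DMA} and~\ref{theorem:DBA-DMA-inclusion}, and what remains is bookkeeping to confirm that the parameters $k$, $|Q_1|$, $|Q_2|$, and the $B(\aut{M}_1,F_1,q_1)$ size blow-up all stay polynomial in the input, so that a polynomial number of polynomial time calls compose to a polynomial time algorithm.
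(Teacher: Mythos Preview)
Your proposal is correct and follows essentially the same approach as the paper, which simply states that the result is obtained by combining Theorem~\ref{theorem:DMA-DMA-reduced-to-DBA-DMA}, Theorem~\ref{theorem:DBA-DMA-inclusion}, and the reduction of equivalence to inclusion. Your additional bookkeeping about the polynomial bounds on $k$, $|Q_1|$, $|Q_2|$, and the size of $B(\aut{M}_1,F_1,q_1)$ is a welcome elaboration of what the paper leaves implicit.
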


\section{Computing the automaton \texorpdfstring{$\aut{M}_{\sim_L}$}{M\_\textasciitilde L}}
\label{sec:computing-right-congruence-automaton}

In this section we use polynomial time algorithms  to construct the automaton $\aut{M}_{\sim_L}$ of the right congruence relation $\sim_L$ of the language $L$ accepted by an acceptor $\aut{A}$ of one of the types DBA, DCA, DPA, DRA, DSA, or DMA.
This gives a polynomial time algorithm to test whether a given DBA (resp., DCA, DPA, DRA, DSA, DMA) is of type IBA (resp., ICA, IPA, IRA, ISA, IMA).

Recall that $\aut{A}^q$ is the acceptor $\aut{A}$
with the initial state changed to $q$.
If $q_1$ and $q_2$ are two states of $\aut{A}$,
testing the equivalence of $\aut{A}^{q_1}$ to $\aut{A}^{q_2}$
determines whether these two states have the same
right congruence class, and, if not, 
returns a witness $u(v)^\omega$ that is accepted 
from exactly one of the two states.
The following is a consequence of
Theorems~\ref{theorem:inclusion-equivalence-for-DPAs}, 
\ref{theorem:inclusion-equivalence-for-DRAs-DSAs}, and
\ref{theorem:inclusion-equivalence-for-DMAs}.

\begin{proposition}
	\label{prop:right-congruence-test}
	There is a polynomial time procedure to test 
	whether two states of an arbitrary
	DBA, DCA, DPA, DRA, DSA or DMA $\aut{A}$ have the same right congruence class, 
	returning the answer ``yes'' if they do, and
	returning ``no'' and a witness $u(v)^{\omega}$ accepted from
	exactly one of the states if they do not.
\end{proposition}

We now describe an algorithm
$\alg{RightCon}$ that takes as input a DBA (or DCA, DPA, DRA, DSA, or DMA) $\aut{A}$ accepting a language $L$ and returns a deterministic automaton $\aut{M}$ isomorphic to the right congruence automaton of $L$, i.e., $\aut{M}_{\sim_L}$.

\begin{algorithm} {\small{
	\caption{$\alg{RightCon}$}\label{alg:RightCon}
	\begin{algorithmic}
		\Require{An acceptor $\aut{A} = \la \Sigma, Q, q_{\iota}, \delta, \alpha \ra$ of type  DBA, DCA, DPA, DRA, DSA, or DMA.}
		\Ensure{A deterministic automaton $\aut{M}$ isomorphic to $\aut{M}_{\sim_L}$, where $L = \sema{\aut{A}}$.}
		
		\State{$Q' \leftarrow \{\varepsilon\}$}
		\State{$q_{\iota}' \leftarrow \varepsilon$}
		\State{$\delta'$ is initially undefined}
		\While{there exists $x \in Q'$ and $\sigma \in \Sigma$ such that $\delta'(x,\sigma)$ is undefined}
		\State{$q_1 \leftarrow \delta(q_{\iota},x\sigma)$}
		\If{there exists $y \in Q'$ such that $\sema{\aut{A}^{q_1}} = \sema{\aut{A}^{q_2}}$ for $q_2 = \delta(q_{\iota},y)$}
		\State{Define $\delta'(x,\sigma) = y$}
		\Else
		\State{$Q' \leftarrow Q' \cup \{x\sigma\}$}
		\State{Define $\delta'(x,\sigma) = x\sigma$}
		\EndIf
		\EndWhile
		\State{\Return{$\aut{M} = \la \Sigma, Q', q_{\iota}', \delta' \ra$}}        
		
	\end{algorithmic}
}}\end{algorithm}

Assume the input acceptor is $\aut{A} = \la \Sigma, Q, q_\iota, \delta, \alpha \ra$.
The $\alg{RightCon}$ algorithm constructs a deterministic automaton $\aut{M} = \la \Sigma, Q', q_\iota', \delta' \ra$ in which the states are elements of $\Sigma^*$ and $q_\iota' = \varepsilon$.
The set $Q'$ initially contains just $\varepsilon$, and $\delta'$ is completely undefined.

While there exists a word $x \in Q'$ and a symbol $\sigma \in \Sigma$
such that $\delta'(x,\sigma)$ has not yet been defined,
loop through the words $y \in Q'$ and ask whether the states
$\delta(q_\iota,x\sigma)$
and $\delta(q_\iota,y)$ have the same right congruence class
in $\aut{A}$.
If so, then define $\delta'(x,\sigma)$ to be $y$.
If no such $y$ is found, then the word $x\sigma$ is added as
a new state to $Q'$, and the transition $\delta'(x,\sigma)$ is defined to be $x\sigma$.

This process must terminate because the elements of $Q'$ represent distinct right congruence classes of $L$, and $\aut{M}_{\sim_L}$ cannot have more than $|Q|$ states.
When it terminates, the automaton $\aut{M} = \la \Sigma, Q', q_{\iota}', \delta' \ra$ is isomorphic to the right congruence automaton of $\aut{A}$, $\aut{M}_{\sim_L}$.

\begin{theorem}
	\label{theorem:right-con-algorithm}
	The $\alg{RightCon}$ algorithm with input an acceptor $\aut{A}$ (a DBA, DCA, DPA, DRA, DSA, or DMA) accepting $L$, runs in polynomial time and returns $\aut{M}$, a deterministic automaton isomorphic to $\aut{M}_{\sim_L}$,
\end{theorem}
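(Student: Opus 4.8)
The plan is to establish two things about $\alg{RightCon}$: that it terminates in polynomial time, and that its output $\aut{M}$ is isomorphic to $\aut{M}_{\sim_L}$. First I would reduce to the case that the input acceptor $\aut{A}$ is complete: by Claim~\ref{claim:make-automaton-complete} we may replace $\aut{A}$ by an equivalent complete acceptor of the same type with only a polynomial increase in size, and this does not change $L$ and hence does not change $\sim_L$. With $\aut{A}$ deterministic and complete, for every $w \in \Sigma^*$ we have $\sema{\aut{A}^{\delta(q_\iota,w)}} = \{z \in \Sigma^\omega \mid wz \in L\}$. The key lemma I would record is therefore: for $u,v \in \Sigma^*$, $u \sim_L v$ if and only if $\sema{\aut{A}^{\delta(q_\iota,u)}} = \sema{\aut{A}^{\delta(q_\iota,v)}}$. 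Two consequences follow immediately: (a) the equivalence test performed by the algorithm on $q_1 = \delta(q_\iota,x\sigma)$ and $q_2 = \delta(q_\iota,y)$ is exactly a test of whether $x\sigma \sim_L y$, and it runs in polynomial time by Prop.~\ref{prop:right-congruence-test} (which rests on Theorems~\ref{theorem:inclusion-equivalence-for-DPAs}, \ref{theorem:inclusion-equivalence-for-DRAs-DSAs}, and \ref{theorem:inclusion-equivalence-for-DMAs}); and (b) since $\delta(q_\iota,u) = \delta(q_\iota,v)$ implies $u \sim_L v$, the index of $\sim_L$ is at most the number of reachable states of $\aut{A}$, hence at most $|Q|$ (this is also the refinement fact recorded in the preliminaries).

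Next I would state and prove, by induction on the iterations of the while loop, the following invariants: (i) $\varepsilon \in Q'$ and $Q'$ is prefix-closed; (ii) any two distinct elements of $Q'$ lie in distinct $\sim_L$-classes; (iii) for every $x \in Q'$, the run of $\delta'$ from $q_\iota' = \varepsilon$ on the string $x$ is defined and ends in $x$; and (iv) whenever $\delta'(x,\sigma)$ is defined, $\delta'(x,\sigma) \sim_L x\sigma$. Each invariant holds trivially at initialization and is preserved by both branches of the loop body: the ``$\delta'(x,\sigma) = y$'' branch uses the key lemma for (iv), and the ``add $x\sigma$'' branch uses that $x$ was already in $Q'$ for (i) and (iii) and the failure of the search over $y \in Q'$ for (ii). Invariant (ii) together with the index bound gives $|Q'| \le |Q|$, so at most $|Q|$ states are ever added; the loop therefore executes at most $|\Sigma|\cdot|Q|$ times, each iteration performing at most $|Q|$ polynomial-time equivalence tests, which yields both termination and the claimed polynomial running time.

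Finally, for correctness: upon termination no transition is undefined, so $\aut{M}$ is complete. Define $\phi$ from $Q'$ to the state set of $\aut{M}_{\sim_L}$ by $\phi(x) = [x]_{\sim_L}$. By (ii), $\phi$ is injective; it sends $q_\iota' = \varepsilon$ to the initial state $[\varepsilon]_{\sim_L}$; and by (iv) it commutes with transitions, since $\phi(\delta'(x,\sigma)) = [x\sigma]_{\sim_L}$, which is the $\sigma$-successor of $[x]_{\sim_L}$ in $\aut{M}_{\sim_L}$. For surjectivity I would show that for every $w \in \Sigma^*$ the run of $\delta'$ from $\varepsilon$ on $w$ (which exists because $\aut{M}$ is complete) ends in some $x \in Q'$ with $x \sim_L w$; this is a short induction on $|w|$, using (iii) for the empty word and, at each step $w = w'\sigma$, invariant (iv) together with the fact that $\sim_L$ is a right congruence (so $x' \sim_L w'$ implies $x'\sigma \sim_L w'\sigma$). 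Hence every $\sim_L$-class, being $[w]_{\sim_L}$ for some $w$, equals $\phi(x)$ for some $x \in Q'$, so $\phi$ is a bijection and therefore an isomorphism of automata.

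The bookkeeping above is routine; the two points that require care are the key lemma identifying the algorithm's state-equivalence test with $\sim_L$ (which is where determinism and completeness of $\aut{A}$ are used) and the observation that, although different orders of processing undefined transitions may produce different string representatives in $Q'$, invariants (ii) and (iv) force the output to be isomorphic to the canonical automaton $\aut{M}_{\sim_L}$ regardless of those choices. The polynomial time bound is not delicate: it reduces entirely to the polynomial-time equivalence algorithms already established for DBAs, DCAs, DPAs, DRAs, DSAs, and DMAs.
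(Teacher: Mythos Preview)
Your proposal is correct and follows essentially the same approach as the paper, which gives only a two-sentence sketch before the theorem statement (termination because the elements of $Q'$ represent distinct $\sim_L$-classes bounded by $|Q|$, and polynomial time via Prop.~\ref{prop:right-congruence-test}). Your version is considerably more careful: you make explicit the reduction to a complete acceptor via Claim~\ref{claim:make-automaton-complete}, isolate the key lemma identifying the algorithm's equivalence test with $\sim_L$, and spell out the invariants and the isomorphism $\phi(x)=[x]_{\sim_L}$, none of which the paper writes down.
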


To test whether a given DBA (resp., DCA, DPA, DRA, DSA, DMA) $\aut{A}$ is an IBA (resp., ICA, IPA, IRA, ISA, IMA), we run the $\alg{RightCon}$ algorithm on $\aut{A}$ and test the returned automaton $\aut{M}$ for isomorphism with the automaton of $\aut{A}$.  
(Note that isomorphism can be checked in polynomial time by gradually constructing a map $h$ between states of $\aut{A}$ to states of $\aut{M}$. Initially $h(q_\iota)=q'_\iota$. Assume $h(q)=q'$ and $\delta(q,\sigma)=p$, $\delta(q',\sigma)=p'$. If $h(p)$ is defined and is different from $p'$ return ``non-isomorphic'', otherwise set $h(p)=p'$. This is repeated until all states are mapped and have been tested with respect to each letter of the alphabet.)
If they are isomorphic, then $\aut{A}$ is an IBA (resp. ICA, IPA, IRA, ISA, IMA), otherwise it is not.
This proves the following.

\begin{theorem}
	\label{theorem:testing-acceptor-informativeness}
	There is a polynomial time algorithm to test whether a given DBA (resp., DCA, DPA, DRA, DSA, DMA) is an IBA (resp., ICA, IPA, IRA, ISA, IMA).
\end{theorem}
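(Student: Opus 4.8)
The plan is to implement the recipe sketched above and argue it is correct and polynomial. Recall that, by definition, $\aut{A}$ of type DBA (resp.\ DCA, DPA, DRA, DSA, DMA) is an IBA (resp.\ ICA, IPA, IRA, ISA, IMA) precisely when $\aut{A}$ is complete and the automaton underlying $\aut{A}$ is isomorphic to $\aut{M}_{\sim_L}$, where $L = \sema{\aut{A}}$. In particular, whether $\aut{A}$ belongs to the class depends only on its automaton structure relative to the right congruence of the language it accepts; the specific acceptance condition matters only through $L$. So the algorithm is: (1) check that $\aut{A}$ is complete, which takes linear time; (2) run $\alg{RightCon}$ on $\aut{A}$ to obtain a deterministic automaton $\aut{M}$ isomorphic to $\aut{M}_{\sim_L}$, which runs in polynomial time by Theorem~\ref{theorem:right-con-algorithm} (it internally uses the polynomial time equivalence tests of Theorems~\ref{theorem:inclusion-equivalence-for-DPAs}, \ref{theorem:inclusion-equivalence-for-DRAs-DSAs}, and~\ref{theorem:inclusion-equivalence-for-DMAs}, one for each acceptor type); and (3) test whether the automaton of $\aut{A}$ is isomorphic to $\aut{M}$, answering ``yes'' iff it is.

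The one component that needs a word of justification is step (3), the isomorphism test for two deterministic complete automata over the same alphabet, each with a distinguished initial state. This is easy: there is at most one candidate bijection, obtained by a simultaneous breadth first search from the two initial states (once a state $q_1$ of one automaton is matched with a state $q_2$ of the other, for each $\sigma \in \Sigma$ one must match $\delta(q_1,\sigma)$ with $\delta'(q_2,\sigma)$). The search either discovers a conflict, in which case the automata are not isomorphic, or produces a map that one then verifies is a bijection preserving all transitions and the initial state; the whole check is linear in the sizes of the two automata. (Equivalently, since $\sim_{\aut{M}}$ always refines $\sim_L$ for a complete deterministic acceptor $\aut{A}$ with automaton $\aut{M}$ accepting $L$, one may instead check that $\aut{A}$ is complete, that all its states are reachable by a breadth first search, and that $\aut{A}$ has exactly $|\!\sim_L\!|$ states, the index already returned by $\alg{RightCon}$; note that isomorphism to $\aut{M}_{\sim_L}$ forces every state of $\aut{A}$ to be reachable, since every state of $\aut{M}_{\sim_L}$ is.)

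Putting these together, all three steps run in time polynomial in the size of $\aut{A}$, and the output is ``yes'' exactly when $\aut{A}$ is complete and its automaton is isomorphic to $\aut{M}_{\sim_{\sema{\aut{A}}}}$, i.e.\ exactly when $\aut{A}$ is an IBA (resp.\ ICA, IPA, IRA, ISA, IMA). The argument is uniform across the six acceptor types, the only difference being which equivalence algorithm $\alg{RightCon}$ invokes. There is no real obstacle here: the heavy lifting is done by Theorem~\ref{theorem:right-con-algorithm}, and the only subtlety worth flagging is that the isomorphism test must be performed on the underlying automata rather than on the acceptors, since IBA-membership is insensitive to the choice of acceptance condition of the given type.
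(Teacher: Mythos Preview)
Your proposal is correct and takes essentially the same approach as the paper: run $\alg{RightCon}$ to compute $\aut{M}_{\sim_L}$ and then test isomorphism with the automaton of $\aut{A}$. You add a bit more detail than the paper does, namely the explicit completeness check (which is part of the definition of IBA etc.\ but the paper omits from its proof sketch) and the justification that isomorphism of deterministic complete automata with initial states is decidable in linear time via synchronized BFS; both are welcome clarifications but do not change the underlying argument.
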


\section{Testing membership in $\class{IXA}$}
\label{sec:testing-membership-in-IX}

In the previous section we showed that there is a polynomial time algorithm to test whether a given DBA $\aut{B}$ is an IBA.
However, we can also ask the following harder question.  Given a DBA $\aut{B}$ that is not an IBA, is $\sema{\aut{B}} \in \class{IBA}$, that is, does there exist an IBA $\aut{B}'$ such that $\sema{\aut{B}'} = \sema{\aut{B}}$?
This section shows that there are such polynomial time algorithms for DBAs, DCAs, DPAs, DRAs, DSAs and DMAs.
The algorithms first compute the right congruence automaton $\aut{M} = \aut{M}_{\sim_L}$, where $L = \sema{\aut{A}}$, and then attempt to construct an acceptance condition $\alpha$ of the appropriate type such that $\sema{(\aut{M},\alpha)} = L$.

\subsection{Testing membership in $\class{IBA}$}
\label{ssec:testing-membership-in-IB}

We describe the algorithm $\alg{TestInIB}$ that takes as input a DBA $\aut{B}$ and returns an IBA accepting $\sema{\aut{B}}$ if $\sema{\aut{B}} \in \class{IBA}$, and otherwise returns ``no''.
By Claim~\ref{clm:basic-relations-between-omega-aut}~(\ref{claim:DBA-DCA-complement}), the case of a DCA is reduced to that of a DBA.

\begin{algorithm}
{\small{
	\caption{$\alg{TestInIB}$}\label{alg:TestInIB}
	\begin{algorithmic}
		\Require{A DBA $\aut{B}$.}
		\Ensure{If $\sema{\aut{B}} \in \class{IBA}$ then return an IBA accepting $\sema{\aut{B}}$, else return ``no''.}
		
		\State{$\aut{M} \leftarrow \alg{RightCon}(\aut{B})$}
		\State{$F \leftarrow \emptyset$}
		\For{each state $q$ of $\aut{M}$}
		\If{$\sema{(\aut{M},\{q\})} \subseteq \sema{\aut{B}}$}
		\State{$F \leftarrow F \cup \{q\}$}
		\EndIf
		\EndFor
		\If{$\sema{(\aut{M},F)} = \sema{\aut{B}}$}
		\State{\Return{$(\aut{M},F)$}}
		\Else
		\State{\Return{``no''}}
		\EndIf
		
	\end{algorithmic}
}}\end{algorithm}

\begin{theorem}
	\label{theorem:membership-in-IB}
	The algorithm $\alg{TestInIB}$ takes a DBA $\aut{B}$ as input, runs in polynomial time, and returns an IBA accepting $\sema{\aut{B}}$ if $\sema{\aut{B}} \in \class{IBA}$, and otherwise returns ``no''.
\end{theorem}

\begin{proof}
	The algorithm calls the $\alg{RightCon}$ algorithm, and also the inclusion and equivalence algorithms from \Cref{theorem:inclusion-equivalence-for-DBAs-DCAs}, which run in polynomial time in the size of $\aut{B}$.
	If the algorithm returns an acceptor, it is an IBA accepting $\sema{\aut{B}}$.  
	
	To see that the algorithm does not incorrectly return the answer ``no'', suppose $\aut{B}'$ is an IBA accepting $\sema{\aut{B}}$.  Then because $\aut{M}$ is isomorphic to $\aut{M}_{\sim_L}$, we may assume that $\aut{B}' = (\aut{M},F')$.
	For every state $q \in F'$, the inclusion query with $(\aut{M}, \{q\})$ will answer ``yes'', so $q$ will be added to $F$.  Thus, $F' \subseteq F$, and $\sema{(\aut{M},F)}$ subsumes $\sema{(\aut{M},F')}$.  Every state $q$ added to $F$ preserves the condition that $\sema{(\aut{M},F)}$ is a subset of $\sema{\aut{B}}$, so the final equivalence check will pass, and $(\aut{M},F)$ will be returned.
\end{proof}

\subsection{Testing membership in $\class{IPA}$}
\label{ssec:testing-membership-in-IP}

We describe the algorithm $\alg{TestInIP}$ that takes as input a DPA $\aut{P}$ and returns an IPA accepting $\sema{\aut{P}}$ if $\sema{\aut{P}} \in \class{IPA}$, and otherwise returns ``no''.

\begin{algorithm}
{\small{
	\caption{$\alg{TestInIP}$}\label{alg:TestInIP}
	\begin{algorithmic}
		\Require{A DPA $\aut{P}$.}
		\Ensure{If $\sema{\aut{P}} \in \class{IPA}$ then return an IPA accepting $\sema{\aut{P}}$, else return ``no''.}
		
		\State{$\aut{M} = \la \Sigma, Q, q_\iota, \delta \ra \leftarrow \alg{RightCon}(\aut{P})$}
		\State{Define $\kappa(q) = 0$ for all states $q \in Q$}
		\For{$k = 1$ to $|Q|$}
		\If{$\sema{(\aut{M},\kappa)} = \sema{\aut{P}}$}
		\State{\Return{$(\aut{M},\kappa)$}}
		\ElsIf{$k$ is odd}
		\While{$\sema{\aut{P}}$ is not a subset of $\sema{(\aut{M},\kappa)}$}
		\State{Let $u(v)^{\omega}$ be the returned witness}
		\State{Define $\kappa(q) = k$ for all $q \in \infss{\aut{M}}(u(v)^{\omega})$}
		\EndWhile
		\Else
		\While{$\sema{\aut{P}}$ is not a superset of $\sema{(\aut{M},\kappa)}$}
		\State{Let $u(v)^{\omega}$ be the returned witness}
		\State{Define $\kappa(q) = k$ for all $q \in \infss{\aut{M}}(u(v)^{\omega})$}
		\EndWhile
		\EndIf
		\EndFor
		\State{\Return{``no''}}
		
	\end{algorithmic}
}}\end{algorithm}

\begin{theorem}
	\label{theorem:membership-in-IP}
	The algorithm $\alg{TestInIP}$ takes a DPA $\aut{P}$ as input, runs in polynomial time, and returns an IPA accepting $\sema{\aut{P}}$ if $\sema{\aut{P}} \in \class{IPA}$, and otherwise returns ``no''.
\end{theorem}

\begin{proof}
	The algorithm calls the $\alg{RightCon}$ algorithm 
	and the inclusion and equivalence algorithms for DPAs from \Cref{theorem:inclusion-equivalence-for-DPAs}, which run in polynomial time in the size of $\aut{P}$.  Below we show that each \textbf{while} loop terminates after at most $|Q|$ iterations.
	If the algorithm returns an acceptor, then the acceptor is an IPA accepting $\sema{\aut{P}}$.  
	
	To see that the algorithm does not incorrectly return the answer ``no'', suppose $\aut{P}'$ is an IPA accepting $\sema{\aut{P}}$.
	We may assume that $\aut{P}' = (\aut{M}, \kappa^*)$, where $\kappa^*$ is the canonical coloring of $\aut{P}'$.
	We prove inductively that the final coloring $\kappa$ is equal to $\kappa^*$.  To do so, we consider the conditions after the \textbf{for} loop has been completed $\ell$ times: (1) if $\ell$ is even then $\sema{(\aut{M},\kappa)} \subseteq \sema{\aut{P}}$, and if $\ell$ is odd, then $\sema{\aut{P}} \subseteq \sema{(\aut{M}, \kappa)}$, and (2) for all $q \in Q$, if $\kappa^*(q) \le \ell$ then $\kappa(q) = \kappa^*(q)$, and if $\kappa^*(q) > \ell$ then $\kappa(q) = \ell$.
	
	The initialization of $\kappa(q) = 0$ for all $q \in Q$ implies that these two conditions hold for $\ell = 0$.  Suppose the conditions hold for some $\ell \ge 0$.  If the equivalence check at the start of the next iteration returns ``yes'' then the correct IPA $(\aut{M},\kappa)$ is returned.  Otherwise, $k = \ell+1$; we consider the cases of odd and even $k$.
	
	If $k$ is odd, then by condition (1), $\sema{(\aut{M},\kappa)} \subsetneq \sema{\aut{P}}$ and at least one witness $u(v)^{\omega}$ accepted by $\aut{P}$ and rejected by $\sema{(\aut{M},\kappa)}$ will be processed in the \textbf{while} loop.  Consider such a witness $u(v)^{\omega}$ and let $C = \infss{\aut{M}}(u(v)^{\omega})$.  Then because $\kappa(q) = \kappa^*(q)$ if $\kappa(q) \le \ell$, it must be that $\kappa^*(C) > \ell$ and $\kappa(C) = \ell$.  For all $q \in C$, $\kappa(q)$ is set to $k = \ell+1$, so at least one state changes $\kappa$-color from $\ell$ to $\ell$+1.  This can happen at most $|Q|$ times, so the \textbf{while} loop for this $k$ must terminate after at most $|Q|$ iterations.  No state $q$ with $\kappa^*(q) \le \ell$ has its $\kappa$-value changed, so when the \textbf{while} loop is terminated, we have that $\kappa^*(q) \le \ell$ implies $\kappa(q) = \kappa^*(q)$.
	
	Consider any state $q$ with $\kappa^*(q) \ge \ell+1$. By property (2), at the start of this iteration of the \textbf{for} loop, $\kappa(q) = \ell$. Referring to the canonical forest $\forest{F}^*$ for $\aut{P}'$, the state $q$ is in $\Delta(D)$ for some node $D$ of $\forest{F}^*$.  The node $D$ is a descendant (or possibly equal to) some node $C$ for which the states $q \in \Delta(C)$ all have $\kappa^*(q) = \ell+1$.  Thus, as long as the value of $\kappa(q)$ remains $\ell$, the SCC $C$ will have $\kappa(C)$ even and $\kappa^*(C)$ odd, and the \textbf{while} loop cannot terminate.  But we have shown that it does terminate, so after termination we must have $\kappa(q) = k = \ell+1$.  Thus, after this iteration of the \textbf{for} loop, property (2) holds for $\ell+1$.
	
	The case of even $k$ is dual to the case of odd $k$.  Because the range of $\kappa^*$ is $[0..j]$ for some $j \le |Q|$, the equivalence test must return ``yes'' before the \textbf{for} loop completes, at which point the IPA $(\aut{M},\kappa)$ is returned.
\end{proof}

\subsection{Testing membership in $\class{IRA}$}
\label{ssec:testing-membership-in-IR}

We describe the algorithm $\alg{TestInIR}$ that takes as input a DRA $\aut{R}$ and returns an IRA accepting $\sema{\aut{R}}$ if $\sema{\aut{R}} \in \class{IRA}$, and otherwise returns ``no''.
By Claim~\ref{clm:basic-relations-between-omega-aut}~(\ref{claim:DRA-DSA-complement}), the case of a DSA is reduced to that of a DRA.

We first show that given a DRA $\aut{R}$ such that $\sema{\aut{R}} \in \class{IRA}$, there is an IRA equivalent to $\aut{R}$ whose size is bounded by a polynomial in the size of $\aut{R}$.
\begin{lemma}
	\label{lemma:IRA-bound-from-DRA}
	Let $\aut{R}$ be a DRA in singleton normal form whose acceptance condition has $m$ pairs, and assume $\sema{\aut{R}} \in \class{IRA}$.  Let $\aut{M}$ be the right congruence automaton of $\sema{\aut{R}}$ with state set $Q$ and assume $|Q| = n$.  Then there exists an acceptance condition $\alpha$ in singleton normal form with at most $mn$ pairs such that $(\aut{M},\alpha)$ accepts $\sema{\aut{R}}$.
\end{lemma}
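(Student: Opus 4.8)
The plan is to run the IRA characteristic-sample construction of Section~\ref{ssec:T-Acc-IRA} and the learning algorithm $\alg{L}_{Acc}^{\mbox{\scriptsize{IRA}}}$ ``in the background'' on the right-congruence automaton of $L=\sema{\aut{R}}$, using $\aut{R}$ itself to answer the membership/inclusion questions, and then to bound the size of the resulting acceptance condition. Let $\aut{M}$ be an automaton isomorphic to $\aut{M}_{\sim_L}$, so $\aut{M}$ has $n$ states; since $\sema{\aut{R}}\in\class{IR}$ we may fix a Rabin acceptance condition $\alpha_I$ on $\aut{M}$ in singleton normal form with $\sema{(\aut{M},\alpha_I)}=L$ (used only in the correctness argument, not in the size bound). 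Because acceptance of every acceptor type we consider depends only on the set of states visited infinitely often, each SCC $C$ of $\aut{M}$ has a well-defined \emph{status}: call $C$ \emph{accepting} if the words $w$ with $\infss{\aut{M}}(w)=C$ lie in $L$, and \emph{rejecting} otherwise (well defined because $(\aut{M},\alpha_I)$ recognizes $L$). Let $\aut{M}_R$ be the automaton of $\aut{R}$ with unreachable states removed, and let $h\colon\aut{M}_R\to\aut{M}$ be the canonical automaton homomorphism, which exists because $\sim_{\aut{M}_R}$ refines $\sim_L$; then $\infss{\aut{M}}(w)=h(\infss{\aut{M}_R}(w))$ for every $\omega$-word $w$.

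\textbf{The construction and its correctness.} Starting from $\gamma_0=\emptyset$, as long as $\sema{(\aut{M},\gamma_k)}\ne L$, let $C$ be the $\preceq$-largest accepting SCC of $\aut{M}$ not \emph{covered} by $\gamma_k$ (where $\gamma$ covers $C$ iff $C$ satisfies some pair of $\gamma$), adjoin to $\gamma_k$ all pairs $(q,Q\setminus C)$ with $q\in C$ such that no $\omega$-word $w$ with $q\in\infss{\aut{M}}(w)\subseteq C$ is rejected (call such $q$ \emph{sound for $C$}), and let $\gamma_{k+1}$ be the result; let $\gamma$ be the final condition. This is exactly the behaviour of $\alg{L}_{Acc}^{\mbox{\scriptsize{IRA}}}$ on the sample of $\alg{Witness}$es it would receive, together with the companion construction of $T_{Acc}^{\mbox{\scriptsize{IRA}}}$. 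For correctness, every accepting SCC $C$ has a sound state: if $(g,b)\in\alpha_I$ witnesses that $C$ is accepting ($g\in C$, $b\cap C=\emptyset$), then every SCC $D^*$ with $g\in D^*\subseteq C$ has $b\cap D^*=\emptyset$, hence satisfies $(g,b)$ and is accepting, so $g$ is sound for $C$. Consequently each iteration covers the $C$ it chose, the number of uncovered accepting SCCs strictly decreases, the construction terminates, and---since only sound pairs are ever added and every accepting SCC ends up covered---$\gamma$ is a singleton-normal-form Rabin condition with $\sema{(\aut{M},\gamma)}=L$. As in the monotonicity argument behind Prop.~\ref{prop:bound-on-T-Acc}, successive iterations handle strictly $\preceq$-decreasing SCCs.

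\textbf{The size bound, and the main obstacle.} Each iteration adjoins at most $|C|\le n$ pairs, so it suffices to bound the number $t$ of iterations by $m$. I would adapt the proof of Prop.~\ref{prop:bound-on-T-Acc}, taking $\aut{R}$'s $m$-pair condition $\alpha_R$ as the ``target'' and reading ``$\gamma$ covers the pair $(q',B')$ of $\alpha_R$'' as: every $\omega$-word $w$ with $q'\in\infss{\aut{M}_R}(w)$ and $B'\cap\infss{\aut{M}_R}(w)=\emptyset$ satisfies $\gamma$; showing each iteration makes $\gamma_{k+1}$ cover a pair of $\alpha_R$ that $\gamma_k$ did not then gives $t\le m$. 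The iteration handling $C$ uses a witness $w$ with $\infss{\aut{M}}(w)=C$; being in $\sema{\aut{R}}$, $w$ is accepted by $\aut{R}$ via some pair $(q'_j,B'_j)$, so $q'_j\in C_R:=\infss{\aut{M}_R}(w)$, $B'_j\cap C_R=\emptyset$, and $h(C_R)=C$. The crux---and, I expect, the main obstacle---is the \emph{lifting claim}: $h(q'_j)$ is sound for $C$, i.e.\ every SCC $D^*$ of $\aut{M}$ with $h(q'_j)\in D^*\subseteq C$ is accepting. Granting it, $(h(q'_j),Q\setminus C)$ is among the pairs adjoined at this iteration, and the Prop.~\ref{prop:bound-on-T-Acc} argument transfers: a putative counterexample $w'$ witnessing that $\gamma_{k+1}$ fails to cover $(q'_j,B'_j)$ has $h(q'_j)\in\infss{\aut{M}}(w')$ yet does not satisfy $(h(q'_j),Q\setminus C)$, forcing $\infss{\aut{M}}(w')\not\subseteq C$; then $\infss{\aut{M}_R}(w)\cup\infss{\aut{M}_R}(w')$ is an SCC of $\aut{M}_R$ (they share $q'_j$) disjoint from $B'_j$ whose $h$-image $C\cup\infss{\aut{M}}(w')$ strictly contains $C$, is accepting, and hence---being $\preceq$-larger than $C$---was already covered by $\gamma_k$; chasing the pair $(p,Q\setminus C^\circ)$ of $\gamma_k$ that covers it yields a contradiction with either $w\notin\sema{(\aut{M},\gamma_k)}$ (if $p\in C$) or $w'\notin\sema{(\aut{M},\gamma_{k+1})}$ (if $p\in\infss{\aut{M}}(w')$), exactly as in Prop.~\ref{prop:bound-on-T-Acc}. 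To prove the lifting claim, let $G'_j$ be the (unique, by Prop.~\ref{prop:unionofSCCs}) maximal SCC of $\aut{M}_R$ containing $q'_j$ and disjoint from $B'_j$; then $C_R\subseteq G'_j$ and $D^*\subseteq C=h(C_R)\subseteq h(G'_j)$, and one must show $D^*$ lifts to an SCC $E$ of $\aut{M}_R$ with $q'_j\in E\subseteq G'_j$ and $h(E)=D^*$, whence $E$ is disjoint from $B'_j$ and $D^*=h(E)$ is accepting. Both ingredients---the maximality of $G'_j$ and the fact that $\aut{M}$ is the right-congruence automaton of $L$---appear to be essential (the naive version using an arbitrary SCC in place of $G'_j$, and quotients of automata that are not right-congruence automata, can fail), and making this lifting step precise is the delicate point of the proof.
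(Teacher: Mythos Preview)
Your approach is genuinely different from the paper's, and the gap you flag yourself---the ``lifting claim''---is real and, as far as I can see, not closed by anything in your writeup.

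The paper does not try to bound the number of iterations of a global construction by $m$.  Instead it works pair-by-pair on the \emph{source} side: for each pair $(q,B)\in\alpha_R$ it directly builds a condition $\alpha'$ on $\aut{M}$ with at most $n$ pairs such that $\sema{(\aut{M}_R,\{(q,B)\})}\subseteq\sema{(\aut{M},\alpha')}\subseteq\sema{\aut{R}}$, and then takes the union over the $m$ pairs.  The construction of $\alpha'$ is a descending chain $C_0\supseteq C_1\supseteq\cdots$ of SCCs \emph{in $\aut{M}_R$}, each containing $q$ and disjoint from $B$; at step $i$ one uses the (possibly large) IRA condition $\alpha_I$ on $\aut{M}$ to pick a state $q_i\in f(C_i)$ and adds the pair $(q_i,Q\setminus f(C_i))$, then defines $C_{i+1}$ as the maximal SCC of $\aut{M}_R$ inside $C_i$ containing $q$ with $q_i\notin f(C_{i+1})$.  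Because each step removes a fresh state $q_i$ from the image $f(C_i)$, the $q_i$ are pairwise distinct elements of $Q$, giving the bound of $n$ per pair of $\alpha_R$.  Nothing like your lifting claim is needed: the chain lives entirely in $\aut{M}_R$ (so containment of $q$ and disjointness from $B$ are automatic), and the only appeal to $\aut{M}$ is through $\alpha_I$, which is guaranteed to fire on each $f(C_i)$.

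By contrast, your bound of $m$ iterations rests on showing that each iteration newly covers a pair $(q'_j,B'_j)$ of $\alpha_R$, and for Case~A of your contradiction argument you need $(h(q'_j),Q\setminus C)$ to actually be among the pairs adjoined---i.e., $h(q'_j)$ must be sound for $C$.  Your proposed route is to lift an arbitrary SCC $D^*\subseteq C$ containing $h(q'_j)$ to an SCC $E\subseteq G'_j$ of $\aut{M}_R$ containing $q'_j$ with $h(E)=D^*$.  But nothing forces the $\aut{M}_R$-run from $q'_j$ on a word realizing $D^*$ to cycle through $q'_j$, or to stay inside $G'_j$: the run ends at some $p$ with $h(p)=h(q'_j)$, and right-congruence equivalence of $q'_j$ and $p$ tells you only that the \emph{same} $\omega$-suffix is accepted from both, not that the SCC you eventually settle into contains $q'_j$ or avoids $B'_j$.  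I do not see how to close this without a substantial new argument, and the paper's proof shows it is unnecessary: by anchoring the descending chain in $\aut{M}_R$ at the fixed state $q$ of the $\alpha_R$-pair, one never has to lift anything back up.
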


\begin{proof}
	Let $\aut{R} = (\aut{M}_1,\alpha_1)$, where all the states of $\aut{M}_1$ are reachable, and let the function $f$ map each state of $\aut{M}_1$ to the state of its right congruence class in $\aut{M}$.  It suffices to show that for each $(q,B) \in \alpha_1$ there exists an acceptance condition $\alpha'$ of $\aut{M}$ containing at most $n$ pairs such that $\sema{(\aut{M}_1,\{(q,B)\})} \subseteq \sema{(\aut{M},\alpha')} \subseteq \sema{\aut{R}}$.  Taking the union of these $\alpha'$ conditions for all $m$ pairs $(q,B) \in \alpha_1$ yields the desired acceptance condition $\alpha$ for $\aut{M}$.
	
	Because we assume $\sema{\aut{R}} \in \class{IRA}$, there exists an IRA $(\aut{M},\alpha_2)$ in singleton normal form that accepts $\sema{\aut{R}}$.  Given any $u(v)^{\omega}$ in $\sema{\aut{R}}$, let $C = \infss{\aut{M}_1}(u(v)^{\omega})$.  Then $f(C) = \infss{\aut{M}}(u(v)^{\omega})$ and there exists $(q',B') \in \alpha_2$ such that $q' \in f(C)$, and $f(C) \cap B' = \emptyset$.  Then also $\sema{(\aut{M},\{(q',Q \setminus f(C))\})} \subseteq \sema{\aut{R}}$.  To see this, consider any $u'(v')^{\omega}$ with $D = \infss{\aut{M}}(u'(v')^{\omega})$ and $q' \in D$ and $D \cap (Q \setminus f(C)) = \emptyset$.  Then $D$ is a subset of $f(C)$, $D \cap B' = \emptyset$, $u'(v')^{\omega}$ satisfies $(q',B')$, and $u'(v')^{\omega} \in \sema{\aut{R}}$.
	
	Given a pair $(q,B) \in \alpha_1$ the construction of the initially empty acceptance condition $\alpha'$ proceeds as follows.  Let $C_0$ be the maximum SCC of $\aut{M}_1$ that contains $q$ and contains no element of $B$.  If $C_0$ is empty, then $(\aut{M}_1,\{(q,B)\})$ does not accept any words, and the empty condition $\alpha'$ suffices.  If $C_0$ is nonempty, then there is an element $u(v)^{\omega}$ of $\sema{(\aut{M}_1,\{(q,B)\}}$ such that $C_0 = \infss{\aut{M}_1}(u(v)^{\omega})$ and there is a pair $(q_0,B_0)$ in $\alpha_2$ such that $q_0 \in f(C_0)$ and $B_0 \cap f(C_0) = \emptyset$.  We add the pair $(q_0, Q \setminus f(C_0))$ to $\alpha'$ and note that by the argument in the preceding paragraph, $\sema{(\aut{M},\alpha')} \subseteq \sema{\aut{R}}$.
	
	If $\sema{(\aut{M}_1,\{(q,B)\}} \subseteq \sema{(\aut{M},\alpha')}$ then $\alpha'$ is the desired acceptance condition.  If not, there exists a word $u(v)^{\omega}$ such that for $C = \infss{\aut{M}_1}(u(v)^{\omega})$ we have $q \in C$ and $C \cap B = \emptyset$, but either $q_0 \not\in f(C)$ or $f(C) \cap (Q \setminus f(C_0)) \neq \emptyset$.  Because $C_0$ is the maximum SCC of $\aut{M}_1$ containing $q$ and containing no element of $B$, we have $C \subseteq C_0$, so  $f(C) \subseteq f(C_0)$ and therefore $q_0 \not\in f(C)$.  Let $C_1$ be the maximum SCC $C$ of $\aut{M}_1$ such that $C \subseteq C_0$, $q \in C$, and $q_0 \not\in f(C)$.  This is not empty, so there is a word $u'(v')^{\omega}$ such that $C_1 = \infss{\aut{M}_1}(u'(v')^{\omega})$, which is in $\sema{\aut{R}}$ because it satisfies $(q,B)$.  Thus there exists a pair $(q_1,B_1)$ in $\alpha_2$ that is satisfied by $u'(v')^{\omega}$, and we add the pair $(q_1,Q \setminus f(C_1))$ to the acceptance condition $\alpha'$.  As above, we have $\sema{(\aut{M},\alpha')} \subseteq \sema{\aut{R}}$.
	
	If now $\sema{(\aut{M}_1,\{(q,B)\})} \subseteq \sema{(\aut{M},\alpha')}$, then $\alpha'$ is the desired acceptance condition.  If not, we repeat this step again.  In general, after $k$ steps of this kind, $\alpha'$ consists of $k$ pairs of the form $(q_i,Q \setminus f(C_i))$ for $i \in [0..k-1]$, where all of the states $q_i$ are distinct and $C_{i+1} \subseteq C_i$ for $i \in [0,k-2]$.  Because $Q$ has $n$ states, there can only be $n$ repetitions of this step before $\alpha'$ satisfies the required condition, and thus $\alpha'$ has at most $n$ pairs.
\end{proof}

The algorithm $\alg{TestInIR}$ is based on the algorithm to learn Horn sentences by Angluin, Frazier, and Pitt~\cite{AngluinFP92}, using the analogy between singleton normal form for Rabin automata and propositional Horn clauses.  $\alg{TestInIR}$ maintains for each state $q$ of the right congruence automaton $\aut{M}$ an ordered sequence $S_q$ of SCCs of $\aut{M}$, each of which corresponds to a positive example of $\sema{\aut{R}}$.  At each iteration, the algorithm uses these sequences and inclusion queries with $\sema{\aut{R}}$ to construct an acceptance condition $\alpha$ for a hypothesis $(\aut{M},\alpha)$, which it tests for equivalence to $\aut{R}$.  In the case of non-equivalence, the witness is a positive example of $\sema{\aut{R}}$ that is used to update the sequences $S_q$.

In $\alg{TestInIR}$ the test of whether $C \cup C_i$ is positive is implemented by calling $\alg{Witness}(C \cup C_i)$ and testing the resulting word $u(v)^{\omega}$ for membership in $\sema{\aut{R}}$.

\begin{algorithm}
{\small{
	\caption{$\alg{TestInIR}$}\label{alg:TestInIR}
	\begin{algorithmic}
		\Require{A DRA $\aut{R} = (\aut{M}_1,\alpha_1)$ in singleton normal form with $|\alpha_1| = m$.}
		\Ensure{If $\sema{\aut{R}} \in \class{IRA}$ then return an IRA accepting $\sema{\aut{R}}$, else return ``no''.}
		
		\State{$\aut{M} \leftarrow \alg{RightCon}(\aut{R})$}
		\State{Let $Q$ be the states of $\aut{M}$ and $n = |Q|$}
		\State{For each $q \in Q$ initialize a sequence $S_q$ to be empty}
		\For{$k = 1$ to $mn^3$}
		\For{all $q \in Q$}
		\State{$\alpha_q = \emptyset$}
		\For{all $C \in S_q$}
		\For{all $q' \in C$}
		\If{$\sema{(\aut{M},\{(q',Q \setminus C)\}} \subseteq \sema{\aut{R}}$}
		\State{$\alpha_q = \alpha_q \cup \{(q', Q \setminus C)\}$}
		\EndIf
		\EndFor
		\EndFor
		\EndFor
		\State{$\alpha \leftarrow \bigcup_{q \in Q} \alpha_q$}
		\If{$\sema{(\aut{M},\alpha)} = \sema{\aut{R}}$}
		\State{\Return{$(\aut{M},\alpha)$}}
		\Else
		\State{Let $u(v)^{\omega}$ be the witness returned}
		\State{Let $C = \infss{\aut{M}}(u(v)^{\omega})$}
		\For{all $q \in C$}
		\If{there is some $C_i \in S_q$ such that $C  \not\subseteq C_i$ and $C \cup C_i$ is positive}
		\State{Let $i$ be the least such $i$ and replace $C_i$ by $C_i \cup C$}
		\Else
		\State{Add $C$ to the end of the sequence $S_q$}
		\EndIf
		\EndFor
		\EndIf
		\EndFor
		\State{\Return{``no''}}
		
	\end{algorithmic}
}}\end{algorithm}

Because pairs are only added to $\alpha$ that preserve inclusion in $\sema{\aut{R}}$, it is clear that any witness $u(v)^{\omega}$ returned in response to the test of equivalence of $(\aut{M},\alpha)$ and $\aut{R}$ is a positive example of $\sema{\aut{R}}$.
Note also that all elements of $S_q$ are SCCs of $\aut{M}$ that contain $q$.
The proof of correctness and running time of $\alg{TestInIR}$ depends on the following two lemmas.

\begin{lemma}
	\label{lemma:IRA-technical-lemma-1}
	Assume that $\aut{R}'$ is an IRA equivalent to the target DRA $\aut{R}$.
	Consider a positive example $u(v)^{\omega}$ of $\sema{\aut{R}}$ returned in response to the test of equivalence of $(\aut{M},\alpha)$ and $\aut{R}$, and let $C = \infss{\aut{M}}(u(v)^{\omega})$.  Let $(q',B)$ be a pair of $\aut{R}'$ such that $q' \in C$ and $C \cap B = \emptyset$. If for some $q \in C$ and some $C_i$ in $S_q$ we have $C_i \cap B = \emptyset$ then for some $j \le i$, the element $C_j$ of $S_q$ will be replaced by $C_j \cup C$.
\end{lemma}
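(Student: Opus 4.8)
The plan is to follow what the update step of the algorithm $\alg{TestInIR}$ does to the sequence $S_q$ when the positive counterexample $C=\infss{\aut{M}}(u(v)^{\omega})$ is processed for the fixed state $q\in C$. For that $q$, the rule finds the least index $j$ with $C_j\in S_q$, $C\not\subseteq C_j$, and $C\cup C_j$ positive, and replaces $C_j$ by $C_j\cup C$ (appending $C$ only if no such $j$ exists). So the lemma reduces to showing that the hypothesis index $i$ is itself a qualifying index --- that is, $C\not\subseteq C_i$ and $C\cup C_i$ is positive --- for then the least qualifying index $j$ satisfies $j\le i$, and the algorithm replaces $C_j$ by $C_j\cup C$. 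Throughout I identify the automaton $\aut{M}$ returned by $\alg{RightCon}(\aut{R})$ with the automaton of the IRA $\aut{R}'$ via the canonical isomorphism (both are isomorphic to $\aut{M}_{\sim_L}$, $L=\sema{\aut{R}}$), and by Prop.~\ref{prop:singleton-normal-form} I may take $(q',B)$ to be a pair of $\aut{R}'$ in singleton normal form.

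The easy part is that $C\cup C_i$ is positive. Every element of $S_q$ is an SCC of $\aut{M}$ containing $q$, and $C$ is an SCC of $\aut{M}$ containing $q$ by Prop.~\ref{prop:inf-to-SCC}, so $C\cup C_i$ is an SCC of $\aut{M}$ by Prop.~\ref{prop:unionofSCCs}, and $\alg{Witness}(C\cup C_i,\aut{M})$ returns an $\omega$-word $w'$ with $\infss{\aut{M}}(w')=C\cup C_i$. From $C\cap B=\emptyset$ and $C_i\cap B=\emptyset$ we get $(C\cup C_i)\cap B=\emptyset$, and $q'\in C\subseteq C\cup C_i$, so $w'$ satisfies the pair $(q',B)$ of $\aut{R}'$; hence $w'\in\sema{\aut{R}'}=\sema{\aut{R}}$, which is exactly what it means for $C\cup C_i$ to be positive.

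The step carrying the real content --- and the one I expect to be the main obstacle --- is showing $C\not\subseteq C_i$, which I would prove by contradiction. Assume $C\subseteq C_i$ and consider the pair $p=(q',Q\setminus C_i)$ over $\aut{M}$; note $q'\in C\subseteq C_i$. Any $\omega$-word $w'$ satisfying $p$ has $q'\in\infss{\aut{M}}(w')\subseteq C_i$, so $\infss{\aut{M}}(w')\cap B\subseteq C_i\cap B=\emptyset$, whence $w'$ satisfies $(q',B)$ in $\aut{R}'$ and lies in $\sema{\aut{R}'}=\sema{\aut{R}}$; thus $\sema{(\aut{M},\{p\})}\subseteq\sema{\aut{R}}$. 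Now $C_i$ belongs to $S_q$ before the update that handles $C$, hence it is present when the algorithm builds $\alpha$ at the iteration that produced $u(v)^{\omega}$; processing the element $C_i$ of $S_q$ together with the state $q'\in C_i$, the inclusion test for $p$ succeeds and $p$ is added to $\alpha$. But $u(v)^{\omega}$, whose $\infss{\aut{M}}$-set is $C$, satisfies $p$, since $q'\in C$ and $C\cap(Q\setminus C_i)=\emptyset$ because $C\subseteq C_i$; so $u(v)^{\omega}\in\sema{(\aut{M},\alpha)}$. This contradicts the fact that a witness returned against the equivalence of $(\aut{M},\alpha)$ and $\aut{R}$ must lie in $\sema{\aut{R}}\setminus\sema{(\aut{M},\alpha)}$, because $\sema{(\aut{M},\alpha)}\subseteq\sema{\aut{R}}$ holds at every stage of the construction. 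Hence $C\not\subseteq C_i$, so (with the previous paragraph) $i$ is a qualifying index and the least qualifying index $j$ satisfies $j\le i$, proving the claim.
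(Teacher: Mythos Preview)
Your proof is correct and follows essentially the same approach as the paper's: both establish that index $i$ is itself a qualifying index by showing (i) $C\cup C_i$ is positive via the pair $(q',B)$, and (ii) $C\not\subseteq C_i$ via the contradiction that the pair $(q',Q\setminus C_i)$ would otherwise already be in $\alpha$, making $u(v)^{\omega}$ accepted by $(\aut{M},\alpha)$. The only cosmetic difference is that the paper phrases the argument as ``assume no replacement occurs for $j<i$, then $i$ works,'' while you directly show $i$ qualifies and conclude the minimal $j\le i$; the content is the same.
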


\begin{proof}
	Assume that there is no such replacement for $j < i$.  When $i$ is considered, we have $q \in C_i$ and $q \in C$, so $C_i \cup C$ is the union of overlapping SCCs and therefore an SCC.  Then $C_i \cup C$ is positive because $q' \in C$, so $q' \in C_i \cup C$, and $C \cap B = \emptyset$ and $C_i \cap B = \emptyset$ by hypothesis, so $(C_i \cup C) \cap B = \emptyset$ and the word $\alg{Witness}(C_i \cup C)$ is a positive example of $\sema{\aut{R}}$ because it satisfies $(q',B)$.
	
	To see that $C \not\subseteq C_i$, we assume to the contrary.  Then $(q',Q \setminus C_i)$ is an element of $\alpha$.  To see this, we show that $\sema{(\aut{M},\{(q',Q \setminus C_i)\}} \subseteq \sema{\aut{R}}$.  Let $u'(v')^{\omega}$ with $D = \infss{\aut{M}}(u'(v')^{\omega})$ satisfy $(q',Q \setminus C_i)$.  Then $q' \in D$ and $D \cap (Q \setminus C_i) = \emptyset$, and therefore $D \subseteq C_i$ and $C_i \cap B = \emptyset$ by hypothesis.  Thus $u'(v')^{\omega}$ satisfies $(q',B)$ and is in $\sema{\aut{R}}$.  Because $(q',Q \setminus C_i)$ is an element of $\alpha$, $u(v)^{\omega}$ is accepted by $(\aut{M},\alpha)$ because $q' \in C$ and $C \cap (Q \setminus C_i) = \emptyset$ (because we assume $C \subseteq C_i$).  But this means that $u(v)^{\omega}$ cannot be a witness to the non-equivalence of $(\aut{M},\alpha)$ and $\aut{R}$, a contradiction.
	
	Thus, the conditions for $C_i$ to be replaced by $C_i \cup C$ are satisfied.
\end{proof}

\begin{lemma}
	\label{lemma:IR-technical-lemma-2}
	Assume that $\aut{R}' = (\aut{M},\alpha')$ is an IRA in singleton normal form equivalent to the target DRA $\aut{R}$.
	The following two conditions hold throughout the algorithm $\alg{TestInIR}$.
	\begin{enumerate}
		\item For all $q \in Q$, elements $C_i$ of $S_q$, and $(q',B) \in \alpha'$, if $C_i$ satisfies $(q',B)$ then for no $j < i$ do we have $C_j \cap B = \emptyset$.
		\item For all $q \in Q$, elements $C_i$ and $C_j$ of $S_q$ with $j < i$, and $(q',B) \in \alpha'$, if $C_i$ satisfies $(q',B)$ then $C_j$ does not satisfy $(q',B)$.
	\end{enumerate}
\end{lemma}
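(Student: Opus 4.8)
The plan is to establish statement (1) by induction on the sequence of modifications made to the families $\{S_q\}_{q\in Q}$ during the run of $\alg{TestInIR}$, and then to read off (2) as an immediate consequence. Before any witness has been processed every $S_q$ is empty, so (1) holds vacuously. The families change only when the equivalence test of $(\aut{M},\alpha)$ and $\aut{R}$ returns a positive witness $u(v)^{\omega}$ with $C=\infss{\aut{M}}(u(v)^{\omega})$, which is then used, for each state $q\in C$, to update the \emph{single} sequence $S_q$; moreover condition (1) for $S_q$ mentions only $S_q$ and the fixed condition $\alpha'$ of the hypothetical IRA $\aut{R}'$, so the updates for distinct $q$ are independent, and it suffices to show that one update to one $S_q$ preserves (1). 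There are two kinds of update: \emph{appending} $C$ to the end of $S_q$, and \emph{merging}, i.e.\ replacing the least-indexed $C_i\in S_q$ with $C\not\subseteq C_i$ and $C\cup C_i$ positive by $C_i\cup C$.

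The one property of the hypothesis that drives both cases is that the witness $u(v)^{\omega}$ lies in $\sema{\aut{R}}\setminus\sema{(\aut{M},\alpha)}$ (the inclusion $\sema{(\aut{M},\alpha)}\subseteq\sema{\aut{R}}$ holding because pairs are only ever added to $\alpha$ when they preserve inclusion in $\sema{\aut{R}}$), hence $u(v)^{\omega}$ is \emph{rejected} by $(\aut{M},\alpha)$. Combined with this, I will repeatedly use the following observation: if $C_j\in S_q$ satisfies $C\subseteq C_j$ and $C_j\cap B=\emptyset$ for some $(q',B)\in\alpha'$ with $q'\in C$, then the pair $(q',Q\setminus C_j)$ must already have been put into $\alpha$ --- any word $w$ with $D=\infss{\aut{M}}(w)$ satisfying $(q',Q\setminus C_j)$ has $D\subseteq C_j$, hence $D\cap B=\emptyset$ and $q'\in D$, so $D$ satisfies $(q',B)$ and $w\in\sema{\aut{R}'}=\sema{\aut{R}}$ --- and then $u(v)^{\omega}$ itself satisfies $(q',Q\setminus C_j)$, contradicting that it is rejected by $(\aut{M},\alpha)$. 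In the append case this rules out that the newly appended $C$ satisfies some $(q',B)\in\alpha'$ while an earlier $C_j\in S_q$ has $C_j\cap B=\emptyset$: since $q$ lies in both $C$ and $C_j$, Prop.~\ref{prop:unionofSCCs} makes $C_j\cup C$ an SCC, it satisfies $(q',B)$ and is therefore positive, so the merge would have fired rather than the append --- forcing $C\subseteq C_j$ and then the contradiction just described. All positions other than the new last one are unchanged and their constraints refer only to still-unchanged positions, so (1) is preserved.

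For the merge case --- replacing $C_i$ by $C_i':=C_i\cup C$ --- there are three things to verify, and here lies the main obstacle: the careful indexing. Positions $j<i$ are untouched and their constraints refer only to positions below $i$, so (1) there survives by the inductive hypothesis. For a position $\ell>i$, the constraint attached to $C_\ell$ now refers to $C_i'\supseteq C_i$; if $C_\ell$ satisfies $(q',B)$ then by the inductive hypothesis $C_i\cap B\neq\emptyset$, hence also $C_i'\cap B\neq\emptyset$, and every other earlier position is unchanged. The delicate point is the constraint at position $i$ itself: assuming $C_i'$ satisfies $(q',B)$ (so $q'\in C_i\cup C$ and $B$ is disjoint from both $C_i$ and $C$), I would argue that no earlier $C_j$ has $C_j\cap B=\emptyset$ by splitting on where $q'$ lies. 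If $q'\in C_i$, then $C_i$ already satisfied $(q',B)$ before the update, and the inductive hypothesis for the \emph{old} position $i$ gives $C_j\cap B\neq\emptyset$ for every $j<i$. If instead $q'\in C$, and some $C_j$ with $j<i$ had $C_j\cap B=\emptyset$, then (again via Prop.~\ref{prop:unionofSCCs}, $q$ being common) $C_j\cup C$ is an SCC that satisfies $(q',B)$ and is therefore positive; by the minimality in the choice of $i$ this forces $C\subseteq C_j$, and then the observation of the previous paragraph about $(q',Q\setminus C_j)$ being in $\alpha$ while $u(v)^{\omega}$ satisfies it yields a contradiction. Lemma~\ref{lemma:IRA-technical-lemma-1}, which pins down exactly which merges occur, can be invoked to streamline parts of this bookkeeping.

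Finally, (2) is immediate from (1): if $C_i,C_j\in S_q$ with $j<i$ and $C_i$ satisfies $(q',B)\in\alpha'$, then by (1) no index below $i$ --- in particular $j$ --- has $C_j\cap B=\emptyset$, so $C_j$ fails the disjointness requirement for satisfying $(q',B)$. I expect essentially all the effort in producing a full proof to be concentrated in the merge case of the induction above; the append case and the deduction of (2) are routine, and the remaining steps are either definitional or direct appeals to the inductive hypothesis and to the fact that $u(v)^{\omega}$ is rejected by the current hypothesis.
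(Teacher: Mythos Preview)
Your proposal is correct and follows essentially the same approach as the paper's proof: induction on the sequence of updates, with the same append/merge case split and, in the merge case, the same sub-split on whether $q'$ lies in the old entry or in the new witness $C$. Your organization (treating positions below, at, and above the merged index separately) is slightly different bookkeeping from the paper's (which splits on whether the changed position plays the role of the ``satisfier'' or the ``earlier disjoint'' entry in the violating constraint), but the underlying arguments coincide; your explicit ``observation'' unpacks precisely the part of Lemma~\ref{lemma:IRA-technical-lemma-1} that is needed, and the paper simply invokes that lemma instead.
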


\begin{proof}
	We first show that condition (1) implies condition (2).  Let $q \in Q$, $C_i$ and $C_j$ be in $S_q$ with $j < i$ and $(q',B) \in \alpha'$.  If $C_i$ satisfies $(q',B)$ then by condition (1), we have $C_j \cap B \neq \emptyset$, so $C_j$ does not satisfy $(q',B)$.
	
	We now prove condition (1) by induction on the number of witnesses to non-equivalence.  The condition holds of the empty sequences $S_q$.  Suppose conditions (1) and (2) hold of the sequences $S_q$, and the witness to non-equivalence is $u(v)^{\omega}$ with $C = \infss{\aut{M}}(u(v)^{\omega})$.  If $q \not\in C$ then $S_q$ is not modified, so assume $q \in C$.  We consider two cases, depending on whether $C$ is added to the end of $S_q$ or causes some $C_\ell$ to be replaced by $C_\ell \cup C$.
	
	Assume that $C$ is added to the end of $S_q$ and property (1) fails to hold.  Then it must be that for $C_i = C$, some $C_j$ in $S_q$ with $j < i$ and some $(q',B) \in \alpha'$, $C_i$ satisfies $(q',B)$ and $C_j \cap B = \emptyset$.  By \Cref{lemma:IRA-technical-lemma-1}, because $C_j \cap B = \emptyset$, $C$ should cause $C_\ell$ for some $\ell \le j$ to be replaced by $C \cup C_\ell$ rather than being added to the end of $S_q$, a contradiction.
	
	Assume that $C$ causes $C_\ell$ in $S_q$ to be replaced by $C_\ell \cup C$ and property (1) fails to hold.  Then it must be for some $C_i$ in $S_q$ and pair $(q',B)$ in $\alpha'$, either (i) $i > \ell$ and $C_i$ satisfies $(q',B)$ and $(C_\ell \cup C) \cap B = \emptyset$, or (ii) $i < \ell$ and $C_\ell \cup C$ satisfies $(q',B)$ and $C_i \cap B = \emptyset$.
	In case (i), it must be that $C_i$ satisfies $(q',B)$ and $C_\ell \cap B = \emptyset$, which contradicts the assumption that property (1) holds before $C$ is processed, because $\ell < i$.
	In case (ii), $q' \in C_\ell \cup C$ and $(C_\ell \cup C) \cap B = \emptyset$.  Thus $C \cap B = \emptyset$ and $C_\ell \cap B = \emptyset$.  If $q' \in C_\ell$, then $C_\ell$ satisfies $(q',B)$, violating the assumption that property (2) holds before $C$ is processed.  If $q' \in C$ then $C$ satisfies $(q',B)$ and because $C_i \cap B = \emptyset$, by \Cref{lemma:IRA-technical-lemma-1}, for some $j \le i$ we have $C_j$ replaced by $C_j \cup C$, a contradiction because $\ell > i$.  Thus, in either case property (1) holds after $C_\ell$ is replaced by $C_\ell \cup C$.
\end{proof}

\begin{theorem}
	\label{theorem:membership-in-IR}
	The algorithm $\alg{TestInIR}$ takes as input a DRA $\aut{R}$, runs in polynomial time, and returns an IRA accepting $\sema{\aut{R}}$ if $\sema{\aut{R}} \in \class{IRA}$, and otherwise returns ``no''.
\end{theorem}

\begin{proof}
	Assume the input is a DRA $\aut{R} = (\aut{M}_1,\alpha_1)$ in singleton normal form with $|\alpha_1| = m$.  The algorithm $\alg{TestInIR}$ computes the right congruence automaton $\aut{M}$ of $\sema{\aut{R}}$, which has $n$ states, at most the number of states of $\aut{R}$.  The main loop of the algorithm is executed at most $mn^3$ times, and each execution makes calls to the inclusion and equivalence algorithms for DRAs, and runs in time polynomial in the size of $\aut{R}$, so the overall running time of $\alg{TestInIR}$ is polynomial in the size of $\aut{R}$.
	
	Clearly, if $\sema{\aut{R}}\not\in \class{IRA}$, then the test of equivalence between $(\aut{M},\alpha)$ and $\aut{R}$ will not succeed, and the value returned will be ``no''.
	Assume that $\sema{\aut{R}} \in \class{IRA}$.  Then by \Cref{lemma:IRA-bound-from-DRA}, there is an IRA $\aut{R}' = (\aut{M},\alpha')$ in singleton normal form equivalent to $\aut{R}$ such that $|\alpha'| \le mn$.  For every state $q \in Q$, each member of the sequence $S_q$ satisfies some pair $(q',B)$ in $\alpha'$, and by \Cref{lemma:IR-technical-lemma-2}, no two members of $S_q$ can satisfy the same pair, so the length of each $S_q$ is bounded by $mn$. Each positive counterexample must either add another member to at least one sequence $S_q$ or cause at least one member of some sequence $S_q$ to increase in cardinality by $1$.  The maximum cardinality of any member of any $S_q$ is $n$, and the total number of sequences $S_q$ is $n$, so no more than $mn^3$ positive counterexamples can be processed before the test of equivalence between $(\aut{M},\alpha)$ and $\aut{R}$ succeeds and $(\aut{M},\alpha)$ is returned.
\end{proof}

\subsection{Testing membership in $\class{IMA}$}
\label{ssec:testing-membership-in-IM}

We describe the algorithm $\alg{TestInIM}$ that takes as input a DMA $\aut{U}$ and returns an IMA accepting $\sema{\aut{U}}$ if $\sema{\aut{U}} \in \class{IMA}$, and otherwise returns ``no''.

\begin{algorithm}
{\small{
	\caption{$\alg{TestInIM}$}\label{alg:TestInIM}
	\begin{algorithmic}
		\Require{A DMA $\aut{U} = \la \Sigma, Q, q_{\iota}, \delta, {\mathcal  F} \ra$.}
		\Ensure{If $\sema{\aut{U}} \in \class{IMA}$ then return an IMA accepting $\sema{\aut{U}}$, else return ``no''.}
		
		\State{$\aut{M} \leftarrow \alg{RightCon}(\aut{U})$}
		\State{${\mathcal  F}' \leftarrow \emptyset$}
		\While{$\sema{(\aut{M},{\mathcal  F}')} \neq \sema{\aut{U}}$}
		\State{Let $u(v)^{\omega}$ be the witness returned}
		\State{Let $C = \infss{\aut{M}}(u(v)^{\omega})$}
		\If{$u(v)^{\omega} \in \sema{\aut{U}}$}
		\State{${\mathcal  F}' \leftarrow {\mathcal  F}' \cup \{C\}$}
		\Else
		\State{\Return{``no''}}
		\EndIf
		\EndWhile
		\State{\Return{$(\aut{M},{\mathcal  F}')$}}
		
	\end{algorithmic}
}}\end{algorithm}

\begin{theorem}
	\label{theorem:membership-in-IM}
	The algorithm $\alg{TestInIM}$ takes as input a DMA $\aut{U}$, runs in polynomial time, and returns an IMA accepting $\sema{\aut{U}}$ if $\sema{\aut{U}} \in \class{IMA}$, and otherwise returns ``no''.
\end{theorem}

\begin{proof}
	The algorithm calls the $\alg{RightCon}$ algorithm and also the DMA equivalence algorithm from \Cref{theorem:inclusion-equivalence-for-DMAs}, which run in polynomial time.  
	When there is a witness $u(v)^{\omega}$ accepted by $\aut{U}$, there is a set $F \in {\mathcal  F}$ whose image in $\aut{M}$ is added to ${\mathcal  F}'$, so there can be no more such witnesses than the number of sets in ${\mathcal  F}$.  After this, there must be a successful equivalence test or a witness rejected by $\aut{U}$, either of which terminates the while loop.  Thus, the overall running time is polynomial in the size of $\aut{U}$.
	
	If the algorithm returns an acceptor $(\aut{M},{\mathcal  F}')$, then the acceptor is an IBA that accepts $\sema{\aut{U}}$.  To see that the algorithm does not incorrectly return the answer ``no'', assume that $\aut{U}'$ is an IBA accepting $\sema{\aut{U}}$.  We may assume that $\aut{U}' = (\aut{M}, {\mathcal  F}'')$, where ${\mathcal  F}''$ contains no redundant sets. Then the first witness will be a word accepted by $\aut{U}$ that will add an element of ${\mathcal  F}''$ to ${\mathcal  F}'$.  This continues until all the elements of ${\mathcal  F}''$ have been added to ${\mathcal  F}'$, at which point the while loop terminates with equivalence.
\end{proof}

\subsection{Variants of the testing algorithms}
\label{ssec:variants-of-testing-algorithms}

A variant of the task considered above is the following.
Given the right congruence automaton $\aut{M}$ of a language $\sema{\aut{B}}$ in $\class{IBA}$, and access to information from certain queries about $\sema{\aut{B}}$, learn an acceptance condition $\alpha$ such that $(\aut{M},\alpha)$ accepts $\sema{\aut{B}}$.
In the case of $\class{IBA}$, the algorithm $\alg{TestInIB}$ could be modified to perform this task using just equivalence queries with respect to $\sema{\aut{B}}$.
Similarly, equivalence queries would suffice in the case of $\class{IMA}$.
For $\class{IPA}$, subset and superset queries with respect to the target language would suffice.
And for $\class{IRA}$, $\alg{TestInIR}$ could be modified to use membership and equivalence queries with respect to the target language, relying on negative examples to remove incorrect pairs rather than using subset queries.

\subsection{Efficient teachability of the informative classes}
We can finally claim that the the informative classes are efficiently teachable.
\begin{theorem}
\label{theorem:informative-classes-efficiently-teachable-learnable}
    The classes \class{IBA}, \class{ICA}, \class{IPA}, \class{IMA}, \class{IRA} and \class{ISA}
    are efficiently teachable.
\end{theorem}

\begin{proof}
By Theorems \ref{thm:im:itptd}, \ref{theorem:limit-id-of-DBA-DCA}, \ref{theorem:ip-poly-id-lim}, and \ref{theorem:ir-is-poly-ident-in-limit} the classes \class{IMA},  \class{IBA}, \class{ICA}, \class{IPA}, \class{IRA} and \class{ISA} are identifiable in the limit using polynomial time and data. It remains to show that the characteristic samples can be computed in polynomial time for any acceptor in the class.

Let $\aut{A}$ be an acceptor of type DXA for $X\in\{B,C,P,R,S,M\}$. 
By Theorems \ref{theorem:membership-in-IB}, \ref{theorem:membership-in-IP}, \ref{theorem:membership-in-IR}, \ref{theorem:membership-in-IM}, there are polynomial time algorithms that return an equivalent acceptor $\aut{A'}$ in IXA if such exists and ``no'' otherwise.
By \Cref{theorem:char-samples-in-poly-time},  given an acceptor $\aut{A}'$ of type IBA, ICA, IPA, IRA, ISA, or IMA, the characteristic sample $T_L$ for $\aut{A}$ may be computed in polynomial time in the size of $\aut{A}$.
It follows that the classes \class{IMA},  \class{IBA}, \class{ICA}, \class{IPA}, \class{IRA} and \class{ISA} are efficiently teachable.
\end{proof}

\section{Discussion}
\label{sec:discussion}
We have provided general definitions and comparisons for characteristic samples, efficient learnability, efficient teachability, efficient teachability/learnability, and identifiability in the limit using polynomial time and data.
While efficient teachability implies identifiability in the limit using polynomial time and data, we have shown that the converse is not true if there is no polynomial time algorithm for integer factorization.
We have shown that if a class is efficiently teachable, then it is it efficiently teachable/learnable.

We then asked which classes of representations of regular $\omega$-languages are efficiently teachable. 
The non-deterministic acceptors $\class{NBA}$, $\class{NCA}$, $\class{NPA}$, $\class{NRA}$, $\class{NSA}$, and $\class{NMA}$ do not have polynomial size characteristic sets, and thus are neither efficiently identifiable in the limit with polynomial time and data nor efficiently teachable.
We have shown that the classes \class{M}2\class{MA} and  \class{SUBA} are efficiently teachable. 

Focusing on the classes of informative languages,  $\IB$, $\IC$, $\IP$, $\IR$, $\IS$ and $\IM$, we have shown that they are efficiently teachable.
To obtain these results we have given new polynomial time algorithms to test inclusion and equivalence for DBAs, DCAs, DPAs, DRAs, DSAs and DMAs.
We have given a polynomial time algorithm to compute the right congruence automaton $\aut{M}_{\sim_L}$ for a language $L$ specified by a DBA, DCA, DPA, DRA, DSA, or DMA.
This yields a polynomial time algorithm to test whether an acceptor $\aut{A}$ of type DBA is of type IBA, and similarly for acceptors of types DCA, DPA, DRA, DSA and DMA.
Moreover, we have given a polynomial time algorithm to test whether an acceptor $\aut{A}$ of type DBA accepts a language in the class $\class{IBA}$, and similarly for acceptors of types DCA, DPA, DRA, DSA and DMA.

The questions of whether the full deterministic classes $\class{DBA}$, $\class{DCA}$, $\class{DPA}$, $\class{DRA}$, $\class{DSA}$ and $\class{DMA}$ are efficiently teachable or identifiable in the limit using polynomial time and data remain open.
We note that Bohn and L\"{o}ding~\cite{BohnL22,BohnL23} have recently obtained interesting results for passive learning of \class{DBA} and \class{DPA}, but from characteristic samples of cardinality  that may be exponential in the size of the minimal representations in the worst case.
Thus these results do not settle the question of whether these classes are efficiently teachable or identifiable in the limit using polynomial time and data. 
Another intriguing open question is whether the classes $\class{IBA}$, $\class{ICA}$, $\class{IPA}$, $\class{IRA}$, $\class{ISA}$ and $\class{IMA}$ can be learned by polynomial time algorithms using membership and equivalence queries. However, as shown by Bohn and L{\"{o}}ding~\cite{BohnL21}, this question is not easier than whether the corresponding deterministic classes can be learned by polynomial time algorithms using membership and equivalence queries.

\section*{Acknowledgments}
We thank the anonymous reviewers for very constructive comments that greatly improved the quality of this paper. This work was partially supported by ISF grant 2507/21.

\bibliographystyle{alphaurl}
\bibliography{bib.bib}

\end{document}